\newif\ifnotes
\title{Quantum Codes with Addressable and Transversal \\ Non-Clifford Gates}
\author{Zhiyang He\thanks{Email: \texttt{szhe@mit.edu}. Supported by the MIT Department of Mathematics and the NSF Graduate Research Fellowship Program under Grant No. 2141064.}\\MIT
\and Vinod Vaikuntanathan\thanks{Email: \texttt{vinodv@mit.edu}. Supported in part by NSF grant CNS-2154149 and a Simons Investigator Award.}\\MIT
\and Adam Wills\thanks{Email: \texttt{a\_wills@mit.edu}. Affiliated with the MIT Center for Theoretical Physics - a Leinweber Institute. Supported by the MIT Department of Physics. This preprint is assigned number MIT-CTP/5813.}\\MIT
\and Rachel Yun Zhang\thanks{Email: \texttt{rachelyz@mit.edu}. Supported by NSF Graduate Research Fellowship 2141064. Supported in part by NSF grant CNS-2154149.}\\MIT}
\date{\today}
\definecolor{denim}{rgb}{0.08, 0.38, 0.74}
\definecolor{classicrose}{rgb}{0.98, 0.8, 0.91}
\definecolor{darkpastelblue}{rgb}{0.47, 0.62, 0.8}
\definecolor{dogwoodrose}{rgb}{0.84, 0.09, 0.41}
\newtheorem{theorem}{Theorem}[section]
\newtheorem{lemma}[theorem]{Lemma}
\newtheorem*{lemma*}{Lemma}
\newtheorem{proposition}[theorem]{Proposition}
\newtheorem{corollary}[theorem]{Corollary}
\newtheorem{assumption}[theorem]{Assumption}
\newtheorem{definition}[theorem]{Definition}
\newtheorem{fact}[theorem]{Fact}
\theoremstyle{definition}
\newtheorem{remark}[theorem]{Remark}
\Crefname{theorem}{Theorem}{Theorems}
\Crefname{claim}{Claim}{Claims}
\Crefname{lemma}{Lemma}{Lemmas}
\Crefname{proposition}{Proposition}{Propositions}
\Crefname{corollary}{Corollary}{Corollaries}
\Crefname{definition}{Definition}{Definitions}
\renewcommand{\le}{\leqslant}
\renewcommand{\leq}{\leqslant}
\renewcommand{\ge}{\geqslant}
\renewcommand{\geq}{\geqslant}
\newcommand{\CCZ}{\mathsf{CCZ}}
\newcommand{\CCCZ}{\mathsf{CCCZ}}
\newcommand{\CLZ}{\mathsf{C^{(\ell-1)}Z}}
\newcommand{\CSS}{\text{CSS}}
\newcommand{\bigN}{N}
\newcommand{\bbF}{\mathbb{F}}
\newcommand{\cC}{\mathcal{C}}
\newcommand{\cG}{\mathcal{G}}
\newcommand{\cH}{\mathcal{H}}
\newcommand{\cL}{\mathcal{L}}
\newcommand{\cM}{\mathcal{M}}
\newcommand{\cP}{\mathcal{P}}
\newcommand{\customlabel}[2]{%
   \protected@write \@auxout {}{\string \newlabel {#1}{{#2}{\thepage}{#2}{#1}{}} }%
   \hypertarget{#1}{#2}
}
\newcounter{datacounter}
\newcounter{casenum}
\newcommand{\case}[2]{
    \refstepcounter{casenum}
    \ifthenelse{\equal{\value{casenum}}{0}}{
    \vskip.5\baselineskip\par\noindent
    }{}
    \noindent {\it Case \arabic{casenum}:} {\it #1}
    \vskip0.1\baselineskip
    \begin{addmargin}[1.5em]{1em}
    #2
    \end{addmargin}
}
\newcounter{subcasenum}
\newcounter{casenumb}
\newcounter{subcasenumb}
\newcommand{\sm}{\setminus}
\newcommand{\al}{\alpha}
\newcommand{\bal}{\boldsymbol{\alpha}}
\newcommand{\one}{\boldsymbol{1}}
\newcommand{\bx}{\mathbf{x}}
\newcommand{\bz}{\mathbf{z}}
\newcommand{\qcode}{\mathcal{Q}}
\newcommand{\qoutcode}{\mathcal{Q}_{\text{out}}}
\newcommand{\coutcode}{\mathcal{C}_{\text{out}}}
\newcommand{\enc}{\phi}
\newcommand{\CC}{\mathbb{C}}
\newcommand{\FF}{\mathbb{F}}
\newcommand{\KK}{\mathbb{K}}
\newcommand*{\ol}{\overline}  
\DeclareMathAlphabet{\mathcal}{OMS}{cmsy}{m}{n}
\newcommand{\K}{K}
\newcommand{\N}{N}
\newcommand{\punc}[2]{#1_{\sm #2}}
\newcommand{\puncT}[1]{\punc{#1}{T}}
\newcommand{\shrT}[1]{#1_{T = 0}}
\DeclareMathOperator{\im}{im}
\DeclareMathOperator{\RS}{RS}
\DeclareMathOperator{\GRS}{GRS}
\newcommand{\B}{\mathcal{B}}
\newcommand{\ME}{\mathsf{ME}_{q\rightarrow 2}}
\newcommand{\MEinv}{\mathsf{ME}_{2 \rightarrow q}}
\newcommand{\phyC}[1]{C(#1)}
\begin{document}

\sloppy
\maketitle
\renewcommand{\thefootnote}{\fnsymbol{footnote}}
\setcounter{footnote}{5}
\footnotetext{{The authors are listed in alphabetical order.}}
\setcounter{footnote}{0}
\renewcommand{\thefootnote}{\arabic{footnote}}

\begin{abstract}

The development of quantum codes with good error correction parameters and useful sets of transversal gates is a problem of major interest in quantum error correction. Abundant prior works have studied transversal gates which are restricted to acting on all logical qubits simultaneously.
In this work, we study codes that support transversal gates which induce \textit{addressable} logical gates, i.e., the logical gates act on logical qubits of our choice.
As we consider scaling to high-rate codes, the study and design of low-overhead, addressable logical operations presents an important problem for both theoretical and practical purposes.

Our primary result is the construction of an explicit qubit code for which \textit{any} triple of logical qubits across one, two, or three codeblocks can be addressed with a logical $\CCZ$ gate via a depth-one circuit of physical $\CCZ$ gates, and whose parameters are asymptotically good, up to polylogarithmic factors. The result naturally generalizes to other gates including the $\mathsf{C}^{\ell} Z$ gates for $\ell \neq 2$.

Going beyond this, we develop a formalism for constructing quantum codes with {\em addressable and transversal} gates.
Our framework, called \textit{addressable orthogonality}, encompasses the original triorthogonality framework of Bravyi and Haah (Phys. Rev. A 2012), and extends this and other frameworks to study addressable gates.
We demonstrate the power of this framework with the construction of an asymptotically good qubit code for which {\em pre-designed}, pairwise disjoint triples of logical qubits within a single codeblock may be addressed with a logical $\CCZ$ gate via a physical depth-one circuit of $\mathsf{Z}$, $\mathsf{CZ}$ and $\CCZ$ gates. 
In an appendix, we show that our framework extends to addressable and transversal $T$ gates, up to Clifford corrections.

\end{abstract}
\thispagestyle{empty}
\newpage

\enlargethispage{1cm}
\tableofcontents
\pagenumbering{roman}
\newpage
\pagenumbering{arabic}

\section{Introduction}

    It is a major challenge in fault-tolerant quantum computation to lower the overhead of quantum error correction~\cite{shor1995scheme,gottesman1997stabilizer,aharonov1997fault}, that is, the amount by which a quantum circuit must be made larger, both in space and in time, in order to be made fault-tolerant. Significant progress has been made in this area in many directions, although we still notably lack a well-developed theory of quantum codes supporting transversal gates. 
    
    The most standard notion of transversality is as follows. We say that a code on $n$ physical qubits encoding $k$ logical qubits supports the single-qubit gate $U$ \emph{transversally} if, by acting on the physical qubits with the gate $U^{\otimes n}$, the induced effect on the logical qubits is to execute the gate $U$ on all $k$ encoded qubits, i.e., the gate $\overline{U^{\otimes k}}$ is executed. The point is that acting transversally is an inherently fault-tolerant operation because errors cannot spread within a codeblock. This definition can be extended in a number of directions, particularly when $U$ is a multi-qubit gate. 
    In general, the broadest notion of transversality asks us to perform a non-trivial logical action via a constant-depth circuit.

    While transversal gates offer a  convenient and low-overhead approach to executing logical gates fault-tolerantly,
    unfortunately no quantum code can support a universal set of transversal gates~\cite{eastin2009restrictions}. 
    It is therefore natural to ask: \textit{how can we construct quantum codes with the most flexible and powerful transversal gates, for use in universal fault-tolerant computations?}

    Substantial effort has been invested in this problem.
    A pivotal line of research studies transversal gates on topological codes, notably the color codes on cellulated surfaces. 
    The two-dimensional color codes~\cite{Bombin2006topological}, of which the classic 7-qubit Steane code~\cite{calderbank1996good,steane1996multiple} is the smallest instance, support the entire set of Clifford gates transversally.
    Higher-dimensional color codes support transversal non-Clifford gates in higher levels of the Clifford hierarchy, notably the $T$ gate~\cite{bombin2015gauge,kubica2015universal} and the $\CCZ$ gate~\cite{Paetznick2013universal,Kubica2015unfolding}. A transversal $\CCZ$ gate has also been constructed in 3D surface codes~\cite{vasmer2019three}.
    These results led to fault-tolerant schemes which perform universal gates by code-switching between 2D and 3D topological codes~\cite{bombin2016dimensional,vasmer2019three}.
    
    Another line of research studies algebraic constructions of quantum codes with transversal non-Clifford gates for use in magic state distillation (MSD)~\cite{bravyi2005universal} protocols. 
    MSD is a central yet expensive technique that consumes low fidelity magic states to produce high fidelity magic states, which can be used in gate teleportation protocols~\cite{gottesman1999demonstrating} to perform universal fault-tolerant computation.
    By improving the error-correction parameters of the code used, we can obtain more efficient MSD schemes~\cite{bravyi2005universal,campbell2012magic,bravyi2012magic,krishna2019towards}. 
    In recent works, asymptotically good quantum codes supporting non-Clifford transversal gates were constructed~\cite{wills2024constant,golowich2024asymptoticallygoodquantumcodes,nguyen2024goodbinaryquantumcodes}, which implies constant-overhead MSD protocols for qubits~\cite{wills2024constant}.\footnote{Here, the meaning of the word ``overhead'' refers only to the ratio of inputted to outputted magic states, not to the spacetime-overhead of the scheme, as is relevant in a full fault-tolerant quantum computing protocol. We note that~\cite{nguyen2024quantum} constructed a MSD protocol with low spacetime overhead using algebraic codes.}
    
    Complementary to developments through algebraic methods, a novel line of recent research, inspired by the line of work which led to asymptotically good quantum low-density parity-check (qLDPC) codes~\cite{hastings2021fiber,panteleev2021quantum,breuckmann2021balanced,panteleev2022asymptotically,leverrier2022quantum,dinur2023good}, studies construction of qLDPC codes with transversal non-Clifford gates through homological methods~\cite{zhu2023non,scruby2024quantum,golowich2024quantum,lin2024transversal,hsin2024classifying,breuckmann2024cups,zhu2025topological}.
    The codes constructed in these works are not asymptotically good, but have constant or poly-logarithmic stabilizer check weight, 
    which is amenable to potential applications in fault-tolerant schemes.

    \paragraph{A new goal: Addressability.} 
    Almost all transversal gates constructed in aforementioned works act globally on the logical spaces: a constant depth physical circuit induces a non-trivial logical action on \emph{all or a large portion} of the logical qubits.
    For codes which encode a single logical qubit, such as the surface code or simple instances of 2D color codes, these gates are exactly what's needed to build fault-tolerant quantum computation protocols.
    For high-rate codes which encode many logical qubits, such control over the logical information is too coarse for applications outside of MSD;
    there are only limited situations in which it is algorithmically useful to execute the same gate on every qubit (except perhaps the Hadamard gate). 
    Instead, what we really want is for the code to support transversal gates that are also \emph{addressable}. 
    What this means is that, given any logical qubit indexed by $A$, there is some constant depth physical circuit, perhaps acting on a subset of the physical qubits, such that the induced logical gate acts only on the $A$'th logical qubit.
    Such fine-grained logical control is especially useful for implementation of generic algorithms, e.g. arithmetic circuits like the quantum adder~\cite{Gidney2018halvingcostof}.
    
    We observe that this is a common limitation that goes even beyond the works discussed. For high-rate quantum codes, notably quantum LDPC codes, it is often difficult to act on an individual logical qubit without acting on the rest of the logical space. 
    In existing fault-tolerant schemes based on constant-rate QLDPC codes~\cite{gottesman2013fault,fawzi2020constant,tamiya2024polylog,nguyen2024quantum}, logical gates are performed by full-block gate teleportation~\cite{gottesman1999demonstrating, Knill2005}. 
    For specific codes, existing studies on transversal or constant-depth gates, notably those based on automorphisms~\cite{calderbank1998quantum,grassl2013leveraging, bravyi2024high,eberhardt2024logical,sayginel2024fault} and other code symmetries~\cite{rengaswamy2019unifying,hu2021climbing,webster2023transversal,Quintavalle2023partitioningqubits,Breuckmann2024foldtransversal}, also primarily concern operations on the full logical space. 
    Two general techniques which inherently address individual logical qubits are pieceable fault-tolerance~\cite{yoder2016universal} and generalized code surgery~\cite{cohen2022low};\footnote{Generalized code surgery is a form of code deformation~\cite{bombin2009quantum,vuillot2022quantum}, which has been applied to design addressable logical action on high rate codes~\cite{breuckmann2017hyperbolic,krishna2021fault}.
    A different yet similar technique is homomorphic measurement~\cite{huang2023homomorphic}, which was utilized in a recent work~\cite{xu2024fast} to design addressable logical action on hypergraph product codes.
    As in the case of surgery, these proposals incur more overhead than constant depth unitary circuits, i.e., transversal circuits.} the overhead of the latter was significantly improved in recent works~\cite{cross2024improved,cowtan2024ssip,williamson2024low,ide2024fault,swaroop2024universal,zhang2024time}.
    Nevertheless, these techniques are still much more costly when compared to constant-depth circuits, i.e., transversal circuits.
    
    For these reasons, we believe the study and design of low-overhead, addressable logical operations on high-rate quantum codes is an interesting problem for both theoretical and practical purposes.
    Theoretically, recent works on constructing codes with transversal gates have opened many new possibilities for code and fault-tolerant scheme designs. 
    Practically, recent experimental progress has demonstrated sub-threshold scaling of the surface code~\cite{google2023suppressing,acharya2024quantumerrorcorrectionsurface} and logical operations on 2D color codes~\cite{rodriguez2024experimental,lacroix2024scaling}. 
    As we consider scaling to codes that encode a growing number of qubits in every block, notably quantum LDPC codes with promising finite-length performances~\cite{tillich2013quantum,leverrier2015quantum,breuckmann2021quantum,panteleev2021degenerate,bravyi2024high,xu2024constant,lin2024quantum}, addressable logic is not only a natural question to study, but also critical for practical efficiency.

    In this paper, we construct the first quantum codes to support transversal, addressable non-Clifford gates, such as the $\CCZ$ gate, and we obtain such codes that are near-asymptotically good (for the strongest sense of addressability) or asymptotically good (for weaker senses of addressability). As far as we know, transversal and addressable non-Clifford gates have not been established for any quantum code, except in a particular weak sense that we discuss below and in Section~\ref{sec:prelim_non_clifford_addressability}. While we construct codes that are (near) asymptotically good, our codes are not LDPC.\footnote{Note that transversal, addressable {\em Clifford} operations have been studied on LDPC codes in~\cite{zhu2023non,hsin2024classifying}.} We believe that the results demonstrated in the present paper can provide important ideas for similar developments on LDPC codes.

    \subsection{Our Results}

    \paragraph{Near-asymptotically good code with addressable non-Clifford gates.}
    
    Our primary result will be concerned with the strongest (and most computationally useful) form of addressability. Here, we will look to construct codes for which \textit{any} triple of logical qudits within one, two, or three blocks of the code may be addressed with the $\CCZ$ gate via a depth-one circuit of physical $\CCZ$ gates. We say that gates on codes satisfying this strong form of addressability are \emph{addressable}.
    \begin{theorem}\label{thm:main_result_1}
        There exists an explicit family of quantum CSS codes over qubits with parameters
        \begin{equation}
            \left[\left[n,\Omega\left(\frac{n}{\text{polylog}(n)}\right),\Omega\left(\frac{n}{\text{polylog}(n)}\right)\right]\right]_2
        \end{equation}
        supporting a transversal, addressable $\CCZ$ gate. Specifically, any three logical qubits across one, two, or three blocks of the code may be addressed with a logical $\CCZ$ gate via a depth-one circuit of physical $\CCZ$ gates.
    \end{theorem}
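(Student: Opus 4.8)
The plan is to reduce the desired code to a purely classical combinatorial condition --- an \emph{addressable} strengthening of Bravyi--Haah triorthogonality --- and then to meet that condition with an explicit algebraic code whose automorphism group is rich enough to ``move'' logical qubits around. Concretely, for three copies of a CSS code with $X$-type logical representatives $\{\bar x_A\}_A$ and $X$-stabilizer generators $\{s_j\}_j$ on physical coordinate set $[n]$, the depth-one physical circuit $\prod_\ell \CCZ[\ell,\pi(\ell),\rho(\ell)]$ for partial matchings $\pi,\rho$ multiplies a logical computational-basis triple by a phase governed by the cubic form $(u,v,w)\mapsto \sum_\ell u_\ell v_{\pi(\ell)} w_{\rho(\ell)} \bmod 2$. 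We need: (i) this form vanishes whenever any one argument is an $X$-stabilizer, so the codespace is preserved and the logical action is well defined; and (ii) on the logical representatives it equals $\delta_{(A,B,C)=\text{target}}$, so exactly the desired $\ol{\CCZ[A,B,C]}$ is induced with \emph{no} spurious logical $\CCZ$'s on other triples. The first task is to write down these ``addressable orthogonality'' identities and to identify the family of matchings $(\pi,\rho)$ (and, for the intra-block case, the partial matchings of $[n]$ into disjoint triples) that will serve as the addressing circuits.

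\textbf{The algebraic construction.} For the code I would use a punctured/shortened Reed--Muller code $\RM(m,r)$, identifying the $n\approx 2^m$ physical coordinates with points of $\bbF_2^m$ and the logical qubits with a family of low-degree monomials modulo lower-degree ones, with $r$ chosen so that products of three codewords land in $\RM(m,m-1)$ (the even-weight code); this forces $\sum_{x\in\bbF_2^m} f(x)g(x)h(x)=0$ and yields condition (i) --- exactly the Bravyi--Haah / Hastings--Haah mechanism applied three-wise. The addressing comes from the affine group on $\bbF_2^m$: composing evaluation with a shift $x\mapsto x+t$ on one or more blocks permutes the monomial basis of the logicals (shifts act triangularly on monomials), and one reads off which logical triple is hit from $\sum_x x^S = 1 \iff S=[m]$. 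Since that identity only localizes the target up to the coset $\{(A,B,C): S_A\cup S_B\cup S_C=[m]\}$, I would take an appropriate iterated tensor product of this base code with itself (or with a small auxiliary ``index-selecting'' code) so that the monomial-product bookkeeping pins down a single target triple; this product step is also what realizes the \emph{intra-block} gates, by routing the three factors of one codeblock into the three tensor components.

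\textbf{From extension fields to $\bbF_2$, and parameters.} If the cleanest version of the above is phrased over $\bbF_q$ with $q=2^t$ --- a Reed--Solomon-type variant gives the most transparent multiplicative addressing structure --- I would then apply a multiplication-friendly embedding $\bbF_q\to\bbF_2^{O(t)}$ to obtain a binary code while preserving the cubic-form identities up to the embedding's structure constants, checking that the chosen addressing permutations descend through the embedding. The error-correction parameters follow from the distance of $\RM$/$\RS$ together with the standard puncturing/shortening analysis: each logarithmic ``layer'' --- the number of variables $m=O(\log n)$, the extension degree $t=O(\log n)$, and the iterated product --- costs at most a $\mathrm{polylog}(n)$ factor in rate and in distance, giving the claimed $[[n,\Omega(n/\mathrm{polylog}\,n),\Omega(n/\mathrm{polylog}\,n)]]_2$.

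\textbf{Main obstacle.} The crux is to achieve (ii) \emph{exactly} --- a single target logical $\CCZ$ with no garbage on other triples --- \emph{simultaneously} with (i) holding for \emph{every} addressing matching we use, all while keeping the parameters near-good; the naive algebraic identities only localize the target up to a large coset, so the product/layering trick that sharpens this without destroying orthogonality or collapsing the distance is the delicate step. The \emph{intra-block} case is the hardest instance: there the three ``blocks'' coincide, so the addressing matchings must act within one codeblock and one cannot exploit independent symmetries of three separate blocks --- which is presumably why the statement is restricted to addressing three logical qubits at a time rather than an arbitrary simultaneous pattern of logical $\CCZ$'s.
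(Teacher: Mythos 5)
Your high-level architecture (reduce to a cubic-form ``addressable orthogonality'' condition on stabilizers and logical representatives, instantiate algebraically, then descend to $\mathbb{F}_2$) matches the paper's, but the concrete instantiation you propose has a genuine gap precisely at the step you flag as delicate, and it is not the paper's route. With punctured Reed--Muller codes and affine shifts, the identity $\sum_{x}x^S=1\iff S=[m]$ localizes the addressed triple only up to the large set of monomial triples whose product covers $[m]$; your proposed repair (an iterated tensor product or an auxiliary ``index-selecting'' code) is left entirely unspecified, and it is far from clear that any such layering can kill the spurious logical $\CCZ$'s while preserving condition (i) for every addressing matching. Moreover the parameter claim fails for this instantiation: the three-wise product condition for binary $\RM$ forces $3r\le m-1$, so the resulting codes have $k\approx\binom{m}{\le m/3}=n^{c}$ with $c<1$ (or else polynomially small distance) --- polynomially, not polylogarithmically, below optimal --- so $[[n,\Omega(n/\mathrm{polylog}\,n),\Omega(n/\mathrm{polylog}\,n)]]_2$ cannot come out of it. The paper avoids the coset problem entirely by a different structural choice: a Reed--Solomon code over $\mathbb{F}_q$, $q=n^2$, whose evaluation points are a subfield $\mathbb{K}$ of order $n$ (these are the physical coordinates) together with one point $\beta_A\in\zeta+\mathbb{K}$ per logical qudit, $\zeta\notin\mathbb{K}$. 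The generalised-RS interpolation identity $f(\beta_A)+\sum_i\Gamma^A_i f(\alpha_i)=0$, valid for all $\deg f<n$ (and $3(m-1)<n$), gives \emph{exact single-point} localization, and the addressing circuit is $\prod_i\CCZ^{\gamma\Gamma^A_i}[\alpha_i,\alpha_i+\Delta_{A,B},\alpha_i+\Delta_{A,C}]$ with $\Delta_{A,B}=\beta_B-\beta_A\in\mathbb{K}$, which permutes the physical coordinates because $\mathbb{K}$ is closed under these shifts; systematic form at the $\beta$-columns then yields exactly $u_Av_Bw_C$ in the phase, i.e.\ condition (ii) with no garbage, for arbitrary triples in one, two, or three blocks.

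Even in your hedged Reed--Solomon-plus-MFE variant, the qudit-to-qubit step needs more than ``checking that the addressing permutations descend.'' The self-dual-basis embedding turns the qudit $\CCZ_q^\beta$ into a $3t$-qubit gate $V^\beta$, which is not a qubit $\CCZ$ on any designated logical qubits; the paper additionally gauge-fixes $t-1$ of the $t$ logical qubits in each logical $q$-block to $\ket{0}$ and chooses $\beta=\alpha_1^{-1}$ so that $V^\beta$ acts as a genuine logical $\CCZ$ on the surviving qubits, while the degree-3 multiplication-friendly embedding handles only the physical side (rendering each $V^\beta$ as a depth-one $\CCZ$ circuit), with a further duplication of physical qudits (or a duplicated MFE) needed to flatten the intra-block implementation from depth four to depth one. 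These logical-side and depth-flattening steps are absent from your proposal, and without them the claimed ``depth-one circuit of physical $\CCZ$ gates'' on any triple, including intra-block triples, is not established.
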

    We note that this code has asymptotically optimal error correction parameters up to polylogarithmic factors, and this statement will be proved using the results of Sections~\ref{sec:intrablock_GRS} and~\ref{sec:concatenation}. 
    We also remark that the same result may be obtained for any $\CLZ$ gate, not just for $\ell = 3$, and even for more complicated diagonal gates with entries $\pm 1$ on the diagonal, as we will describe. 
    For practical purposes, we note that a depth-1 physical circuit of $\CCZ$ gates can be naturally realized on platforms based on neutral atoms, 
    which have undergone exciting recent development~\cite{bluvstein2024logical}, by arranging atoms into the desired triples and applying a single laser pulse to induce the required multi-qubit interaction~\cite{levine2019parallel,evered2023high}.

    There is an important piece of folklore that a code supporting a transversal gate in the $\ell$-th level of the Clifford hierarchy supports a transversal and addressable gate in the $(\ell-1)$-th level of the Clifford hierarchy~\cite{hsin2024classifying,lin2024transversal,breuckmann2024cups}. This notion is indeed true for single-qubit gates, but in the case of multi-qubit gates such as $\CCZ$, the resulting form of addressability is limited to certain triples of qubits and is thus much weaker and less computationally useful than what we establish in Theorem~\ref{thm:main_result_1}. We discuss this in more detail in Section~\ref{sec:prelim_non_clifford_addressability}. 
    
    Given the stringent structure required on the codes to achieve the above result, we prove this using very specialized properties of certain algebraic codes, namely classical Reed-Solomon (RS) codes. Indeed, using the ``matrix-to-quantum code'' construction~\cite{bravyi2012magic,krishna2019towards,wills2024constant} to construct a quantum code from the generator matrix of a classical code, in this case an RS code, one obtains the above result for qudits of dimension $q=2^t$, as will be done in Section~\ref{sec:intrablock_GRS}. Then, using the qudit-to-qubit conversions of Section~\ref{sec:concatenation}, which are  an application of results in~\cite{nguyen2024goodbinaryquantumcodes,golowich2024asymptoticallygoodquantumcodes},\footnote{Note that the qudit-to-qubit conversions of~\cite{wills2024constant} are too weak for the present work, and we exclusively use those of~\cite{nguyen2024goodbinaryquantumcodes,golowich2024asymptoticallygoodquantumcodes}.} we obtain Theorem~\ref{thm:main_result_1} for qubits. It is conceptually important to note that, being made from Reed-Solomon codes, the qudit codes in this construction are asymptotically good, but only over qudits of growing size, i.e., $2^t = q \sim n$, where $n$ is the length of the code. This means the polylogarithmic factors are lost in the conversion to qubit codes in Section~\ref{sec:concatenation}. By contrast, constructions from algebraic geometry (AG) codes in this paper may be made asymptotically good over a constant field size, i.e., $q = 2^t = \Theta(1)$, meaning that the corresponding qubit codes are also asymptotically good. Obtaining the addressability of the $\CCZ$ gate with AG codes appears to be considerably more involved than with RS codes, however, this does appear to be a promising avenue to obtain Theorem~\ref{thm:main_result_1} with asymptotically good parameters; this direction is deferred to future work.

    \paragraph{A framework for transversal addressable gates.} 
    Much of the progress on constructing codes with transversal non-Clifford gates stems from the establishment of frameworks that reduce the task of constructing codes with non-Clifford gates to the task of constructing classical codes satisfying certain properties. The first such framework was the notion of \emph{triorthogonal matrices}, with the corresponding triorthogonal quantum codes supporting transversal $T$ gates up to Clifford correction~\cite{bravyi2012magic}. This framework has gone on to be highly influential in quantum coding theory~\cite{eczoo_quantum_triorthogonal}, and has been extended for prime-dimensional qudits~\cite{krishna2019towards} and for prime-power-dimensional qudits~\cite{wills2024constant}.\footnote{While the basis of knowledge on a quantum code's properties that are \textit{necessary} (not only sufficient) for supporting various transversal gates is incomplete, it is known~\cite{rengaswamy2020optimality} that triorthogonality is also necessary for a quantum code to support a transversal $T$ gate. We feel that such necessity statements on triorthogonality-like frameworks should be generalizable to other gates and other qudits.} 
    In Section~\ref{sec:addressable_orthogonality_big_section}, we develop a natural generalization of Bravyi and Haah's triorthogonality framework, and those that follows it, to addressable non-Clifford gates which we call \emph{addressable orthogonality} (see Section~\ref{sec:addressable_orthogonality} for a precise definition). One consequence of this framework is that it allows us to construct codes supporting the transversal addressability of a large family of diagonal gates. The first demonstration of the power and flexibility of this formalism comes in the form of our second main result, which is as follows.
    \begin{theorem}\label{thm:main_result_2}
        There exists an explicit family of (asymptotically good) qubit CSS codes with parameters
        \begin{equation}
            \left[\left[n,\Theta(n),\Theta(n)\right]\right]_2,
        \end{equation}
        supporting a addressable and transversal  $\CCZ$ gate on pre-designed (pairwise disjoint) intra-block triples of logical qubits. Specifically, the logical qubits in one block of the code are partitioned into sets of three, and one may address any such set with the $\CCZ$ gate via a depth-one circuit of physical $\mathsf{Z}$, $\mathsf{CZ}$ and $\mathsf{CCZ}$ gates.
    \end{theorem}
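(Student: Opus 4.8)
The plan is to instantiate the addressable orthogonality framework of Section~\ref{sec:addressable_orthogonality_big_section}. That framework reduces the construction of a qubit CSS code with a transversal, addressable $\CCZ$ gate on a prescribed family of logical triples to the construction of a classical binary matrix $G$ — with rows partitioned into ``logical'' rows $g_1,\dots,g_k$ and ``stabilizer'' rows — whose weights and pairwise column-overlaps $\sum_i (g_a)_i (g_b)_i$ are controlled so that the only residual effect of the physical $\CCZ$ circuit is a Clifford built from $\mathsf{Z}$ and $\mathsf{CZ}$, and whose triple overlap $\sum_i (g_a)_i(g_b)_i(g_c)_i$ is odd precisely when $\{a,b,c\}$ is one of the designated triples, even otherwise, and vanishing whenever a stabilizer row is involved. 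The resulting quantum code's dimension and ($X$- and $Z$-) distances are then governed by the classical codes attached to $G$. So it suffices to build such a $G$ with $k=\Theta(n)$ logical rows, whose logical rows admit the desired partition into triples, and which yields linear distance.

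First I would isolate a constant-size gadget handling a single triple: take $G_0$ whose three logical rows $e_1,e_2,e_3\in\mathbb{F}_2^{8}$ are the indicators of $\{x\in\{0,1\}^3 : x_j=1\}$, together with the stabilizer row $\mathbf 1^{8}$. One checks directly that each row has weight divisible by $4$, each pair of rows overlaps evenly, every triple overlap with a repeated index collapses to a pairwise overlap and hence vanishes, and $\sum_i (e_1)_i(e_2)_i(e_3)_i=1$; this is exactly the $[[8,3,2]]$ color code, which supports a transversal $\CCZ$ acting as $\overline{\CCZ}$ on its three logical qubits. Next, take an asymptotically good classical binary code with generator matrix $M$ whose rows $h_1,\dots,h_m$ ($m=\Theta(N)$ for block length $N$) have odd weight, are pairwise orthogonal over $\mathbb{F}_2$, and satisfy $\sum_i (h_t)_i(h_{t'})_i(h_{t''})_i=0$ for distinct $t,t',t''$, together with compatible stabilizer rows giving linear $Z$-distance; the existence of such good classical ingredients is precisely what the framework's input conditions (and the good triorthogonal-code constructions of prior work~\cite{wills2024constant,golowich2024asymptoticallygoodquantumcodes,nguyen2024goodbinaryquantumcodes}) provide. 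Then form $G$ on $8N$ coordinates by tensoring: logical rows $g_{(j,t)} := e_j\otimes h_t$ for $j\in\{1,2,3\}$ and $t\in[m]$ (so $k=3m=\Theta(n)$ with $n=8N$), plus stabilizer rows such as $\mathbf 1^{8}\otimes h_t$ and $e_j\otimes(\text{stabilizer rows of }M)$, with designated triples $T_t=\{(1,t),(2,t),(3,t)\}$.

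The key step is to check that $G$ meets every addressable orthogonality relation, which follows because all overlaps factorize, e.g.\ $\sum_i (g_{(j,t)})_i(g_{(j',t')})_i(g_{(j'',t'')})_i = \big(\sum_i (e_j)_i(e_{j'})_i(e_{j''})_i\big)\cdot\big(\sum_i (h_t)_i(h_{t'})_i(h_{t''})_i\big)$, and likewise for pairwise overlaps and weights. A short case analysis on the multiset $\{j,j',j''\}$ finishes it: if some $j$ repeats, the $e$-factor vanishes; if $\{j,j',j''\}=\{1,2,3\}$, the $e$-factor is $1$ and the $h$-factor equals $1$ exactly when $t=t'=t''$ (using that each $|h_t|$ is odd and the $h_t$ are pairwise- and triple-orthogonal). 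Hence the only odd logical triple overlaps occur on the designated $T_t$, and stabilizer-involving overlaps vanish. Feeding $G$ into the framework yields a qubit CSS code together with, for each $T_t$, a depth-one circuit of $\CCZ$ gates on the physical triples lying above the support of $h_t$ (with $\mathsf{Z}$ and $\mathsf{CZ}$ corrections from the lower-order terms) realizing $\overline{\CCZ}$ on $T_t$ and acting trivially on all other logical qubits; since we may relabel logical qubits freely, any prescribed partition into triples is obtained.

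The main obstacle I anticipate is the distance analysis — showing this tensor-product, matrix-to-code construction still has $d=\Theta(n)$ despite the gadget having distance only $2$. The idea is that a putative low-weight logical $X$- (resp.\ $Z$-) operator, written through the tensor structure as a nonzero combination $\sum_j e_j\otimes v_j$ with the $v_j$ drawn from the code generated by $M$ (resp.\ its relevant dual), forces, upon inspecting the eight length-$N$ blocks indexed by $x\in\{0,1\}^3$, an uncancelled copy of a nonzero classical codeword in some block — so its weight is at least the distance of that classical code, which is $\Theta(N)$. One must run this carefully on both the $X$-side (cosets of $\mathrm{rowspan}(G)$) and the $Z$-side (cosets of $\mathrm{rowspan}(G)^{\perp}$), invoking the distance guarantees of the good classical code and its dual supplied by the framework; this gives $d\ge\Theta(N)=\Theta(n)$, and together with $k=3m=\Theta(n)$ yields the claimed $[[n,\Theta(n),\Theta(n)]]_2$ parameters.
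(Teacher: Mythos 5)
Your construction hinges on a classical ingredient that is not available: a binary matrix whose $m=\Theta(N)$ logical rows $h_1,\dots,h_m$ have odd weight, pairwise even overlap, and even triple overlaps (your factorization argument needs exactly these three conditions, including vanishing overlaps against the stabilizer rows), together with linear distance for the associated CSS code. That is precisely an asymptotically good binary triorthogonal matrix in the sense of Bravyi--Haah~\cite{bravyi2012magic}, and no such family is known --- this is essentially the paper's Open Problems 3 and 4, and the paper's footnote notes that not even $n/(kd)\to 0$ is currently achieved (the closest being~\cite{hastings2018distillation}). The works you cite~\cite{wills2024constant,golowich2024asymptoticallygoodquantumcodes,nguyen2024goodbinaryquantumcodes} do \emph{not} supply this: they obtain transversal $\CCZ$ across three code blocks from qudit codes over $\mathbb{F}_{2^t}$ with multiplication properties (followed by qudit-to-qubit conversion), not from binary matrices with odd-weight, pairwise- and triple-orthogonal rows. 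If your tensor-with-$[[8,3,2]]$ gadget argument went through with such an ingredient, it would in particular resolve the open problem of asymptotically good triorthogonal codes, so the proposal is circular at its core. There are also secondary gaps you would still need to close even granting the ingredient: verifying that the induced physical circuit is genuinely depth one (your $\CCZ$ gates on triples of coordinates above a common index $i$ share physical qubits across the eight gadget coordinates), and justifying that the $[[8,3,2]]$ gadget's logical $\CCZ$ is realized by in-block $\mathsf{Z}/\mathsf{CZ}/\CCZ$ gates rather than by its usual transversal $T/T^\dagger$ implementation.

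For contrast, the paper avoids binary triorthogonality altogether. It works over qudits of \emph{constant} dimension $q=2^t$: algebraic geometry codes with the strong $8$-multiplication property give matrices with addressable strong $7$-orthogonality (Corollary~\ref{cor:AG_instantiation_code}); the $3\times 7$ matrix $M$ of Section~\ref{sec:designed_intrablock} combines groups of seven logical rows into designated triples on which a three-qudit non-Clifford gate $W^\beta$ is executed by a depth-one circuit of single-qudit $U_{q,7}$ gates (Theorem~\ref{thm:designed_intrablock_AG_qudit}); finally, in Section~\ref{sec:concatenation}, after the qudit-to-qubit embedding one gauge-fixes $t-1$ of the $t$ qubits in each logical $q$-block --- keeping a \emph{different} surviving position in the three members of each triple --- so that $W$ becomes a qubit $\CCZ$, and duplicates each physical qubit a constant number of times to flatten the $\mathsf{Z}/\mathsf{CZ}/\CCZ$ implementation to depth one. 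Since $q$ is constant, the parameters remain $[[n,\Theta(n),\Theta(n)]]_2$. Your proposal would need either this route or a genuinely new source of good binary triorthogonality to be salvageable.
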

    We call this notion \emph{pre-designed addressability} since the code supports gate actions on arbitrary, but {\em pre-designed}, intra-block triples of qubits. This result is proved in Sections~\ref{sec:addressable_orthogonality_big_section},~\ref{sec:instantiations_orthogonality}, and~\ref{sec:concatenation} and instantiates part of our addressable orthogonality framework with algebraic geometry codes. As mentioned, since these codes may be instantiated to be asymptotically good over fixed-size qudits, i.e., over a fixed field of size $q=2^t$ with $t=O(1)$, the qubit codes resulting from the conversions in Section~\ref{sec:concatenation} are also asymptotically good.

    We provide two further demonstrations of the  flexibility of our addressable orthogonality formalism. 
    
    \begin{enumerate}
        \item In Appendix~\ref{sec:transversal_T}, we show that one of our addressable orthogonality notions can lead to the construction of quantum codes with transversal, addressable $T$ gates, up to the application of corrective Cliffords. Since this so naturally extends the original framework of Bravyi and Haah, we give it the name \textit{addressable triorthogonality}. However, we do not have an instantiation in mind. %

        \item In other parts of Section~\ref{sec:addressable_orthogonality_big_section} and~\ref{sec:instantiations_orthogonality}, we go beyond the $\CCZ$ gate and thoroughly explore other gates with entries $\pm 1$ on the diagonal in the Clifford hierarchy,\footnote{Note that the qualification ``in the Clifford hierarchy'' is technically redundant for such gates.} including, but not limited to $\CLZ$ gates for $\ell > 3$. 
        It is convenient to describe such gates using the Galois qudit formalism, where a $q=2^t$-dimensional qudit is described by computational basis states labeled by elements of the corresponding finite field $\left\{\ket{\eta}\right\}_{\eta \in \mathbb{F}_q}$. Note that, despite being larger qudits, they are still of interest, since a $q=2^t$-dimensional qudit is equivalent to a set of $t$ qubits, as will be shown in the preliminary material of Section~\ref{sec:qudit_to_qubit_prelim}.\footnote{Further relevant preliminaries are discussed in Sections~\ref{sec:prelim_qudit} and~\ref{sec:prelim_non_clifford_addressability}.} Then, the set of gates $U_{q,\ell}^\beta$, where $\ell \in \mathbb{N}$ and $\beta \in \mathbb{F}_q$ defined via
    \begin{equation}
        U_{q,\ell}^\beta\ket{\eta} = \exp\left(i\pi\tr(\beta\eta^\ell)\right)\ket{\eta},
    \end{equation}
    where $\tr:\mathbb{F}_q \to \mathbb{F}_2$ is the trace linear map (see Section~\ref{sec:prelim_qudit}), is an important set of gates since they can generate any diagonal gate with $\pm 1$ on the diagonal for $q=2^t$-dimensional qudits, and therefore also for qubits, as will be discussed in Appendix~\ref{sec:diagonal_Clifford}. 
    
    We will discuss how quantum codes may be designed supporting these gates (and their qubit versions) addressably, leading to quantum codes whose addressability properties may be highly tailored to particular situations.  
    We also show that the methods following from our addressable orthogonality framework are actually strictly stronger than what may be achieved using the folklore intuition described above, in that they may lead to strictly stronger parameters.
    In particular, one of the surprising results we obtain is that, by using Reed-Solomon codes, one can obtain as good asymptotic code parameters with transversal and addressable gates as with merely transversal gates; this result is captured by Theorem~\ref{thm:instantiation:GRS}.
    We will discuss this more in the introduction to Section~\ref{sec:instantiations_orthogonality}.
    \end{enumerate}

    \subsection{Discussion and Further Directions}

    In this work, we have initiated the study of quantum codes supporting transversal, addressable non-Clifford gates.
    This naturally leads to a large number of well-motivated research directions of theoretical and practical importance. %
    \paragraph{Open Problem 1:} {\em Does there exist an asymptotically good qubit code supporting a transversal, addressable $\CCZ$ gate}?
    As previously mentioned, a natural approach to this would be to leverage the internal structure of algebraic geometry codes in a similar way to how we have leveraged the internal structure of Reed-Solomon codes in this work. We look forward to pursuing this in future work. 
    \\
    
    The next natural question concerns the stabilizer weight of codes with transversal, addressable gates. 
    \paragraph{Open Problem 2:} {\em What are the best achievable asymptotic (and practical) parameters for a quantum LDPC code supporting a multi-qubit transversal non-Clifford gate, such as the $\CCZ$ gate? Can we achieve addressability in addition?} It is interesting to ask if the ideas on addressability contained herein apply to recent transversality constructions such as~\cite{golowich2024quantum}, or to any future constructions. Of course, one may ask the same questions for non-LDPC codes, where the interest would be in the actual constants of rate and relative distance.
    \\
    
    It is worth emphasizing that the recent works mentioned primarily focus on diagonal non-Clifford gates with entries $\pm 1$ on the diagonal. Quantum codes directly supporting other non-Clifford gates, such as the $T$ gate, transversally (addressably), or even transversally up to Clifford corrections as in the triorthogonal codes~\cite{bravyi2012magic} are still not known to exist with particularly good parameters.

    \paragraph{Open Problem 3:} {\em What are the best achievable asymptotic parameters for a quantum code supporting a single-qubit non-Clifford gate, such as the $T$ gate, transversally}? What if we require that the code is LDPC, the gate is addressable, or if we allow Clifford corrections?
    \\

    In this direction,
    we present the notion of addressable triorthogonality (from which one can obtain transversal, addressable $T$ gates up to Clifford corrections) in Appendix~\ref{sec:transversal_T}, which naturally generalizes Bravyi and Haah's original notion of triorthogonality~\cite{bravyi2012magic}.
    However,
    we do not have an instantiation. We hope that further work will help to address the following problem.
    
    \paragraph{Open Problem 4:} {\em Can we instantiate the addressable triorthogonality notion presented in Appendix~\ref{sec:transversal_T} to make quantum codes with addressable, transversal $T$ gates, up to Clifford corrections}? Is it possible to obtain well-performing small examples of such codes in a similar way to how~\cite{jain2024high} does with the usual triorthogonal (as well as doubly even and weak triple even) quantum codes?
    \\
    
    Besides code constructions,
    it would also be useful to understand and  characterize the necessary, as well as the sufficient conditions for quantum codes to support interesting addressable and transversal gates.
  
    \paragraph{Open Problem 5:} {\em Can we characterize necessary, as well as sufficient conditions for certain codes to support general (diagonal) gates in the Clifford hierarchy (addressably) transversally, beyond just the $T$ gate}~\cite{rengaswamy2020optimality,Hu2022designingquantum}? Could the addressable orthogonality framework in this paper offer insight to this?
    \\

    Furthermore, the main motivation behind studying powerful transversal gates is to lower the overhead of fault-tolerant quantum computation. 
    Given the addressable non-Clifford gates in our constructions, there are many potential approaches to realize universal fault-tolerance. 
    One idea is to use surgery techniques to measure Pauli observables on logical qubits. 
    These logical measurements can be used to implement Clifford gates directly on a code, or prepare stabilizer resource states which when teleported into a code translates to a Clifford circuit.
    In particular, the parallel surgery system of~\cite{cowtan2025parallel} and the extractor system of~\cite{he2025extractors}, which builds upon the works of~\cite{cross2024improved,williamson2024low,swaroop2024universal}, are both capable of preparing arbitrary Clifford resource states on arbitrary CSS codes. 
    This enables us to teleport a potentially deep and convoluted Clifford circuit into a code with addressable and low-overhead non-Clifford gates, completing a proposal for universal fault-tolerant computation. 
    We leave the cost analysis of this surgery approach, as well as the exploration of other approaches, as interesting open problems for future works.

    \paragraph{Open Problem 6:} {\em How can we utilize codes with transversal and addressable non-Clifford gates in a fault-tolerant, universal quantum computation scheme?}
    \\

    Finally, a question in a related but somewhat different direction: this paper focuses on implementing diagonal gates with entries $\pm 1$ on the diagonal (which are exactly all diagonal Clifford hierarchy gates with entries $\pm 1$ on the diagonal).
    Now, the diagonal Clifford hierarchy has been classified~\cite{cui2017diagonal} for prime-dimensional qudits, and given the equivalence between a $p^t$-dimensional qudit and a set of $t$ $p$-dimensional qudits, the diagonal Clifford hierarchy for prime-power-dimensional qudits has effectively also been classified. However, it would be very useful and interesting to further characterize the diagonal Clifford hierarchies for prime-power-dimensional qudits in terms of notation that is native to them. This would be of use in the present line of work~\cite{wills2024constant,golowich2024asymptoticallygoodquantumcodes,nguyen2024goodbinaryquantumcodes,golowich2024quantum} for which algebraic constructions for prime-power qudits based on finite fields enable the construction of qubit codes supporting transversal gates. For example, as we discuss in Appendix~\ref{sec:diagonal_Clifford}, the diagonal Clifford hierarchy for a $p$-dimensional qudit may be generated from gates of the form
    \begin{equation}
        \sum_{j \in \mathbb{Z}_p}\exp\left[\frac{2\pi i}{p^m}\text{poly}(j)\right]\ket{j}\bra{j}
    \end{equation}
    and the level of the Clifford hierarchy always increases with the degree of the polynomial~\cite{cui2017diagonal}. On the other hand, the diagonal Clifford hierarchy for $2^t$-dimensional qudits with $\pm 1$ on the diagonal may be generated by gates of the form
    \begin{equation}\label{eq:diagonal_classification}
        \sum_{\eta \in \mathbb{F}_q}\exp\left[i\pi\tr(\text{poly}(\eta))\right]\ket{\eta}\bra{\eta},
    \end{equation}
    (see Appendix~\ref{sec:diagonal_Clifford}). The essential point here is that the level of the Clifford hierarchy does not necessarily increase with the degree of the polynomial here, and behaves somewhat erratically. We believe that while the diagonal Clifford hierarchy has been classified, it is both theoretically interesting and important for the design of codes supporting interesting transversal and addressable gates to classify the prime-power-dimensional diagonal Clifford hierarchy in terms of notation akin to~\eqref{eq:diagonal_classification}.
    
    \paragraph{Open Problem 7:} {\em Can we understand the diagonal Clifford hierarchy of prime-power-dimensional qudits in a way that is more native to these qudits}?

    \subsection{Outline}
    
    Preliminary material will be presented in Section~\ref{sec:prelim}. The construction of a qudit code via an addressable $\CCZ$ gate via Reed-Solomon codes will be made in Section~\ref{sec:intrablock_GRS}. In Section~\ref{sec:addressable_orthogonality_big_section}, we will introduce the framework of addressable orthogonality and show how it can be used to construct codes with single-qudit addressable gates or single-index addressable multi-qudit gates. Instantiations of the addressable orthogonality framework will be presented in Section~\ref{sec:instantiations_orthogonality}. Finally, in Section~\ref{sec:concatenation}, the main results of the paper, which are the qubit code constructions, will be proved via certain qudit-to-qubit conversions.
    
    In Appendix~\ref{sec:transversal_T}, we discuss how the addressable orthogonality framework can encapsulate a transversal, addressable $T$ gate; a notion which we call addressable triorthogonality as it directly extends the original triorthogonality framework of Bravyi and Haah~\cite{bravyi2012magic}. In Appendix~\ref{sec:diagonal_Clifford}, we provide a discussion of the gates primarily considered in this paper in light of the diagonal Clifford hierarchy for qudits of prime-power dimension.

\section{Preliminaries}\label{sec:prelim}

\subsection{Finite Fields, Qudits and CSS Codes}\label{sec:prelim_qudit}

Given a prime power $q = p^t$, there is exactly one field of order $q$, up to isomorphism, denoted $\mathbb{F}_q$, whose set of non-zero elements in denoted $\mathbb{F}_q^*$. Throughout the paper, arithmetic will take place in the relevant finite field $\mathbb{F}_q$, unless otherwise specified. The field $\mathbb{F}_{q_1}$, where $q_1 = p_1^{t_1}$ contains a copy of $\mathbb{F}_{q_2}$, where $q_2 = p_2^{t_2}$, as a subfield if and only if $p_1 = p_2$ and $t_2$ divides $t_1$. In particular, $\mathbb{F}_q$ contains a copy of the field $\mathbb{F}_p$ as a subfield. 
There is a canonical $\mathbb{F}_p$-linear map $\tr_{\mathbb{F}_q/\mathbb{F}_p}:\mathbb{F}_q \to \mathbb{F}_p$ defined as
\begin{equation}
    \tr_{\mathbb{F}_q/\mathbb{F}_p}(\gamma) = \sum_{i=0}^{t-1}\gamma^{p^i}.
\end{equation}
It is a non-obvious yet standard fact that the right-hand side must be in $\mathbb{F}_p$~\cite{Lidl_Niederreiter_1996}.
This map is known as the \textit{trace} and may simply be denoted $\tr$ when the domain and range are clear.

Finite fields give us a convenient way to describe qudits of prime-power dimension, as well as their Pauli operations and Clifford hierarchies. Letting $q=p^t$, the state space of a $q$-dimensional qudit is $\CC^q$. Denoting computational basis states as $\{\ket{\eta}:\eta \in \mathbb{F}_q\}$,
we define Pauli operators $X^\beta$ and $Z^\beta$ for every $\beta \in \mathbb{F}_q$ via
\begin{align}
    X^\beta \ket{\eta} = \ket{\eta + \beta} \hspace{.1in} \mbox{and} \hspace{.1in} Z^\beta = \omega^{\tr(\beta\eta)}\ket{\eta},
\end{align}
where $\omega$ is a $p$-th root of unity. In these equations, arithmetic takes place over $\FF_q$. One may check the commutation relation
\begin{equation}
    X^\beta Z^\gamma = \omega^{\tr(\beta\gamma)} Z^\gamma X^\beta.
\end{equation} 
The Pauli group over one qudit is defined as 
\begin{align}
    \cP_{1,q} = \{\omega^a X^\beta Z^\gamma, a\in [p], \beta,\gamma\in\FF_q\}
\end{align}
and the $n$-qudit Pauli group is defined as $\cP_{n,q} = \cP_{1,q}^{\otimes n}$. The Clifford hierarchy was first introduced in~\cite{gottesman1999demonstrating} and has an important role in the field of fault-tolerant quantum computation, being intimately related to the fault-tolerant implementability and distillability of certain gates. It is constructed as follows.
\begin{align}
    \mathcal{C}_{n,q}^{(1)} &= \mathcal{P}_{n,q}\\
    \cC_{n,q}^{(k)} &= \{U\in U((C^q)^{\otimes n}): UPU^{\dagger}\in \cP_{n,q} \text{ for all } P\in \mathcal{C}_{n,q}^{(k-1)}\}\text{ for }k > 1.
\end{align}
Notice in particular that $\mathcal{C}_{n,q}^{(2)}$ denotes the Clifford group, and may be written more briefly as simply $\mathcal{C}_{n,q}$.

The Calderbank-Shor-Steane (CSS) code formalism extends to qudits as well. 
For a vector $\Gamma\in \FF_q^n$, we denote
$X^\Gamma := \bigotimes_{i=1}^n X^{\Gamma_i}$ and $Z^\Gamma := \bigotimes_{i=1}^n Z^{\Gamma_i}$.
\begin{definition}[CSS Codes]
Given two classical $\mathbb{F}_q$-linear codes $\cL_X, \cL_Z\subseteq \FF_q^n$ satisfying $\cL_X\subseteq \cL_Z^\perp$, we define the quantum code space $\CSS(X, \cL_X; Z, \cL_Z) \subseteq (\CC^q)^{\otimes n}$ as
\begin{align}
    \qcode = \CSS(X, \cL_X; Z, \cL_Z)
    = \{\ket{\psi}\in (\CC^q)^{\otimes n}: X^{\bx}\ket{\psi} = \ket{\psi}, Z^{\bz}\ket{\psi} = \ket{\psi} 
    \text{ for all } \bx\in \cL_X, \bz\in \cL_Z\}.
\end{align}
\end{definition}
We cite some standard facts about CSS codes, and refer readers to proofs in~\cite{gottesman2016surviving}.
The CSS codespace $\CSS(X, \cL_X; Z, \cL_Z)$ has dimension $k = n - \dim(\cL_X) - \dim(\cL_Z)$, and it encodes $k$ qudits into $n$ qudits, i.e., it identifies the state space of $k$ qudits, $(\mathbb{C}^q)^{\otimes k}$, with some subspace of $(\mathbb{C}^q)^{\otimes n}$, called the codepsace. Given the stabilisers specified by $\mathcal{L}_X$ and $\mathcal{L}_Z$, the logical $X$ operators are established by making some choice of basis $\{g^1, \cdots, g^k\}\subseteq \FF_q^n$ for the space $\cL_Z^\perp/\cL_X$.\footnote{Normally, when we say that the set $\{g^1, \ldots, g^k\}$ forms a basis for a subspace, we mean that the vectors are in that subspace, they span the space, and no non-trivial linear combination of them sums to zero, i.e., they are linearly independent. Here, saying $\{g^1, \ldots, g^k\}$ form a basis for the space $\mathcal{L}_Z^\perp/\mathcal{L}_X$ means that these vectors are in $\mathcal{L}_Z^\perp$, any coset in this quotient space may be written as a linear combination of cosets of the form $g^i + \mathcal{L}_X$, and that no non-trivial linear combination of the $g^i$ sums to an element of $\mathcal{L}_X$, i.e., no non-trivial linear combination of cosets of the form $g^i+\mathcal{L}_X$ sums to the trivial coset.} 
Then, given a vector $u\in \FF_q^k$, the logical encoding of the $k$-qudit computation basis state labelled by $u$ in the code $\qcode$ is, up to normalization,
\begin{align}
    \ol{\ket{u}} \propto \sum_{g\in \cL_X} \Ket{\sum_{a=1}^k u_ag^a + g}. 
\end{align}
From here, it can be checked that $X^{\beta g^a}$ is the logical operator $X^\beta$ acting on the $a$'th logical qudit.
We can also find $\{ h^b\in \cL_X^\perp\setminus \cL_Z \}_{b \in [k]}$ such that $g^a\cdot h^b = \one_{a=b}$, in which case
the $Z^\gamma$ logical operator of the $a$-th logical qubit would be $Z^{\gamma h^a}$. 
We can check that these operators have the desired commutation relations.
\begin{align}
    X^{\beta g^a}Z^{\gamma h^b} 
    &= \omega^{\sum_{i=1}^n\tr(\beta\gamma g^a_ih^b_i)}
    Z^{\gamma h^b}X^{\beta g^a} \\ 
    &= \omega^{\tr(\sum_{i=1}^n \beta\gamma g^a_ih^b_i)}
    Z^{\gamma h^b}X^{\beta g^a}\label{eq:XZcommutation}  \\ 
    &= \begin{cases}
        \omega^{\tr(\beta\gamma)} Z^{\gamma h^b}X^{\beta g^a} & a = b \\
        Z^{\gamma h^b}X^{\beta g^a} & a \not= b.
    \end{cases} 
\end{align}
We note explicitly that the summation can be brought inside the trace in equation~\eqref{eq:XZcommutation} because $\tr(\cdot)$ is a $\FF_p$-linear function and $\omega$ is a $p$-th root of unity. 

For a vector $g\in \FF_q^n$, we denote its Hamming weight as $|g|$. 
For a classical linear code $\cL$, we let $d(\cL)$ denote its distance, which is 
$d(\cL) = \min_{g\in \cL\setminus\{0\}} |g|$.

\begin{definition}
For the quantum CSS code $\qcode = \CSS(X, \cL_X; Z, \cL_Z)$, we define its $X$ distance to be $d_X = d(\cL_Z^\perp/\cL_X)$,\footnote{Technically, the quotient space $\cL_Z^\perp/\cL_X$ contains elements of the form $g + \cL_X$, and we say that $|g + \cL_X| = \min_{g'\in g + \cL_X} |g'|$.} and its $Z$ distance to be $d_Z = d(\cL_X^\perp/\cL_Z)$. 
The distance of $\qcode$ is $d(\qcode) = \min(d_X, d_Z)$. 
\end{definition}
\begin{corollary}~\label{cor:CSS_distance}
    Note that for a CSS code, $d_X = d(\cL_Z^\perp/\cL_X)\ge d(\cL_Z^\perp)$, and similarly $d_Z\ge d(\cL_X^\perp)$. 
    Therefore $d(\qcode)\ge \min(d(\cL_X^\perp), d(\cL_Z^\perp))$.
\end{corollary}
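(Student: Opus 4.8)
The plan is to unwind the definition of the quotient-code distance and observe that a minimum-weight representative of a nonzero coset is automatically a nonzero codeword of the ambient code. First I would fix a nonzero coset $g + \cL_X \in \cL_Z^\perp/\cL_X$ achieving the minimum in $d_X = d(\cL_Z^\perp/\cL_X)$, so that $|g + \cL_X| = \min_{g' \in g + \cL_X}|g'| = d_X$. Since $g \in \cL_Z^\perp$ and $\cL_X \subseteq \cL_Z^\perp$, the entire coset $g + \cL_X$ lies inside $\cL_Z^\perp$. Moreover, because the coset is nonzero, it contains no element of $\cL_X$ (in particular not $0$), so every representative $g' \in g + \cL_X$ is a nonzero vector of $\cL_Z^\perp$ and hence satisfies $|g'| \ge d(\cL_Z^\perp)$. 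Taking the minimum over $g'$ then gives $d_X = |g + \cL_X| \ge d(\cL_Z^\perp)$.

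The same argument with the roles of $X$ and $Z$ interchanged — now using $\cL_Z \subseteq \cL_X^\perp$, which is equivalent to the CSS condition $\cL_X \subseteq \cL_Z^\perp$ — yields $d_Z = d(\cL_X^\perp/\cL_Z) \ge d(\cL_X^\perp)$. Combining the two bounds, $d(\qcode) = \min(d_X, d_Z) \ge \min(d(\cL_Z^\perp), d(\cL_X^\perp))$, which is the claimed inequality.

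There is essentially no obstacle here; the only point requiring a moment's care is the bookkeeping around the definition of coset weight, namely that passing to the quotient can only increase (never decrease) the minimum weight, because the minimization over coset representatives ranges over a subset of the nonzero ambient codewords. One should also dispatch the degenerate edge case $\cL_Z^\perp = \cL_X$ (equivalently $k=0$ on the $X$ side), where there are no nonzero cosets and $d_X$ is infinite by convention, so the inequality holds vacuously.
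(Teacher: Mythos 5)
Your proof is correct and is exactly the argument the paper leaves implicit: since every nonzero coset of $\cL_Z^\perp/\cL_X$ lies in $\cL_Z^\perp$ and avoids $0$, its minimum-weight representative is a nonzero codeword of $\cL_Z^\perp$, giving $d_X \ge d(\cL_Z^\perp)$, and symmetrically $d_Z \ge d(\cL_X^\perp)$. Nothing differs in substance from the paper's (essentially definitional) justification.
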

We say that $\qcode$ is a $[[n, k, d]]_q$ quantum code when it encodes $k$ logical qudits (of dimension $q$) into $n$ physical qudits, and has distance $d$.
A family of quantum codes with a divergent number of physical qudits $n$ is called \textit{asymptotically good} if $k, d = \Theta(n)$ as $n\rightarrow \infty$. 

\subsection{Non-Clifford Gates and Addressability}\label{sec:prelim_non_clifford_addressability}

In this work, we consider the case $p = 2$ (although our results naturally generalise to larger primes $p$) and study diagonal non-Clifford gates over qubits and qudits of dimension $q=2^t$. Given that qubits are of much greater practical relevance than larger qudits, we find it necessary to emphasise here that $2^t$-dimensional qudits are still practically interesting because such a qudit is equivalent to a set of $t$ qubits, as will be shown in Section~\ref{sec:qudit_to_qubit_prelim}. Packaging multiple qubits into a single prime-power qudit in this way is a convenient means to describe Clifford hierarchy gates on multiple qubits.  In addition, by building quantum codes over qudits, we will convert them to qubit codes in Section~\ref{sec:concatenation}.

The first and most important non-Clifford gate for our purposes is the $\CCZ$ gate which acts on 3 qudits of dimension $q$. This is defined via 
\begin{align}
    \CCZ_q \ket{\eta_1}\ket{\eta_2}\ket{\eta_3} 
    &= \exp(i\pi\tr(\eta_1\eta_2\eta_3))\ket{\eta_1}\ket{\eta_2}\ket{\eta_3},
\end{align}
or, more generally, given $\beta \in \mathbb{F}_q$,
\begin{align}    
    \CCZ_q^\beta \ket{\eta_1}\ket{\eta_2}\ket{\eta_3} 
    &= \exp(i\pi\tr(\beta\eta_1\eta_2\eta_3))\ket{\eta_1}\ket{\eta_2}\ket{\eta_3}.
\end{align}
We note that, here and further on, while the trace is a canonical linear map from $\mathbb{F}_q$ to $\mathbb{F}_2$, no generality is lost by using it exclusively, because any linear map $\theta : \mathbb{F}_q \to \mathbb{F}_2$ may be written as $\theta(\eta) = \tr(\beta\eta)$ for some $\beta \in \mathbb{F}_q$~\cite{Lidl_Niederreiter_1996}.
We also consider the more general $\CLZ$ gates, which act on $l$ qudits as
\begin{align}
    \CLZ_q \ket{\eta_1}\ket{\eta_2}\cdots\ket{\eta_\ell} 
    &= \exp(i\pi\tr(\eta_1\eta_2\cdots \eta_\ell))\ket{\eta_1}\ket{\eta_2}\cdots\ket{\eta_\ell} , \\
    \CLZ_q^\beta \ket{\eta_1}\ket{\eta_2}\cdots\ket{\eta_\ell} 
    &= \exp(i\pi\tr(\beta\eta_1\eta_2\cdots \eta_\ell))\ket{\eta_1}\ket{\eta_2}\cdots\ket{\eta_\ell}.
\end{align}
It may easily be checked that $\CLZ_q^\beta$ is in exactly the $\ell$-th level of the Clifford hierarchy, by which we mean that it is in the $\ell$-th level, but not the $(\ell-1)$-th level, for all $\beta \neq 0$. In particular, the gate $\CCZ_q$ is in exactly the third level.

Our second class of gates are the following single-qudit gates:
\begin{align}\label{eq:Uql_defn}
    U_{q,\ell}\ket{\eta} &= \exp(i\pi\tr(\eta^\ell))\ket{\eta}, \\
    U_{q,\ell}^\beta\ket{\eta} &= \exp(i\pi\tr(\beta\eta^\ell))\ket{\eta}.
\end{align}
We note that when $q = 2^t$ and $t\ge 3$, the gate $U_{q,7}$ is in exactly the third level of the Clifford hierarchy; see Lemma~3.1.1 of~\cite{wills2024constant} for a proof.\footnote{In fact,~\cite{wills2024constant} only provides a proof for $t \ge 5$ where the proof is most convenient, but the same fact may be checked for $t=3,4$ also.}

The third type of gates we consider is a generalization of the above gates to arbitrary polynomials.
Let $P$ be a degree-$\ell$ polynomial over $\mathbb{F}_q$ in $e$ variables.

We define the $e$-qudit gate $U_{q,P}$ to be
\begin{align}
    U_{q,P}\ket{\eta_1}\cdots \ket{\eta_e}
    &= \exp(i\pi\tr(P(\eta_1, \cdots, \eta_e))) \ket{\eta_1}\cdots \ket{\eta_e},
\end{align}
and, even though is is technically redundant,
\begin{align}               
    U_{q,P}^\beta\ket{\eta_1}\cdots \ket{\eta_e}
    &= \exp(i\pi\tr( \beta P(\eta_1, \cdots, \eta_e))) \ket{\eta_1}\cdots \ket{\eta_e}.
\end{align}
When the dimension of the qudits is clear, we will often abbreviate the subscript $q$ from our gates. Appendix~\ref{sec:diagonal_Clifford} contains a brief discussion on the significance of the gates $U_{q,P}$ in the context of the diagonal Clifford hierarchy.

Continuing with defining notation, given a large number of qudits $n$, it will be helpful to have a way to denote gates acting on particular qudits. Given a single qudit gate such as $U_{q,\ell}$ and some physical qudit label $A \in [n]$, $U_{q,\ell}[A]$ will denote the gate $U_{q,\ell}$ acting on the $A$'th (physical) qudit. Furthermore, it is convenient to be able to denote the action of a single-qudit gate on multiple qudits at once. For example, given a vector $\Gamma\in \FF_q^n$, we denote
\begin{equation}
    U_{q,\ell}^\Gamma = \bigotimes_{i=1}^n U_{q,\ell}^{\Gamma_i}[i],
\end{equation}
so that, for $h \in \mathbb{F}_q^n$,
\begin{equation}
    U_{q,\ell}^\Gamma\ket{h} = \prod_{i=1}^n \exp(i\pi\tr(\Gamma_i h_i^\ell))\ket{h}.
\end{equation}

We adopt similar notation for multi-qudit gates such as $\CLZ_q$, where given $A_1, \ldots, A_\ell \in [n]$, $\CLZ_q[A_1, \ldots, A_\ell]$ denotes the gate $\CLZ_q$ acting on the qudits labelled by $A_1, \ldots, A_\ell$.

The primary goal of this paper is to construct quantum codes which allow one to act with the non-Clifford gates above on a particular logical qudit (in the case of single-qudit gates) or on any set of logical qudits (in the case of multi-qudit gates). In such a case, we say that we can \textit{address} this logical qudit, or set of logical qudits. Moreover, we wish to be able to do so by acting transversally on the physical qudits, by which we mean that the implementing circuit is of constant depth (in the strongest form of transversality, the implementation is possible with depth one). In this case, we say that the code supports the gate transversally and addressably. Acting with different transversal operations on the physical qudits will allows us to address different logical qudits (in the case of single-qudit gates), or sets of logical qudits (in the case of multi-qudit gates).

Notationally, as is standard, we use an overline to indicate a logical (encoded) gate. For example, given a logical qudit labelled by $A \in [k]$, $\overline{U_{q,\ell}[A]}$ will denote the logical $U_{q,\ell}$ gate acting on the $A$-th logical qudit. Similarly, given three logical qudits labelled by $A,B,C \in [k]$, $\overline{\CCZ[A,B,C]}$ denotes the logical $\CCZ$ gate acting on these three logical qudits. Furthermore, given multiple code blocks, we may wish to execute a multi-qudit gate between logical qudits in different blocks (a so-called \textit{interblock} gate). Given logical qudits $A,B,C \in [k]$, and code blocks labelled by $i,j,k$, $\overline{\CCZ_{ijk}[A,B,C]}$ denotes the logical $\CCZ$ gate acting on the $A$'th, $B$'th, and $C$'th logical qudits in the code blocks $i$, $j$ and $k$ respectively. Such notation naturally exists also for physical, rather than logical gates, and it also extends to gates on more than three qudits such as $\CLZ$ for $l > 3$. Note that the code block labels $i,j,k$ need not be pairwise distinct. Indeed, in the case that $i=j=k$, the gate is called an \textit{intrablock} gate, and the notation reduces to $\overline{CCZ[A,B,C]}$ when the code block is clear.

Certain facts are already known about transversal, addressable non-Clifford gates, primarily as folklore in the community, or as alluded to in~\cite{lin2024transversal, hsin2024classifying,breuckmann2024cups}. The main one is that a code supporting a transversal gate in the $t$-th level of the Clifford hierarchy implies that it supports a transversal and addressable gate in the $(t-1)$-th level of the Clifford hierarchy. 
As we discuss below, this fact implies the correct sense of addressability for single-qudit gates, but a much weaker sense of addressability for multi-qudit gates in comparison to our main result, Theorem~\ref{thm:main_result_1}.
Let us enunciate this in the case of $\CCCZ$ and $\CCZ$ now.

Consider an $[[n,k,d]]$ quantum code that supports a transversal (but not addressable) $\CCCZ$ gate. This means the following. Suppose one has four code blocks of the code and one acts with $\CCCZ$ across the first physical qubits, the second physical qubits, the third physical qubits, and so on, meaning one acts with the physical gate
\begin{equation}\label{eq:CCCZ_fixed_interblock}
    \CCCZ_{1234}^{\otimes n} \coloneq \prod_{i=1}^n\CCCZ_{1234}[i],
\end{equation}
where
\begin{equation}
    \CCCZ_{1234}[i]\coloneq \CCCZ_{1234}[i,i,i,i].
\end{equation}
Then, given any code state of the four code blocks, the logical gate
\begin{equation}\label{eq:CCCZ_fixed_interblock_logical}
    \overline{\CCCZ_{1234}^{\otimes k}} \coloneq \prod_{A=1}^k\overline{\CCCZ_{1234}[A]}
\end{equation}
will be executed. In words, acting transversally across the same physical qubits in each code block executes the logical gate transversally across the same logicla qubits in each code block. This is the most common sense of transversality for multi-qudit gates and, in particular, is the sense in which recent works~\cite{golowich2024asymptoticallygoodquantumcodes,nguyen2024goodbinaryquantumcodes,golowich2024quantum} support transversal $\CLZ$ gates.

Now, noting that logical Pauli operators such as $\overline{X[A]}$, for $A \in [k]$, are always transversal, suppose that one acted transversally via
\begin{equation}\label{eq:CCZ_fixed_interblock_first}
    \overline{X_1[A]}\CCCZ_{1234}^{\otimes n}\overline{X_1[A]}\CCCZ_{1234}^{\otimes n},
\end{equation}
which reduces to a depth-one circuit of $\CCZ$ gates.
One can check that this combination will execute exactly
\begin{equation}
    \overline{\CCZ_{234}[A]}.
\end{equation}
Therefore, a code supporting a transversal $\CCCZ$ gate does indeed allow for addressable $\CCZ$, but only in this weak sense. In addressing the logical qudits with $\CCZ$, we are restricted to addressing only triples of logical qudits $[A,A,A]$ in three different code blocks, for $A \in [k]$. 
We call this behaviour where we can address only triples of logical qudits $[A,A,A]$ \textit{single-index addressability}, reserving the term \textit{addressability} for what we really want --- which is to be able to act on \textit{any} triple of logical qudits, in one, two, or three different code blocks, with some transversal operation.

The above thoughts easily extend to other diagonal Clifford hierarchy gates. If $U$ is a diagonal gate in exactly the $\ell$-th level of the Clifford hierarchy acting on $s$ qudits, then there must be some $i \in [s]$ such that $X[i]UX[i]U^\dagger = V$, where $V$ is a diagonal gate in exactly the $(\ell-1)$-th level of the Clifford hierarchy. This means that any code supporting a transversal gate in exactly the $\ell$-th level of the Clifford hierarchy (in the usual sense of Equations \eqref{eq:CCCZ_fixed_interblock} to \eqref{eq:CCCZ_fixed_interblock_logical}) supports a \textit{single-index addressable} gate in exactly the $(\ell-1)$-th level of the Clifford hierarchy. Of course, single-index addressability is the same thing as addressability for single-qudit gates, but we want more for multi-qudit gates like $\CCZ$. The main result of this work is the first construction of quantum codes (with non-trivial parameters) with addressable $\CCZ$ gates, with an easy generalisation to other multi-qudit gates. This result is established in Section~\ref{sec:intrablock_GRS} for qudits, and we convert these into qubit codes to obtain our main result in Section~\ref{sec:concatenation}.

\subsection{Converting Prime-Power Qudits to Qubits}\label{sec:qudit_to_qubit_prelim}

In this section, we will justify the statement that a $q$-dimensional qudit, where $q=2^t$, is really the same as a set of $t$ qubits~\cite{gottesman2016surviving}. To do this, one constructs an isomorphism $\psi$ between the states of the former object, which lie in $\mathbb{C}^q$, and the states of the latter object, which lie in $(\mathbb{C}^2)^{\otimes t}$. We will also obtain an isomorphism $\theta$ between the unitaries acting on the former space, and those on the latter space, which is compatible with the isomorphism $\psi$ in the sense that
\begin{equation}
    \theta(U)\left(\psi(\ket{v}\right) = \psi\left(U\ket{v}\right), \; \forall U \in U(\mathbb{C}^q) \text{ and } \ket{v} \in \mathbb{C}^q.
\end{equation}
Moreover, because $\theta$ specialises to an isomorphism between every level of the Clifford hierarchy, the $q$-dimensional qudit and the set of $t$ qubits really may be treated as the same thing. Recalling the computational basis and Pauli group for a qudit of dimension $q=2^t$ (see Section \ref{sec:prelim_qudit}), in particular that the computational basis states are labelled by the elements of the finite field $\mathbb{F}_q$, consider the following. One may view $\FF_q$, where $q = 2^t$, as a vector space over $\FF_2$. Naturally, then a basis for $\FF_q$ over $\FF_2$ is a collection of elements $\al_1,\cdots, \al_t\in \FF_q$ such that any element $\gamma\in \FF_q$ has an unique decomposition as 
\begin{align}
    \gamma = \sum_{i=1}^t s_i \al_i,\;\; s_i\in \FF_2.
\end{align}
We can write this decomposition as a bijection $\B:\FF_q\rightarrow \FF_2^t$.

\begin{definition}[Self-Dual Basis]
    A basis $\{\al_1, \cdots, \al_t\}$ is called self-dual if for all $i,j$,
    \begin{align}
        \tr(\al_i\al_j) = \delta_{ij}.
    \end{align}
\end{definition}
It is known that self-dual bases exist for all $\FF_{2^t}$ over $\mathbb{F}_2$~\cite{seroussi1980factorization}. Choosing a self-dual basis $\{\al_i\}$, fix $\B$ to be its decomposition map as above. 
For a vector $g\in \FF_q^n$, it is convenient to write $\B(g) = (\B(g_1), \cdots, \B(g_n))$. 
For a set of vectors $C\subseteq \FF_q^n$, we write $\B(C) = \{\B(v):v\in C\}\subseteq (\FF_2^t)^n$.

The isomorphism $\psi$ between $\mathbb{C}^q$ and $(\mathbb{C}^2)^{\otimes t}$ is then simply defined via a map between their computational bases which extends by linearity to the whole space.
Indeed,
\begin{align}
     \psi\ket{\eta} \coloneq \ket{\B(\eta)}, \forall \eta\in \FF_q.
\end{align}
Given this association between the computational basis states of $\mathbb{C}^q$ and $(\mathbb{C}^2)^{\otimes t}$, the map $\theta$ from unitaries on the former space to those on the latter space is defined, in words, as the map taking a matrix (with respect to the computational basis $(\ket{\eta})_{\eta \in \mathbb{F}_q}$) to the same matrix (with respect to the computational basis $(\ket{\B(\eta)})_{\eta \in \FF_q}$). Symbolically, $\theta$ is defined as
\begin{align}
    \theta : U(\CC_q) &\to U\left((\CC^2)^{\otimes t}\right)\\
    \theta(U)\left(\ket{v}\right) &= \psi U \psi^{-1}\left(\ket{v}\right)
\end{align}
which is easily seen to be an isomorphism. The map $\theta$ turns out to operate on Paulis in the following convenietly written way:
\begin{align}
    X^\beta \mapsto X^{\B(\beta)}, \; Z^\gamma \mapsto Z^{\B(\gamma)},
\end{align}
where we note that a single-qudit Pauli is mapped to a $t$-qubit Pauli. 
Importantly, this map preserves commutation relations, because
\begin{align}
    X^{\B(\beta)}Z^{\B(\gamma)}
    &= (-1)^{\B(\beta)\cdot \B(\gamma)}Z^{\B(\gamma)}X^{\B(\beta)}
    = (-1)^{\tr(\beta\gamma)}Z^{\B(\gamma)}X^{\B(\beta)}.
\end{align}
$\theta$ therefore restricts to an isomorphism between the Pauli groups on the respective spaces. Furthermore, it is possible to show that $\theta$ restricts to a bijection between every level of the Clifford hierarchy.

An important step in Section~\ref{sec:concatenation}, when we convert codes over qudits to codes over qubits, will be the following lemma. 
\begin{lemma}\label{lem:self-dual-embedding}
Given a quantum qudit CSS code $\qcode_0 = \CSS(X,\cL_X;Z, \cL_Z)$ with parameters $[[n_0,k_0,d_0]]_q$, we can obtain a qubit CSS code $\qcode_1$ as follows. $\qcode_1$ may be defined equivalently as either the CSS code $\CSS(X, \B(\cL_X); Z, \B(\cL_Z))$, or as the image of the code $\qcode_0$ under $\psi^{\otimes n_0}$.
\begin{itemize}
    \item If $\qcode_0$ encodes the standard basis state $\ket{u}, u\in \FF_q^{k_0}$, as $\ol{\ket{u}} \in (\mathbb{C}_q)^{n_0}$, then $\qcode_1$ encodes $\ket{\B(u)}$ as $\psi^{\otimes n_0}\left(\overline{\ket{u}}\right)$. $\qcode_1$ has parameters $[[n_1 = n_0t, k_1 = k_0t, d_1\ge d_0]]_2$. In particular, every logical Pauli operator of $\qcode_1$ must be supported on at least $d_0$ many distinct blocks of $t$-qubits.
    \item The set of stabilisers for $\qcode_1$ may be seen as the set of stabilisers of $\qcode_0$ operated on by $\theta^{\otimes n_0}$. Consider some $\mathbb{F}_q$-basis for $\mathcal{L}_X$, call it $\hat{\mathcal{L}}_X$. Then, a natural choice of generating set for the $X$-stabilisers of $\qcode_1$ is given by
    \begin{equation}
        \left\{X^{\mathcal{B}(v)}: v \in \tilde{\mathcal{L}}_X\right\}, \text{ where } \tilde{\mathcal{L}}_X \coloneq \left\{\alpha_i\cdot w, \text{ for }i \in [t] \text{ and } w \in \hat{\mathcal{L}}_X\right\}.
    \end{equation}
    The same statement holds for the $Z$-type stabilisers.
    \item Consider the $X$-type logical operator $\overline{X^\beta_a}$ of $\qcode_0$, which is the logical operator $X^\beta$ acting on the $a$-th logical qudit, for $\beta \in \mathbb{F}_q$ and $a \in [k_0]$. Then $\theta^{\otimes n_0}\left(\overline{X^\beta_a}\right)$ forms a natural choice for the logical operator $\overline{X^{\B(\beta)}_a}$, which is the $X$-operator $X^{\B(\beta)}$ acting on the $a$-th block of logical qubits in $\qcode_0$. The same statement holds for the $Z$-type logical operators.
\end{itemize}
\end{lemma}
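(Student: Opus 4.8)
The plan is to verify each of the three bullet points by tracking how the isomorphisms $\psi$ and $\theta$ interact with the CSS structure, using the fact that $\B$ is the decomposition map of a \emph{self-dual} basis. First I would establish the equivalence of the two definitions of $\qcode_1$: since $\psi\ket{\eta} = \ket{\B(\eta)}$ extends linearly, $\psi^{\otimes n_0}$ sends the stabilizer eigenstate conditions $X^{\bx}\ket{\psi}=\ket{\psi}$ for $\bx \in \cL_X$ precisely to $X^{\B(\bx)}\ket{\psi'}=\ket{\psi'}$ (using $\theta(X^\beta) = X^{\B(\beta)}$ on each qudit), so the image of the codespace is exactly $\CSS(X,\B(\cL_X);Z,\B(\cL_Z))$. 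The nontrivial point is that this is a well-defined CSS code, i.e.\ $\B(\cL_X) \subseteq \B(\cL_Z)^\perp$: this follows because for $v \in \cL_X$, $w \in \cL_Z$ we have $\B(v)\cdot\B(w) = \sum_i \B(v_i)\cdot\B(w_i) = \sum_i \tr(v_iw_i) = \tr(v\cdot w) = \tr(0) = 0$, where the middle equality is exactly the self-duality $\tr(\al_a\al_b)=\delta_{ab}$ expanded out in coordinates. I would also note $\B$ is $\FF_2$-linear, so $\B(\cL_X)$ is genuinely an $\FF_2$-linear code.

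For the first bullet, the encoding statement is immediate from $\psi^{\otimes n_0}$ applied to $\ol{\ket u} \propto \sum_{g \in \cL_X}\ket{\sum_a u_a g^a + g}$, using $\psi^{\otimes n_0}\ket{h} = \ket{\B(h)}$ and $\FF_2$-linearity of $\B$ to get $\B(\sum_a u_a g^a + g) = \sum_a \B(u_a g^a) + \B(g)$; one has to observe that $\B(u_a g^a)$, with $u_a \in \FF_q$, unpacks into the contribution of the $t$-qubit logical block, consistent with the logical $X$ operators identified in the third bullet. For parameters: $n_1 = n_0 t$ and $k_1 = k_0 t$ follow from dimension counting ($\dim_{\FF_2}\B(\cL_X) = t\dim_{\FF_q}\cL_X$ since $\B$ is an $\FF_2$-linear bijection $\FF_q^{n_0}\to\FF_2^{tn_0}$). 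For the distance bound $d_1 \ge d_0$, I would argue that any nontrivial $\FF_2$-logical operator of $\qcode_1$ corresponds, under $\theta^{-\otimes n_0}$, to a nontrivial logical Pauli of $\qcode_0$ of qudit-weight $\ge d_0$; since each nonzero qudit coordinate contributes at least one nonzero qubit among its $t$-qubit block, the support touches $\ge d_0$ distinct blocks of $t$ qubits, giving both the weight bound and the ``supported on $\ge d_0$ blocks'' refinement. The one subtlety is making sure the correspondence between $\qcode_1$-logicals and $\qcode_0$-logicals is exact (neither strictly smaller nor larger), which follows from $\B$ being a bijection carrying $\cL_Z^\perp/\cL_X$ onto $\B(\cL_Z)^\perp/\B(\cL_X)$ — again using self-duality to identify $\B(\cL_Z^\perp)$ with $\B(\cL_Z)^\perp$.

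For the second bullet, I would take an $\FF_q$-basis $\hat{\cL}_X = \{w^1,\dots,w^m\}$ of $\cL_X$ and observe that $\{\al_i w^j : i \in [t], j \in [m]\}$ spans $\cL_X$ as an $\FF_q$-set in the sense that every $\FF_2$-linear combination of its $\B$-images spans $\B(\cL_X)$: indeed $\B(\cL_X) = \{\B(\sum_j c_j w^j) : c_j \in \FF_q\}$ and writing $c_j = \sum_i s_{ij}\al_i$ with $s_{ij}\in\FF_2$ and using $\FF_2$-linearity of $\B$ gives $\B(\sum_j c_j w^j) = \sum_{i,j} s_{ij}\B(\al_i w^j)$, so the $\B(\al_i w^j)$ form an $\FF_2$-spanning set, and a count ($tm = \dim_{\FF_2}\B(\cL_X)$) shows it is a basis. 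Hence $\{X^{\B(v)} : v \in \tilde{\cL}_X\}$ generates the $X$-stabilizer group of $\qcode_1$; the $Z$-case is identical. For the third bullet, since $\theta$ is an isomorphism compatible with $\psi$, $\theta^{\otimes n_0}$ carries the logical operator $\overline{X^\beta_a} = X^{\beta g^a}$ of $\qcode_0$ to $X^{\B(\beta g^a)}$, and one checks $\B(\beta g^a)$ is exactly a representative of the $a$-th block's logical $X^{\B(\beta)}$ by verifying the commutation relations against the $\B(\gamma h^b)$ operators using $\B(v)\cdot\B(w) = \tr(v\cdot w)$ and the original relation $g^a\cdot h^b = \one_{a=b}$; that $\theta^{\otimes n_0}$ sends stabilizers to stabilizers and logicals to logicals (not into the stabilizer group) follows from it being a group isomorphism preserving the relevant subgroups.

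The main obstacle, I expect, is not any single step but the bookkeeping of showing the logical-operator and logical-state identifications are \emph{consistent with each other} and with the natural $t$-qubit block structure — i.e.\ that $\psi^{\otimes n_0}(\ol{\ket u})$, the operators $\theta^{\otimes n_0}(\overline{X^\beta_a})$, and the stabilizer description from bullet two all refer to the same partition of the $n_1 = n_0 t$ qubits into $k_0$ logical blocks of $t$ qubits each. Everywhere the essential mechanism is the identity $\B(v)\cdot\B(w) = \tr(v\cdot w)$, which is precisely where self-duality of the chosen basis is used, so I would isolate that as a preliminary observation and then the three bullets become routine consequences.
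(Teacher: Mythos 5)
Your proposal is correct, and it follows the standard self-dual-basis argument that the paper itself does not spell out but delegates to Sections~2.2 and~4.3 of~\cite{wills2024constant}: the whole lemma indeed reduces to the identity $\B(v)\cdot\B(w)=\tr(v\cdot w)$, which gives $\B(\cL_X)\subseteq\B(\cL_Z)^\perp$, the identification $\B(\cL_Z^\perp)=\B(\cL_Z)^\perp$ by a dimension count, preservation of commutation relations under $\theta$, and the block-support distance bound exactly as you describe. The only point worth tightening is the consistency check you flag at the end, which is most cleanly handled by computing $\theta^{\otimes n_0}\left(\overline{X^\beta_a}\right)\psi^{\otimes n_0}\left(\ol{\ket{u}}\right)=\psi^{\otimes n_0}\left(\ol{\ket{u+\beta e_a}}\right)$ directly, rather than only via commutation relations.
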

We refer readers to Sections~2.2 and~4.3 of~\cite{wills2024constant} for detailed proofs and discussions of this Lemma.

\subsection{Quantum Codes from Punctured and Shortened Classical Codes}\label{sec:qcode_from_puncture}

The introduction of triorthogonal quantum codes in~\cite{bravyi2012magic} gave a unified framework for building quantum codes supporting transversal $T$-gates, called the triorthogonal framework, where quantum codes were defined out of a matrix with nice properties, called a ``triorthogonal matrix''. Later,~\cite{hastings2018distillation} became the first work to establish the possibility of magic state distillation with a sub-logarithmic overhead, meaning that the magic state distillation exponent $\gamma$\footnote{This means that the protocol's ratio of noisy inputted magic states to output magic states of error rate below $\epsilon$ is $\mathcal{O}(\log^\gamma(1/\epsilon))$.} could be made smaller than 1.
This was done via an instantiation of Bravyi and Haah's framework to build a quantum code supporting a transversal $T$ gate with good enough parameters that sub-logarithmic overhead distillation became possible. The triorthogonal matrix in question was constructed as a generator matrix of a punctured classical code that had certain nice properties (multiplication properties, as we will describe) to make the triorthogonality possible. This process has been expanded upon~\cite{krishna2019towards,wills2024constant,nguyen2024quantum} and is central to the present work.\footnote{The reader should not have in mind that these frameworks for constructing quantum codes with transversal diagonal Clifford hierarchy gates are in some way \textit{ad hoc}. Indeed, it was proved~\cite{rengaswamy2020optimality} that any quantum code supporting a transversal $T$ gate is a triorthogonal code. In turn, it is easily seen that any triorthogonal matrix may be realised as the generator matrix of a punctured multiplication-friendly classical code. While the triorthogonality framework has been generalised~\cite{krishna2018magic,wills2024constant,nguyen2024quantum} for the construction of quantum codes supporting other diagonal Clifford hierarchy gates, similar statements saying that quantum codes supporting such gates must be of this form have not been proved. Nevertheless, we believe similar statements on the necessity of the generalised triorthogonality frameworks for these purposes should hold.} Therefore, we present the necessary preliminaries on puncturing and shortening classical codes now.

\begin{definition}[Shortened and Punctured Codes]
    For a set of $t$ indices $T\subseteq [n]$, and a vector $c\in \FF_q^n$,
    we let $c|_T \in \FF_q^{t}$ denote the restriction of $c$ to indices in $T$, and $\puncT{c} \in \FF_q^{n-t}$ denote the vector $c$ with the indices in $T$ removed. 
    For a classical code $C\subseteq \FF_q^n$, 
    \begin{enumerate}[itemsep = 0pt]
        \item We define the $T$-punctured code $\puncT{C}$ to be the linear subspace $\{\puncT{c}: c\in C\}\subseteq \FF_q^{n-t}$.
        \item We define the $T$-shortened code $\shrT{C}$ to be the linear subspace $\{\puncT{c}: c\in C, c|_T = \mathbf{0}_t\}\subseteq \FF_q^{n}$, where $\mathbf{0}_t$ is the all zero vector of length $t$.
    \end{enumerate}  
\end{definition}
We refer readers to Chapter~1 of~\cite{Fundamentals_of_ECC} for a detailed introduction to these topics, as well as a proof of the following lemma.
\begin{lemma}[Theorem~1.5.7 of~\cite{Fundamentals_of_ECC}]
\label{lem:short_punc}
    For a set of $t$ indices $T\subseteq [n]$, and a $[n, k, d]$ classical code $C\subseteq \FF_q^n$, we have
    \begin{enumerate}[itemsep = 0pt]
        \item $\shrT{(C^{\perp})} = (\puncT{C})^\perp$, and $\puncT{(C^\perp)} = (\shrT{C})^\perp$. 
        \item If $t < d$, then $\puncT{C}$ has dimension $k$, and $\shrT{(C^\perp)}$ has dimension $n-t-k$. 
    \end{enumerate}  
\end{lemma}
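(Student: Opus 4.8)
The plan is to establish the two duality identities (item~1 of the lemma) by a direct ``zero-padding'' argument, and then to deduce the dimension formulas (item~2) from them together with a short rank computation, which is where the hypothesis $t<d$ enters.

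First I would set up the $\FF_q$-linear inclusion $\iota:\FF_q^{n-t}\hookrightarrow\FF_q^n$ that inserts a zero into each coordinate indexed by $T$ and keeps the remaining $n-t$ coordinates in order. Its image is precisely the subspace $\{w\in\FF_q^n: w|_T=\mathbf{0}\}$, and it is inner-product preserving in the strong sense that $\iota(v)\cdot c=v\cdot\puncT{c}$ for all $v\in\FF_q^{n-t}$ and $c\in\FF_q^n$ (the $T$-entries of $\iota(v)$ being zero). Granting this, $v\in(\puncT{C})^\perp$ iff $v\cdot\puncT{c}=0$ for all $c\in C$, iff $\iota(v)\cdot c=0$ for all $c\in C$, iff $\iota(v)\in C^\perp$; and since $\iota(v)$ vanishes on $T$ by construction, the last condition says exactly that $v=\puncT{\iota(v)}$ lies in $\shrT{(C^\perp)}$. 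This gives $\shrT{(C^\perp)}=(\puncT{C})^\perp$. The second identity $\puncT{(C^\perp)}=(\shrT{C})^\perp$ then follows formally: apply the first identity to $C^\perp$ in place of $C$ (using $(C^\perp)^\perp=C$) to obtain $(\puncT{(C^\perp)})^\perp=\shrT{C}$, and take orthogonal complements in $\FF_q^{n-t}$, where $(V^\perp)^\perp=V$ for every subspace $V$.

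Next, for item~2, I would examine the puncturing map $\pi_T:C\to\FF_q^{n-t}$, $c\mapsto\puncT{c}$, which is $\FF_q$-linear and, by definition, surjects onto $\puncT{C}$. Its kernel consists of the codewords of $C$ supported inside $T$; any such codeword has Hamming weight at most $|T|=t<d$, hence equals $\mathbf{0}$ by minimality of the distance, so $\pi_T$ is injective and $\dim\puncT{C}=\dim C=k$. Finally, $\shrT{(C^\perp)}=(\puncT{C})^\perp$ by item~1, and orthogonal complements in $\FF_q^{n-t}$ satisfy $\dim V+\dim V^\perp=n-t$, giving $\dim\shrT{(C^\perp)}=(n-t)-k$.

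I do not expect a genuine obstacle here --- this is a standard fact (Theorem~1.5.7 of~\cite{Fundamentals_of_ECC}). The only care needed is bookkeeping: tracking which ambient space ($\FF_q^n$ versus $\FF_q^{n-t}$) each vector lives in, and checking carefully that zero-padding really is an inner-product-preserving bijection onto $\{w: w|_T=\mathbf{0}\}$. Once that correspondence is pinned down, both identities of item~1 are immediate and item~2 is a one-line rank count. (It is also worth flagging the minor typo in the shortened-code definition: $\shrT{C}$ should be recorded as a subspace of $\FF_q^{n-t}$ rather than $\FF_q^n$ --- this is what makes the orthogonal complements in item~1 live in the correct space.)
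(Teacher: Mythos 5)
Your proof is correct. The paper does not actually prove this lemma — it cites it as Theorem~1.5.7 of~\cite{Fundamentals_of_ECC} — and your argument is precisely the standard textbook one: the zero-padding map identified with $\{w:w|_T=\mathbf{0}\}$ gives $\shrT{(C^\perp)}=(\puncT{C})^\perp$, the second identity follows by dualizing, and the dimension count uses injectivity of puncturing on $C$ (a kernel element would have weight at most $t<d$) together with $\dim V+\dim V^\perp=n-t$ in $\FF_q^{n-t}$. Your side remark is also right: $\shrT{C}$ should be recorded as a subspace of $\FF_q^{n-t}$, not $\FF_q^{n}$, which is a typo in the paper's definition.
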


\begin{fact}\label{fact:punctured_distance}
    $\shrT{C}$ has distance $d$, $\puncT{C}$ has distance at least $d-t$. 
\end{fact}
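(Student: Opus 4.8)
The plan is to reduce both claims to elementary Hamming‑weight comparisons between codewords of $C$ and codewords of its shortening or puncturing; I do not expect any genuine obstacle, and the only point needing care is the precise meaning of ``$\shrT{C}$ has distance $d$,'' which should be read as the lower bound $d(\shrT{C}) \ge d$ (shortening can in fact strictly \emph{increase} the distance, e.g.\ for the binary code $\mathrm{span}\{11100,\,00011\}$ shortened at the fourth coordinate).

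For the shortened code, first I would identify $\shrT{C}$ with the subcode $C_0 := \{c \in C : c|_T = \mathbf{0}_t\}$ of $C$ via the puncture map $c \mapsto \puncT{c}$: on $C_0$ this map is injective (the coordinates it deletes are already zero) and weight‑preserving, so $d(\shrT{C}) = \min_{0 \ne c \in C_0}|c|$. Since $C_0 \subseteq C$, this minimum is at least $d(C) = d$, giving $d(\shrT{C}) \ge d$. For the matching upper bound in the cases where we actually invoke this fact, I would note that $C$ then has a minimum‑weight codeword supported outside $T$, and puncturing it gives a weight‑$d$ codeword of $\shrT{C}$.

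For the punctured code, the plan is a one‑line bound. Let $v$ be any nonzero codeword of $\puncT{C}$, so $v = \puncT{c}$ for some $c \in C$; since deleting coordinates from the all‑zero vector yields the all‑zero vector, $c \ne \mathbf{0}$ and hence $|c| \ge d$. As $\puncT{c}$ is $c$ with the $t$ coordinates of $T$ removed, $|v| = |\puncT{c}| \ge |c| - t \ge d - t$. Because $v$ was arbitrary, $d(\puncT{C}) \ge d - t$ (a vacuous bound once $t \ge d$). If one additionally assumes $t < d$, combining this with Lemma~\ref{lem:short_punc}(2) shows $\puncT{C}$ is a genuine $[n-t,\,k,\,\ge d-t]_q$ code, which is the form used later in the paper.
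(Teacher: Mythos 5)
Your argument is correct, and it is essentially the standard textbook argument that the paper is implicitly relying on: the paper states this as a Fact without proof, pointing the reader to Chapter~1 of~\cite{Fundamentals_of_ECC}, so there is no in-paper proof to diverge from. Both steps are sound: identifying $\shrT{C}$ with the weight-preserving image of the subcode $\{c \in C : c|_T = \mathbf{0}_t\}$ gives $d(\shrT{C}) \ge d$, and the coordinate-deletion bound $|\puncT{c}| \ge |c| - t$ gives $d(\puncT{C}) \ge d - t$ (vacuous when $t \ge d$, as you note). Your observation that the shortened-code claim should be read as the lower bound $d(\shrT{C}) \ge d$ is also apt — your counterexample $\mathrm{span}\{11100, 00011\}$ shortened at the fourth coordinate is valid — and this reading loses nothing, since everywhere the paper invokes this Fact (e.g.\ in the distance bounds of Section~\ref{sec:qcode_from_puncture} and Theorem~\ref{thm:GRS_gen_mat_address}) only the lower bounds are used.
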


We may now describe how punctured and shortened classical codes may be used to construct quantum CSS codes. Consider an $[\N, \K, D]$ classical code $C$.

Fixing $k < \K$, take a generator matrix $\tilde G$ of $C$ with the following form,
\begin{align}
    \tilde{G} = \begin{pmatrix}
        I_k & G_1 \\
        0 & G_0
    \end{pmatrix}
\end{align}
where $I_k$ is the $k\times k$ identity matrix, $0$ is the $(\K-k) \times k$ all $0$'s matrix, and $G_1$ and $G_0$ are $k \times n$ and $(\K-k) \times n$ matrices respectively, for $n = N-k$. 
Let 
\begin{align}
    G = \begin{pmatrix}
        G_1 \\ G_0
    \end{pmatrix}
\end{align}
be the last $n$ columns of $\tilde{G}$. 
We denote the $a$'th row of $G$ by $g^a \in \bbF_q^n$. 
Let $\cG_0, \cG_1$ and $\cG$ be the row spaces of $G_0, G_1$ and $G$ respectively (over $\mathbb{F}_q$). 
We will define our qudit code to be 
\begin{align}\label{eq:basic_matrix_to_qcode_defn}
    \qcode = \CSS(X, \cG_0; Z, \cG^\perp),
\end{align}
where we note that $\mathcal{G}_0 \subseteq \mathcal{G} = (\mathcal{G}^\perp)^\perp$, so the definition makes sense. Note that in many of the constructions that follow, we will simply start from a certain matrix $G$, and talk about quantum code $\qcode$ associated to $G$, i.e., we may bypass the step of defining and puncturing $\tilde{G}$, although it is shown here for completeness. Note that the quantum code $\qcode$ associated to $G$ will always be given by Equation~\eqref{eq:basic_matrix_to_qcode_defn}.

While additional assumptions on $G$ will be needed to make $\qcode$ support interesting transversal, addressable gates, the following assumption is the minimal one to define a sensible quantum code $\qcode$. This assumption will always hold for our later constructions of matrices $G$.
\begin{assumption}\label{assump:independence}
    We assume that $\cG_0\cap \cG_1 = 0$, and the rows of $G_1$ are linearly independent.
\end{assumption}
Given this assumption, we can analyze the logical structure of our code.
Suppose $G$ has rank $r$. 
It follows from the above assumption that $\dim(\cG_1) = k$ and $\dim(\cG_0) = r-k$.
The number of logical qudits is $n - \dim(\cG^\perp) - \dim(\cG_0) = n - (n-r) - (r-k) = k$. 
For a vector $u\in \FF_q^k$, the logical encoding of the corresponding $k$-qudit computational basis state in the code $\qcode$ is, up to normalization,
\begin{align}~\label{eq:logical_state}
    \ol{\ket{u}} \propto \sum_{g\in \cG_0} \Ket{\sum_{a=1}^k u_ag^a + g}. 
\end{align}
Here, we have made the choice to let the logical $X$ operator for the $a$-th logical qudit be $\overline{X[a]} = X^{\beta g^a}$ for $a \in [k]$.
We can also find $\{ h^b\in \cG_0^\perp\setminus\cG^\perp \}_{b \in [k]}$ such that $g^a\cdot h^b = \one_{a=b}$, in which case
the $Z^\gamma$ logical operator of the $a$-th logical qubit would be $Z^{\gamma h^a}$.
The distance of our quantum code is $d(\qcode)$, which from Corollary~\ref{cor:CSS_distance} we know is at least $\min\left\{d(\cG_0^\perp), d(\cG)\right\}$. 
Let $T$ denote the first $k$ indices of $\tilde G$, which is punctured from $G$.
Since $\cG_0 = \shrT{C}$, we have from Lemma~\ref{lem:short_punc} that $\cG_0^\perp = \puncT{(C^\perp)}$. 
Moreover, $\cG = \puncT{C}$.
Therefore, we have
\begin{align}\label{eq:dist_qcode}
    d(\qcode) \ge \min\left\{d(\mathcal{G}_0^\perp),d(\mathcal{G})\right\}=\min\left\{d(\puncT{(C^\perp)}), d(\puncT{C})\right\} \ge \min\left\{d(C^\perp), d(C)\right\}-k. 
\end{align}
To get a non-trivial bound on the distance of $\mathcal{Q}$, it will therefore be sensible to not only choose $k$ with $k < K$, but also a $k$ with $k < d(C^\perp)$ and $k < d(C)$.

\subsection{Classical Algebraic Codes}

We first review standard definitions and constructions of Reed-Solomon codes.     
For a finite field $\FF_q$, let $\alpha_1, \cdots, \alpha_{\bigN}$ be a set of $\bigN$ distinct points in $\FF_q$, denoted $\bal = (\al_1, \cdots, \al_{\bigN})$.
For an integer $0< m < \bigN$, the Reed-Solomon code $\RS_{\bigN,m}(\bal)$ is defined as follows.

\begin{definition}[Reed-Solomon Codes]
    \begin{align}
        \RS_{\bigN,m}(\bal) = \{(f(\alpha_1), \cdots, f(\al_{\bigN})): f\in \FF_q[X]_{<m}\}.
    \end{align}
\end{definition}
Here we use $\FF_q[X]_{<m}$ to denote all polynomials in $\FF_q[X]$ of degree less than $m$. 
Note that for $m_1 < m_2< \bigN$, we have $\RS_{\bigN,m_1}(\bal) \subset \RS_{\bigN,m_2}(\bal)$.

Most commonly (and most simply) in the literature, the evaluation points of a Reed-Solomon code are taken to be the whole field. 
In this case, Reed-Solomon codes have very elegant properties; for example, their dual is another Reed-Solomon code: $\RS_{\bigN,m}(\bal)^\perp = \RS_{\bigN,\bigN-m}(\bal)$. 
This property essentially boils down to the fact that for any polynomial $f$ of degree at most $q-2$, $\sum_{\alpha \in \FF_q} f(\alpha) = 0$. For our purposes, we must also consider the case where the evaluation points are a strict subset of the field. In this case, we have the following similarly elegant property.

\begin{fact} \label{fact:GRS-coeff} 
    For any $\alpha_1, \dots, \alpha_{\bigN} \in \FF_q$, there are some $\nu_1, \dots, \nu_{\bigN} \in \FF_q \backslash \{ 0 \}$ such that for any polynomial $f$ of degree $< \bigN-1$, it holds that
    \begin{align}
        \sum_{i \in [\bigN]} \nu_i f(\alpha_i) = 0.
    \end{align}
\end{fact}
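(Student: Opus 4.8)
The plan is to exhibit the multipliers $\nu_i$ explicitly and verify the orthogonality by a single coefficient comparison in the Lagrange interpolation identity. Since the $\alpha_i$ must be distinct for them to serve as evaluation points of a Reed-Solomon code (which forces $\bigN \le q$), set $E(X) = \prod_{j \in [\bigN]}(X - \alpha_j)$ and define
\[
    \nu_i := \frac{1}{E'(\alpha_i)} = \frac{1}{\prod_{j \ne i}(\alpha_i - \alpha_j)}, \qquad i \in [\bigN],
\]
where $E'$ denotes the formal derivative. Each $\nu_i$ is a product of inverses of nonzero field elements (the $\alpha_j$ being distinct), so $\nu_i \in \FF_q \setminus \{0\}$, as required.

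Next I would invoke Lagrange interpolation. For any polynomial $f \in \FF_q[X]$ with $\deg f < \bigN$, one has the identity
\[
    f(X) = \sum_{i \in [\bigN]} f(\alpha_i) \prod_{j \ne i} \frac{X - \alpha_j}{\alpha_i - \alpha_j} = \sum_{i \in [\bigN]} \nu_i\, f(\alpha_i) \prod_{j \ne i}(X - \alpha_j),
\]
which holds because both sides are polynomials of degree $< \bigN$ that agree at the $\bigN$ distinct points $\alpha_1, \dots, \alpha_{\bigN}$.

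Finally I would compare the coefficient of $X^{\bigN-1}$ on the two sides. Each $\prod_{j \ne i}(X - \alpha_j)$ is monic of degree $\bigN - 1$, so the coefficient of $X^{\bigN-1}$ on the right-hand side is exactly $\sum_{i \in [\bigN]} \nu_i f(\alpha_i)$. On the left-hand side, if $\deg f < \bigN - 1$ then the coefficient of $X^{\bigN-1}$ is $0$. Hence $\sum_{i \in [\bigN]} \nu_i f(\alpha_i) = 0$ for every such $f$, which is the claim. Equivalently, $(\nu_1,\dots,\nu_{\bigN})$ spans the one-dimensional dual of $\RS_{\bigN,\bigN-1}(\bal)$, and its entries are manifestly nonzero.

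There is essentially no obstacle here; the only point requiring a moment's care is recognizing that the desired conclusion is precisely the statement that the degree-$(\bigN-1)$ coefficient vanishes, so the whole fact collapses to one line of Lagrange interpolation. (One could instead read off the same identity from the partial-fraction expansion $1/E(X) = \sum_i \nu_i/(X - \alpha_i)$, but the coefficient comparison is the most direct route.)
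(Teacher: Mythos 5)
Your proof is correct. The paper does not prove this fact itself (it defers to Hall's lecture notes on the duality $\RS_{\bigN,m}(\bal)^\perp = \GRS_{\bigN,\bigN-m}(\bal,\nu)$), and your argument is exactly the standard one underlying that citation: taking $\nu_i = 1/E'(\alpha_i)$ with $E(X)=\prod_j(X-\alpha_j)$ and reading off the vanishing of the degree-$(\bigN-1)$ coefficient in the Lagrange interpolation identity. Your remark that the $\alpha_i$ must be distinct (as they indeed are in the paper's usage, being distinct evaluation points) is the only hypothesis worth making explicit, and you handle it correctly.
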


Fix the evaluation points $\bal = (\alpha_1, \dots, \alpha_{\bigN})$ and the corresponding weights $\mathbf{\nu} = (\nu_1, \dots, \nu_{\bigN})$. The dual of a Reed-Solomon code on evaluation points $\bal$ can be described by a \emph{generalised Reed-Solomon code}.

\begin{theorem}\label{thm:GRS_dual}
    Duals of Reed-Solomon codes are generalised Reed-Solomon codes. Specificially,
    \begin{align}
        \RS_{\bigN,m}(\bal)^\perp = \GRS_{\bigN,\bigN-m}(\bal, \mathbf{\nu}),
    \end{align}
    where 
    \begin{align}
        \GRS_{\bigN, \bigN-m}(\bal, \mathbf{\nu}) := \{ \nu_1 f(\al_1), \dots, \nu_{\bigN} f(\al_{\bigN}) : f \in \FF_q[X]_{<m} \}. 
    \end{align}
\end{theorem}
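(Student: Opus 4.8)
The plan is to prove the two codes coincide by showing one is contained in the dual of the other and then matching dimensions.

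\textbf{Step 1: the containment $\GRS_{\bigN,\bigN-m}(\bal,\mathbf{\nu}) \subseteq \RS_{\bigN,m}(\bal)^\perp$.} Take an arbitrary codeword of $\RS_{\bigN,m}(\bal)$, say $(f(\alpha_1),\dots,f(\alpha_{\bigN}))$ with $f \in \FF_q[X]_{<m}$, and an arbitrary codeword of $\GRS_{\bigN,\bigN-m}(\bal,\mathbf{\nu})$, say $(\nu_1 g(\alpha_1),\dots,\nu_{\bigN} g(\alpha_{\bigN}))$ with $g \in \FF_q[X]_{<\bigN-m}$. Their inner product is $\sum_{i\in[\bigN]} \nu_i f(\alpha_i) g(\alpha_i) = \sum_{i\in[\bigN]} \nu_i (fg)(\alpha_i)$. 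The product $fg$ has degree at most $(m-1)+(\bigN-m-1) = \bigN-2$, so in particular $\deg(fg) < \bigN-1$, and Fact~\ref{fact:GRS-coeff} (applied to the weights $\mathbf{\nu}$ it guarantees) gives that this sum vanishes. Since this holds for every pair of codewords, we get the claimed containment.

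\textbf{Step 2: dimension count.} The evaluation map $f \mapsto (f(\alpha_1),\dots,f(\alpha_{\bigN}))$ restricted to $\FF_q[X]_{<m}$ is injective, because a nonzero polynomial of degree $< m < \bigN$ cannot vanish at all $\bigN$ distinct points $\alpha_1,\dots,\alpha_{\bigN}$; hence $\dim \RS_{\bigN,m}(\bal) = m$ and $\dim \RS_{\bigN,m}(\bal)^\perp = \bigN - m$. Likewise, the map $g \mapsto (\nu_1 g(\alpha_1),\dots,\nu_{\bigN} g(\alpha_{\bigN}))$ on $\FF_q[X]_{<\bigN-m}$ is injective — here it is essential that each $\nu_i \neq 0$ (guaranteed by Fact~\ref{fact:GRS-coeff}), so that $\nu_i g(\alpha_i)=0$ forces $g(\alpha_i)=0$, and again a nonzero $g$ of degree $< \bigN-m \le \bigN-1$ cannot vanish at $\bigN$ points — so $\dim \GRS_{\bigN,\bigN-m}(\bal,\mathbf{\nu}) = \bigN - m$.

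\textbf{Step 3: conclude.} A subspace of $\FF_q^{\bigN}$ that is contained in another subspace of the same finite dimension must equal it, so Steps 1 and 2 give $\RS_{\bigN,m}(\bal)^\perp = \GRS_{\bigN,\bigN-m}(\bal,\mathbf{\nu})$. The proof is essentially routine; the only place requiring care is the degree bookkeeping in Step 1 — one must use that \emph{both} $f$ and $g$ carry strict degree bounds, so that $\deg(fg) \le \bigN-2$ and Fact~\ref{fact:GRS-coeff} (which needs degree $< \bigN-1$) genuinely applies — together with remembering to invoke $\nu_i \neq 0$ for the injectivity in the GRS dimension count.
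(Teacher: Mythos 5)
Your proof is correct, and it is the standard containment-plus-dimension-count argument; the paper itself does not prove Theorem~\ref{thm:GRS_dual} but defers to Hall's lecture notes, and your argument is exactly the textbook route: Fact~\ref{fact:GRS-coeff} applied to $fg$ with $\deg(fg)\le \bigN-2 < \bigN-1$ gives the containment, and injectivity of the two evaluation maps (using $\nu_i\neq 0$ and the distinctness of the $\alpha_i$) gives matching dimensions $\bigN-m$. One small point worth flagging: the displayed definition of $\GRS_{\bigN,\bigN-m}(\bal,\mathbf{\nu})$ in the theorem statement writes $f \in \FF_q[X]_{<m}$, which is evidently a typo for $f \in \FF_q[X]_{<\bigN-m}$; you silently (and correctly) used the intended degree bound $<\bigN-m$, which is what makes both the degree bookkeeping and the dimension count work.
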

Note that in the case that $\nu = \mathbf{1}^{\bigN}$, the all-1's vector, $\GRS_{\bigN, m}(\bal, \nu) = \RS_{\bigN, m}(\bal)$. 
Lastly, we state the rate and distance of (generalised) Reed-Solomon codes.
\begin{theorem} \label{thm:GRS-rate-distance}
    The generalised Reed-Solomon code $\GRS_{\bigN, m}(\mathbf{\alpha})$ has dimension $m$ and distance $\bigN-m+1$. 
\end{theorem}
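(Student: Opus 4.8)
The plan is to establish the dimension and the minimum distance separately, both reducing to the elementary fact that a nonzero polynomial in $\FF_q[X]$ of degree at most $m-1$ has at most $m-1$ distinct roots in $\FF_q$. Throughout, write $\mathrm{ev}$ for the $\FF_q$-linear evaluation map $f \mapsto (\nu_1 f(\al_1), \dots, \nu_{\bigN} f(\al_{\bigN}))$ on $\FF_q[X]_{<m}$, whose image is $\GRS_{\bigN,m}(\bal,\mathbf{\nu})$ by definition; recall that each weight $\nu_i$ is nonzero (as in Fact~\ref{fact:GRS-coeff} and the definition of a generalised Reed-Solomon code).

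\textbf{Dimension.} The domain $\FF_q[X]_{<m}$ is an $m$-dimensional $\FF_q$-vector space, with basis $1, X, \dots, X^{m-1}$, so it suffices to show $\mathrm{ev}$ is injective. If $f \in \ker(\mathrm{ev})$, then since each $\nu_i \neq 0$ we get $f(\al_i) = 0$ for every $i \in [\bigN]$; because the $\al_i$ are distinct and $\bigN > m-1 \geq \deg f$, the polynomial $f$ would have more distinct roots than its degree, forcing $f = 0$. Hence $\mathrm{ev}$ is injective and $\dim \GRS_{\bigN,m}(\bal,\mathbf{\nu}) = m$.

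\textbf{Distance.} For the lower bound, any nonzero codeword is $\mathrm{ev}(f)$ for some nonzero $f$ with $\deg f \leq m-1$; such an $f$ vanishes at at most $m-1$ of the $\al_i$, and since $\nu_i \neq 0$ the coordinate $\nu_i f(\al_i)$ vanishes exactly when $f(\al_i) = 0$, so $\mathrm{ev}(f)$ has at most $m-1$ zero coordinates, i.e.\ Hamming weight at least $\bigN - m + 1$. For the matching upper bound I would exhibit the explicit codeword $\mathrm{ev}(f)$ for $f(X) = \prod_{i=1}^{m-1}(X - \al_i)$, which has degree exactly $m-1$ and hence lies in $\FF_q[X]_{<m}$; by distinctness of the $\al_i$, $f(\al_i) = 0$ precisely for $i \leq m-1$, so this codeword has weight exactly $\bigN - m + 1$. (Alternatively one may simply invoke the Singleton bound $d \leq \bigN - \dim + 1$ together with the dimension just computed.) Combining the two bounds gives $d = \bigN - m + 1$.

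There is no real obstacle here --- this is a classical fact and the argument is essentially self-contained. The only points needing care are that every weight $\nu_i$ is nonzero (used in both halves) and the standing hypothesis $0 < m < \bigN$, which is what makes $\mathrm{ev}$ injective rather than merely well defined and which guarantees that the polynomial $\prod_{i=1}^{m-1}(X-\al_i)$ appearing in the distance computation genuinely has degree below the code parameter $m$.
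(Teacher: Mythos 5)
Your proof is correct: the injectivity of the evaluation map gives the dimension, and the degree bound plus the explicit codeword vanishing at $m-1$ of the distinct evaluation points (using that every weight $\nu_i$ is nonzero) gives the exact distance $\bigN-m+1$. The paper does not prove this theorem itself but defers to the cited lecture notes of Hall, and your argument is precisely the standard one found there, so there is nothing to reconcile.
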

For a proof of these theorems and facts, we refer readers to lectures notes of Hall~\cite{hall2003notes}. Aside from (generalised) Reed-Solomon codes, instantiations of our addressable orthogonality framework in Section~\ref{sec:instantiations_orthogonality} will require classical algebraic geometry (AG) codes. For the relevant preliminaries on AG codes, we refer the reader to Section 2.3 of~\cite{wills2024constant}.

\section{Transversal, Addressable CCZ Gates over Qudits}\label{sec:intrablock_GRS}

In this section, we will build quantum codes over qudits that support addressable CCZ gates. Our qudits will be over the field $\bbF_q$, where $q = 2^t$ is a power of $2$. At a high level, they will be built from punctured Reed Solomon codes over $\bbF_q$. Later, in Section~\ref{sec:concatenation}, we will show how to convert these qudit codes with addressable $\CCZ$ gates into qubit codes with addressable $\CCZ$ gates.

Let $\KK \subseteq \FF_q$ be a subfield\footnote{Strictly speaking, one only needs the shifts $\Delta_{A,B}$ (defined below to be the difference between $\beta_A$ and $\beta_B$) to preserve $\alpha_1, \dots, \alpha_n$. For instance, one could instead take $\KK$ to be a linear subspace of $\mathbb{F}_q$.} of order $n$, and let $\zeta \in \FF_q \backslash \KK$.\footnote{One can think about $\zeta$ as a field element connecting the addresses of the physical qudits to those of the logical qudits.} Let $\alpha_1, \dots, \alpha_n$ be the elements of $\KK$, and let $\beta_1, \dots, \beta_k \in \FF_q$ be distinct elements of $\bbF_q$ such that $\beta_i - \zeta \in \KK$. We will also choose $m < \frac{n}{3} + 1$ and $k \in [m]$. 

We begin with a Reed-Solomon code of dimension $m$ over the $\bigN = k + n$ evaluation points $\mathbf{\beta} \cup \mathbf{\alpha}$, where $\mathbf{\beta} = \{ \beta_1, \dots, \beta_k \}$ and
$\mathbf{\alpha} = \{ \alpha_1, \dots, \alpha_n \}$. By Fact~\ref{fact:GRS-coeff}, 
for each $A \in [k]$, by considering the evaluation points $\{ \beta_A \} \cup \{ \alpha_1, \dots, \alpha_n \}$, there are coefficients $\Gamma^A_1, \dots, \Gamma^A_n$ such that 
\begin{align}
    f(\beta_A) + \sum_{i \in [n]} \Gamma^A_i f(\alpha_i) = 0 \label{eqn:A-addressable}
\end{align}
for all polynomials $f$ of degree $< n$, where the coefficient of $f(\beta_A)$ has been normalised to 1.

Let us choose a generating matrix of the Reed-Solomon code $C \coloneq \RS_{k+n, m}(\mathbf{\beta} \cup \mathbf{\alpha})$ over the field $\mathbb{F}_q$. By performing row operations, we can set the generating matrix to be of the form 
\begin{align}\label{eq:partially_systematic_gen_mat}
    \tilde{G} = \begin{pmatrix}
        I_k & G_1 \\
        0 & G_0,
    \end{pmatrix}
\end{align}
Here, the first $k$ columns of $\tilde{G}$ correspond to the evaluation points $\beta_1, \dots, \beta_k$, and the last $n$ columns of $\tilde{G}$ correspond to the evaluation points $\alpha_1, \dots, \alpha_n$. 
This is possible because Reed-Solomon codes are maximum distance separable.\footnote{Indeed, one can consider any $m \times (k+n)$ generator matrix $\tilde{G}$ for $C = \RS_{k+n,m}(\beta\cup\alpha)$. Because the distance of $C$ is then $k+n-m+1$, any set of at most $m$ columns of $\tilde{G}$ is linearly independent, meaning thay any $k$ columns is certainly linearly indepedent. This allows one to bring $\tilde{G}$ into the given form using row operations.} As usual, the rows of $\tilde{G}$ may be denoted by $\tilde{g}^a$ for $a \in [m]$. Each row of $\tilde{G}$ corresponds to a polynomial over $\mathbb{F}_q$ of degree less than $m$, and it is convenient to denote by $\tilde{g}^a$ both the row of $\tilde{G}$ and this polynomial.

Let us consider the quantum code $\qcode$ built as in Section~\ref{sec:qcode_from_puncture}. 
First, we check that assumption~\ref{assump:independence} is satisfied. 
Writing the usual $G = \begin{pmatrix}
    G_1\\ G_0
\end{pmatrix}$, we denote the rows of $G$ by $g^a, a\in [m]$. 
Suppose there exist coefficients $c_a\in \mathbb{F}_q$ such that 
$\sum_{a\in [m]}c_ag^a = 0.$
From equation~\eqref{eqn:A-addressable}, we must have that for all $A\in [k]$,
\begin{align}
    c_A = \left\langle \sum_{a\in [m]}c_ag^a, \Gamma^A \right\rangle = 0,
\end{align}
where $\Gamma^A = (\Gamma^A_1, \ldots, \Gamma^A_n)$. Therefore, any linear dependence can only exist among rows of $G_0$, which proves Assumption~\ref{assump:independence}.
Now recall that $\qcode$ encodes $k$ logical qudits and every $u\in \FF_q^k$ is encoded as
\begin{align}
    \ol{\ket{u}} \propto \sum_{g\in \cL_X} \Ket{\sum_{a=1}^k u_ag^a + g}. 
\end{align}
The distance of $\qcode$ is, by equation~\eqref{eq:dist_qcode}, Theorem~\ref{thm:GRS_dual} and~\ref{thm:GRS-rate-distance},
\begin{align}
    d(\qcode)
    \ge \min(d(C), d(C^\perp)) - k 
    &= \min(n+k-m+1, m+1) - k
    = \min(n-m+1, m-k+1).
\end{align}
This construction gives our first main result.
\begin{theorem}\label{thm:RS_addressable_CCZ}
    Let $n$ be a power of two, and let $q = n^2$.
    For any $m < \frac{n}{3} + 1$ and $k \in [m]$,
    there exists a qudit CSS code $\qcode$ with parameters
    $[[n, k, d\ge m-k+1]]_q$ such that for any three logical indices $A, B, C\in [k]$, for any $\gamma \in \mathbb{F}_q$,
    \begin{enumerate}[itemsep = 0pt]
        \item On a single code block of $\qcode$, $\ol{\CCZ^\gamma[A, B, C]}$ may be implemented by a depth-4 circuit of physical $\CCZ$ gates;
        \item On three distinct code blocks of $\qcode$, $\ol{\CCZ_{123}^\gamma[A, B, C]}$ may be implemented by a depth-1 circuit of physical $\CCZ$ gates.
    \end{enumerate}
    By choosing $m = \lceil\frac{n}{3}\rceil$ and $k = \frac{m}{2}$, we see that $\qcode$ is asymptotically good with rate $1/6$ and relative distance at least $1/6$.
\end{theorem}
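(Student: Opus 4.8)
The plan is to exhibit the explicit depth-one (resp.\ depth-four) circuits that implement the addressed logical $\CCZ$ gates, verify their action on the logical states via the encoding formula $\ol{\ket{u}} \propto \sum_{g \in \cL_X}\ket{\sum_a u_a g^a + g}$, and then read off the parameters from the distance bound $d(\qcode) \ge \min(n-m+1, m-k+1)$ already derived above. The construction of $\qcode$, verification of Assumption~\ref{assump:independence}, and the distance computation are done in the excerpt, so only the gate-implementation claims and the final parameter count remain.

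For the interblock case (item 2), the natural guess is to apply, on three distinct code blocks $1,2,3$, the physical circuit $\prod_{i \in [n]} \CCZ_{123}^{\gamma \Gamma^A_i \Gamma^B_i \Gamma^C_i}[i,i,i]$, i.e.\ a single $\CCZ$-type gate across the $i$-th physical qudit of each block, weighted by $\gamma\Gamma^A_i\Gamma^B_i\Gamma^C_i$ where $\Gamma^A,\Gamma^B,\Gamma^C$ are the vectors of dual coefficients from \eqref{eqn:A-addressable}. This is manifestly depth one. To verify it, I would apply it to a product of three logical basis states $\ol{\ket{u}}\,\ol{\ket{v}}\,\ol{\ket{w}}$, expand each as a sum over $g \in \cL_X = \cG_0$, and compute the phase $\exp\!\big(i\pi\tr(\gamma \sum_i \Gamma^A_i\Gamma^B_i\Gamma^C_i x_i y_i z_i)\big)$ on each term, where $x = \sum_a u_a g^a + g$ etc.\ are the summation indices. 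The key algebraic fact to exploit is the following: since each $x_i$, $y_i$, $z_i$ is the evaluation at $\alpha_i$ of a polynomial of degree $< m$ (a combination of the $\tilde g^a$'s), the product $x_i y_i z_i$ is the evaluation at $\alpha_i$ of a polynomial of degree $< 3m - 2 \le n$ (using $m < n/3 + 1$), call it $P$. By \eqref{eqn:A-addressable} applied three times (once per block, with the relevant address), $\sum_i \Gamma^A_i \Gamma^B_i \Gamma^C_i \cdot (\text{eval of }P\text{ at }\alpha_i)$ should collapse to $P(\beta_A\text{-ish evaluation}) \cdot (\cdots)$ — more precisely, each application of \eqref{eqn:A-addressable} replaces a $\sum_i \Gamma^\bullet_i (\cdots)$ by a negated value at the corresponding $\beta$. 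One needs to be careful that \eqref{eqn:A-addressable} is a statement about univariate polynomials in one block at a time, so the correct way to organize the calculation is to fix $y,z$, view $\sum_i \Gamma^A_i (x_i y_i z_i)$ as the dual-code pairing applied to the degree-$<n$ polynomial $f \mapsto f \cdot y_i z_i$ evaluated appropriately, and iterate. The upshot should be that the phase equals $\exp(i\pi\tr(\gamma\, x(\beta_A) y(\beta_B) z(\beta_C)))$, and then $x(\beta_A) = u_A$ (since the first $k$ columns of $\tilde G$ are the identity, so the coefficient of the logical basis vector at evaluation point $\beta_A$ is exactly $u_A$, and $g \in \cG_0$ contributes zero there), giving exactly the logical gate $\ol{\CCZ_{123}^\gamma[A,B,C]}$.

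For the intrablock case (item 1), the depth-4 circuit should come from the standard trick of reducing an intrablock $\CCZ$ across logical qudits $A,B,C$ within one block to interblock-style gates by using multiple copies of the block, or more simply by the identity expressing $\CCZ$ on three registers of one block in terms of $\CCZ, \CZ, Z$ contributions; the factor of $4$ presumably arises from grouping the physical $\CCZ$ gates $\CCZ^{\gamma\Gamma^A_i\Gamma^B_j\Gamma^C_k}[i,j,k]$ (now with three independent physical indices $i,j,k$ ranging over $[n]$) into four depth-one layers by a coloring argument that resolves the overlaps when indices coincide. I would state the circuit as $\prod_{i,j,k \in [n]} \CCZ^{\gamma \Gamma^A_i \Gamma^B_j \Gamma^C_k}[i,j,k]$, verify its logical action by the same polynomial-evaluation computation as above (now the three evaluation sums decouple cleanly since they are over independent indices), and then invoke a routine edge-coloring / scheduling lemma to bound the depth by $4$. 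Finally, for the parameter claim: taking $m = \lceil n/3 \rceil$ and $k = m/2$, the rate is $k/n = m/(2n) \to 1/6$ and the relative distance is at least $(m - k + 1)/n = (m/2 + 1)/n \to 1/6$, with $q = n^2$ as stipulated.

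The main obstacle I anticipate is the multi-block phase calculation in item 1 / item 2: one must correctly handle the fact that \eqref{eqn:A-addressable} is a one-block, univariate statement, and apply it in sequence to a trilinear expression $x_i y_i z_i$ whose total degree in a single variable is $< m$ but whose product degree is $< 3m-2$. Keeping track of which polynomial has degree $< n$ at each stage (so that Fact~\ref{fact:GRS-coeff}/\eqref{eqn:A-addressable} applies) is the delicate point, and this is exactly where the hypothesis $m < n/3 + 1$ is used. A secondary, purely bookkeeping obstacle is the depth-$4$ scheduling claim for the intrablock circuit, which should follow from a short combinatorial argument but needs to be stated carefully.
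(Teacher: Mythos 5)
There is a genuine gap: both of your proposed circuits fail, and the idea they are missing is exactly the structural choice of evaluation points that the paper's construction is built on. For the inter-block gate you propose $\prod_{i\in[n]}\CCZ_{123}^{\gamma\,\Gamma^A_i\Gamma^B_i\Gamma^C_i}[i,i,i]$, i.e.\ weights $w_i=\Gamma^A_i\Gamma^B_i\Gamma^C_i$ acting on the \emph{same} physical index in all three blocks. On $\ket{g^{(u)}}\ket{g^{(v)}}\ket{g^{(w)}}$ this gives the phase $\exp\bigl(i\pi\tr\bigl(\gamma\sum_i w_i\,\tilde g^{(u)}(\alpha_i)\tilde g^{(v)}(\alpha_i)\tilde g^{(w)}(\alpha_i)\bigr)\bigr)$, and \emph{no} choice of weights $w$ can make this equal $\exp(i\pi\tr(\gamma\,u_Av_Bw_C))$ for distinct $A,B,C$: taking $\tilde g^{(v)}=\tilde g^{(w)}=1$ forces $\sum_i w_i f(\alpha_i)=f(\beta_A)$ for every $f$ of degree $<m$, while taking $\tilde g^{(u)}=\tilde g^{(w)}=1$ forces $\sum_i w_i f(\alpha_i)=f(\beta_B)$; for $m\ge 2$ this would imply $f(\beta_A)=f(\beta_B)$ for all linear $f$, which is false when $A\ne B$. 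Equation~\eqref{eqn:A-addressable} is a single univariate identity and cannot be ``applied once per block'' inside one sum over $i$, as your sketch suggests. The paper's resolution is the feature your proposal never uses: the $\alpha_i$ are chosen to exhaust a subfield $\KK$ of order $n$ and the $\beta_A$ to lie in the coset $\zeta+\KK$, so $\Delta_{A,B}=\beta_B-\beta_A\in\KK$ and $\alpha\mapsto\alpha+\Delta_{A,B}$ permutes the physical qudits. The depth-one inter-block circuit is $\prod_{i\in[n]}\CCZ_{123}^{\gamma\,\Gamma^A_i}[\alpha_i,\;\alpha_i+\Delta_{A,B},\;\alpha_i+\Delta_{A,C}]$, using only $\Gamma^A$; its phase involves the single univariate polynomial $\tilde g^{(u)}(x)\,\tilde g^{(v)}(x+\Delta_{A,B})\,\tilde g^{(w)}(x+\Delta_{A,C})$ of degree $\le 3m-3<n$, so one application of~\eqref{eqn:A-addressable} evaluates it at $\beta_A$ and yields $\tilde g^{(u)}(\beta_A)\tilde g^{(v)}(\beta_B)\tilde g^{(w)}(\beta_C)=u_Av_Bw_C$.

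For the intra-block gate, your circuit $\prod_{i,j,k\in[n]}\CCZ^{\gamma\,\Gamma^A_i\Gamma^B_j\Gamma^C_k}[i,j,k]$ would indeed produce the correct phase, since $\sum_{i,j,k}\Gamma^A_i\Gamma^B_j\Gamma^C_k x_ix_jx_k=\bigl(\sum_i\Gamma^A_ix_i\bigr)\bigl(\sum_j\Gamma^B_jx_j\bigr)\bigl(\sum_k\Gamma^C_kx_k\bigr)=u_Au_Bu_C$, but it is not a depth-4 circuit of $\CCZ$ gates: the coincident-index terms are not $\CCZ$ gates at all (they would require $\mathsf{Z}$- and $\mathsf{CZ}$-type gates), and even discarding them there are $\Theta(n^3)$ gates with each physical qudit appearing in $\Theta(n^2)$ of them, so any schedule has depth $\Omega(n^2)$; no colouring argument can bring this to $4$. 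The paper instead uses the same $n$ shifted gates as in the inter-block case, but within one block: $\prod_{i\in[n]}\CCZ^{\gamma\,\Gamma^A_i}[\alpha_i,\alpha_i+\Delta_{A,B},\alpha_i+\Delta_{A,C}]$. Depth $4$ then follows because, for each $\alpha$, the four gates based at $\{\alpha,\ \alpha+\Delta_{A,B},\ \alpha+\Delta_{A,C},\ \alpha+\Delta_{A,B}+\Delta_{A,C}\}$ touch exactly those four qudits, each appearing once in each of the three positions, so the $n$ gates partition into disjoint quadruples (and each qudit is used only three times overall, which is what permits the depth-one variant of Remark~\ref{rmk:flatten-depth}). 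Your final parameter count is fine, but it is the only part of the argument that survives.
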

\begin{remark} 
    For completeness, we remark that on two distinct code blocks of $\qcode$, $\ol{\CCZ_{112}^\gamma[A,B,C]}$ may be implemented by a depth-2 circuit of physical $\CCZ$ gates. All three cases are similar enough that we only prove the two stated in  Theorem \ref{thm:RS_addressable_CCZ}.
\end{remark}
\begin{remark}\label{rmk:flatten-depth}
    The depth-4 implementation of the intra-block gate may be reduced trivially to a strongly transversal implementation, i.e., a depth-1 implementation, by duplicating every physical qudit 4 times. However, in our case, it will turn out that each qudit is only used three times in such an implementation, so we can achieve strong transversality by duplicating every physical qudit 3 times.\footnote{By ``duplicate every physical qudit 3 times'' we mean ``encode every physical qudit into a length-3 repetition code in the $Z$-basis'', i.e., $\ket{\eta} \mapsto \ket{\eta}^{\otimes 3}$.}
\end{remark}
\begin{remark}\label{rmk:high_degree_poly_gates}
    We only state and prove these results for the $\CCZ$ gate for brevity. However, these constructions may be easily generalised to the $\CLZ_q^\gamma$ gate, and even further to the gates $U_{q,P}^\gamma$ (by decomposing the polynomial into monomials).
\end{remark}
\begin{remark}
    Later on in Section~\ref{sec:concatenation}, we will show how to convert these qudit codes into qubit codes which also have addressable $\CCZ$ gates, up to a small loss of parameters, via the concatenation scheme of~\cite{nguyen2024goodbinaryquantumcodes}. The resulting qubit code will have both intra-block and inter-block $\CCZ$ gates implementable by a depth one physical circuit of $\CCZ$ gates. Now, because the qudit codes in this case have parameters $[[n,\Theta(n),\Theta(n)]]_q$, where $q = n^2$ (and so the size of the field grows with the code length), the resulting qubit code will have parameters $\left[\left[n,\Omega\left(\frac{n}{\text{polylog}(n)}\right),\Omega\left(\frac{n}{\text{polylog}(n)}\right)\right]\right]_2$.
\end{remark}
\begin{remark}\label{rmk:sets_triples}
    The given code allows us to transversally address $\CCZ$ on any triple of logical qudits in one, two, or three different code blocks. One may also wonder about the possibility of executing the $\CCZ$ gate on a \textit{set} of triples of logical qudits with one transversal operation. This is possible for some such sets, as will be discussed in Section \ref{sec:sets_triples_CCZ}.
\end{remark}
We will prove the two claims in Theorem \ref{thm:RS_addressable_CCZ} in the subsequent sections.

\subsection{Intra-Block CCZ Gates}

We will now define a collection of $\CCZ$ gates that, when applied to a physical code state, will apply a logical $\CCZ^\gamma$ on logical qudits $A$, $B$, and $C$ within the same code block, i.e., the operation $\ol{\CCZ^\gamma[A,B,C]}$. Note that, since the physical    qudits of the above code are in one-to-one correspondence with the set $\mathbf{\alpha}$, we may identify the set of physical qudits with this set. For example, a $\CCZ^\gamma$ gate on three physical qudits may be unambiguously written as $\CCZ^\gamma[\alpha_i, \alpha_j, \alpha_k]$.

Recall that we chose the evaluation points $\beta_i$ to lie in $\zeta + \KK$, where $\zeta \in \bbF_q \backslash \KK$. Thus, for every $A,B \in [k]$, there exists a field element $\Delta_{A,B} \in \KK$ such that 
\begin{align}
    \beta_A + \Delta_{A,B} &= \beta_B.
\end{align}
With this, we may define the logical $\CCZ$ operation in terms of physical $\CCZ$'s as follows.

\begin{theorem} \label{thm:intra-block}
    For any $\ket{\psi} \in \qcode$, it holds that
    \begin{align}
        \ol{\CCZ^\gamma[A,B,C]} \ket{\psi} 
        = \prod_{i \in [n]} \CCZ^{\gamma\;\Gamma^A_i}[\alpha_i, \alpha_i + \Delta_{A,B}, \alpha_i + \Delta_{A,C}] \ket{\psi}.
    \end{align}
    Hence we can define
    \begin{align}
        \ol{\CCZ^\gamma[A,B,C]} 
        = \prod_{i \in [n]} \CCZ^{\gamma\;\Gamma^A_i}[\alpha_i, \alpha_i + \Delta_{A,B}, \alpha_i + \Delta_{A,C}]. \label{eqn:intra-CCZ-def}
    \end{align}
\end{theorem}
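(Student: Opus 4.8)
The plan is to check the claimed identity on the computational-basis code states $\ol{\ket u}$, $u \in \FF_q^k$, and then extend by linearity. First I would unpack the encoding using the systematic form~\eqref{eq:partially_systematic_gen_mat} of $\tilde G$: each row $g^a$ with $a\in[k]$ is the evaluation vector on $\alpha_1,\dots,\alpha_n$ of a polynomial $f_a$ of degree $<m$ with $f_a(\beta_b)=\delta_{ab}$, while every $g\in\cG_0$ is the evaluation vector of a degree-$<m$ polynomial vanishing on all of $\beta_1,\dots,\beta_k$. Hence, writing $\ol{\ket u}\propto\sum_{g\in\cG_0}\ket{h_g}$ with $h_g=\sum_{a=1}^k u_a g^a+g$, each $h_g$ is the evaluation vector on $\alpha_1,\dots,\alpha_n$ of some polynomial $F_g$ of degree $<m$ satisfying $F_g(\beta_b)=u_b$ for all $b\in[k]$.

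Next I would compute the action of the physical operator on a single term $\ket{h_g}$. Since $\Delta_{A,B},\Delta_{A,C}\in\KK$ and $\alpha_i\in\KK$, the elements $\alpha_i+\Delta_{A,B}$ and $\alpha_i+\Delta_{A,C}$ again lie in $\{\alpha_1,\dots,\alpha_n\}$, i.e.\ they are legitimate physical-qudit labels, and the corresponding entries of $h_g$ are $F_g(\alpha_i+\Delta_{A,B})$ and $F_g(\alpha_i+\Delta_{A,C})$. Because $\CCZ$ gates are diagonal, using $\FF_2$-linearity of $\tr$ and that $e^{i\pi s}$ depends only on $s\bmod 2$, the product $\prod_{i\in[n]}\CCZ^{\gamma\,\Gamma^A_i}[\alpha_i,\alpha_i+\Delta_{A,B},\alpha_i+\Delta_{A,C}]$ multiplies $\ket{h_g}$ by the scalar $\exp(i\pi\tr(\gamma\sum_{i\in[n]}\Gamma^A_i P_g(\alpha_i)))$, where $P_g(X):=F_g(X)\,F_g(X+\Delta_{A,B})\,F_g(X+\Delta_{A,C})$.

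The crucial step is a degree count: $\deg P_g\le 3(m-1)<n$, precisely because $m<\tfrac n3+1$. Hence the addressing relation~\eqref{eqn:A-addressable} applies to $P_g$ and gives $\sum_{i\in[n]}\Gamma^A_i P_g(\alpha_i)=-P_g(\beta_A)$. Using $\beta_A+\Delta_{A,B}=\beta_B$, $\beta_A+\Delta_{A,C}=\beta_C$, the identities $F_g(\beta_b)=u_b$, and $-1=1$ in the characteristic-two field $\FF_q$, the right-hand side equals $u_A u_B u_C$. Thus every term $\ket{h_g}$ acquires the same scalar $\exp(i\pi\tr(\gamma u_A u_B u_C))$, so the physical operator acts on $\ol{\ket u}$ as multiplication by this phase; since $\ol{\CCZ^\gamma[A,B,C]}\,\ol{\ket u}=\exp(i\pi\tr(\gamma u_A u_B u_C))\,\ol{\ket u}$ by definition, the two operators agree on every $\ol{\ket u}$, hence on all of $\qcode$, which yields the stated identity and justifies the definition~\eqref{eqn:intra-CCZ-def}.

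The only place requiring real care is the degree bound $\deg P_g<n$, which is exactly where the hypothesis $m<\tfrac n3+1$ is used; everything else is a direct consequence of the Reed--Solomon addressing relation together with characteristic-two arithmetic (in particular, the $\CCZ$ phases being insensitive to overlaps among the triples, and to the sign from $-P_g(\beta_A)$). A secondary bookkeeping point is that the shifted labels $\alpha_i+\Delta_{A,B}$ must remain inside the physical-qudit set, which is where the constraint $\beta_A-\zeta\in\KK$ (equivalently $\Delta_{A,B},\Delta_{A,C}\in\KK$, with $\KK$ closed under addition) enters.
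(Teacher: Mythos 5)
Your proposal is correct and follows essentially the same argument as the paper's proof: reduce to (superposition terms of) logical computational basis states, observe that the resulting phase is $\exp(i\pi\tr(\gamma\sum_i\Gamma^A_i P(\alpha_i)))$ for a product polynomial of degree at most $3(m-1)<n$, and apply the addressing relation~\eqref{eqn:A-addressable} together with $\beta_A+\Delta_{A,B}=\beta_B$, $\beta_A+\Delta_{A,C}=\beta_C$ and the systematic form of $\tilde G$ to identify the phase as $\exp(i\pi\tr(\gamma u_Au_Bu_C))$. The only cosmetic difference is that you phrase the evaluation-vector argument directly in terms of the terms $\sum_{a\le k}u_ag^a+g$, $g\in\cG_0$, while the paper works with a generic element $\sum_{a\in[m]}u_ag^a$ of the row space; the content is identical.
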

Note that because $\{ \alpha_1, \dots, \alpha_n \} = \KK$ is closed under additions by elements of $\KK$, all three coordinates are well defined for each $\alpha_i \in \KK$. 

\begin{proof}
    Let us check the effect of applying $\prod_{i \in [n]} \CCZ^{\gamma\;\Gamma^A_i}[\alpha_i, \alpha_i + \Delta_{A,B}, \alpha_i + \Delta_{A,C}]$ on a code state. By linearity, it is sufficient to check the effect of applying this gate to a logical computational basis state $\ol{\ket{u}}$ as defined in~\eqref{eq:logical_state}. In turn, we check the effect of this gate on $\ket{\sum_{a \in [m]} u_a g^a}$, and again linearity will ensure the correct action on $\overline{\ket{u}}$ also.

    Let $\tilde{g}^{(u)} := \sum_{a \in [m]} u_a \tilde{g}^a$, so that the state $\ket{\sum_{a \in [m]} u_a g^a}$ simply consists of the evaluations of $\tilde{g}^{(u)}$ at $\alpha_1, \dots, \alpha_n$. For short, we also write $g^{(u)} = \sum_{a \in [m]} u_a g^a$.
    
    We can write:
    \begin{align}
        &\prod_{i \in [n]} \CCZ^{\gamma\;\Gamma^A_i}[\alpha_i, \alpha_i + \Delta_{A,B}, \alpha_i + \Delta_{A,C}] \ket{g^{(u)}} \\ 
        = &\prod_{i \in [n]} \exp(i\pi \tr(\gamma\;\Gamma^A_i \cdot \tilde{g}^{(u)}(\alpha_i) \cdot \tilde{g}^{(u)}(\alpha_i + \Delta_{A,B}) \cdot \tilde{g}^{(u)}(\alpha_i + \Delta_{A,C}) )) \ket{g^{(u)}} \\
        = &\exp(i\pi \tr(\gamma\sum_{i \in [n]} \Gamma^A_i \cdot \tilde{g}^{(u)}(\alpha_i) \cdot \tilde{g}^{(u)}(\alpha_i + \Delta_{A,B}) \cdot \tilde{g}^{(u)}(\alpha_i + \Delta_{A,C}) )) \ket{g^{(u)}}.
    \end{align}
    By~\eqref{eqn:A-addressable} and because $\tilde{g}^{(u)}(x) \cdot \tilde{g}^{(u)}(x + \Delta_{A,B}) \cdot \tilde{g}^{(u)}(x + \Delta_{A,C})$ is a degree $\le 3m-3 < n$ polynomial in $x$, the summation in the trace in the exponent is equal to
    \begin{align}
        \sum_{i \in [n]}\Gamma^A_i \cdot \tilde{g}^{(u)}(\alpha_i) \cdot \tilde{g}^{(u)}(\alpha_i + \Delta_{A,B}) \cdot \tilde{g}^{(u)}(\alpha_i + \Delta_{A,C})
        &= \tilde{g}^{(u)}(\beta_A) \cdot \tilde{g}^{(u)}(\beta_A + \Delta_{A,B}) \cdot \tilde{g}^{(u)}(\beta_A + \Delta_{A,C}) \\
        &= \tilde{g}^{(u)}(\beta_A) \cdot \tilde{g}^{(u)}(\beta_B) \cdot \tilde{g}^{(u)}(\beta_C) \\
        &= u_A u_B u_C,
    \end{align}
    where the last line follows from the structure of the chosen generator matrix in Equation~\eqref{eq:partially_systematic_gen_mat}.
    Thus, we get the effect of applying $\prod_{i \in [n]} \CCZ^{\Gamma^A_i}[\alpha_i, \alpha_i + \Delta_{A,B}, \alpha_i + \Delta_{A,C}]$ on our state $\ket{g^{(u)}}$ is 
    \begin{align}
        \prod_{i \in [n]} \CCZ^{\gamma\;\Gamma^A_i}[\alpha_i, \alpha_i + \Delta_{A,B}, \alpha_i + \Delta_{A,C}] \ket{g^{(u)}} 
        &= \exp(i\pi \tr(\gamma u_A u_B u_C)) \ket{g^{(u)}}.
    \end{align}
    Referring to the form of $\overline{\ket{u}}$ in~\eqref{eq:logical_state}, we see that this implies
    implying that 
    \begin{align}
        \prod_{i \in [n]} \CCZ^{\gamma\;\Gamma^A_i}[\alpha_i, \alpha_i + \Delta_{A,B}, \alpha_i + \Delta_{A,C}] \ol{\ket{u}} 
        &= \ol{\CCZ^\gamma[A,B,C]} \ol{\ket{u}},
    \end{align}
    as claimed.
\end{proof}

\paragraph{Circuit Depth.}
We've shown that we can apply a logical $\ol{\CCZ^\gamma[A,B,C]}$ gate by applying $n$ different $\CCZ$ gates on the physical qudits. We claim that one can arrange these $n$ different $\CCZ$ gates into a depth $4$ circuit.

For any $\alpha \in \KK$, consider the four $\CCZ$ gates that touch the following triples of qudits:
\begin{center}
\begin{tabular}{crrrc}
    $[$ & $\alpha,$ & $\alpha + \Delta_{A,B},$ & $\alpha + \Delta_{A,C}$ & $]$ \\
    $[$ & $\alpha + \Delta_{A,B},$ & $\alpha,$ & $\alpha + \Delta_{A,B} + \Delta_{A,C}$ & $]$ \\
    $[$ & $\alpha + \Delta_{A,C},$ & $\alpha + \Delta_{A,B} + \Delta_{A,C},$ & $\alpha$ & $]$ \\
    $[$ & $\alpha + \Delta_{A,B} + \Delta_{A,C},$ & $\alpha + \Delta_{A,C},$ & $\alpha + \Delta_{A,B}$ & $]$.
\end{tabular}
\end{center}
Note that each of the qudits $\alpha, \alpha + \Delta_{A,B}, \alpha + \Delta_{A,C}, \alpha + \Delta_{A,B} + \Delta_{A,C}$ occur exactly once in each of the three positions. Since there is exactly one physical $\CCZ$ gate with a given qudit in a given position, this means that there are no other $\CCZ$ gates that share a qudit with any of these four $\CCZ$ gates. Here, we've used that our field has characteristic $2$, but a field of a higher characteristic would simply lead to a larger constant circuit depth.

This means that we can arrange the $n$ physical $\CCZ$ gates into a circuit of depth-$4$ by placing one triple from each set of four gates into each layer. As discussed in Remark~\ref{rmk:flatten-depth}, one could instead duplicate each qudit $3$ times to obtain a code with a fully transversal implementation of $\ol{\CCZ^\gamma[A,B,C]}$, i.e., a circuit of depth one, because each qudit is only used three times in the depth-4 implementation.

\subsection{Inter-Block CCZ Gates}

In the last section, we showed how to implement a logical $\CCZ$ gate on three qudits \emph{within} the same code state, meaning an intra-block $\CCZ$ gate. In this section, we will show how to implement a logical $\CCZ$ gate on any three logical qudits in three \emph{different} code blocks. Here, we will be able to get full transversality in the strongest sense, meaning that the circuit will be of depth one.

We will construct the logical operator that applies a $\CCZ$ gate between qudits $A$ in the first code state, $B$ in the second code state, and $C$ in the third code state. We will denote this logical operator by $\ol{\CCZ_{123}^\gamma[A, B, C]}$. Likewise, a physical $\CCZ$ gate on qudits $\alpha_i$ in code state $1$, $\alpha_j$ in code state $2$, and $\alpha_k$ in code state $3$ will be denoted by $\CCZ_{123}[\alpha_i, \alpha_j, \alpha_k]$.

Again, we will take field elements $\Delta_{A,B} \in \KK$ such that
\begin{align}
    \beta_A + \Delta_{A,B} &= \beta_B.
\end{align}

\begin{theorem} \label{thm:inter-block}
    For any $\ket{\psi_1}, \ket{\psi_2}, \ket{\psi_3} \in \qcode$, it holds that
    \begin{align}
        \ol{\CCZ_{123}^\gamma[A,B,C]} \ket{\psi} 
        = \prod_{i \in [n]} \CCZ_{123}^{\gamma\;\Gamma^A_i}[\alpha_i, \alpha_i + \Delta_{A,B}, \alpha_i + \Delta_{A,C}] \ket{\psi_1}, \ket{\psi_2}, \ket{\psi_3}.
    \end{align}
    Hence we can define
    \begin{align}
        \ol{\CCZ^\gamma_{123}[A,B,C]} 
        = \prod_{i \in [n]} \CCZ_{123}^{\gamma\;\Gamma^A_i}[\alpha_i, \alpha_i + \Delta_{A,B}, \alpha_i + \Delta_{A,C}]. \label{eqn:inter-CCZ-def}
    \end{align}
\end{theorem}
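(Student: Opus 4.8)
The plan is to follow the proof of Theorem~\ref{thm:intra-block} essentially line by line, the only change being that the three polynomials now come from three different code blocks. Since each physical $\CCZ_{123}$ gate appearing in the product is diagonal in the computational basis, so is the whole circuit, and hence by linearity it suffices to compute the phase it applies to a single product of computational-basis states $\ket{g^{(u)}+g_1}\otimes\ket{g^{(v)}+g_2}\otimes\ket{g^{(w)}+g_3}$, where $u,v,w\in\FF_q^k$ and $g_1,g_2,g_3\in\cG_0$, and then sum. Write $\tilde g^{(u)}$ for the degree-$<m$ polynomial whose evaluations on $\alpha_1,\dots,\alpha_n$ give $g^{(u)}=\sum_{a\in[m]}u_ag^a$, and likewise $\tilde g_1$ for $g_1$, etc.; recall that $\{\alpha_1,\dots,\alpha_n\}=\KK$ is closed under addition, so $\alpha_i+\Delta_{A,B}$ and $\alpha_i+\Delta_{A,C}$ are again legitimate physical-qudit labels. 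Expanding each $\CCZ$ into its phase, the circuit $\prod_{i\in[n]}\CCZ_{123}^{\gamma\,\Gamma^A_i}[\alpha_i,\alpha_i+\Delta_{A,B},\alpha_i+\Delta_{A,C}]$ multiplies this state by
\[
\exp\left(i\pi\tr\left(\gamma\sum_{i\in[n]}\Gamma^A_i\,(\tilde g^{(u)}+\tilde g_1)(\alpha_i)\,(\tilde g^{(v)}+\tilde g_2)(\alpha_i+\Delta_{A,B})\,(\tilde g^{(w)}+\tilde g_3)(\alpha_i+\Delta_{A,C})\right)\right).
\]

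The key step is to recognise the inner sum as the evaluation, at $\alpha_1,\dots,\alpha_n$, of the single polynomial $P(x):=(\tilde g^{(u)}+\tilde g_1)(x)\,(\tilde g^{(v)}+\tilde g_2)(x+\Delta_{A,B})\,(\tilde g^{(w)}+\tilde g_3)(x+\Delta_{A,C})$, which has degree at most $3(m-1)=3m-3<n$ by our choice $m<\tfrac n3+1$. Thus Equation~\eqref{eqn:A-addressable} applies to $P$, and (using that we are in characteristic two) $\sum_{i\in[n]}\Gamma^A_iP(\alpha_i)=P(\beta_A)$. Now $\beta_A+\Delta_{A,B}=\beta_B$ and $\beta_A+\Delta_{A,C}=\beta_C$, so
\[
P(\beta_A)=(\tilde g^{(u)}+\tilde g_1)(\beta_A)\,(\tilde g^{(v)}+\tilde g_2)(\beta_B)\,(\tilde g^{(w)}+\tilde g_3)(\beta_C)=u_A\,v_B\,w_C,
\]
where the last equality uses the systematic block $I_k$ of the generator matrix in~\eqref{eq:partially_systematic_gen_mat} (so $\tilde g^{(u)}(\beta_a)=u_a$) together with the fact that every $g_j\in\cG_0$ corresponds to a polynomial vanishing at all of $\beta_1,\dots,\beta_k$. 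In particular the accumulated phase $\exp(i\pi\tr(\gamma u_Av_Bw_C))$ does not depend on $g_1,g_2,g_3$, so it factors out of the uniform $\cG_0$-coset superpositions and the circuit acts on $\ol{\ket u}\,\ol{\ket v}\,\ol{\ket w}$ exactly as $\exp(i\pi\tr(\gamma u_Av_Bw_C))$ times the identity. This is precisely the action of $\ol{\CCZ_{123}^\gamma[A,B,C]}$ on $\ol{\ket u}\,\ol{\ket v}\,\ol{\ket w}$, and extending by linearity over arbitrary code states $\ket{\psi_1},\ket{\psi_2},\ket{\psi_3}$ gives the operator identity, i.e.\ the definition~\eqref{eqn:inter-CCZ-def}.

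For the circuit-depth remark we use that the three code blocks are distinct: the gate indexed by $i$ touches $\alpha_i$ in block $1$, $\alpha_i+\Delta_{A,B}$ in block $2$, and $\alpha_i+\Delta_{A,C}$ in block $3$, and for $i\neq i'$ we have $\alpha_i\neq\alpha_{i'}$ (hence also $\alpha_i+\Delta_{A,B}\neq\alpha_{i'}+\Delta_{A,B}$ and $\alpha_i+\Delta_{A,C}\neq\alpha_{i'}+\Delta_{A,C}$), so the $n$ physical gates act on pairwise disjoint triples of qudits and can all be applied in a single layer, giving the promised depth-one implementation. I do not expect a genuine obstacle here; the two points that need care are (i) verifying that the $\cG_0$-shift cross-terms neither break the degree bound $\deg P<n$ nor change the final value $u_Av_Bw_C$ (they do not, since $\deg P\le 3m-3<n$ and $\cG_0$-polynomials vanish at $\beta_1,\dots,\beta_k$), and (ii) the bookkeeping that the shifted evaluation points $\alpha_i+\Delta_{A,B},\alpha_i+\Delta_{A,C}$ remain inside $\KK$, which is exactly why we arranged $\beta_i-\zeta\in\KK$ and took $\KK$ closed under addition at the start of the construction.
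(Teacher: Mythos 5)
Your proposal is correct and follows essentially the same route as the paper's proof: reduce to computational basis states by linearity, observe that the product polynomial has degree at most $3m-3<n$, apply Equation~\eqref{eqn:A-addressable} together with $\beta_A+\Delta_{A,B}=\beta_B$, $\beta_A+\Delta_{A,C}=\beta_C$ to extract the phase $\exp(i\pi\tr(\gamma u_Av_Bw_C))$, and note disjointness of the gate supports for depth one. The only cosmetic difference is that you make the $\cG_0$-coset shift explicit, whereas the paper absorbs it by letting the coefficients $u_a$ range over all $m$ rows of $G$; both handle it correctly.
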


The proof of Theorem~\ref{thm:inter-block} will follow via essentially the same calculation as in the proof of Theorem~\ref{thm:intra-block}. 

\begin{proof}
    We will check the effect of applying $\prod_{i \in [n]}\CCZ_{123}^{\gamma\;\Gamma^A_i}[\alpha_i, \alpha_i + \Delta_{A,B}, \alpha_i + \Delta_{A,C}]$ on $\ket{g^{(u)}}\ket{g^{(v)}}\ket{g^{(w)}}$ where $g^{(u)} := \sum_{a \in [m]} u_a g^a$, $g^{(v)} := \sum_{a \in [m]} v_a g^a$, and $g^{(w)} := \sum_{a \in [m]} w_a g^a$, which will be enough to check its behaviour on $\ol{\ket{u}}\ol{\ket{v}}\ol{\ket{w}}$, again by linearity. Let $\tilde{g}^{(u)} := \sum_{a \in [m]} u_a \tilde{g}^a$, $\tilde{g}^{(v)} := \sum_{a \in [m]} v_a \tilde{g}^a$, and $\tilde{g}^{(w)} := \sum_{a \in [m]} w_a \tilde{g}^a$ be the corresponding polynomials of degree $< m$.

    We have that
    \begin{align}
        &\prod_{i \in [n]}\CCZ_{123}^{\gamma\;\Gamma^A_i}[\alpha_i, \alpha_i + \Delta_{A,B}, \alpha_i + \Delta_{A,C}] \ket{g^{(u)}}\ket{g^{(v)}}\ket{g^{(w)}} \\
        = &\prod_{i \in [n]} \exp({i\pi\tr(\gamma\;\Gamma^A_i \cdot \tilde{g}^{(u)}(\alpha_i) \cdot \tilde{g}^{(v)}(\alpha_i + \Delta_{A,B}) \cdot \tilde{g}^{(w)}(\alpha_i + \Delta_{A,C}) )}) \ket{g^{(u)}}\ket{g^{(v)}}\ket{g^{(w)}} \\
        = &\exp({i\pi\tr(\gamma\sum_{i \in [n]} \Gamma^A_i \cdot \tilde{g}^{(u)}(\alpha_i) \cdot \tilde{g}^{(v)}(\alpha_i + \Delta_{A,B}) \cdot \tilde{g}^{(w)}(\alpha_i + \Delta_{A,C}) )}) \ket{g^{(u)}}\ket{g^{(v)}}\ket{g^{(w)}}.
    \end{align}
Because $\sum_{i \in [n]} \Gamma^A_i \cdot \tilde{g}^{(u)}(x) \cdot \tilde{g}^{(v)}(x + \Delta_{A,B}) \cdot \tilde{g}^{(w)}(x + \Delta_{A,C})$ is a degree $\le 3m-3 < n$ polynomial in $x$, by Equation~\eqref{eqn:A-addressable} we have that 
\begin{align}
    \sum_{i \in [n]} \Gamma^A_i \cdot \tilde{g}^{(u)}(\alpha_i) \cdot \tilde{g}^{(v)}(\alpha_i + \Delta_{A,B}) \cdot \tilde{g}^{(w)}(\alpha_i + \Delta_{A,C})
    &= \tilde{g}^{(u)}(\beta_A) \cdot \tilde{g}^{(v)}(\beta_A + \Delta_{A,B}) \cdot \tilde{g}^{(w)}(\beta_A + \Delta_{A,C}) \\
    &= \tilde{g}^{(u)}(\beta_A) \cdot \tilde{g}^{(v)}(\beta_B) \cdot \tilde{g}^{(w)}(\beta_C) \\
    &= u_A v_B w_C,
\end{align}
so 
\begin{align}
    \CCZ_{123}^{\gamma\;\Gamma^A_i}[\alpha_i, \alpha_i + \Delta_{A,B}, \alpha_i + \Delta_{A,C}] \ket{g^{(u)}}\ket{g^{(v)}}\ket{g^{(w)}}
    &= \exp({i\pi\tr(\gamma u_A v_B w_C)}) \ket{g^{(u)}}\ket{g^{(v)}}\ket{g^{(w)}},
\end{align}
as desired.
\end{proof}

\paragraph{Circuit Depth.} 
Note that each physical qudit appears in exactly one of the $n$ physical $\CCZ$ gates, so our definition of $\ol{\CCZ_{123}^\gamma[A,B,C]}$ in~\eqref{eqn:inter-CCZ-def} is a strongly transversal implementation, i.e., implementable via a depth-1 circuit.

\subsection{Addressing Sets of Triples of Logical Qudits}\label{sec:sets_triples_CCZ}

As was mentioned in Remark~\ref{rmk:sets_triples}, while this code can address any triple of logical qudits with a particular transversal operation, it is also interesting to ask to what extent it can address arbitrary \textit{sets} of triples of logical qudits with the $\CCZ$ gate via a transversal action. Of course, by combining the operations above a constant number of times, one may ensure a transversal action, and act on certain sets of triples of logical qudits with the $\CCZ$ gate. The interesting question, however, is to ask which sets may be acted on without increasing the depth at all (from 4 in the intra-block case, and 1 in the inter-block case).

In either the intra-block, or inter-block case, a triple of logical qudits $[A,B,C]$ may be specified by a label $A \in [k]$ as well as field elements $\Delta_{A,B}, \Delta_{A,C} \in \mathbb{K}$ in the obvious way. It is possible to address a set of triples of logical qudits without increasing the depth if these qudits share the same field elements $\Delta_{A,B}, \Delta_{A,C}$, but the label $A \in [k]$ is allowed to vary. For example, in the intra-block case, if two triples $[A_1, B_1, C_1]$ and $[A_2, B_2, C_2]$ have $\Delta_{A_1, B_1} = \Delta_{A_2, B_2}$ and $\Delta_{A_1, C_1} = \Delta_{A_2, C_2}$, then we have
\begin{equation}
    \ol{\CCZ^{\gamma_1}[A_1,B_1,C_1]\CCZ^{\gamma_2}[A_2,B_2,C_2]} = \prod_{i \in [n]}\CCZ^{\gamma_1\,\Gamma_i^{A_1}+\gamma_2\,\Gamma_i^{A_2}}[\alpha_i, \alpha_i+\Delta_{A,B}, \alpha_i+\Delta_{A,C}],
\end{equation}
which is again implementable via a depth 4 circuit. As a further example, the code can address arbitrary sets of interblock logical qudit triples of the form $[A,A,A]$ (since these all have $\Delta_{A,B} = \Delta_{A,C} = 0$) with a depth one circuit as in the case of single-index addressability derived from a transversal $\CCCZ$ gate --- see Section \ref{sec:prelim_non_clifford_addressability} --- but we emphasise that this code's capabilities to address (sets of) logical qudits are much stronger.

\section{The Addressable Orthogonality Framework and Phase Gates over Qudits}\label{sec:addressable_orthogonality_big_section}
Section~\ref{sec:qcode_from_puncture} described how the generator matrix $G$ for a punctured classical code could be built into a quantum code $\qcode$, assuming a certain basic assumption (\ref{assump:independence}) was satisfied. The intention behind this section is to show that by assuming further conditions on the matrix, encapsulated in our \textit{addressable orthogonality} conditions described in Section~\ref{sec:addressable_orthogonality}, the resulting quantum code can be shown to support desirable (addressable) transversal gates. This framework of addressable orthogonality naturally encapsulates all the previous definitions made in this area for building codes with desirable transversal gates~\cite{bravyi2012magic,krishna2019towards,wills2024constant}, including the original notion of triorthogonality introduced by Bravyi and Haah, and also extends these frameworks very naturally to build codes supporting addressable transversal gates.

In Section~\ref{sec:addressable_orthogonality}, we will begin by presenting the definitions relevant to this framework, before proving some necessary, basic results on these definitions. Then, we will go on to show in Sections~\ref{sec:address_ortho_univariate_gates} to~\ref{sec:address_ortho_poly_gates} that assuming various addressable orthogonality properties on the matrix $G$ guarantees various transversal, addrssable gates on the quantum code $\qcode$. In particular, recalling the definitions made in Section~\ref{sec:prelim_non_clifford_addressability}, in Section~\ref{sec:address_ortho_univariate_gates}, we will discuss the addressability of the univariate gates $U_\ell^\beta$, and in Section~\ref{sec:address_ortho_single_index_CCZ}, we will discuss the single-index addressability of $\CCZ$ gates. In Section~\ref{sec:designed_intrablock} we will show, given a matrix with a certain addressable orthogonality property, how to construct a quantum code such that fixed, pairwise disjoint triples of intra-block logical qudits may be addressed with a non-Clifford three qudit gate via a transversal physical circuit. While this form of adressability is much weaker than the addressability in Section~\ref{sec:intrablock_GRS}, we will be able to instantiate this construction with algebraic geometry codes, and so the corresponding qubit codes constructed in Section~\ref{sec:concatenation} will be asymptotically good, rather than near-asymptotically good. Finally, in Section~\ref{sec:address_ortho_poly_gates} we will discuss the single-index addressability of general polynomial gates $U_P^\beta$. We will also show in Appendix~\ref{sec:transversal_T} that one of our addressable orthogonality properties can provide quantum codes for qubits supporting a transversal, addressable $T$ gate, up to some Clifford correction. This directly extends Bravyi and Haah's original notion of triorthogonality, and so we call it ``addressable triorthogonality''. However, we relegate this discussion to the appendices since we do not have an instantiation in mind.
 
\subsection{Addressable Orthogonality Framework}\label{sec:addressable_orthogonality}

Consider an $m\times n$ matrix $G$ over the field $\bbF_q$. 
Fixing $k < m$, we may partition the matrix so that the first $k$ rows form the submatrix $G_1$  and the remaining $m-k$ rows form the submatrix $G_0$,
\begin{equation}
    G = 
    \begin{bmatrix}
        G_1\\
        G_0
    \end{bmatrix}.
    \label{eq:def_G}
\end{equation}
Denoting the rows of $G$ as $g^a, a\in [m]$, we let $\mathcal{G}_0$, $\mathcal{G}_1$ and $\mathcal{G}$ be the row spaces of $G_0, G_1$ and $G$, respectively (over $\mathbb{F}_q$). Assuming Assumption~\ref{assump:independence}, Section~\ref{sec:qcode_from_puncture} describes how the code
\begin{equation}\label{eq:repeated_qcode_defn}
    \mathcal{Q} = \CSS(X,\mathcal{G}_0;Z,\mathcal{G}^\perp)
\end{equation}
forms a quantum code with dimension $k$ and distance at least $\min\left\{d(\mathcal{G}_0^\perp), d(\mathcal{G})\right\}$. To be able to capture, in a unified way, how the code $\qcode$ supports desirable (addressable) transversal gates, we now present the framework of addressable orthogonality.

\begin{enumerate}[(1)]
    \item Given fixed vectors $\Gamma \in \mathbb{F}_q^n$ and $\tau \in (\mathbb{F}_q^*)^k$, $G$ has the \textbf{$\ell$-orthogonality property with respect to} $\boldsymbol{(\Gamma,\tau)}$ if, for all $a_1, \ldots, a_\ell \in [m]$, 
    \begin{equation}~\label{eq:ortho_prop}
        \sum_{i=1}^n \Gamma_i{g}^{a_1}_i {g}^{a_2}_i\cdots  {g}^{a_\ell}_i = 
        \begin{cases}
            \tau_{a_1} &\text{ if $1\le a_1 = a_2 = \cdots = a_\ell \le k$} \\
            0 &\text{ otherwise}
        \end{cases}.
    \end{equation}
    Often, when the identity of the vectors $\Gamma$ and $\tau$ is not important to us, we will abbreviate their usage.

    \item Given fixed vectors $\Gamma \in \mathbb{F}_q^n,$ and $\tau \in (\mathbb{F}_q^*)^k$, $G$ has the \textbf{strong $\ol\ell$-orthogonality property with respect to }$\boldsymbol{(\Gamma,\tau)}$ if, for all $\ell \le \ol\ell$, for all $a_1, \ldots, a_\ell \in [m]$,
    \begin{equation}~\label{eq:strong_ortho_prop}
        \sum_{i=1}^n \Gamma_i {g}^{a_1}_i {g}^{a_2}_i\cdots  {g}^{a_{\ell}}_i = 
        \begin{cases}
            \tau_{a_1} &\text{ if $1\le a_1 = a_2 = \cdots = a_{\ell} \le k$} \\
            0 &\text{ otherwise}
        \end{cases}.
    \end{equation}
    The same abbreviation will often apply to $\Gamma$ and $\tau$.
    
    \item Given a fixed vector $\tau \in (\mathbb{F}_q^*)^k$, $G$ has the \textbf{addressable $\ell$-orthogonality property with respect to }$\boldsymbol{\tau}$ if, for every $A\in [k]$, there exists a vector $\Gamma^A\in \FF_q^n$ such that, for all $a_1, \ldots, a_\ell \in [m]$,
    \begin{equation}~\label{eq:address_ortho_prop}
        \sum_{i=1}^n \Gamma^A_{i}{g}^{a_1}_i {g}^{a_2}_i\cdots  {g}^{a_\ell}_i = 
        \begin{cases}
            \tau_{a_1} &\text{ if $a_1 = a_2 = \cdots = a_\ell = A$} \\
            0 &\text{ otherwise}
        \end{cases}.
    \end{equation}
    A similar abbreviation will apply when the identity of $\tau$ is unimportant.

    \item Given a fixed vector $\tau \in (\mathbb{F}_q^*)^k$, $G$ has the \textbf{addressable strong $\ol\ell$-orthogonality property with respect to }$\boldsymbol{\tau}$ if, for every $A\in [k]$, there exists a vector $\Gamma^A\in \FF_q^n$ such that, for all $\ell \le \ol\ell$, for all $a_1, \ldots, a_\ell \in [m]$,
    \begin{equation}~\label{eq:address_strong_mult_prop}
        \sum_{i=1}^n \Gamma^A_{i}{g}^{a_1}_i {g}^{a_2}_i\cdots  {g}^{a_{\ell}}_i = 
        \begin{cases}
            \tau_{a_1} &\text{ if $a_1 = a_2 = \cdots = a_{\ell} = A$} \\
            0 &\text{ otherwise}
        \end{cases}.
    \end{equation}
\end{enumerate}

These definitions are made to create a general framework for constructing quantum codes supporting (addressable) transversal gates. The first two definitions encapsulate all similar definitions made before on transversal gates, as we now describe.
\begin{enumerate}
    \item In the seminal work of Bravyi and Haah~\cite{bravyi2012magic}, the notion of a \textit{triorthogonal matrix} was defined. In their language, this is a binary $m \times n$ matrix, for which any distinct pair of rows has even overlap, any distinct triple of rows has even overlap, the first $k$ rows all have odd weight, and the latter $m-k$ rows all have even weight. Such a matrix necessarily arises from a punctured generator matrix of a classical code with the 3-multiplication property; see Section~\ref{sec:classical_code_multiplication}. Via the usual method of turning the matrix $G$ into the quantum code $\qcode$, see Section~\ref{sec:qcode_from_puncture}, or simply Equation~\eqref{eq:repeated_qcode_defn}, the resulting quantum code supports a transversal $T$-gate, up to Clifford corrections, and in fact such codes are a complete class of codes with this behaviour~\cite{rengaswamy2020optimality}.
    
    Bravyi and Haah's defintion of a triorthogonal matrix can be checked to be exactly our notion of 3-orthogonality for a matrix over $\mathbb{F}_2$, or indeed our notion of strong 3-orthogonality for a binary matrix.\footnote{One can check that for a binary matrix, each non-strong notion is equivalent to its strong counterpart. This follows from the fact that all elements of $\mathbb{F}_2$ satisfy $x^2 = x$.}

    \item Krishna and Tillich~\cite{krishna2019towards} were the first authors to extend Bravyi and Haah's definition to a definition over a field larger than $\mathbb{F}_2$, doing so for fields of prime order, $\mathbb{F}_p$. Indeed, their definition of a triorthogonal matrix over $\mathbb{F}_p$ is equivalent to the matrix being $2$-orthogonal, and $3$-orthogonal with respect to $(\mathbf{1}^n,\tau)$, for any $\tau$. Via the usual matrix-to-quantum code construction, again see Equation~\eqref{eq:repeated_qcode_defn}, the resulting quantum code over qudits of dimension $p$ supports the transversal gate
    \begin{equation}
        U = \sum_{j \in \mathbb{F}_p}\exp\left(\frac{2\pi i}{p}j^3\right)\ket{j}\bra{j},
    \end{equation}
    which is in exactly the third level of the Clifford hierarchy for primes $p \geq 5$~\cite{cui2017diagonal}.
    \item \cite{wills2024constant} extends Krishna and Tillich's notion to fields of order $q=2^t$. The resulting definition for a matrix over $\mathbb{F}_q$ is exactly our notion of $7$-orthogonality, as well as $2$-orthogonality with respect to any $(\Gamma,\tau)$. That definition was made with the intention to construct a quantum code with a transversal $U_{q,7}$ gate (see Equation~\eqref{eq:Uql_defn}), which is in exactly the third level of the Clifford hierarchy for all $t \geq 3$.
\end{enumerate}
The latter two of the four definitions given, namely the addressability conditions, do not encapsulate earlier definitions in the literature, but rather aim to extend this previous framework for building quantum codes with transversal gates to a framework for building quantum codes with addressable transversal gates too.

To begin developing intuition for these definitions, we establish the following basic facts. First, we show two lemmas proving that addressable orthogonality is a stronger property than orthogonality, but that orthogonality also implies addressable orthogonality  with a smaller value of $\ell$.

\begin{lemma}\label{lem:address_to_ortho}
    If $G$ has addressable (strong) $\ol\ell$-orthogonality with respect to $\tau$, then $G$ also has (strong) $\ol\ell$-orthogonality with respect to $(\Gamma,\tau)$ for some $\Gamma$.
\end{lemma}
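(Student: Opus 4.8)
The plan is to prove this by a direct averaging/summing argument: given the per-address vectors $\Gamma^A$ guaranteed by the addressable orthogonality property, I would combine them into a single vector $\Gamma$ that witnesses the (non-addressable) orthogonality property. The natural candidate is simply $\Gamma \coloneq \sum_{A=1}^k \Gamma^A$, i.e., the sum over all $k$ logical indices of the vectors furnished by addressable $\ol\ell$-orthogonality.

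The key steps, in order, are as follows. First, fix $\ell$ (in the strong case, fix $\ell \le \ol\ell$; in the non-strong case $\ell = \ol\ell$) and fix arbitrary row indices $a_1, \ldots, a_\ell \in [m]$. Second, expand
\begin{equation}
    \sum_{i=1}^n \Gamma_i\, g^{a_1}_i \cdots g^{a_\ell}_i = \sum_{A=1}^k \left( \sum_{i=1}^n \Gamma^A_i\, g^{a_1}_i \cdots g^{a_\ell}_i \right),
\end{equation}
using linearity of the inner sum in $\Gamma$. Third, apply the addressable $\ell$-orthogonality property (Equation~\eqref{eq:address_ortho_prop}, or~\eqref{eq:address_strong_mult_prop} in the strong case) to each inner summand: the $A$-th term equals $\tau_{a_1}$ if $a_1 = a_2 = \cdots = a_\ell = A$, and $0$ otherwise. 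Fourth, observe that as $A$ ranges over $[k]$, at most one value of $A$ can satisfy $a_1 = \cdots = a_\ell = A$, namely $A = a_1$ (and this requires $a_1 = a_2 = \cdots = a_\ell$ in the first place, and $a_1 \le k$). Hence the outer sum collapses: it equals $\tau_{a_1}$ if $1 \le a_1 = a_2 = \cdots = a_\ell \le k$, and $0$ otherwise — which is exactly Equation~\eqref{eq:ortho_prop} (resp.~\eqref{eq:strong_ortho_prop}) for the vector $\Gamma$ and the same $\tau$. The strong case is handled identically since the argument is uniform over all $\ell \le \ol\ell$. So the same $\Gamma$ witnesses the strong property whenever the addressable strong property holds.

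There is essentially no obstacle here — this is a routine linearity argument, and the only thing to be careful about is the bookkeeping of the casework (ensuring that the "otherwise" cases on the addressable side, in particular when the $a_j$ are equal to each other but not all equal to $A$, or when $a_1 > k$, correctly produce $0$ in the combined sum). One minor point worth stating explicitly is that the resulting $\Gamma$ pairs with the \emph{same} $\tau$ that appeared in the addressable hypothesis, so no modification of $\tau$ is needed; this matches the phrasing "with respect to $(\Gamma, \tau)$ for some $\Gamma$" in the statement. If one wanted, one could also note that any choice of signs/scalars $\Gamma = \sum_A c_A \Gamma^A$ with all $c_A \ne 0$ would rescale $\tau$ coordinatewise and still work, but the plain sum with the original $\tau$ is the cleanest.
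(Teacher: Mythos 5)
Your proposal is correct and matches the paper's own proof, which likewise takes $\Gamma = \sum_{A \in [k]} \Gamma^A$ (the paper simply states this choice suffices, while you spell out the linearity and case-collapsing details). No gaps.
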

\begin{proof}
    In both the strong and non-strong cases, the choice
    \begin{equation}
        \Gamma = \sum_{A \in [k]}\Gamma^A
    \end{equation}
    is sufficient, where $(\Gamma^A)_{A \in [k]}$ are the vectors given by the addressable orthogonality.
\end{proof}

\begin{lemma}\label{lem:ortho_to_address}
    If $G$ has $\ol{\ell}$-orthogonality with respect to some $(\Gamma,\tau)$, then $G$ also has addressable $\ell$-orthogonality, with respect to $\tau$, for all $\ell < \ol{\ell}$.
    Moreover, if $G$ has strong $(\ol\ell+1)$-orthogonality with respect to some $(\Gamma,\tau)$, then $G$ also has addressable strong $\ol\ell$-orthogonality with respect to $\tau$.
\end{lemma}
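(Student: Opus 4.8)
The plan is to prove both parts constructively: in each case I build the required family of addressing vectors $\{\Gamma^A\}_{A\in[k]}$ directly from the single vector $\Gamma$ supplied by the hypothesis, by ``localizing'' $\Gamma$ to the $A$-th coordinate via coordinate-wise multiplication by a power of the row $g^A$. No classification or existence argument is needed; everything reduces to substituting into the defining identities~\eqref{eq:ortho_prop} and~\eqref{eq:strong_ortho_prop}.

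For the first part, fix $\ell<\ol\ell$ and $A\in[k]$, and set $\Gamma^A_i := \Gamma_i\,(g^A_i)^{\ol\ell-\ell}$ for each $i\in[n]$; since $\ol\ell-\ell\ge 1$ this genuinely inserts at least one factor of $g^A$. Then for all $a_1,\dots,a_\ell\in[m]$,
\begin{equation*}
\sum_{i=1}^n \Gamma^A_i\,g^{a_1}_i\cdots g^{a_\ell}_i \;=\; \sum_{i=1}^n \Gamma_i\,\big(g^A_i\big)^{\ol\ell-\ell}\,g^{a_1}_i\cdots g^{a_\ell}_i,
\end{equation*}
which is exactly the left-hand side of~\eqref{eq:ortho_prop} for the list of $\ol\ell$ row-indices consisting of $A$ repeated $\ol\ell-\ell$ times followed by $a_1,\dots,a_\ell$. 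By $\ol\ell$-orthogonality with respect to $(\Gamma,\tau)$, this sum vanishes unless all $\ol\ell$ indices coincide and lie in $[k]$; since $A\in[k]$ and $A$ already appears at least once in the list, that condition is equivalent to $a_1=\cdots=a_\ell=A$, in which case the value is $\tau$ read off the first index, namely $\tau_A=\tau_{a_1}$. This is precisely~\eqref{eq:address_ortho_prop}, so $G$ has addressable $\ell$-orthogonality with respect to $\tau$; as $\ell<\ol\ell$ and $A$ were arbitrary, the first part follows.

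For the second part I use the same device with the \emph{fixed} power $1$: for each $A\in[k]$ put $\Gamma^A_i := \Gamma_i\,g^A_i$. For any $\ell\le\ol\ell$ and any $a_1,\dots,a_\ell\in[m]$,
\begin{equation*}
\sum_{i=1}^n \Gamma^A_i\,g^{a_1}_i\cdots g^{a_\ell}_i \;=\; \sum_{i=1}^n \Gamma_i\,g^A_i\,g^{a_1}_i\cdots g^{a_\ell}_i,
\end{equation*}
which is the left-hand side of the strong $(\ol\ell+1)$-orthogonality identity~\eqref{eq:strong_ortho_prop} for the list $A,a_1,\dots,a_\ell$ of length $\ell+1\le\ol\ell+1$. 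The key point is that $\Gamma^A$ does not depend on $\ell$, and the \emph{strong} hypothesis bundles together the relations for every total length $\le\ol\ell+1$, so this single $\Gamma^A$ serves all $\ell\le\ol\ell$ at once. As before, the right-hand side equals $\tau_A=\tau_{a_1}$ when $a_1=\cdots=a_\ell=A$ and $0$ otherwise, which is~\eqref{eq:address_strong_mult_prop}, establishing addressable strong $\ol\ell$-orthogonality with respect to $\tau$.

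I do not expect a genuine obstacle; the only things to be careful about are bookkeeping points. The orthogonality conventions output $\tau_{a_1}$, read off the \emph{first} row in the list, so one should prepend (not append) the copies of $A$ to make the output come out as $\tau_A$; and the ``$\le k$'' side condition disappears for free precisely because the addressing index $A$ always lies in $[k]$. Finally, it is worth noting why the two parts lose, respectively, at least one level (non-strong) versus exactly one level (strong): in the non-strong case the ambient $\ol\ell$-orthogonality only constrains products of exactly $\ol\ell$ rows, forcing the localizing power to be $\ol\ell-\ell\ge1$, whereas strong $(\ol\ell+1)$-orthogonality already contains all the length-$(\ell+1)$ relations needed to handle addressable strong $\ol\ell$-orthogonality with a uniform power of $1$.
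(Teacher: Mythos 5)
Your proposal is correct and is essentially the paper's own proof: in both parts you localize $\Gamma$ by componentwise multiplication with copies of $g^A$ ($\ol\ell-\ell$ copies in the non-strong case, a single copy in the strong case), exactly the choice $\Gamma^A = \Gamma \star g^A \star \cdots \star g^A$ made in the paper. The additional bookkeeping remarks (prepending copies of $A$, the $\le k$ side condition) are harmless but not needed, since all indices must coincide anyway.
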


\begin{proof}
    If $G$ is $\bar{\ell}$-orthogonal with respect to $(\Gamma,\tau)$, then for $A\in [k], \ell < \ol\ell$, let
    \begin{equation}
        \Gamma^A\coloneq \Gamma \star \underbrace{g^a \star \ldots \star g^a}_{\bar{\ell}-\ell \text{ copies }},
    \end{equation}
    where $\star$, as usual, denotes the componentwise product of vectors. This choice establishes the first claim.
    To prove the second claim, we can take $\Gamma^A = \Gamma\star g^A$.
\end{proof}

\begin{remark}
    Note that the result of Lemma~\ref{lem:ortho_to_address} is quite reminiscent of the folklore that a transversal gate in the $\ell$-th level of the Clifford hierarchy implies an addressable gate in the $(\ell-1)$-th level of the Clifford hierarchy which, as discussed, is only completely true for single-qubit gates. One may wonder if Lemma~\ref{lem:ortho_to_address} is secretly equivalent to this folklore. As we will discuss in the introduction to Section~\ref{sec:instantiations_orthogonality}, it is in fact strictly stronger, in that it can lead to better parameters on the quantum code.
\end{remark}

\begin{remark}
    We note that it is possible for a matrix $G$ to have addressable $\ol\ell$-orthogonality for all $\ell\le \ol\ell$, but not to have the addressable strong $\ol\ell$-orthogonality property, since the latter condition requires that the same vector $\Gamma^A$ satisfies equation~\eqref{eq:address_strong_mult_prop} for all $\ell\le \ol\ell$. 
\end{remark}

Next, we establish the minimal property required to make~\eqref{eq:repeated_qcode_defn} a sensible definition of a quantum code, by establishing Assumption~\ref{assump:independence}.
\begin{lemma}\label{lem:ortho_to_assump}
    If $G$ is $\ell$-orthogonal with respect to some $(\Gamma,\tau)$, for any $\ell \ge 2$, then it satisfies Assumption~\ref{assump:independence}.
\end{lemma}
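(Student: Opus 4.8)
The plan is to extract from $\ell$-orthogonality, for each logical index $A\in[k]$, a single vector that ``detects'' the row $g^A$ and is orthogonal (under the ordinary dot product) to every other row of $G$; both halves of Assumption~\ref{assump:independence} then fall out of one short pairing argument. Concretely, since $\ell\ge 2$, I would set
\[
    w^A \;:=\; \Gamma \star \underbrace{g^A \star \cdots \star g^A}_{\ell-1\text{ copies}} \;\in\; \FF_q^n ,
\]
exactly as in the proof of Lemma~\ref{lem:ortho_to_address} (this is just the assertion that $G$ has addressable $1$-orthogonality with respect to $\tau$). Applying the $\ell$-orthogonality relation~\eqref{eq:ortho_prop} with $a_1=\cdots=a_{\ell-1}=A$ and $a_\ell=b$, and using that $A\le k$, one checks that
\[
    \langle w^A, g^b\rangle \;=\; \sum_{i=1}^n \Gamma_i \,(g^A_i)^{\ell-1}\, g^b_i \;=\; \tau_A\cdot \one_{b=A}
    \qquad\text{for all } b\in[m],
\]
and here $\tau_A\neq 0$ because $\tau\in(\FF_q^*)^k$. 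The only point requiring a little care is the bookkeeping of which index patterns trigger the nonzero case of~\eqref{eq:ortho_prop}: for $b>k$ one always lands in the ``otherwise'' branch (the indices cannot all be equal and $\le k$), so $\langle w^A,g^b\rangle=0$ there regardless.

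With these $w^A$ in hand, the two conclusions follow immediately. For linear independence of the rows of $G_1$: if $\sum_{a=1}^k c_a g^a = 0$, then pairing with $w^A$ gives $0 = \sum_{a=1}^k c_a\langle w^A,g^a\rangle = c_A\tau_A$, so $c_A=0$ for every $A\in[k]$. For $\mathcal{G}_0\cap\mathcal{G}_1 = 0$: suppose $v$ lies in the intersection and $v\neq 0$, and write $v = \sum_{a=1}^k c_a g^a = \sum_{b=k+1}^m d_b g^b$. Since $v\neq 0$ and the rows of $G_1$ are independent (just shown), some $c_A\neq 0$; pairing $v$ with $w^A$ yields $c_A\tau_A$ from the first expression and $\sum_{b>k} d_b\langle w^A,g^b\rangle = 0$ from the second (because $b>k\ge A$ forces $b\neq A$), so $c_A\tau_A = 0$, a contradiction.

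I do not anticipate a genuine obstacle; the argument is a one-step reduction to Lemma~\ref{lem:ortho_to_address} plus elementary linear algebra, and the only thing to watch is the mild notational clash between the ``$\ell$'' of this lemma and the running indices $a_1,\dots,a_\ell$ in the orthogonality definition, together with the index-range case analysis above. Spelling out $w^A$ explicitly (rather than merely citing addressable $1$-orthogonality) keeps the proof self-contained.
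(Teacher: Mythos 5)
Your proof is correct and takes essentially the same route as the paper: the vectors $w^A = \Gamma \star g^A \star \cdots \star g^A$ are exactly the $\Gamma^A$ furnished by Lemma~\ref{lem:ortho_to_address} (addressable $1$-orthogonality), and the pairing argument yielding $c_A\tau_A = 0$ is the paper's argument, which you merely spell out by treating the independence of the rows of $G_1$ and $\mathcal{G}_0 \cap \mathcal{G}_1 = 0$ as two explicit cases rather than the single statement that any linear dependence lives among the rows of $G_0$.
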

\begin{proof}
By Lemma~\ref{lem:ortho_to_address}, $G$ has addressable $1$-orthogonality.
Suppose there exists coefficients $c_a$ such that 
$\sum_{a\in [m]}c_ag^a = 0.$
From equation~\eqref{eq:address_ortho_prop}, we must have for all $A\in [k]$,
\begin{align}
    c_A\tau_A = \left\langle \sum_{a\in [m]}c_ag^a, \Gamma^A \right\rangle = 0.
\end{align}
Because $\tau_A \neq 0$, $c_A 
= 0$, and so any linear dependence can only exist among rows of $G_0$, proving Assumption~\ref{assump:independence}.
\end{proof}

\subsection{Univariate Gates}\label{sec:address_ortho_univariate_gates}
Let us recall the gates $U_\ell^\beta$ from Section~\ref{sec:prelim_non_clifford_addressability}, which we call the ``univariate gates''. These are a natural class of single-qudit diagonal Clifford hierarchy gates, and in fact can be used to generate all such gates with entries $\pm 1$ on the diagonal. When decomposed into qubit gates as in Section~\ref{sec:qudit_to_qubit_prelim}, the gates $U_\ell^\beta$ become sequences of qubit $\CLZ$ gates, and so understanding their transversality and addressability leads to flexible constructions of qubit codes allowing for these types of sequences to be supported in a transversal and addressable way. Note that, because these are single-qudit gates, there is no distinction between addressability and single-index addressability. We now show that addressable $\ell$-orthogonality leads to the addressability of the $U_\ell^\beta$ gate.

\begin{theorem}\label{thm:degree_ell_gate}
    Suppose $G$ has the addressable $\ell$-orthogonality property with respect to $\tau$. Then for all $A \in [k]$, $\beta \in \mathbb{F}_q$, we have
    \begin{equation}
        \ol{U_\ell^{\beta\tau_A}[A]} = U_\ell^{\beta\Gamma^A},
    \end{equation}
    with $\Gamma^A$ as defined in equation~\eqref{eq:address_ortho_prop}, in the associated quantum code $\qcode$, as defined in Equation~\eqref{eq:repeated_qcode_defn}. In words, the $A$-th qudit may be addressed with the gate $U_\ell^{\beta\tau_A}$ by executing the physical transversal gate $U_\ell^{\beta\Gamma^A}$.
\end{theorem}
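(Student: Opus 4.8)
The plan is to verify the claimed operator identity directly on the logical computational basis, exploiting the fact that $U_\ell^{\beta\Gamma^A}$ is diagonal in the physical computational basis. First I would recall from Equation~\eqref{eq:logical_state} that, up to normalization, $\ol{\ket{u}} \propto \sum_{g \in \cG_0}\ket{\sum_{a=1}^k u_a g^a + g}$ for $u \in \mathbb{F}_q^k$, and that since $\cG_0$ is spanned by $g^{k+1}, \dots, g^m$, every physical basis vector appearing here can be written as $\ket{h}$ with $h = \sum_{a=1}^m v_a g^a$ where $v_a = u_a$ for $a \in [k]$ (the remaining $v_a$ being coordinates of the chosen $g \in \cG_0$). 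By linearity of $U_\ell^{\beta\Gamma^A}$ it suffices to track the phase picked up by each such $\ket{h}$.

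Next I would compute that phase. Since $U_\ell^{\beta\Gamma^A}\ket{h} = \prod_{i=1}^n \exp\!\bigl(i\pi\tr(\beta\Gamma^A_i h_i^\ell)\bigr)\ket{h}$ and $\tr$ is $\mathbb{F}_2$-linear, the phase equals $\exp\!\bigl(i\pi\tr(\beta\sum_{i=1}^n \Gamma^A_i h_i^\ell)\bigr)$. I then expand $h_i^\ell = \bigl(\sum_{a=1}^m v_a g^a_i\bigr)^\ell = \sum_{a_1,\dots,a_\ell \in [m]} v_{a_1}\cdots v_{a_\ell}\, g^{a_1}_i \cdots g^{a_\ell}_i$ — the multinomial identity over a commutative ring, with the sum running over all ordered $\ell$-tuples, which is what makes it valid in characteristic $2$ — and swap the order of summation to obtain
\begin{equation}
    \sum_{i=1}^n \Gamma^A_i h_i^\ell = \sum_{a_1,\dots,a_\ell \in [m]} v_{a_1}\cdots v_{a_\ell} \left(\sum_{i=1}^n \Gamma^A_i g^{a_1}_i \cdots g^{a_\ell}_i\right).
\end{equation}
By the addressable $\ell$-orthogonality property~\eqref{eq:address_ortho_prop}, the parenthesized inner sum vanishes unless $a_1 = \dots = a_\ell = A$, where it equals $\tau_A$; hence the whole expression collapses to $v_A^\ell \tau_A = u_A^\ell \tau_A$ (using $A \in [k]$, so $v_A = u_A$).

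The crucial observation — and essentially the only thing that needs care — is that the resulting phase $\exp(i\pi\tr(\beta\tau_A u_A^\ell))$ depends only on $u_A$, not on which coset representative $g \in \cG_0$ was chosen; this is exactly what the ``otherwise $= 0$'' clause of addressable $\ell$-orthogonality provides, since it annihilates every monomial involving an index outside $\{A\}$, in particular every monomial involving the $\cG_0$-coordinates of $g$. It follows that $U_\ell^{\beta\Gamma^A}\,\ol{\ket{u}} = \exp(i\pi\tr(\beta\tau_A u_A^\ell))\,\ol{\ket{u}}$ for all $u$, so $U_\ell^{\beta\Gamma^A}$ preserves $\qcode$, and since by definition $U_\ell^{\beta\tau_A}\ket{u_A} = \exp(i\pi\tr(\beta\tau_A u_A^\ell))\ket{u_A}$, its restriction to $\qcode$ agrees with $\ol{U_\ell^{\beta\tau_A}[A]}$ on every logical basis state, hence as operators. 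I do not expect a genuine obstacle: this is a direct abstraction of the Reed–Solomon computation in the proof of Theorem~\ref{thm:intra-block}, with the explicit evaluation identity~\eqref{eqn:A-addressable} replaced by the addressable-orthogonality hypothesis, and the only point worth flagging in the write-up is the characteristic-$2$ multinomial expansion.
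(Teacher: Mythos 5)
Your proposal is correct and follows essentially the same route as the paper's proof: expand the physical state over the basis $g^1,\dots,g^m$, compute the diagonal phase of $U_\ell^{\beta\Gamma^A}$, expand $h_i^\ell$ over ordered $\ell$-tuples, and let the addressable $\ell$-orthogonality condition kill every term except the one giving $\tau_A u_A^\ell$, then conclude by linearity on $\ol{\ket{u}}$. Your explicit remark that the ``otherwise $=0$'' clause is exactly what makes the phase independent of the coset representative in $\cG_0$ is a nice clarification of a point the paper leaves implicit, but it is the same argument (and note the ordered-tuple expansion is just the distributive law, valid in any characteristic, so the characteristic-$2$ caveat is not actually needed there).
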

\begin{proof}
    We wish to check that the physical gate $U_\ell^{\beta\Gamma^a}$ executes the logical gate $U_\ell^\beta[A]$ on any code state. By linearity, it suffices to check this on any logical computational basis states $\ol{\ket{u}}$ for $u \in \mathbb{F}_q^k$,
    \begin{equation}\label{eq:qcode_comp_basis_logical}
        \ol{\ket{u}} \propto \sum_{g \in \mathcal{G}_0}\Ket{\sum_{a=1}^ku_ag^a+g},
    \end{equation}
    where we do not write normalisation for brevity.
    In turn, we begin by checking the relation on the state $\ket{f}$, where $f = \sum_{a=1}^mu_ag^a$ for $u_a \in \mathbb{F}_q$. Indeed, we have
    \begin{align}
        U_\ell^{\beta\Gamma^A}\ket{f} &= \prod_{i=1}^n\exp\left(i\pi\tr(\beta\Gamma_i^Af_i^\ell)\right)\ket{f}\\
        &=\exp\left(i\pi\tr(\beta\sum_{i=1}^n\Gamma_i^Af_i^\ell)\right)\ket{f},
    \end{align}
    where the sum has been brought inside the trace because it is an $\mathbb{F}_2$-linear map.
    We may write
    \begin{equation}
        f_i^\ell = \sum_{a_1, \ldots, a_\ell=1}^mu_{a_1}\ldots u_{a_\ell}g_i^{a_1}\ldots g_i^{a_\ell}
    \end{equation}
    so that
    \begin{equation}
        \sum_{i=1}^n\Gamma_i^Af_i^\ell = u_A^\ell\tau_A,
    \end{equation}
    using the addressable $\ell$-orthogonality property with respect to $\tau$. We have found that
    \begin{equation}
        U_\ell^{\beta\Gamma^A}\ket{f} = \exp\left(i\pi\tr(\beta u_A^\ell\tau_A)\right)\ket{f}.
    \end{equation}
    This means that
    \begin{align}
        U_\ell^{\beta\Gamma^A}\ol{\ket{u}}&\propto \sum_{g \in \mathcal{G}_0}U_\ell^{\beta\Gamma^A}\Ket{\sum_{a=1}^ku_ag^a+g}\\
        &=\exp\left(i\pi\tr(\beta u_A^\ell\tau_A)\right)\ol{\ket{u}}\\
        &=\ol{U_\ell^{\beta\tau_A}[A]}\ol{\ket{u}}. \qedhere
    \end{align}
\end{proof}

\subsection{Single-Index Inter-Block CCZ Gates}\label{sec:address_ortho_single_index_CCZ}

We now turn to the most important multi-qudit gate in this work, the $\CCZ$ gate. Recall that, given three different code blocks, by $\CCZ^\beta[A]$ we mean the gate $\CCZ^\beta_{123}[A,A,A]$, i.e., the $\CCZ$ gate acting on the $A$'th physical qudit in each code block. We refer to such a gate as a single-index inter-block $\CCZ$ gate. As one would expect, we also denote
\begin{equation}
    \CCZ^\Gamma = \bigotimes_{i=1}^n\CCZ^{\Gamma_i}[i],
\end{equation}
where $\Gamma \in \mathbb{F}_q^n$. In Section~\ref{sec:intrablock_GRS}, we showed that by leveraging the particular structure of Reed-Solomon codes, qudit codes supporting an addressable $\CCZ$ gate could be constructed (meaning that any triple within one, two or three codeblocks can be addressed by a transversal circuit) and these will be converted into qubit codes with the same property in Section~\ref{sec:concatenation}. Here, we show that the weaker property of single-index addressability of $\CCZ$ gates can be obtained within the addressable orthogonality framework, and therefore arises for much more general codes than Reed-Solomon codes.

\begin{theorem}\label{thm:single_index_CCZ}
    Suppose $G$ has the addressable $3$-orthogonality property with respect to $\tau$. Then for all $A\in [k], \beta\in \FF_q$, we have
    \begin{equation}
        \ol{\CCZ^{\beta\tau_A}[A]} = \CCZ^{\beta\Gamma^A},
    \end{equation}
    with $\Gamma^A$ as defined in equation~\eqref{eq:address_ortho_prop}, in the associated quantum code $\qcode$. In words, we can address the $A$-th logical qudits in the three code blocks with a logical $\CCZ$ gate by acting transversally on the physical qudits.
\end{theorem}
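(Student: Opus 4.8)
The plan is to follow the same template as the proofs of Theorem~\ref{thm:inter-block} and Theorem~\ref{thm:degree_ell_gate}, replacing the Reed--Solomon-specific identity~\eqref{eqn:A-addressable} with the addressable $3$-orthogonality property~\eqref{eq:address_ortho_prop}. First, observe that by Lemma~\ref{lem:address_to_ortho} the matrix $G$ also has ($3$-)orthogonality with respect to some $(\Gamma,\tau)$, hence Assumption~\ref{assump:independence} holds by Lemma~\ref{lem:ortho_to_assump}, so $\qcode$ is a well-defined quantum code of the form~\eqref{eq:repeated_qcode_defn} with logical computational basis states~\eqref{eq:logical_state} and with $X^{\beta g^a}$ the logical $X^\beta$ on the $a$-th qudit. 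By linearity of all gates involved, it suffices to verify the claimed identity on a product of logical basis states $\ol{\ket{u}}\,\ol{\ket{v}}\,\ol{\ket{w}}$ across the three code blocks, and in turn, again by linearity, on the ``inner'' product state $\ket{f}\ket{g}\ket{h}$ where $f = \sum_{a\in[m]}u_ag^a$, $g = \sum_{a\in[m]}v_ag^a$, $h = \sum_{a\in[m]}w_ag^a$.

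Second, I would compute the phase that $\CCZ^{\beta\Gamma^A} = \bigotimes_{i=1}^n\CCZ^{\beta\Gamma^A_i}[i]$ (interpreted as acting across the three blocks) applies to $\ket{f}\ket{g}\ket{h}$. Directly from the definition of $\CCZ^\delta$,
\begin{align}
    \CCZ^{\beta\Gamma^A}\ket{f}\ket{g}\ket{h} = \exp\!\Big(i\pi\tr\Big(\beta\sum_{i=1}^n\Gamma^A_i f_ig_ih_i\Big)\Big)\ket{f}\ket{g}\ket{h},
\end{align}
where the sum can be pulled inside the trace because $\tr$ is $\mathbb{F}_2$-linear. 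Expanding $f_ig_ih_i = \sum_{a_1,a_2,a_3\in[m]}u_{a_1}v_{a_2}w_{a_3}\,g^{a_1}_ig^{a_2}_ig^{a_3}_i$ and exchanging the order of summation gives
\begin{align}
    \sum_{i=1}^n\Gamma^A_i f_ig_ih_i = \sum_{a_1,a_2,a_3\in[m]}u_{a_1}v_{a_2}w_{a_3}\sum_{i=1}^n\Gamma^A_i g^{a_1}_ig^{a_2}_ig^{a_3}_i = \tau_A\,u_Av_Aw_A,
\end{align}
where the last equality is exactly the addressable $3$-orthogonality property~\eqref{eq:address_ortho_prop}: every cross term in which $a_1,a_2,a_3$ are not all equal to $A$ (including triples all equal to some $B\neq A$) vanishes, leaving only the diagonal term $a_1=a_2=a_3=A$ with coefficient $\tau_A$.

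Third, I would translate this into the logical statement. The displayed computation shows $\CCZ^{\beta\Gamma^A}\ket{f}\ket{g}\ket{h} = \exp(i\pi\tr(\beta\tau_A u_Av_Aw_A))\ket{f}\ket{g}\ket{h}$, and since this phase does not depend on which coset representatives $g\in\mathcal{G}_0$ appear in each block, summing over $g\in\mathcal{G}_0$ in each of the three states as in~\eqref{eq:logical_state} yields $\CCZ^{\beta\Gamma^A}\,\ol{\ket{u}}\,\ol{\ket{v}}\,\ol{\ket{w}} = \exp(i\pi\tr(\beta\tau_A u_Av_Aw_A))\,\ol{\ket{u}}\,\ol{\ket{v}}\,\ol{\ket{w}}$. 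But $\exp(i\pi\tr(\beta\tau_A u_Av_Aw_A))$ is precisely the phase applied by $\ol{\CCZ^{\beta\tau_A}[A]}$ (i.e.\ $\ol{\CCZ^{\beta\tau_A}_{123}[A,A,A]}$) on that logical basis state, so the two operators agree on a spanning set and hence everywhere.

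I do not anticipate a genuine obstacle: the argument is a near-verbatim adaptation of Theorem~\ref{thm:inter-block}, the one conceptual point being that the ``otherwise'' clause of~\eqref{eq:address_ortho_prop} is exactly what annihilates the off-diagonal terms $u_{a_1}v_{a_2}w_{a_3}$ that would otherwise entangle logical qudits other than $A$. The bookkeeping to watch is that $\Gamma^A$ is a single fixed vector depending only on $A$ (not on the block), which is built into the definition of addressable $3$-orthogonality, and that the depth-one claim is immediate since each physical index $i$ appears in exactly one factor $\CCZ^{\beta\Gamma^A_i}[i]$ of the product across the three blocks.
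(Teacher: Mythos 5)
Your proposal is correct and follows essentially the same route as the paper's own proof: verify the action on $\ket{f^{(u)}}\ket{f^{(v)}}\ket{f^{(w)}}$, pull the sum inside the trace by $\mathbb{F}_2$-linearity, expand the triple product over rows of $G$, and apply the addressable $3$-orthogonality property~\eqref{eq:address_ortho_prop} to isolate the single term $\tau_A u_A v_A w_A$, then conclude by linearity on the logical basis states~\eqref{eq:logical_state}. The only additions (the preliminary well-definedness check via Lemmas~\ref{lem:address_to_ortho} and~\ref{lem:ortho_to_assump}, and the depth-one remark) are harmless and consistent with the paper.
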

\begin{proof}
    Just as in the proof of Theorem~\ref{thm:degree_ell_gate}, it will suffice to demonstrate this on three logical computational basis states, $\ol{\ket{u}\ket{v}\ket{w}}$. In turn, we consider $\ket{f^{(u)}}\ket{f^{(v)}}\ket{f^{(w)}}$, where $f^{(u)} = \sum_{a=1}^mu_ag^a$, $f^{(v)} = \sum_{a=1}^mv_ag^a$, and $f^{(w)} = \sum_{a=1}^mw_ag^a$ for some $u_a, v_a, w_a \in \mathbb{F}_q$. We have
    \begin{align}
        \CCZ^{\beta\Gamma^A}\ket{f^{(u)}}\ket{f^{(v)}}\ket{f^{(w)}} &= \prod_{i=1}^n\exp\left(i\pi\tr(\beta\Gamma_i^Af^{(u)}_if^{(v)}_if^{(w)}_i)\right)\ket{f^{(u)}}\ket{f^{(v)}}\ket{f^{(w)}}\\
        &=\exp\left(i\pi\tr(\beta\sum_{i=1}^n\Gamma_i^Af_i^{(u)}f_i^{(v)}f_i^{(w)})\right)\ket{f^{(u)}}\ket{f^{(v)}}\ket{f^{(w)}}.
    \end{align}
    We may write
    \begin{equation}
        f_i^{(u)}f_i^{(v)}f_i^{(w)} = \sum_{a_1, a_2, a_3=1}^mu_{a_1}v_{a_2}w_{a_3}g_i^{a_1}g_i^{a_2}g_i^{a_3}
    \end{equation}
    so that
    \begin{equation}
        \sum_{i=1}^n\Gamma_i^Af_i^{(u)}f_i^{(v)}f_i^{(w)} = \tau_Au_Av_Aw_A,
    \end{equation}
    where we have employed the addressable 3-orthogonal property here. Therefore,
    \begin{equation}
        \CCZ^{\beta\Gamma^A}\ket{f^{(u)}}\ket{f^{(v)}}\ket{f^{(w)}} = \exp\left(i\pi\tr(\beta\tau_Au_Av_Aw_A)\right)\ket{f^{(u)}}\ket{f^{(v)}}\ket{f^{(w)}}.
    \end{equation}
    Finally, considering the form of the logical computation basis states, as in Equation~\eqref{eq:qcode_comp_basis_logical}, we have, by linearity,
    \begin{align}
        \CCZ^{\beta \Gamma^A}\ol{\ket{u}\ket{v}\ket{w}} &= \exp(i\pi\tr(\beta\tau_Au_Av_Aw_A))\ol{\ket{u}\ket{v}\ket{w}}\\
        &=\ol{\CCZ^{\beta\tau_A}[A]\ket{u}\ket{v}\ket{w}},
    \end{align}
    as required.
\end{proof}
The following may be shown by a straightforward extension of the above proof.
\begin{remark}
    Suppose $G$ has the addressable $\ell$-orthogonality property, with $\Gamma^A$ as defined in equation~\eqref{eq:address_ortho_prop}. Then for all $A\in [k], \beta\in \FF_q$, we have $\ol{\CLZ^{\beta\tau_A}[A]} = \CLZ^{\beta\Gamma^A}$.
\end{remark}

\subsection{Designed Intra-Block 3-Qudit Gates}\label{sec:designed_intrablock}

Given a matrix $G$ with addressable $7$-orthogonality with respect to $\tau = \mathbf{1}^k$, we show how to construct a quantum CSS code from $G$ such that a particular non-Clifford 3-qudit gate 
can be implemented addressably and trasversally on predesigned intra-block triples of logical qudits using physical $U_7$ gates. This is much weaker than the result of Section~\ref{sec:intrablock_GRS}, which can address any triple of logical qudits in one, two, or three code blocks, but there the corresponding qubit code has parameters $\left[\left[n,\Omega\left(\frac{n}{\text{polylog}(n)}\right),\Omega\left(\frac{n}{\text{polylog}(n)}\right)\right]\right]$, whereas the qubit code corresponding to the present construction may be made asymptotically good in Section~\ref{sec:concatenation} via an instantiation with algebraic geometry codes in Section~\ref{sec:AG}.

Writing, as always, the $m \times n$ matrix $G$ as
\begin{equation}
    G = \begin{bmatrix}
        G_1\\G_0,
    \end{bmatrix}
\end{equation}
consider the $k\times n$ matrix $G_1$, and let us write $k = 7k'$.\footnote{Without loss of generality, we may assume that $k$ is divisible by 7, since if not we may add the leftover rows in $G_1$ into $G_0$, which amounts to removing some logical operators and adding some $X$-stabilisers to the code. Such an operation does not decrease the distance, and will decrease the dimension of the code by a negligible amount. Also, $G$ will retain its addressable 7-orthogonal property.} 
It makes sense to denote the rows of $G_1$ as $g^{7B+e}$ for $0\le B\le k'-1, e\in [7]$. 
Define the following $3\times 7$ matrix over $\FF_q$,

\begin{equation}\label{eq:matrix_M}
    M = 
    \begin{bmatrix}
        1 & 1 & 1 & 0 & 1 & 0 & 0 \\
        1 & 1 & 0 & 1 & 0 & 1 & 0 \\
        1 & 0 & 1 & 1 & 0 & 0 & 1
    \end{bmatrix}.
\end{equation}
Let $\cM$ be the $3k'\times k$ block-diagonal matrix which has $k'$ blocks of $M$ along its diagonal.
Then, consider the matrix
\begin{equation}
    H_0 = \cM G_1,
\end{equation}
meaning the multiplication of the matrix $\cM$ and $G_1$ over $\mathbb{F}_2$, and note that $H_0$ has dimensions $3k' \times n$. 
Denote the resulting matrix $H_1$, and define the matrix
\begin{equation}
    H = 
    \begin{bmatrix}
        H_1\\
        G_0
    \end{bmatrix}.
    \label{eq:def_H}
\end{equation}
In words, what has happened is that we have taken the matrix $G = \begin{bmatrix}
    G_1\\G_0
\end{bmatrix}$ and we have divided the rows of $G_1$ into groups of seven. Then, we have taken linear combinations of each group of seven rows, in accordance with the rows of $M$, to produce groups of three rows which form $H_1$. 

Now let $m' = m - 4k'$ be the number of rows of $H$. 
We denote the rows of $H$ as $h^{a}$ for $a\in [m']$. 
Specifically, the first $3k'$ rows are denoted as $h^{3B+e}$ for $0\le B\le k'-1, e\in [3]$, whereas the latter rows are simply denoted $h^a$ for $3k' < a \leq m'$. Note that these are simply rows of $G_0$, so that for these rows we in fact have $h^a = g^{a+4k'}$.

Since $G$ is addressable 7-orthogonal, Equation~\eqref{eq:address_ortho_prop} gives us certain vectors $(\Gamma^A)_{A \in [k]}$. With these, we have the following lemma.
\begin{lemma}\label{lem:extended_orthogonality}
    For $0\le A\le k'-1$, let $\Lambda^A = \sum_{j=1}^7 \Gamma^{7A+j}$. 
    Let $T(x_1, x_2, x_3) = x_1^{D_1}x_2^{D_2}x_3^{D_3}$ be a degree-7 monomial on three variables, i.e., $D_1 + D_2 + D_3 = 7$.
    For all $a, b, c\in [m']$, if $D_1,D_2,D_3$ are all non-zero, 
    we have
    \begin{align}
        \label{eq:intrablock_lem_1}
        \sum_{i=1}^n \Lambda^A_iT(h^{a}_i, h^b_i,h^c_i) = 
        \begin{cases}
            1 &\text{ if } \{a, b, c\} = \{3A+1, 3A+2, 3A + 3\}, \\
            0 &\text{ otherwise.}
        \end{cases}
    \end{align}
    If any of $D_1,D_2,D_3$ are zero, 
    we have
    \begin{align}
    \label{eq:intrablock_lem_2}
        \sum_{i=1}^n \Lambda^A_iT(h^{a}_i, h^b_i,h^c_i) = 0.
    \end{align}
\end{lemma}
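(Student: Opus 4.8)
The plan is to expand everything in terms of the rows of $G$ and then invoke the addressable $7$-orthogonality of $G$ with respect to $\tau = \mathbf{1}^k$. First I would record how each row of $H$ decomposes as an $\mathbb{F}_2$-combination of rows of $G$: for $a = 3B + e$ with $0 \le B \le k'-1$ and $e \in \{1,2,3\}$ we have $h^a = \sum_{j \,:\, M_{ej} = 1} g^{7B+j}$, supported on a $4$-element subset $R_a$ of the ``block'' $\{7B+1, \dots, 7B+7\}$; and for $3k' < a \le m'$ we have $h^a = g^{a+4k'}$, so $R_a = \{a+4k'\}$ is a singleton inside $\{k+1, \dots, m\}$, which indexes a row of $G_0$. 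I would also record the one combinatorial fact about $M$ that drives the proof: since the columns of $M$ are exactly the seven nonzero vectors of $\mathbb{F}_2^3$, the number of columns of $M$ that are simultaneously $1$ in a prescribed nonempty set of rows is $4$, $2$, or $1$ according as that set has $1$, $2$, or $3$ distinct elements; in particular it is even unless the set is all of $\{1,2,3\}$.

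Next I would expand the monomial: writing $h^x_i = \sum_{p \in R_x} g^p_i$ and using $(\sum_{p \in R_x} g^p_i)^{D} = \sum_{(p_1,\dots,p_{D}) \in R_x^{D}} g^{p_1}_i \cdots g^{p_{D}}_i$, the quantity $T(h^a_i, h^b_i, h^c_i)$ is a sum of products of exactly $D_1 + D_2 + D_3 = 7$ factors of the form $g^{\bullet}_i$. Multiplying by $\Lambda^A_i = \sum_{j=1}^7 \Gamma^{7A+j}_i$, summing over $i \in [n]$, and applying addressable $7$-orthogonality~\eqref{eq:address_ortho_prop} with $\tau = \mathbf{1}^k$ to each term (each $7A+j$ lies in $[k]$, so $\tau_{7A+j} = 1$), a term contributes $1$ exactly when all seven of its $g$-indices equal a common value $7A+j$, and $0$ otherwise. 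Collecting contributions, the left-hand side of the lemma equals, as an element of $\mathbb{F}_q$ (so the relevant count is taken mod $2$), the number of $j \in [7]$ for which $7A+j$ lies in every $R_x$, $x \in \{a,b,c\}$, whose associated exponent $D_1, D_2$ or $D_3$ is positive; equivalently, it is the size, reduced mod $2$, of the intersection of $\{7A+1,\dots,7A+7\}$ with all the sets $R_x$ for which the corresponding exponent is positive.

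To conclude I would split on the two cases of the lemma. If some exponent is zero, then at most two of $R_a, R_b, R_c$ survive, and the relevant intersection is either empty — which happens whenever one of the surviving indices points into $G_0$ or into a block other than block $A$ — or it is a pairwise or single overlap of row-supports of $M$ inside a single block, of size $4$ or $2$ and hence even; so the sum is $0$, giving~\eqref{eq:intrablock_lem_2}. If $D_1, D_2, D_3 \ge 1$, the sum equals $|R_a \cap R_b \cap R_c \cap \{7A+1,\dots,7A+7\}| \bmod 2$, which vanishes unless $a, b, c$ all index rows of $H_1$ lying in block $A$; in that case, writing $a = 3A+e_a$, $b = 3A+e_b$, $c = 3A+e_c$ with $e_a, e_b, e_c \in \{1,2,3\}$, it equals the number of columns of $M$ that are $1$ in rows $e_a, e_b, e_c$, which by the combinatorial fact is odd (equal to $1$) exactly when $\{e_a, e_b, e_c\} = \{1,2,3\}$, i.e.\ when $\{a, b, c\} = \{3A+1, 3A+2, 3A+3\}$, and even otherwise; this is~\eqref{eq:intrablock_lem_1}. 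The only genuine work is the bookkeeping in these last two steps; the trickiest parts to get right are the vanishing-exponent cases and tracking whether an index falls into $G_0$ or into the wrong block, but none of it is deep once the reduction to overlaps of the rows of $M$ is in place.
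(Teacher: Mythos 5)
Your proof is correct and follows essentially the same route as the paper's: expand $h^a,h^b,h^c$ in rows of $G$, apply the addressable $7$-orthogonality of $G$ (with $\tau=\mathbf{1}^k$) term by term, and reduce the result to an overlap/parity property of the rows of $M$, including the same case analysis for indices landing in $G_0$, in the wrong block, or with a vanishing exponent. Your phrasing of the final step as a mod-$2$ count of columns of $M$ (using that its columns are the seven nonzero vectors of $\mathbb{F}_2^3$) is just a compact restatement of the paper's computation of $\sum_{j}(M^{e_a}_j)^{D_1}(M^{e_b}_j)^{D_2}(M^{e_c}_j)^{D_3}$.
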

\begin{proof}
    Let us expand the summation as 
    \begin{align}
        \sum_{i=1}^n \Lambda^A_iT(h^{a}_i, h^b_i,h^c_i)
        &= 
        \sum_{i=1}^n \sum_{j=1}^7 \Gamma^{7A+j}_i T(h^{a}_i, h^b_i,h^c_i) \\
        &= 
        \sum_{j=1}^7 \left(\sum_{i=1}^n \Gamma^{7A+j}_i T(h^{a}_i, h^b_i,h^c_i)\right) \\
        &=\sum_{j=1}^7 \left(\sum_{i=1}^n \Gamma^{7A+j}_i (h^{a}_i)^{D_1} (h^b_i)^{D_2}(h^c_i)^{D_3}\right).\label{eq:calculation_designed_intrablock_lemma}
    \end{align}
    The rows of $H$ are linear combinations of rows of $G$. 
    Recall the matrix $M$ in equation~\eqref{eq:matrix_M},
    we denote its rows as $M^1, M^2, M^3$, so that its $(i,j)$-th entry is $M^i_j$. 
    For $0\le B_a\le k'-1$, for $e_a\in [3]$,
    \begin{align}
        h^{3B_a+e_a} &= \sum_{j_a=1}^7 M^{e_a}_{j_a} g^{7B_a+j_a}.
    \end{align}
    For all $a > 3k'$, we simply have $h^a = g^{a + 4k'}$, as mentioned.
    
    We will now do a case analysis. Note that Equation~\eqref{eq:calculation_designed_intrablock_lemma} will be evaluated using the fact that $h^a$ are either linear combinations of rows of $G_1$ (if $a \leq 3k'$), or are simply rows of $G_0$ (if $a > 3k'$). It is therefore easy to see that if any of $a,b,c$ are greater than $3k'$, this sum must vanish using the addressable 7-orthogonality of $G$. For example suppose that $a, b, c > 3k'$ and all $D_1, D_2, D_3$ are non-zero. Then,
    \begin{align}
        \sum_{j=1}^7 \left(\sum_{i=1}^n \Gamma^{7A+j}_i (h^{a}_i)^{D_1} (h^b_i)^{D_2}(h^c_i)^{D_3}\right)
        &= \sum_{j=1}^7 \left(\sum_{i=1}^n \Gamma^{7A+j}_i 
        \left(g^{a+4k'}_i \right)^{D_1} 
        \left(g^{b+4k'}_i \right)^{D_2}
        \left(g^{c+4k'}_i \right)^{D_3}
        \right) \\
        &= \sum_{j=1}^7 \one_{a+4k'=b+4k'=c+4k'=7A+j} \label{eq:intrablock_lem_calc_indicator} \\
        &= \sum_{j=1}^7 0 = 0.
    \end{align}
    Note that equation~\eqref{eq:intrablock_lem_calc_indicator} is an application of the addressable $7$-orthogonality property of $G$.
    Moreover, if $D_3 = 0$, we simply replace the indicator function in equation~\eqref{eq:intrablock_lem_calc_indicator} with $\one_{a+4k'=b+4k'=7A+j}$ and the same result hold. Same is true if any of $D_1,D_2,D_3$ is zero. 
     
    The other cases, where we only assume that at least one of $a,b,c$ is more than $3k'$, follow for the same reason.

    The only non-trivial case is therefore $a,b,c \leq 3k'$. To this end, let us show Equations~\eqref{eq:intrablock_lem_1} and~\eqref{eq:intrablock_lem_2} with $a = 3B_a + e_a, b = 3B_b + e_b$, and $c = 3B_c + e_c$. 
    Beginning with the case that $D_1, D_2$ and $D_3$ are all non-zero, Equation~\eqref{eq:calculation_designed_intrablock_lemma} evaluates as follows. 
    \begin{align}
        &\sum_{j=1}^7 \left(\sum_{i=1}^n \Gamma^{7A+j}_i (h^{a}_i)^{D_1} (h^b_i)^{D_2}(h^c_i)^{D_3}\right) \\
        &= \sum_{j=1}^7 \left(\sum_{i=1}^n \Gamma^{7A+j}_i 
        \left(\sum_{j=1}^7 M^{e_a}_{j} g^{7B_a+j}_i \right)^{D_1} 
        \left(\sum_{j=1}^7 M^{e_b}_{j} g^{7B_b+j}_i \right)^{D_2} 
        \left(\sum_{j=1}^7 M^{e_c}_{j} g^{7B_c+j}_i \right)^{D_3}
        \right) \\
        &= \sum_{j=1}^7 
        \sum_{\substack{J_a\in [7]^{D_1}}}
        \sum_{\substack{J_b\in [7]^{D_2}}}
        \sum_{\substack{J_c\in [7]^{D_3}}}
        M^{e_a}_{J_a}M^{e_b}_{J_b}M^{e_c}_{J_c}
        \left(
        \sum_{i=1}^n \Gamma^{7A+j}_i g_i^{7B_a + J_a}g_i^{7B_b + J_b} g_i^{7B_c + J_c} \right) \label{eq:intrablock_lem_bigsum}
    \end{align}
    Here, we have introduced a short-hand to more easily expand products like $\left(\sum_{j=1}^7M_j^{e_a}g_i^{7B_a+j}\right)^{D_1}$. Indeed, given a tuple $J\in [7]^D$, we denote $M^e_J = \prod_{\delta = 1}^D M^e_{J_\delta}$ and $g^{7B+J}_i = \prod_{\delta = 1}^D g^{7B + J_\delta}_i$.
    Again by the addressable $7$-orthogonality property of $G$, $\sum_{i=1}^n \Gamma^{7A+j}_i g_i^{7B_a + J_a}g_i^{7B_b + J_b} g_i^{7B_c + J_c}$ is 1 when $B_a = B_b = B_c = A$, and when all the indices of $J_a, J_b$ and $J_c$ are equal to $j$.
    Therefore, equation~\eqref{eq:intrablock_lem_bigsum} simplifies to
    \begin{align}
        \sum_{j=1}^7 
        (M^{e_a}_{j})^{D_1} (M^{e_b}_{j})^{D_2} (M^{e_c}_{j})^{D_3}.
        \label{eq:intrablock_lem_mult_of_M}
    \end{align}
    In fact, one can check that Equation~\eqref{eq:calculation_designed_intrablock_lemma} evaluates to~\eqref{eq:intrablock_lem_mult_of_M} whenever $a=3B_a+e_a$, $b=3B_b+e_b$ and $c=3B_c+e_c$, irrespective of whether any of $D_1, D_2, D_3$ are zero.
    Let us now inspect the matrix $M$. 
    Observe that the entries in $M$ are all $0$ and $1$, which means 
    \begin{align}
        (M^{e_a}_j)^{D_1} = 
        \begin{cases}
            M^{e_a}_j &\text{ if $D_1\ne 0$,}\\
            1         &\text{ if } D_1 = 0.
        \end{cases}
        \label{eq:intrablock_lem_binary_exp}
    \end{align}
    Moreover, the three rows of $M$ satisfy the following properties: 
    \begin{align}
        \sum_{j=1}^7 M^{e}_j &= 0 \text{ for all $e\in [3]$, } \\
        \sum_{j=1}^7 (M^{e_1}\star M^{e_2})_j &= 0 \text{   for all $e_1,e_2\in [3]$,} \\
        \sum_{j=1}^7 (M^{e_1}\star M^{e_2}\star M^{e_3})_j &=
        \begin{cases}
            1 &\text{ if $\{e_1,e_2,e_3\} = \{1,2,3\}$,} \\
            0 &\text{ otherwise.}
        \end{cases}
        \label{eq:intrablock_lem_mult_prop_M}
    \end{align}
    We note that all summations in the above three equations are over the field $\FF_q$, and the above equations hold because $\FF_q$ has characteristic $2$.
    Combining equations~\eqref{eq:intrablock_lem_binary_exp} to~\eqref{eq:intrablock_lem_mult_prop_M}, we see that 
    \begin{align}
        \sum_{j=1}^7 
        (M^{e_a}_{j})^{D_1} (M^{e_b}_{j})^{D_2} (M^{e_c}_{j})^{D_3} = 1
    \end{align}
    if and only if $D_1,D_2,D_3$ are non-zero, and $\{e_a,e_b,e_c\} = \{1,2,3\}$. 
    This implies that $\{a,b,c\} = \{3A+1,3A+2,3A+3\}$, as desired. 
\end{proof}

We will now apply the usual matrix-to-code construction to $H$. Letting $\cH$ be the row space of $H$, we define the quantum CSS code $\qcode_M = \CSS(X, \cG_0; Z, \cH^\perp)$. 
\begin{proposition}
    $\qcode_M$ is a $[[n, 3k', d]]_q$ quantum code with $d\ge \min(d(\cG_0^\perp), d(\cG))$.
\end{proposition}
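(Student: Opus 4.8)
The plan is to establish the three assertions of the proposition --- that $\qcode_M$ is a legitimate CSS code, that it encodes $3k'$ qudits, and that its distance is at least $\min(d(\cG_0^\perp),d(\cG))$ --- essentially as bookkeeping on top of the matrix-to-code dictionary of Section~\ref{sec:qcode_from_puncture}. The single observation that powers everything is that each row of $H_1 = \cM G_1$ is an $\mathbb{F}_2$-linear combination of the rows of $G_1$, so $\cH_1 \subseteq \cG_1$ and therefore $\cH = \cG_0 + \cH_1 \subseteq \cG_0 + \cG_1 = \cG$. In particular $\cG_0 \subseteq \cH = (\cH^\perp)^\perp$, so $\CSS(X,\cG_0;Z,\cH^\perp)$ is a well-defined CSS code, and $\qcode_M$ fits the template of Section~\ref{sec:qcode_from_puncture} with $H_1$ playing the role of ``$G_1$'' and $G_0$ the role of ``$G_0$''.

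First I would pin down the dimension. Since $G$ is addressably $7$-orthogonal, Lemma~\ref{lem:address_to_ortho} gives it ordinary $7$-orthogonality with respect to some $(\Gamma,\tau)$, and then Lemma~\ref{lem:ortho_to_assump} (applicable as $7 \ge 2$) shows $G$ satisfies Assumption~\ref{assump:independence}; in particular $\cG_0 \cap \cG_1 = 0$ and the rows of $G_1$ are linearly independent. Linear independence of the rows of $G_1$ makes the map $z \mapsto zG_1$ from $\mathbb{F}_q^k$ to $\mathbb{F}_q^n$ injective, so $\dim \cH_1 = \mathrm{rank}(\cM)$; and since the $3\times 7$ matrix $M$ of Equation~\eqref{eq:matrix_M} has rank $3$ (its restriction to columns $5,6,7$ is $I_3$), the block-diagonal matrix $\cM$ has rank $3k'$, whence $\dim \cH_1 = 3k'$. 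Combining $\cH_1 \subseteq \cG_1$ with $\cG_0 \cap \cG_1 = 0$ gives $\cG_0 \cap \cH_1 = 0$, so $\dim \cH = \dim \cG_0 + 3k'$ and the number of logical qudits is $n - \dim\cG_0 - \dim\cH^\perp = \dim\cH - \dim\cG_0 = 3k'$. Equivalently, one has just verified the analogue of Assumption~\ref{assump:independence} for $H$, so the logical-structure computations of Section~\ref{sec:qcode_from_puncture} transfer verbatim.

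Finally, the distance: Corollary~\ref{cor:CSS_distance} (equivalently the first inequality of Equation~\eqref{eq:dist_qcode}) applied to $\qcode_M = \CSS(X,\cG_0;Z,\cH^\perp)$ gives $d(\qcode_M) \ge \min(d(\cG_0^\perp), d(\cH))$, and since $\cH \subseteq \cG$ every nonzero codeword of $\cH$ is a nonzero codeword of $\cG$, so $d(\cH) \ge d(\cG)$ and the claimed bound follows. I do not expect a genuine obstacle here; the only point that requires a little care is the dimension count, where one must actually argue $\mathrm{rank}(\cM) = 3k'$ (hence the explicit check on columns $5$--$7$ of $M$) together with the injectivity of $z\mapsto zG_1$, rather than take $\dim\cH_1 = 3k'$ on faith, and one should not forget to record the inclusion $\cH\subseteq\cG$, which is exactly what turns the bare bound $d(\cH)$ into the stated $d(\cG)$.
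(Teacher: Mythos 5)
Your proof is correct and follows essentially the same route as the paper: verify that $H$ inherits Assumption~\ref{assump:independence} (the paper cites the linear independence of the rows of $M$, which you make explicit via the identity submatrix in columns $5$--$7$ and the injectivity of $z\mapsto zG_1$), then invoke the Section~\ref{sec:qcode_from_puncture} dictionary for the dimension and use $\cH\subseteq\cG$ together with Corollary~\ref{cor:CSS_distance} for the distance bound. Your write-up is simply a more detailed rendering of the paper's argument.
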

\begin{proof}
We may follow the argument in Section~\ref{sec:qcode_from_puncture} to establish these parameters, because the matrix $H$ satisfies Assumption~\ref{assump:independence}. Indeed, we know that our original matrix $G$, with the addressable 7-orthogonality property, satisfies Assumption~\ref{assump:independence} by Lemmas~\ref{lem:address_to_ortho} and~\ref{lem:ortho_to_assump}. Then, recalling that $H = \begin{pmatrix}
    H_1\\G_0
\end{pmatrix}$, we see that $H$ satisfies Assumption~\ref{assump:independence} as well, because the first three rows of $M$ are linearly independent.

Following the argument in Section~\ref{sec:qcode_from_puncture}, 
$\qcode_M$ encodes $3k'$ qudits. 
In addition, the distance of $\qcode_M$ satisfies $d\ge \min(d(\cG_0^\perp), d(\cH)) \ge \min(d(\cG_0^\perp),d(\cG))$, where we have used that $\mathcal{H}\subset\mathcal{G} \implies d(\mathcal{H}) \geq d(\mathcal{G})$.
\end{proof}

We claim that $\qcode_M$ support the logical implementation (using transversal physical $U_7$ gates) of a 3-qudit non-Clifford gate $W$ on a collection of triples of logical qubits. 
Specifically, $W$ is the gate
\begin{align}
    W \ket{\eta_1}\ket{\eta_2}\ket{\eta_3}
    &= \exp(i\pi \tr( \sum_{\{a,b,c\} = \{1, 2, 3\}}\eta_a^4\eta_b^2\eta_c
     ))
    \ket{\eta_1}\ket{\eta_2}\ket{\eta_3}, \\
    W^\beta \ket{\eta_1}\ket{\eta_2}\ket{\eta_3}
    &= 
    \exp(i\pi \tr( \beta \sum_{\{a,b,c\} = \{1, 2, 3\}}\eta_a^4\eta_b^2\eta_c
     ))
    \ket{\eta_1}\ket{\eta_2}\ket{\eta_3},
\end{align}
and it is straightforward to show that $W^\beta$ is in exactly the third level of the Clifford hierarchy for all large enough $t$ (recalling that the field $\mathbb{F}_q$ has size $q=2^t$) and for all $\beta \neq 0$.
For a logical computational basis state $\ol{\ket{u}}$ of $\qcode_M$, where $u \in \mathbb{F}_q^{3k'}$,
given 3 logical qudits $A, B, C\in [3k']$, we define the logical intra-block $W$ gate on logical qudits $A, B$ and $C$ to be 
\begin{equation}
    \ol{W^\beta [A,B,C]\ket{u}} = 
    \exp(i\pi \tr( \beta \sum_{\{a,b,c\} = \{A,B,C\}}u_a^4u_b^2u_c
     ))
     \ol{\ket{u}}. 
\end{equation}
Our main result of this section is the following theorem, which shows that the pre-designed triples of logical qudits may be addressed with the non-Clifford gate $W^\beta$.

\begin{theorem}
For all $0\le A\le k'-1$, $\beta\in \FF_q$,  we have
\begin{equation}
    \ol{W^\beta [3A+1, 3A+2, 3A+3]} = U_7^{\beta \Lambda^A}.
\end{equation}
\end{theorem}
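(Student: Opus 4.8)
The plan is to follow the template used in the proofs of Theorems~\ref{thm:degree_ell_gate} and~\ref{thm:single_index_CCZ}. By linearity it suffices to verify the claimed operator identity on the logical computational basis states $\ol{\ket{u}}$ of $\qcode_M$ for $u \in \FF_q^{3k'}$; since $\qcode_M = \CSS(X,\cG_0;Z,\cH^\perp)$ has $\cL_X = \cG_0$ and the logical $X$-operator of the $a$-th logical qudit is $X^{h^a}$ for $a\in[3k']$, these satisfy $\ol{\ket{u}} \propto \sum_{g\in\cG_0}\Ket{\sum_{a=1}^{3k'}u_a h^a + g}$. Extending $u$ to a vector of $\FF_q^{m'}$ — where the extra coordinates $u_a$ with $a>3k'$ parametrise $\cG_0$, recalling $h^a = g^{a+4k'}$ for such $a$ — it then suffices to compute the phase that $U_7^{\beta\Lambda^A}$ applies to $\ket{h^{(u)}}$, where $h^{(u)} = \sum_{a=1}^{m'}u_a h^a$, and to check that this phase depends only on the logical coordinates $u_{3A+1},u_{3A+2},u_{3A+3}$.

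First I would write $U_7^{\beta\Lambda^A}\ket{h^{(u)}} = \exp\big(i\pi\tr(\beta\sum_{i=1}^n\Lambda^A_i(h^{(u)}_i)^7)\big)\ket{h^{(u)}}$, pulling the sum over $i$ inside the trace using that $\tr$ is $\FF_2$-linear. The key simplification is the characteristic-$2$ Frobenius identity: since $7=4+2+1$,
\begin{equation}
    (h^{(u)}_i)^7 = \Big(\sum_a u_a h^a_i\Big)^4\Big(\sum_b u_b h^b_i\Big)^2\Big(\sum_c u_c h^c_i\Big) = \sum_{a,b,c\in[m']} u_a^4 u_b^2 u_c\,(h^a_i)^4(h^b_i)^2 h^c_i,
\end{equation}
so that $\sum_{i=1}^n\Lambda^A_i(h^{(u)}_i)^7 = \sum_{a,b,c\in[m']}u_a^4 u_b^2 u_c\sum_{i=1}^n\Lambda^A_i(h^a_i)^4(h^b_i)^2 h^c_i$. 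I then apply Lemma~\ref{lem:extended_orthogonality} to the degree-$7$ monomial $T(x_1,x_2,x_3)=x_1^4 x_2^2 x_3$, whose three exponents are all non-zero: it gives $\sum_{i=1}^n\Lambda^A_i(h^a_i)^4(h^b_i)^2 h^c_i = \one_{\{a,b,c\}=\{3A+1,3A+2,3A+3\}}$ for every $a,b,c\in[m']$ — in particular this kills every term in which two of $a,b,c$ coincide, or in which some index exceeds $3k'$. Hence $\sum_{i=1}^n\Lambda^A_i(h^{(u)}_i)^7 = \sum_{\{a,b,c\}=\{3A+1,3A+2,3A+3\}}u_a^4 u_b^2 u_c$, the sum running over the six assignments of the exponents $(4,2,1)$ to the indices $3A+1,3A+2,3A+3$, which is exactly the phase appearing in the definition of $W^\beta[3A+1,3A+2,3A+3]$.

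It follows that $U_7^{\beta\Lambda^A}\ket{h^{(u)}} = \exp\big(i\pi\tr(\beta\sum_{\{a,b,c\}=\{3A+1,3A+2,3A+3\}}u_a^4 u_b^2 u_c)\big)\ket{h^{(u)}}$, and since this phase is a function of $u_{3A+1},u_{3A+2},u_{3A+3}$ alone — coordinates shared by every term $\ket{\sum_{a=1}^{3k'}u_a h^a + g}$, $g\in\cG_0$, in the expansion of $\ol{\ket{u}}$ — linearity yields $U_7^{\beta\Lambda^A}\ol{\ket{u}} = \ol{W^\beta[3A+1,3A+2,3A+3]\ket{u}}$, proving the claim. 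I do not expect a genuine obstacle here: the substantive content lives in Lemma~\ref{lem:extended_orthogonality}, and the only point that needs a moment's care is noticing that the characteristic-$2$ factorisation of the $7$th power produces only monomials of exactly the three-variable shape handled by that lemma, with the degenerate repeated-index cases correctly vanishing.
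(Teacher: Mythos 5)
Your proof is correct and follows essentially the same route as the paper's: reduce by linearity to the states $\ket{h^{(u)}}$, expand the seventh power in characteristic~$2$, and invoke Lemma~\ref{lem:extended_orthogonality} to isolate the phase $\sum_{\{a,b,c\}=\{3A+1,3A+2,3A+3\}}u_a^4u_b^2u_c$, concluding by linearity over the coset sum defining $\ol{\ket{u}}$. The only difference is cosmetic: you expand $(h^{(u)}_i)^7$ via the Frobenius factorisation $7=4+2+1$ into a single triple sum over all $(a,b,c)\in[m']^3$ and let the lemma's indicator (which is stated for arbitrary, not necessarily distinct, $a,b,c$) kill the repeated-index and $a>3k'$ terms, whereas the paper first splits the multinomial expansion into one-, two- and three-index terms and applies the lemma's two cases separately.
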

\begin{proof}
For $u\in \FF_q^{3k'}$, the logical computational basis state of $\qcode_M$ corresponding to $u$ is
\begin{equation}
    \ol{\ket{u}} \propto \sum_{h\in \cG_0} \Ket{\sum_{a=1}^{3k'} u_{a}h^{a} + h}. 
\end{equation}
As always, to verify the action of the transversal gate $U_7^{\beta\Lambda^A}$ on the whole codepsace of $\qcode_M$, it suffices to verify its action on logical computational basis states $\ol{\ket{u}}$ by linearity. In turn, it is convenient to begin by considering some $v = \sum_{a=1}^{m'}v_ah^a$ for $v_a \in \mathbb{F}_q$, so that $v_i = \sum_{a=1}^{m'}v_ah^a_i$. We have
\begin{align}
    U_7^{\beta\Lambda^A}\ket{v} &= \prod_{i=1}^n\exp\left(i\pi\tr\left(\beta\Lambda_i^Av_i^7\right)\right)\ket{v}\\
    &=\exp\left(i\pi\tr(\beta\sum_{i=1}^n\Lambda_i^Av_i^7)\right)\ket{v}\label{eq:designed_intrablock_pre_calc}
\end{align}
where, as always, we have brought the sum inside the trace because $\tr$ is $\mathbb{F}_2$-linear. Then, since the field has characteristic 2, we have
\begin{align}\label{eq:seven_multinomial_one}
    v_i^7 = \left(\sum_{a \in [m']} v_{a}h^{a}_i \right)^7 &= \sum_{a\in [m']} v_a^7(h^a_i)^7 + \sum_{\substack{a,b \in [m']\\a \neq b}}[v_a^6v_b^1(h^a_i)^6(h^b_i)^1 + v_a^5v_b^2(h^a_i)^5(h^b_i)^2 + 
    v_a^4v_b^3(h^a_i)^4(h^b_i)^3] \\
    &+ \sum_{\substack{a,b,c \\\text{ pairwise distinct}}}v_a^4v_b^2v_c(h^a_i)^4(h^b_i)^2(h^c_i).\label{eq:seven_multinomial_two}
\end{align}
This statement may be seen either by evaluating the relevant multinomial coefficients and seeing which are odd and which are even or, alternatively, for a more complete proof, one may refer to Appendix A of~\cite{wills2024constant}.

From here, the sum inside the trace is
\begin{align}
    \sum_{i=1}^n\Lambda^A_i
     v_i^7
     &= \sum_{a\in [m']} v_a^7 \sum_{i=1}^n\Lambda^A_i(h^a_i)^7  \\
     &+ \sum_{\substack{a,b \in [m']\\a \neq b}}\left[v_a^6v_b^1 \sum_{i=1}^n\Lambda^A_i(h^a_i)^6(h^b_i)^1 + v_a^5v_b^2 \sum_{i=1}^n\Lambda^A_i(h^a_i)^5(h^b_i)^2 + 
    v_a^4v_b^3 \sum_{i=1}^n\Lambda^A_i(h^a_i)^4(h^b_i)^3\right] \\
    &+ \sum_{\substack{a,b,c \\\text{ pairwise distinct}}}v_a^4v_b^2v_c \sum_{i=1}^n\Lambda^A_i(h^a_i)^4(h^b_i)^2(h^c_i).
\end{align}
Invoking Lemma~\ref{lem:extended_orthogonality}, we get that 
\begin{align}
    \sum_{i=1}^n\Lambda^A_i
     v_i^7
     &=  \sum_{\substack{a,b,c \\\text{ pairwise distinct}}}v_a^4v_b^2v_c \one_{\{a,b,c\} = \{3A+1,3A+2,3A+3\}} \\
     &= \sum_{\{a,b,c\} = \{3A+1,3A+2,3A+3\}}v_a^4v_b^2v_c.\label{eq:designed_intrablock_mid_calc}
\end{align}
Combining Equations~\eqref{eq:designed_intrablock_pre_calc} and~\eqref{eq:designed_intrablock_mid_calc}, we find (leaving quantum states un-normalised for brevity) that
\begin{align}
    U_7^{\beta \Lambda^a}\ol{\ket{u}}
    &= \sum_{h\in \cG_0} 
     \exp(i\pi \tr( \beta \sum_{\{a,b,c\} = \{3A+1,3A+2,3A+3\}}u_a^4u_b^2u_c
     ))
    \Ket{\sum_{a=1}^{3k'} u_{a}h^{a} + h} \\
    &= \exp(i\pi \tr( \beta \sum_{\{a,b,c\} = \{3A+1,3A+2,3A+3\}}u_a^4u_b^2u_c
     )) \ol{\ket{u}}\\
     &= \ol{W^\beta [3A+1, 3A+2, 3A+3]} \ol{\ket{u}}. \qedhere
\end{align}
\end{proof}

\begin{remark}
    One may note that the only properties of $M$ that we used were that that its rows were linearly independent, that each rows has even weight, that each distinct pair has even overlap, and that the triple overlap of the three rows is odd. Note the similarity to the concept of triple intersection number in~\cite{zhu2023non}.
\end{remark}

\subsection{Single-Index Inter-Block Degree-\texorpdfstring{$\ell$}{} Polynomial Gates}\label{sec:address_ortho_poly_gates}

We conclude this section by proving that any polynomial gate $U_{q,P}^\beta$, as defined in Section~\ref{sec:prelim_non_clifford_addressability}, can be supported single-index addressably on $\qcode$ assuming the strong addressable orthogonal property for $G$. Such gates form an extremely flexible set, and are in fact complete for the set of diagonal qudit Clifford hierarchy gates with entries $\pm 1$ on the diagonal (see Appendix~\ref{sec:diagonal_Clifford}).

Indeed, let $P$ be an $e$-variate polynomial of degree-$\ell$ over $\mathbb{F}_q$.
We make the following definition as in the case of $\CLZ$ gates.
\begin{definition}[Single-index Inter-block Polynomial Gates]
    For vectors $h^1, \cdots, h^e\in \FF_q^n$, for $\beta\in \FF_q$ and address $A\in [n]$, we denote the single-index $U_P^\beta$ gate acting on qudit $A$ as 
    \begin{align}
        U_P^\beta[A]\ket{h^1}\cdots\ket{h^e} = \exp(i\pi\tr( \beta P(h^1_A, \cdots, h^e_A)))\ket{h^1}\cdots\ket{h^e}.
    \end{align}
    For a vector $\Gamma \in\FF_q^n$, we denote $U_P^\Gamma = \bigotimes_{i=1}^n U_P^{\Gamma_i}[i]$. 
\end{definition}

Let $G$ and the quantum code $\qcode$ be defined as in Section~\ref{sec:qcode_from_puncture}. For a general polynomial $P$ of degree $\ell$, we now require the strongly addressable $\ell$-orthogonality, as follows.

\begin{theorem}\label{thm:polynomial_gate_addressing}
Suppose $G$ has the strong addressable $\ell$-orthogonality property with respect to $\tau$. Then for all $A \in [k], \beta \in \mathbb{F}_q$, we have
\begin{align}
    \ol{U_P^{\beta\tau_A}[A]} = U_P^{\beta\Gamma^A},
\end{align}
with $\Gamma^A$ as in Equation~\eqref{eq:address_strong_mult_prop}, in the associated quantum code $\qcode$. In words, we can address the $A$-th logical qudits in the $e$ code blocks with a logical $U_P$ gate by acting transversally on the physical qudits.
\end{theorem}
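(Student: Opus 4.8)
The plan is to follow the template established in the proofs of Theorem~\ref{thm:degree_ell_gate} and Theorem~\ref{thm:single_index_CCZ}, since $U_P^{\beta\Gamma^A}$ is again diagonal in the computational basis. The one genuinely new feature is that a general degree-$\ell$ polynomial $P$ is a linear combination of monomials whose degrees range over $\{1,\dots,\ell\}$, so I will need a \emph{single} vector $\Gamma^A$ that witnesses the orthogonality relation simultaneously at every degree up to $\ell$; this is exactly what the strong addressable $\ell$-orthogonality hypothesis supplies, and it is the reason plain addressable $\ell$-orthogonality does not suffice here, in contrast to the $\CLZ$ and $U_\ell^\beta$ cases whose defining polynomials are single monomials of fixed degree. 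I will assume throughout that $P$ has no constant term, since a constant term only contributes an overall phase.

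First I would reduce, by linearity, to checking the claimed identity on logical computational basis states $\ol{\ket{u^1}}\cdots\ol{\ket{u^e}}$ of the $e$ code blocks, with $u^1,\dots,u^e\in\FF_q^k$; and then, again by linearity together with the form of the logical basis states in Equation~\eqref{eq:qcode_comp_basis_logical}, reduce further to computing the action of $U_P^{\beta\Gamma^A}$ on $\ket{f^{(1)}}\cdots\ket{f^{(e)}}$ where $f^{(j)} = \sum_{a=1}^m u^j_a g^a$. Applying the gate produces the scalar $\exp\bigl(i\pi\tr(\beta\sum_{i=1}^n \Gamma^A_i\, P(f^{(1)}_i,\dots,f^{(e)}_i))\bigr)$, where I have brought the sum inside the trace using the $\mathbb{F}_2$-linearity of $\tr$.

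The heart of the argument is to evaluate $\sum_{i=1}^n \Gamma^A_i\, P(f^{(1)}_i,\dots,f^{(e)}_i)$. Writing $P = \sum_T c_T\,T$ over monomials $T(x_1,\dots,x_e) = \prod_{j=1}^e x_j^{d_j}$ of degree $\ell' \coloneq \sum_j d_j \le \ell$, I would expand each factor $(f^{(j)}_i)^{d_j}$ into a sum over row indices of products of $u$-coefficients and entries $g^a_i$; then $\sum_{i=1}^n \Gamma^A_i\, T(f^{(1)}_i,\dots,f^{(e)}_i)$ becomes a sum, over all these row-index tuples, of (a product of the $u$-coefficients) times an inner sum $\sum_{i=1}^n \Gamma^A_i\, g^{b_1}_i\cdots g^{b_{\ell'}}_i$ of total degree $\ell'\le\ell$. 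Applying the strong addressable $\ell$-orthogonality at level $\ell'$, each inner sum equals $\tau_A$ when every row index is $A$ and vanishes otherwise, so only the all-$A$ tuple survives and $\sum_{i=1}^n \Gamma^A_i\, T(f^{(1)}_i,\dots,f^{(e)}_i) = \tau_A\prod_{j=1}^e (u^j_A)^{d_j} = \tau_A\, T(u^1_A,\dots,u^e_A)$. Summing over the monomials of $P$ gives $\sum_{i=1}^n \Gamma^A_i\, P(f^{(1)}_i,\dots,f^{(e)}_i) = \tau_A\, P(u^1_A,\dots,u^e_A)$, hence $U_P^{\beta\Gamma^A}$ scales $\ket{f^{(1)}}\cdots\ket{f^{(e)}}$ by $\exp\bigl(i\pi\tr(\beta\tau_A\, P(u^1_A,\dots,u^e_A))\bigr)$. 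Feeding this back into the logical basis states and invoking linearity once more identifies the physical action on $\ol{\ket{u^1}}\cdots\ol{\ket{u^e}}$ with that of $\ol{U_P^{\beta\tau_A}[A]}$, which is the claim.

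I expect the only real obstacle to be notational: organizing the nested multi-index expansion of the monomials without drowning in subscripts (one can lighten it with the kind of shorthand used in Lemma~\ref{lem:extended_orthogonality}). There is no conceptual difficulty beyond what already appears in the $\CLZ$ argument; the one structural point worth emphasizing is the use of the \emph{strong} property --- the various monomial degrees $\ell'\le\ell$ of $P$ are exactly why a single $\Gamma^A$ must collapse every homogeneous piece of $P$ at once.
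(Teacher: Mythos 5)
Your proposal is correct and follows essentially the same route as the paper's own proof: reduce by linearity to states $\ket{f^{(1)}}\cdots\ket{f^{(e)}}$, expand $P$ into monomials, and apply the strong addressable $\ell$-orthogonality degree by degree so that only the all-$A$ index tuple survives, giving the phase $\exp\left(i\pi\tr(\beta\tau_A P(u_A^{(1)},\ldots,u_A^{(e)}))\right)$. Your explicit remark about discarding a constant term of $P$ is a sensible (and harmless) tidying-up that the paper leaves implicit.
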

The proof of this theorem follows essentially the same argument as in those of Theorems~\ref{thm:degree_ell_gate}.

\begin{proof}
    As in the proofs of Theorems~\ref{thm:degree_ell_gate} and~\ref{thm:single_index_CCZ},we start by consider a set of logical computational basis states of $\qcode$: $\ol{\ket{u_1}}, \cdots, \ol{\ket{u_e}}$. We also consider vectors $f^{(1)}, \ldots, f^{(e)}$, where
    \begin{align}
        f^{(p)}=\sum_{a=1}^mu_a^{(p)}g^a
    \end{align}
    for some $u_a^{(p)} \in \mathbb{F}_q$. Then, we have
    \begin{align}
        U_P^{\beta\Gamma^A}\ket{f^{(1)}}\ldots\ket{f^{(e)}} &=\exp\left(i\pi\tr(\beta\sum_{i=1}^n\Gamma_i^AP(f_i^{(1)}, \ldots, f_i^{(e)}))\right)\ket{f^{(1)}}\ldots\ket{f^{(e)}}.
    \end{align}
    Now we may write
    \begin{equation}
        P(f_i^{(1)}, \ldots, f_i^{(e)}) = \sum_{\substack{\text{monomials $T$} \\
    \deg(T)\le \ell} } c_T T(f_i^{(1)}, \ldots, f_i^{(e)})
    \end{equation}
    for some coefficients $c_T \in \mathbb{F}_q$, and using the strong addressable $\ell$-orthogonality with respect to $\tau$, we have, for any monomial $T$ of degree less than $\ell$,
    \begin{equation}
        \sum_{i=1}^n\Gamma_i^AT(f_i^{(1)}, \ldots, f_i^{(e)}) = \tau_AT(u_A^{(1)}, \ldots, u_A^{(e)}).
    \end{equation}
    This means that
    \begin{align}
        U_P^{\beta\Gamma^A}\ket{f^{(1)}}\ldots\ket{f^{(e)}} = \exp\left(i\pi\tr(\beta\tau_AP(u_A^{(1)}, \ldots, u_A^{(e)}))\right)\ket{f^{(1)}}\ldots\ket{f^{(e)}}
    \end{align}
    and, considering the forms of the logical computational basis states $\ol{\ket{u_p}}$, as given in Equation~\eqref{eq:qcode_comp_basis_logical}, by linearity this implies that
    \begin{align}
        U_P^{\beta\Gamma^A}\ol{\ket{u_1}\ldots\ket{u_e}} &= \exp\left(i\pi\tr(\beta\tau_AP(u_A^{(1)}, \ldots, u_A^{(e)}))\right)\ol{\ket{u_1}\ldots\ket{u_e}}\\
        &= \ol{U_P^{\beta\tau_A}[A]\ket{u_1}\ldots\ket{u_e}},
    \end{align}
    as required.
\end{proof}
\begin{remark}
    One situation in which the strong addressable $\ell$-orthogonality is not required is the case of a homogeneous polynomial $P$. Indeed, if $P$ is a degree-$\ell$ such polynomial, then the conclusion of Theorem~\ref{thm:polynomial_gate_addressing} holds even if $G$ is only addressable $\ell$-orthogonal.
\end{remark}

\section{Instantiations of the Addressable Orthogonality Framework}
\label{sec:instantiations_orthogonality}

We have seen in the previous section how matrices with various addressable orthogonality properties can give rise to quantum codes supporting interesting addressable transversal gates. We will now see how such matrices naturally arise as the generator matrices for punctured classical codes satisfying a \textit{multiplication property} defined below. Indeed, in Definition~\ref{def:multiplication_property}, we will define a regular, and strong $\ell$-multiplication property, and we will go on to see in Lemma~\ref{lem:mult_to_ortho} that if the classical code $C$ has each property, then the resulting generator matrix for the punctured code has the regular, and strong $\ell$-orthogonality property, respectively. Note that this multiplication property, and similar properties, have been well-studied in coding theory~\cite{randriambololona2015products}, and it is well-known that codes built from the evaluation of certain functions, such as Reed-Solomon and algebraic geometry codes, naturally satisfy such properties.

One of the key results of the last section is Lemma~\ref{lem:ortho_to_address}, which observes that a matrix with (strong) $(\ell+1)$-orthogonality also has addressable (strong) $\ell$-orthogonality. Immediately, this leads us to the realisation that to construct a quantum code supporting addressable $U_\ell^\beta$ gates, single-index inter-block $\CLZ$ gates, or, most generally, single-index $U_P^\beta$ gates, where $\deg(P) = \ell$, we need only construct matrices with $(\ell+1)$-orthogonality, which in turn requires classical codes with the $(\ell+1)$-multiplication property.

Since Reed-Solomon~\cite{reed1960polynomial} and algebraic geometry codes~\cite{goppa1982algebraico} naturally satisfy such multiplication properties, we are able to build a wide array of codes supporting these gates addressably. There is, however, one unfortunate feature of this, but we will be able to partially rectify it. In general, requiring a classical code to have the $\ell$-multiplication property for larger $\ell$ worsens its parameters, which will in turn worsen the parameters of the resulting quantum code $\qcode$. In particular, we know that the $\ell$-multiplication property on the classical code is enough to ensure transversality of gates such as $U_\ell^\beta$, $\CLZ$ and $U_P^\beta$, where $\deg(P) = \ell$, but to obtain their (single-index) addressability, if we are going to use Lemma~\ref{lem:ortho_to_address}, we must require $(\ell+1)$-multiplication of the classical code. This turns out to worsen the resulting parameters of the final quantum code by a factor of $\frac{\ell}{\ell+1}$, which is particularly unfortunate for the most important case of $\ell = 3$. However, we are able to amend this issue in the case of Reed-Solomon codes. By appealing to the internal structure of these codes, we in fact show in Section~\ref{sec:GRS} that a Reed-Solomon code with only $\ell$ multiplications can lead to a matrix with addressable strong $\ell$-orthogonality (rather than $\ell-1$), so that in the case of Reed-Solomon codes, we may bypass Lemma~\ref{lem:ortho_to_address}. This improves the parameters of the resulting quantum codes supporting the above gates for Reed-Solomon codes, which could be an important difference given that Reed-Solomon-based constructions could turn out to have some practical importance.

By contrast, we are not able to show a similar result for algebraic geometry codes. In Section~\ref{sec:AG}, we generalise the construction of~\cite{wills2024constant} to obtain algebraic geometry codes satisfying the $(\ell+1)$-multiplication property which, via Lemma~\ref{lem:ortho_to_address}, will suffice to construct codes with addressable gates such as $U_{\ell}^\beta$, $\CLZ$ and $U_P^\beta$, where $\deg(P) = \ell$. A result akin to what we achieve for Reed-Solomon codes, showing that an algebraic geometry code supporting only $\ell$ multiplications can be used to construct a matrix with addressable (strong) $\ell$-orthogonality, would likely improve the parameters of the resulting quantum code supporting the stated addressable gates. 
Further investigation into this is left to future work. Accordingly, the best achievable parameters for a quantum code supporting addressable gates such as $U_\ell^\beta$, $\CLZ$ and $U_P^\beta$ may be improvable in future. 

Note that by using the qudit code-to-qubit code conversion methods of Section~\ref{sec:concatenation}, all results obtained for codes on qudits (of dimension $2^t$) can be obtained for qubit codes, up to some loss in parameters. Note that, in certain places, if we claim improved parameters of a qudit quantum code, we are tacitly stating that the resultant qubit quantum code has improved parameters too, since converting the code to a qubit code via any given method would lead to improved parameters on the latter code as well. As usual, Reed-Solomon codes are only asymptotically good if one allows a growing field size, meaning the resulting qubit code will be not quite asymptotically good. By contrast, algebraic geometry codes are asymptotically good over a fixed finite field, leading also to asymptotically good codes over qubits.

It is interesting that the idea to obtain a matrix with addressable (strong) $\ell$-orthogonality via a code with (strong) $(\ell+1)$-multiplication is quite reminiscent of the folklore statement that a transversal gate in the $(\ell+1)$-th level of the Clifford hierarchy implies a transversal and addessable gate in the $\ell$-th level of the Clifford hierarchy; this folklore statement is discussed in Section~\ref{sec:prelim_non_clifford_addressability}. The reader may even wonder if these two observations are secretly the same thing. This is, however, not the case. Our observation that $(\ell+1)$-multiplications leads to the addressability of these gates is strictly stronger, and can lead to considerable savings in parameters. For example, consider the $U_\ell^\beta$ gate. As discussed in Appendix~\ref{sec:diagonal_Clifford}, for all large enough fields $\mathbb{F}_q$, the smallest $\ell$ for which $U_\ell^\beta$ is the second level of the Clifford hierarchy is $\ell = 3$, the smallest $\ell$ for which it is in the 3rd level is $\ell = 7$, the smallest to get into the 4th level is $\ell = 15$, and the smallest to get into the 5th level is $\ell = 31$. The values of $\ell$ required to get into successive levels of the Clifford hierarchy appear to be exponentially increasing, and we believe that further analysis may prove this to be true for all values of $\ell$. The consequence is that, by using the given folklore statement, building codes with addressable $U_\ell^\beta$ gates would incur far worse parameters than with our methods. Again, this is because building the classical code supporting $\ell$-multiplications for larger $\ell$ gives worse parameters on the classical code, and resultantly on the quantum code also. If one were to use this folklore statement, it would seem that the number of multiplications required to support an addressable $U_\ell^\beta$ gate in the $t$-th level of the Clifford hierarchy is $\ell = 2^{t+1}-1$. On the other hand, our method requires only $\ell=2^t-1$ multiplications, which leads to an improvement in the final parameters of the quantum code by a factor of $\frac{2^{t+1}+1}{2^t+1} \approx 2$. Our method of building codes with these addressable gates is therefore strictly stronger in this case, and we believe as the possibility of supporting more general diagonal Clifford hierarchy gates addressably opens up, it will in general be beneficial to consider our methods over the folklore result discussed in Section~\ref{sec:prelim_non_clifford_addressability}.

\subsection{Multiplication Property for Classical Codes}\label{sec:classical_code_multiplication}
Let us first discuss how we can obtain matrices $G$ with the (strong) $(\ell+1)$-orthogonality property, which will give us addressable (strong) $\ell$-orthogonality via Lemma~\ref{lem:ortho_to_address}. As observed in prior works~\cite{hastings2018distillation,krishna2019towards,wills2024constant,nguyen2024goodbinaryquantumcodes,golowich2024asymptoticallygoodquantumcodes,nguyen2024quantum}, such matrices can be constructed from classical codes with (strong) $(\ell+1)$-multiplication property, which we now define.

\begin{definition}\label{def:multiplication_property}
    We say that $C$ has the $\ell$-multiplication property with respect to $\sigma \in \bbF_q^N$ if, for all $c^1, \dots, c^\ell \in C$, it holds that
    \begin{align}
        \sum_{i = 1}^N \sigma_i c^1_i c^2_i \dots c^\ell_i = 0.
    \end{align}
    We say that $C$ has the strong $\ol{\ell}$-multiplication property with respect to $\sigma \in \bbF_q^N$ if, for all $\ell \le \ol{\ell}$, for all $c^1, \ldots, c^\ell \in C$, we have
    \begin{align}
        \sum_{i=1}^N \sigma_i c^1_i c^2_i \dots c^{\ell}_i = 0.
    \end{align}
    For convenience, we will say that $C$ has the (strong) $\ell$-multiplication property if there exists some $0 \neq \sigma \in \bbF_q^N$ for which the respective equations hold.
\end{definition}
\begin{remark}\label{rmk:all_ones_mult}
    In the situation that the code $C$ contains the all-ones vector, then the $\ell$-multiplication property with respect to $\sigma$ implies the strong $\ell$-multiplication property with respect to $\sigma$.
\end{remark}
Let us now show how classical codes with the (strong) multiplication properties can give rise to matrices with the (strong) orthogonality property, by puncturing their generator matrix.

\begin{lemma}\label{lem:mult_to_ortho}
    Consider a code $C \subseteq \mathbb{F}_q^N$ of dimension $m$. Given $k < m$, let $n = N-m$. If $C$ has the strong $\ol{\ell}$-multiplication property, for some $\ol{\ell} \geq 2$, with respect to $\sigma$, where $\sigma_1, \ldots, \sigma_k$ are not zero, then there is an $m \times n$ matrix $G$ with the strong $\ol{\ell}$-orthogonality property with respect to $(\Gamma,\tau)$, where $\Gamma = (\sigma_i)_{i=k+1}^N$ and $\tau = (\sigma_i)_{i=1}^k$. The same statement holds for the regular (not strong) $\ell$-multiplication property and the regular $\ell$-orthogonality property.
\end{lemma}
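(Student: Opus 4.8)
The plan is to realize $G$ as a punctured generator matrix of $C$, exactly as in Section~\ref{sec:qcode_from_puncture}, and then observe that the $\ell$-multiplication identity for $\ell$-tuples of codewords of $C$ restricts, coordinate by coordinate, to the $\ell$-orthogonality identity for the rows of $G$.

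First I would bring a generator matrix of $C$ into the form $\tilde G = \begin{pmatrix} I_k & G_1 \\ 0 & G_0 \end{pmatrix}$, in which the first $k$ columns are the coordinates $1, \ldots, k$ on which $\sigma$ is nonzero by hypothesis. As in Section~\ref{sec:qcode_from_puncture}, this is possible precisely when those $k$ coordinates form an independent set of $C$, i.e.\ no nonzero codeword of $C^\perp$ is supported within them (automatic, for instance, when $k < d(C^\perp)$), which is a standing assumption of the matrix-to-code construction. Let $G = \begin{pmatrix} G_1 \\ G_0 \end{pmatrix}$ consist of the last $n$ columns of $\tilde G$, with rows $g^1, \ldots, g^m$, and write $\tilde g^a \in C$ for the codeword forming the $a$-th row of $\tilde G$. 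By construction $\tilde g^a$ restricted to the last $n$ coordinates is $g^a$, while on the first $k$ coordinates $\tilde g^a_i = \delta_{a,i}$ if $a \le k$ and $\tilde g^a_i = 0$ if $a > k$; note also $\tau := (\sigma_i)_{i=1}^k \in (\mathbb{F}_q^\ast)^k$ by hypothesis.

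Next, fix $a_1, \ldots, a_\ell \in [m]$ and apply the $\ell$-multiplication property of $C$ to $\tilde g^{a_1}, \ldots, \tilde g^{a_\ell}$, splitting the sum at coordinate $k$:
\begin{align*}
    0 \;=\; \sum_{i=1}^{N} \sigma_i\, \tilde g^{a_1}_i \cdots \tilde g^{a_\ell}_i \;=\; \sum_{i=1}^{k} \sigma_i\, \tilde g^{a_1}_i \cdots \tilde g^{a_\ell}_i \;+\; \sum_{j=1}^{n} \Gamma_j\, g^{a_1}_j \cdots g^{a_\ell}_j ,
\end{align*}
where $\Gamma_j := \sigma_{k+j}$, so that $\Gamma = (\sigma_i)_{i=k+1}^{N}$. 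Plugging in the values of $\tilde g^a_i$ for $i \le k$, the middle sum reduces to $\sum_{i=1}^{k} \sigma_i \prod_{t=1}^\ell \delta_{a_t,i}$ when all $a_t \le k$ and to $0$ as soon as some $a_t > k$; in the former case $\prod_t \delta_{a_t,i}$ is nonzero for some $i \le k$ exactly when $a_1 = \cdots = a_\ell$, in which case the sum equals $\sigma_{a_1} = \tau_{a_1}$. Rearranging (signs are immaterial in the characteristic-two setting of this paper), the displayed identity becomes precisely the $\ell$-orthogonality property~\eqref{eq:ortho_prop} for $G$ with respect to $(\Gamma, \tau)$.

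Finally, the strong versions follow at no extra cost: the vectors $\Gamma$ and $\tau$ constructed above are independent of $\ell$, so running the argument for every $\ell \le \ol\ell$ converts the strong $\ol\ell$-multiplication property into the strong $\ol\ell$-orthogonality property with respect to the same $(\Gamma, \tau)$. I expect the only step needing care to be the first one — arranging the systematic form of $\tilde G$ around the coordinates where $\sigma$ is nonzero — and this is exactly the mild information-set condition already present in the construction of Section~\ref{sec:qcode_from_puncture}; the rest is just bookkeeping of the coordinate split together with the Kronecker-delta case analysis.
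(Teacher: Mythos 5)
Your proof is correct and follows essentially the same route as the paper's: bring a generator matrix into the partially systematic form $\bigl(\begin{smallmatrix} I_k & G_1 \\ 0 & G_0 \end{smallmatrix}\bigr)$, apply the $\ell$-multiplication identity to the rows, split the sum at coordinate $k$, and read off the orthogonality conditions from the identity block, with $\Gamma$ and $\tau$ independent of $\ell$ so the strong version comes for free. Your explicit remark that the first $k$ coordinates must admit this systematic form (an information-set condition, e.g.\ guaranteed when $k < d(C^\perp)$) is a point the paper leaves implicit, but otherwise the arguments coincide.
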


\begin{proof}
    Following the construction of Section~\ref{sec:qcode_from_puncture}, we pick a generator matrix $\tilde{G}$ of $C$ of the form
    \begin{align}
    \begin{pmatrix}
        I_k & G_1 \\
        0 & G_0.
    \end{pmatrix}
    \end{align}
    by row operations. Taking
    \begin{align}
        G = \begin{pmatrix}
            G_1 \\ 
            G_0
        \end{pmatrix},
    \end{align}
    let us check that this definition of $G$ satisfies the strong $\ol\ell$-orthogonality property. Denoting the rows of $\tilde{G}$ by $\tilde{g}^1, \dots, \tilde{g}^m$, and the rows of $G$ by $g^1, \dots, g^m$, the fact that $C$ has the strong $\ol{\ell}$-multiplication property, and that the rows of $\tilde{G}$ are codewords of $C$, means that for all $\ell \leq \ol{\ell}$ and all $a_1, a_2, \dots, a_\ell \in [m]$,
    \begin{align}
        \sum_{i=1}^N\sigma_i\tilde{g}_i^{a_1}\tilde{g}_i^{a_2}\ldots\tilde{g}_i^{a_l} = 0,
    \end{align}
    for some $\sigma \in \mathbb{F}_q^N$.
    Because the first $k$ columns of $\tilde{G}$ are of the form $\begin{pmatrix} I_k \\ 0 \end{pmatrix}$, this can be rewritten as
    \begin{align}
        \sigma_{a_1}\mathbbm{1}[1 \le a_1 = a_2 = \dots = a_\ell \le k] + \sum_{i = k+1}^{N}\sigma_i\tilde{g}^{a_1}_i \tilde{g}^{a_2}_i\cdots \tilde{g}^{a_\ell}_i  = 0,
    \end{align}
    or
    \begin{align}
        \sum_{i = 1}^{n} \sigma_{k+i}{g}^{a_1}_i {g}^{a_2}_i\cdots {g}^{a_\ell}_i = \begin{cases}
            \sigma_{a_1} & \text{ if $1 \le a_1 = a_2 = \dots = a_\ell \le m$} \\
            0 & \text{ otherwise}.
        \end{cases}
    \end{align}
    This latter manipulations assumes that the field $\mathbb{F}_q$ has characteristic 2, although the result naturally generalises with the appropriate minus signs when that is not the case.

    The proof for the regular (non-strong) case follows in the same way.
\end{proof}

\subsection{Generalised Reed-Solomon Codes}~\label{sec:GRS}

As discussed in the introduction to Section~\ref{sec:instantiations_orthogonality}, evaluation codes such as Reed-Solomon codes naturally satisfy (strong) $(\ell+1)$-multiplication. This can give rise to matrices with the (strong) $(\ell+1)$-orthogonality property via Lemma~\ref{lem:mult_to_ortho}, which can give rise to matrices with the addressable strong $\ell$-orthogonality property via Lemma~\ref{lem:ortho_to_address}. However, as mentioned, taking a Reed-Solomon code with $(\ell+1)$ multiplications rather than $\ell$ leads to a loss in the final parameters of the quantum code by a factor of $\frac{\ell}{\ell+1}$, which we would like to avoid. In this section, by leveraging the internal structure of Reed-Solomon codes, we will show that we can avoid this loss of parameters, and directly obtain addressable strong $\ell$-orthogonality, by requiring only $\ell$ multiplications in our Reed-Solomon code. Note that our construction bears some similarity to that of Section~\ref{sec:intrablock_GRS}.

\begin{theorem}\label{thm:GRS_gen_mat_address}
Fix any $\ol{\ell}, m, N, k\in \mathbb{N}$ and let $n \coloneq N-k$. Supposing that $m < 1+n/\ol\ell$ and $k \le m$, and that $q \geq N$ for some prime power $q$, we can construct an $m\times n$ matrix $G = \begin{pmatrix}
        G_1\\ G_0
    \end{pmatrix}$
with entries in $\FF_q$ such that
\begin{enumerate}[itemsep = 0pt]
    \item $G$ has addressable strong $\ol{\ell}$-orthogonality with respect to $\tau = \mathbf{1}^k$;
    \item $d(\cG)\ge n-m+1$ and $d(\cG_0^\perp) \ge m-k+1$.
\end{enumerate}
\end{theorem}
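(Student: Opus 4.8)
The plan is to mirror the Reed-Solomon construction from Section~\ref{sec:intrablock_GRS} but now set up so that a \emph{single} collection of $N$ evaluation points simultaneously provides, for every logical index $A \in [k]$, a dual annihilator vector that isolates the $A$-th coordinate. Concretely, I would take a prime power $q \geq N$, pick $N$ distinct field elements $\beta_1, \dots, \beta_k, \alpha_1, \dots, \alpha_n \in \mathbb{F}_q$, and start from the Reed-Solomon code $C = \RS_{N,m}(\boldsymbol{\beta} \cup \boldsymbol{\alpha})$ of dimension $m$. Since $m < 1 + n/\ol{\ell}$, for every $A \in [k]$ the product of $\ol\ell$ codewords restricted to the $\alpha$-block is the evaluation of a polynomial of degree $\le \ol\ell(m-1) < n$, so by Fact~\ref{fact:GRS-coeff} applied to the $n+1$ points $\{\beta_A\} \cup \{\alpha_1,\dots,\alpha_n\}$ there exist coefficients $\Gamma^A_1,\dots,\Gamma^A_n$, with the coefficient of $f(\beta_A)$ normalized to $1$, such that $f(\beta_A) + \sum_{i\in[n]} \Gamma^A_i f(\alpha_i) = 0$ for all $f$ of degree $< n$.

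Next I would fix a generator matrix $\tilde G$ for $C$ in the partially systematic form $\begin{pmatrix} I_k & G_1 \\ 0 & G_0 \end{pmatrix}$, where the first $k$ columns are indexed by $\beta_1,\dots,\beta_k$ and the last $n$ by $\alpha_1,\dots,\alpha_n$; this is possible because RS codes are MDS, exactly as argued around Equation~\eqref{eq:partially_systematic_gen_mat}. Set $G = \begin{pmatrix} G_1 \\ G_0 \end{pmatrix}$ and denote its rows by $g^a$, which are the $\alpha$-evaluations of degree-$<m$ polynomials $\tilde g^a$ with $\tilde g^a(\beta_b) = \mathbbm{1}_{a=b}$ for $a,b \le k$. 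To check addressable strong $\ol\ell$-orthogonality with $\tau = \mathbf{1}^k$: fix $A$, take any $\ell \le \ol\ell$ and any $a_1,\dots,a_\ell \in [m]$, and apply the identity above to $f = \tilde g^{a_1} \cdots \tilde g^{a_\ell}$, which has degree $\le \ell(m-1) \le \ol\ell(m-1) < n$. This gives $\sum_{i\in[n]} \Gamma^A_i g^{a_1}_i \cdots g^{a_\ell}_i = -\tilde g^{a_1}(\beta_A)\cdots\tilde g^{a_\ell}(\beta_A) = \tilde g^{a_1}(\beta_A)\cdots\tilde g^{a_\ell}(\beta_A)$ (characteristic $2$), which equals $1$ precisely when $a_1 = \dots = a_\ell = A$ and $0$ otherwise by the systematic form — exactly Equation~\eqref{eq:address_strong_mult_prop}. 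This establishes item~1.

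For item~2: $\cG$ is the $\alpha$-puncturing of $C$, and by Fact~\ref{fact:punctured_distance} (or directly, since puncturing an MDS code at $k$ coordinates gives an MDS code) $d(\cG) = d(\puncT{C}) \ge d(C) - k = (N-m+1) - k = n - m + 1$. For $\cG_0^\perp$, note $\cG_0 = \shrT{C}$ where $T$ indexes the $\beta$-coordinates, so by Lemma~\ref{lem:short_punc} we have $\cG_0^\perp = \puncT{(C^\perp)}$; since $C^\perp = \GRS_{N, N-m}(\boldsymbol{\beta}\cup\boldsymbol{\alpha}, \boldsymbol{\nu})$ is MDS of distance $m+1$ by Theorems~\ref{thm:GRS_dual} and~\ref{thm:GRS-rate-distance}, puncturing at the $k$ $\beta$-coordinates yields $d(\cG_0^\perp) \ge (m+1) - k = m-k+1$. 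The main subtlety I anticipate is purely bookkeeping: ensuring the degree bound $\ell(m-1) < n$ actually holds for the worst case $\ell = \ol\ell$ (this is exactly what $m < 1 + n/\ol\ell$ buys, since then $\ol\ell(m-1) < \ol\ell \cdot n/\ol\ell = n$), and confirming the "coefficient normalized to $1$" convention in Fact~\ref{fact:GRS-coeff} is compatible with extracting $\tilde g^{a_1}(\beta_A)\cdots\tilde g^{a_\ell}(\beta_A)$ with the correct sign — over characteristic $2$ signs are irrelevant, so this is genuinely routine. No genuinely hard step is expected; the content is entirely in recognizing that the \emph{same} construction as Section~\ref{sec:intrablock_GRS}, read through the addressable orthogonality definitions, yields strong (not merely regular) $\ol\ell$-orthogonality directly, bypassing the factor-$\frac{\ol\ell}{\ol\ell+1}$ loss from Lemma~\ref{lem:ortho_to_address}.
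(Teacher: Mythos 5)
Your proposal is correct and follows essentially the same route as the paper's proof: the same Reed--Solomon code on $N$ points with a partially systematic generator matrix, the same use of Fact~\ref{fact:GRS-coeff} on the $n+1$ points $\{\beta_A\}\cup\{\alpha_1,\dots,\alpha_n\}$ with the coefficient at $\beta_A$ normalized to $1$, the same degree bound $\ol\ell(m-1)<n$, and the same shortening/puncturing plus MDS argument for the distance bounds. The only difference is notational (your $\beta$/$\alpha$ labelling versus the paper's single list $\alpha_1,\dots,\alpha_N$), and your remark that characteristic $2$ absorbs the sign matches the paper's standing convention.
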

\begin{proof}
    Let $\bal$ be a collection of $N$ distinct points in $\FF_q$, consider the Reed-Solomon code $\RS_{N,m}(\bal)$. 
    Since the code has dimension $m\ge k$, we can take a generating matrix of the following form
    \begin{align}
        \tilde G = \begin{bmatrix}
            I_k & G_1 \\
            0 & G_0
        \end{bmatrix}.
    \end{align}
    As before, each row of $\tilde{G}$ corresponds to a polynomial $\tilde{g}^a$ of degree less than $m$.
    We may suppose that the columns of $\tilde{G}$ corresponds to evaluation points $\al_1, \cdots, \al_N$. 
    From Fact~\ref{fact:GRS-coeff}, for any $A\in [k]$, there exists vector $\Gamma^A\in \FF_q^{n}$ such that for all polynomials $f$ of degree less than $n$, we have
    \begin{align}\label{eq:GRS_interpolation_second_thm}
        f(\al_A) + \sum_{i=k+1}^N \Gamma^A_{i-k} f(\al_i) = 0.
    \end{align}
    As in equation~\eqref{eqn:A-addressable}, we take the coefficients of $f(\al_A)$ to be 1 by normalizing the other coefficients. Given any $\ell \le \ol\ell$ and any $a_1, \ldots, a_\ell \in [m]$, we have that $\tilde{g}^{a_1}\tilde{g}^{a_2}\ldots \tilde{g}^{a_\ell}$ is a polynomial of degree at most $\ell(m-1)\le\ol\ell(m-1) < n$. Thus, we may apply Equation~\eqref{eq:GRS_interpolation_second_thm}, to find that
    \begin{align}
        0 &= \tilde{g}^{a_1}(\alpha_A)\tilde{g}^{a_2}(\alpha_A)\ldots\tilde{g}^{a_\ell}(\alpha_A) + \sum_{i=k+1}^N\Gamma^A_{i-k}\tilde{g}^{a_1}(\alpha_{i})\tilde{g}^{a_2}(\alpha_i)\ldots\tilde{g}^{a_\ell}(\alpha_i)\\
        &=\mathbf{1}_{a_1 = a_2 = \ldots = a_\ell = A}+\sum_{i=1}^n\Gamma^A_i\tilde{g}^{a_1}(\alpha_{i+k})\tilde{g}^{a_2}(\alpha_{i+k})\ldots\tilde{g}^{a_\ell}(\alpha_{i+k}).
    \end{align}
    Therefore,
    \begin{equation}
        \sum_{i=1}^n\Gamma_i^Ag_i^{a_1}g_i^{a_2}\ldots g_i^{a_\ell} = \begin{cases}
            1 &\text{ if }a_1 = a_2 = \ldots = a_\ell = A\\
            0 &\text{ otherwise}
        \end{cases}.
    \end{equation}
    We have shown that $G$ has the addressable strong $\ol\ell$-orthogonality property with respect to the all-ones vector $\tau = \mathbf{1}^k$.
    
    Let $C$ denote the code with generator matrix $\tilde{G}$, and let $T$ denote the first $k$ indices. As in the case of equation~\eqref{eq:dist_qcode}, we have $\cG_0 = \shrT{C}$ and $\cG = \puncT{C}$. By Lemma~\ref{lem:short_punc}, Fact~\ref{fact:punctured_distance}, as well as Theorems~\ref{thm:GRS_dual} and~\ref{thm:GRS-rate-distance}, we have $d(\cG)\ge d(C) - k = n-m+1$ and $d(\cG_0^\perp) \ge d(C^\perp) - k = m-k+1$.
\end{proof}

With this done, our framework will finally bear fruit in the form of the following Reed-Solomon based construction of qudit quantum codes supporting transversal, addressable single qudit gates like $U_\ell^\beta$ and single-index addressable multi-qudit gates like $U_P^\beta$. As discussed, this result gives better parameters than would have been achieved otherwise if one had either simply used the folklore that transversal $\CCCZ$ implies single-index addressable $\CCZ$, or by simply using Lemma~\ref{lem:ortho_to_address}.

\begin{theorem}\label{thm:instantiation:GRS}
    Let $\ol\ell, m, n, k, q$ satisfy the assumptions of Theorem~\ref{thm:GRS_gen_mat_address}. Then there exists an $[[n,k,d]]_q$ quantum code $\qcode$ with $d \geq \min(n-m+1, m-k+1)$ such that, for any $e, \ell \leq \ol\ell$, $\qcode$ supports single-index addressable, transversal implementations of all $e$-variable, degree $\ell$ polynomial gates $U_P^\beta$ across $e$ code blocks. This includes single-index addressability of $\CLZ$, as well as addressability of single-qudit gates such as $U_\ell^\beta$. 

    Choosing $m = \left\lfloor n/\ol\ell\right\rfloor$ and $k=m/2$, we see that $\qcode$ may be chosen to be asymptotically good (over a growing qudit size) with rate and relative distance at least $1/(2\ol\ell)$.
\end{theorem}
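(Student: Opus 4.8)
The plan is to assemble this theorem directly from the machinery already in place: Theorem~\ref{thm:GRS_gen_mat_address} supplies the generator matrix, Section~\ref{sec:qcode_from_puncture} converts it to a quantum code, and Theorem~\ref{thm:polynomial_gate_addressing} (with its special cases Theorems~\ref{thm:single_index_CCZ} and~\ref{thm:degree_ell_gate}) certifies the addressable gates. First I would invoke Theorem~\ref{thm:GRS_gen_mat_address} with the given parameters $\ol\ell, m, n, k, q$ (we may assume $\ol\ell \ge 2$, the case $\ol\ell = 1$ being degenerate) to obtain an $m \times n$ matrix $G = \begin{pmatrix} G_1 \\ G_0 \end{pmatrix}$ over $\FF_q$ with the addressable strong $\ol\ell$-orthogonality property with respect to $\tau = \mathbf{1}^k$, together with the distance bounds $d(\cG) \ge n-m+1$ and $d(\cG_0^\perp) \ge m-k+1$. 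By Lemma~\ref{lem:address_to_ortho}, $G$ also has (strong) $\ol\ell$-orthogonality with respect to some $(\Gamma,\tau)$, and since $\ol\ell \ge 2$, Lemma~\ref{lem:ortho_to_assump} shows $G$ satisfies Assumption~\ref{assump:independence}. Hence the matrix-to-code construction of Section~\ref{sec:qcode_from_puncture} (equation~\eqref{eq:basic_matrix_to_qcode_defn}) applies and produces the CSS code $\qcode = \CSS(X, \cG_0; Z, \cG^\perp)$, which encodes exactly $k$ logical qudits into $n$ physical qudits.

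For the distance, Corollary~\ref{cor:CSS_distance} (equivalently equation~\eqref{eq:dist_qcode}) gives $d(\qcode) \ge \min(d(\cG_0^\perp), d(\cG)) \ge \min(m-k+1,\, n-m+1)$, which is the claimed parameter bound. This part is immediate once the matrix is in hand; nothing beyond bookkeeping is required.

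For the gate implementations, fix any $e$ and any $\ell \le \ol\ell$, and let $P$ be an $e$-variate polynomial of degree $\ell$ over $\FF_q$. The addressable strong $\ol\ell$-orthogonality of $G$ with the vectors $(\Gamma^A)_{A \in [k]}$ from Theorem~\ref{thm:GRS_gen_mat_address} includes, with the same vectors, the strong addressable $\ell$-orthogonality property (the defining identities for products of at most $\ell$ rows form a sub-collection of those for at most $\ol\ell$ rows). Theorem~\ref{thm:polynomial_gate_addressing} then yields, for every $A \in [k]$ and $\beta \in \FF_q$, the identity $\ol{U_P^{\beta}[A]} = U_P^{\beta\Gamma^A}$ (using $\tau_A = 1$); and since $U_P^{\beta\Gamma^A} = \bigotimes_{i=1}^n U_P^{\beta\Gamma^A_i}[i]$, this is a depth-one physical circuit across the $e$ code blocks, i.e. a strongly transversal, single-index addressable implementation. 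Specializing $P$ then gives the stated consequences: $P(x_1,\dots,x_\ell) = x_1 \cdots x_\ell$ recovers the single-index inter-block $\CLZ$ gate (alternatively one may appeal directly to Theorem~\ref{thm:single_index_CCZ} when $\ell = 3$), and $P(x) = x^\ell$ with $e = 1$ recovers the single-qudit gate $U_\ell^\beta$ (alternatively Theorem~\ref{thm:degree_ell_gate}), for which single-index addressability coincides with addressability.

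Finally, for the asymptotically good instantiation I would set $m = \lfloor n/\ol\ell \rfloor$ (so $m \le n/\ol\ell < 1 + n/\ol\ell$, meeting the hypothesis of Theorem~\ref{thm:GRS_gen_mat_address}) and $k = \lfloor m/2 \rfloor$, taking $q$ to be the least prime power at least $N = n+k$, so that $q = \Theta(n)$. Then $k = \Theta(n/\ol\ell)$, giving rate $k/n \ge \frac{1}{2\ol\ell} - o(1)$, while the distance bound reads $d \ge \min(n-m+1,\, m-k+1)$ with $n-m+1 \ge n(1 - 1/\ol\ell) + O(1)$ and $m-k+1 \ge n/(2\ol\ell) + O(1)$; since $1 - 1/\ol\ell \ge 1/(2\ol\ell)$ for $\ol\ell \ge 2$, the minimum is $m-k+1$, so the relative distance is at least $\frac{1}{2\ol\ell} - o(1)$. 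Hence $k, d = \Theta(n)$, i.e. $\qcode$ is asymptotically good (over a qudit alphabet of growing size $q = \Theta(n)$). I do not expect a genuine obstacle here: the theorem is essentially a synthesis of Theorems~\ref{thm:GRS_gen_mat_address} and~\ref{thm:polynomial_gate_addressing}, and the only points needing care are matching the ``strong'' versus ``addressable'' flavors of orthogonality to the hypotheses of the relevant gate-implementation theorems and the floor-function case analysis in this last paragraph.
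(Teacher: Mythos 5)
Your proposal is correct and follows exactly the route the paper intends: instantiate Theorem~\ref{thm:GRS_gen_mat_address} to get the addressable strongly $\ol\ell$-orthogonal matrix with $\tau=\mathbf{1}^k$, pass it through the matrix-to-code construction of Section~\ref{sec:qcode_from_puncture} (with Assumption~\ref{assump:independence} via Lemmas~\ref{lem:address_to_ortho} and~\ref{lem:ortho_to_assump}), read off the distance from equation~\eqref{eq:dist_qcode} together with the bounds on $d(\cG)$ and $d(\cG_0^\perp)$, and obtain the single-index addressable $U_P^\beta$, $\CLZ$ and $U_\ell^\beta$ gates from Theorem~\ref{thm:polynomial_gate_addressing} (with Theorems~\ref{thm:single_index_CCZ} and~\ref{thm:degree_ell_gate} as special cases), finishing with the same $m=\lfloor n/\ol\ell\rfloor$, $k\approx m/2$ parameter choice. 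The only points you add beyond the paper's implicit argument are routine bookkeeping (floors, the growing field size $q\geq n+k$, and checking which term attains the minimum in the distance bound), all of which are handled correctly.
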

\begin{remark}
    The surprising result here is that with Reed-Solomon codes, one may do \textit{as well} when it comes to addressability as for mere transversality, at least when it comes to the class of gates we study, namely diagonal Clifford hierarchy gates with entries $\pm 1$ on the diagonal. Indeed, one can check that these parameters match the parameters obtained for the transversality of the $\CCZ$ gate in~\cite{krishna2019towards,golowich2024asymptoticallygoodquantumcodes}. This is because we have used the interal structure of Reed-Solomon codes, in particular in Theorem~\ref{thm:instantiation:GRS}, and have thus improved the final parameters by a factor of $\frac{\ol\ell+1}{\ol\ell}$, as claimed, relative to what would have been achieved via either the use of Lemma 4.2, and by a potentiall much larger factor by using the folklore observation discussed in Section~\ref{sec:prelim_non_clifford_addressability}. As a key example, we note that if one is satisfied with a growing field size, then a single-index addressable $\CCZ$ gate may be obtained with rate and relative distance at least $1/6$.
\end{remark}
\begin{remark}
    Note that these are the same parameters as were obtained for the full addressability result for qudits, Theorem~\ref{thm:RS_addressable_CCZ}, and indeed the constructions are very similar, the present results offers greater generality in the construction at the expense of worse addressability properties (single-index addressability rather than addressability).
\end{remark}

\subsection{Algebraic Geometry Codes}~\label{sec:AG}

In this section, we describe how to use algebraic geometry codes to construct addressable strong $\ell$-orthogonal matrices over $\mathbb{F}_q$ for which the corresponding quantum code $\qcode$ is asymptotically good. Moreover, unlike for the Reed-Solomon codes, the field over which they are defined, $\mathbb{F}_q$, will be fixed, and will not grow with the field size, which means that the qubit codes that result from the conversions of Section~\ref{sec:concatenation} are asymptotically good.

As mentioned, however, unlike for the Reed-Solomon codes, we will not be able to obtain addressability using the internal structure of the code. Here, we will simply obtain $\ell$ orthogonality (for any $\ell$), and then appeal to Lemma~\ref{lem:ortho_to_address} to get matrices with addressable orthogonality. We will not make an attempt to optimise constants in this construction. We hope that future work, perhaps leveraging the internal structure of algebraic geometry codes, in a similar way to the previous section, will establish what parameters may be achieved for a quantum code supporting single-index addressable gates, such as the $\CCZ$ gates, on qubits, or on a field of fixed size.

We note that the body of the construction follows as a simple corollary of previous work. A general construction may be formulated for any algebraic function field with enough rational places as a simple extension of Section 5 of~\cite{wills2024constant}. On the other hand, here, to expedite the presentation, and because we make no attempt to optimise parameters, we simply appeal to the properties of particular function fields, in the same way as~\cite{nguyen2024goodbinaryquantumcodes}.
\begin{theorem}\label{thm:AG_instantiation_matrix}
    For any $\ol\ell \geq 2$, there exists some fixed $q=2^t$ and an explicit family of $[N,m,d]_q$ classical codes with the strong $\ol\ell$-multiplication property respect to the all-ones vector $\mathbf{1}^N$. Moreover, letting $d^\perp$ be the distance of the dual code, we have $m,d,d^\perp = \Theta(N)$, as well as $m-k, d-k, d^\perp-k = \Theta(N)$.
\end{theorem}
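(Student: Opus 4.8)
The plan is to reduce Theorem~\ref{thm:AG_instantiation_matrix} to known facts about algebraic geometry codes over a fixed finite field, exactly as was done in~\cite{nguyen2024goodbinaryquantumcodes} and via the general construction in Section~5 of~\cite{wills2024constant}. First I would fix an algebraic function field $F/\mathbb{F}_q$ with $q=2^t$ (for $t$ a suitably large constant depending on $\ol\ell$) that attains the Drinfeld--Vl\u{a}du\c{t} bound, i.e., has $B = \Theta(g)$ rational places where $g$ is the genus; the standard choice is a field in the Garcia--Stichtenoth tower. Choose $N$ rational places $P_1,\dots,P_N$ and a divisor $D = rQ$ supported on a rational place $Q$ disjoint from the $P_i$'s, and set $C = C_{\mathcal L}(P_1+\dots+P_N,\,D)$, the evaluation code of the Riemann--Roch space $\mathcal{L}(D)$ at the chosen places. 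The key algebraic input is that $\mathcal{L}(D)\cdot\mathcal{L}(D)\cdots\mathcal{L}(D)$ ($\ol\ell$ times) is contained in $\mathcal{L}(\ol\ell D)$, so that a product of $\ell \le \ol\ell$ codewords lies in $C_{\mathcal L}(P_1+\dots+P_N,\,\ol\ell D)$; provided $\ol\ell\deg(D) < N$, this larger code has a nonzero dual, and since it contains the all-ones vector (take $r\ge 0$ so $1\in\mathcal{L}(D)$), one can choose the dual annihilator $\sigma = \mathbf{1}^N$ by the analogue of Fact~\ref{fact:GRS-coeff} for AG codes — concretely, one uses a differential with simple poles of residue $1$ at each $P_i$, which exists when the places are chosen appropriately. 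This gives the strong $\ol\ell$-multiplication property with respect to $\mathbf{1}^N$.

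Next I would read off the parameters. Set $\deg(D) = r = \Theta(N)$ with the constant chosen small enough that $\ol\ell r < N - O(g)$; since $g = \Theta(N)$ by the Drinfeld--Vl\u{a}du\c{t} bound and we may take $q$ large enough that $g/N$ is an arbitrarily small constant, this is feasible while keeping $r = \Theta(N)$. Then by Riemann--Roch, $m = \dim C = \ell(D) \ge r - g + 1 = \Theta(N)$ (and equality holds once $r > 2g-2$), and the distance satisfies $d \ge N - \deg(D) = \Theta(N)$. For the dual, $C^\perp$ is itself (up to a GRS-type twist by residues, which does not affect distances) an AG code $C_{\mathcal L}(P_1+\dots+P_N,\, W + (P_1+\dots+P_N) - D)$ for a canonical divisor $W$; its designed distance is $\deg(D) - 2g + 2 = \Theta(N)$ and its dimension is $N - m = \Theta(N)$, so $d^\perp = \Theta(N)$. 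Finally, for the statement involving $k$: one takes $k = \Theta(N)$ with $k \le m/2$ (say), and then $m - k, d - k, d^\perp - k$ are all still $\Theta(N)$ since each of $m,d,d^\perp$ is a constant fraction of $N$ strictly exceeding the chosen constant for $k$; the puncturing/shortening bookkeeping here is exactly Lemma~\ref{lem:short_punc} and Fact~\ref{fact:punctured_distance}.

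The main obstacle, and the only place real care is needed, is arranging that the multiplication annihilator can be taken to be the all-ones vector rather than a general weight vector $\sigma$ with possibly-zero entries — this is what makes Remark~\ref{rmk:all_ones_mult} applicable and what later lets Lemma~\ref{lem:mult_to_ortho} produce an orthogonal matrix with $\tau = \mathbf{1}^k$. For Reed--Solomon codes this is automatic because $1$ is a codeword and the dual is again (generalized) Reed--Solomon; for AG codes one must verify that the chosen places and differential give residue exactly $1$ (after rescaling the evaluation map place-by-place, which only rescales codewords and preserves all the properties we need). I expect this to follow cleanly from the residue theorem together with the freedom to rescale, exactly as handled in~\cite{nguyen2024goodbinaryquantumcodes, wills2024constant}, so the proof is ultimately a citation plus a parameter count; accordingly I would keep the write-up short and point the reader to Section~5 of~\cite{wills2024constant} for the general version, noting only the modifications needed to track the extra index $k$.
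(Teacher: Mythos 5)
Your proposal is correct and follows essentially the same route as the paper: the paper's proof simply cites the AG-code construction of Theorem~3.6 of~\cite{nguyen2024goodbinaryquantumcodes} (Garcia--Stichtenoth tower, with $3 \mapsto \ol\ell$) and then upgrades regular to strong $\ol\ell$-multiplication via the all-ones codeword as in Remark~\ref{rmk:all_ones_mult}, which is exactly the skeleton you spell out in more detail. Your additional Riemann--Roch parameter count and the residue/rescaling discussion for the all-ones annihilator are a faithful unpacking of that citation rather than a different argument.
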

\begin{proof}
    The regular $\ol\ell$-multiplication property and the statement on the parameters follows as a simple generalisation of the argument of Theorem 3.6 of~\cite{nguyen2024goodbinaryquantumcodes}, which uses the results of~\cite{garcia1995tower,stichtenoth2009algebraic}, in particular by making the necessary replacements $3 \mapsto \ol\ell$. Moreover, since the code contains the all-ones vector, the regular $\ol\ell$-multiplication property implies the strong $\ol\ell$-multiplication property, as observed in Remark~\ref{rmk:all_ones_mult}.
\end{proof}
\begin{corollary}\label{cor:AG_instantiation_code}
    For any $\ol\ell \geq 2$, there exists some fixed $q=2^t$ and a family of matrices $G$ over $\mathbb{F}_q$ with the strong $\ol\ell$-orthogonality property with respect to all-ones vectors for which the corresponding quantum codes are explicit and asymptotically good. The same matrices satisfy addressable strong $(\ol\ell-1)$-orthogonality with respect to the all-ones vector.
\end{corollary}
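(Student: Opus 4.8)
The plan is to chain together the pieces already assembled. Theorem~\ref{thm:AG_instantiation_matrix} supplies the classical ingredient, Lemma~\ref{lem:mult_to_ortho} converts its strong $\ol\ell$-multiplication property into the strong $\ol\ell$-orthogonality property of a matrix $G$, the matrix-to-code recipe of Section~\ref{sec:qcode_from_puncture} builds the quantum code $\qcode$ whose parameters follow from the distance bound~\eqref{eq:dist_qcode}, and the strong half of Lemma~\ref{lem:ortho_to_address} downgrades strong $\ol\ell$-orthogonality of $G$ to addressable strong $(\ol\ell-1)$-orthogonality.

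First I would fix $\ol\ell \ge 2$ and apply Theorem~\ref{thm:AG_instantiation_matrix} to obtain a \emph{fixed} $q = 2^t$ and an explicit family of $[N, m, d]_q$ classical codes $C$ with the strong $\ol\ell$-multiplication property with respect to $\mathbf{1}^N$, together with a choice of $k = \Theta(N)$ (for instance $k = m/2$, so in particular $k < m$) for which $m,\ d,\ d^\perp,\ m-k,\ d-k,\ d^\perp-k$ are all $\Theta(N)$. Since every coordinate of $\sigma = \mathbf{1}^N$ is nonzero, the hypothesis ``$\sigma_1, \dots, \sigma_k \neq 0$'' of Lemma~\ref{lem:mult_to_ortho} holds, so that lemma yields an explicit $m \times n$ matrix $G$ — with $n = \Theta(N)$, obtained by row-reducing a generator matrix of $C$ to systematic form and deleting the systematic block of columns — having the strong $\ol\ell$-orthogonality property with respect to a pair of all-ones vectors $(\mathbf{1}^n, \mathbf{1}^k)$.

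Next I would form $\qcode = \CSS(X, \mathcal{G}_0; Z, \mathcal{G}^\perp)$ exactly as in Section~\ref{sec:qcode_from_puncture}. Strong $\ol\ell$-orthogonality entails $2$-orthogonality with respect to $(\mathbf{1}^n,\mathbf{1}^k)$, so Lemma~\ref{lem:ortho_to_assump} gives Assumption~\ref{assump:independence}; hence the construction is valid and $\qcode$ encodes $k = \Theta(n)$ logical qudits. Using $\mathcal{G}_0 = \shrT{C}$, $\mathcal{G} = \puncT{C}$, Lemma~\ref{lem:short_punc}, and~\eqref{eq:dist_qcode}, we get $d(\qcode) \ge \min\{d(C^\perp), d(C)\} - k = \min\{d^\perp, d\} - k = \Theta(n)$, so $\qcode$ is asymptotically good; it is explicit because the classical code, its systematic form, the puncturing, and the CSS definition are all efficiently computable. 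Finally, applying Lemma~\ref{lem:ortho_to_address} to the strong $\ol\ell$-orthogonality of $G$ produces addressable strong $(\ol\ell-1)$-orthogonality of $G$ with respect to $\mathbf{1}^k$, which is the remaining claim.

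There is essentially no obstacle inside the corollary itself — all of its content is inherited from Theorem~\ref{thm:AG_instantiation_matrix}, whose proof (a routine adaptation of Theorem~3.6 of~\cite{nguyen2024goodbinaryquantumcodes}, replacing $3$ by $\ol\ell$) rests on the Garcia--Stichtenoth tower of function fields. The one point deserving care is the \emph{simultaneous} choice of $k$: we need a single constant fraction of $N$ after whose removal $C$, its dual $C^\perp$, and the induced punctured and shortened codes all retain linear distance; this is exactly the ``$m-k,\ d-k,\ d^\perp-k = \Theta(N)$'' clause recorded in Theorem~\ref{thm:AG_instantiation_matrix}, which holds because these multiplication-friendly AG codes and their duals have rate small enough relative to the genus of the tower.
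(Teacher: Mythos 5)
Your proposal is correct and follows the same route as the paper, whose proof of this corollary is exactly the one-line chain you spell out: Theorem~\ref{thm:AG_instantiation_matrix} plus Lemma~\ref{lem:mult_to_ortho} for strong $\ol\ell$-orthogonality with respect to all-ones vectors, the matrix-to-code construction of Section~\ref{sec:qcode_from_puncture} with the distance bound~\eqref{eq:dist_qcode} for asymptotic goodness, and the strong case of Lemma~\ref{lem:ortho_to_address} for addressable strong $(\ol\ell-1)$-orthogonality. Your added checks (nonzero coordinates of $\sigma$, Assumption~\ref{assump:independence} via Lemma~\ref{lem:ortho_to_assump}, and the simultaneous choice of $k$ guaranteed by the ``$m-k,\ d-k,\ d^\perp-k = \Theta(N)$'' clause) are all consistent with, and merely more explicit than, the paper's argument.
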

\begin{proof}
    This follows from the construction of Section~\ref{sec:qcode_from_puncture} as well as Lemmas~\ref{lem:mult_to_ortho} and~\ref{lem:ortho_to_address}.
\end{proof}
Corollary~\ref{cor:AG_instantiation_code} may be used to instantiate any of the results of Section~\ref{sec:addressable_orthogonality_big_section} to produce asymptotically good qubit codes with various addressable gates. As an example that is of particular interest to us, we have the following.
\begin{theorem}\label{thm:designed_intrablock_AG_qudit}
    There exists some fixed $q = 2^t$ and a family of asymptotically good quantum codes over qudits of dimension $q$ such that a logical three-qudit non-Clifford gate may be executed on fixed intra-block (pairwise disjoint) triples of logical qudits via a depth-one physical circuit of $U_{q,7}^\beta$ gates.
\end{theorem}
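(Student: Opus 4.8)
The plan is to obtain this statement as a corollary of two earlier ingredients: the matrix-$M$ construction of Section~\ref{sec:designed_intrablock}, which converts any matrix $G$ with the addressable $7$-orthogonality property (with respect to $\tau = \mathbf{1}^k$) into a CSS code $\qcode_M$ on which pre-designed, pairwise-disjoint intra-block triples of logical qudits are addressed by the non-Clifford three-qudit gate $W^\beta$ through transversal physical $U_7$ gates; and the algebraic-geometry instantiation of Corollary~\ref{cor:AG_instantiation_code}, which supplies asymptotically good families of such matrices over a fixed field.

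First I would invoke Corollary~\ref{cor:AG_instantiation_code} with $\ol\ell = 8$. This yields a fixed prime power $q = 2^t$ together with an explicit family of matrices $G$ over $\FF_q$ having the strong $8$-orthogonality property with respect to the all-ones vectors, whose associated CSS codes $\CSS(X,\cG_0;Z,\cG^\perp)$ are asymptotically good, and such that the same matrices have addressable strong $7$-orthogonality with respect to $\tau = \mathbf{1}^k$. In particular they satisfy the (non-strong) addressable $7$-orthogonality with respect to $\mathbf{1}^k$ that is exactly the hypothesis of Section~\ref{sec:designed_intrablock}. The only constraint on $t$ is that it be large enough for the gate $W^\beta$ below to lie outside the Clifford group; since the AG construction still goes through over any sufficiently large field of characteristic two, I would, if necessary, pass to a larger such field so that this holds without loss of generality.

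Next I would run the construction of Section~\ref{sec:designed_intrablock} on each such $G$. Taking $k$ divisible by $7$ from the start (the AG families allow this) and writing $k = 7k'$, form $H_1 = \cM G_1$ with $\cM$ the block-diagonal matrix of copies of the $3\times 7$ matrix $M$ of Equation~\eqref{eq:matrix_M}, set $H = \begin{pmatrix} H_1 \\ G_0 \end{pmatrix}$, and let $\qcode_M = \CSS(X, \cG_0; Z, \cH^\perp)$. The Proposition of that section gives that $\qcode_M$ is an $[[n, 3k', d]]_q$ code with $d \ge \min(d(\cG_0^\perp), d(\cH)) \ge \min(d(\cG_0^\perp), d(\cG))$, and its main Theorem gives, for every $0 \le A \le k'-1$ and $\beta \in \FF_q$, that $\ol{W^\beta[3A+1, 3A+2, 3A+3]} = U_7^{\beta\Lambda^A}$, where $\Lambda^A = \sum_{j=1}^{7}\Gamma^{7A+j}$ is built from the vectors $\Gamma^A$ supplied by the addressable $7$-orthogonality of $G$. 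Fixing any $\beta \neq 0$, the gate $W^\beta$ is in exactly the third level of the Clifford hierarchy for our (large enough) $t$, hence non-Clifford; and $U_7^{\beta\Lambda^A} = \bigotimes_{i=1}^{n} U_{q,7}^{\beta\Lambda^A_i}[i]$ is a depth-one physical circuit of single-qudit $U_{q,7}^{\gamma}$ gates. The triples $\{3A+1, 3A+2, 3A+3\}$, $0 \le A \le k'-1$, partition the $3k'$ logical qudits and are therefore pairwise disjoint, which is the claimed pre-designed structure.

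Finally I would verify that $\qcode_M$ is asymptotically good — the only step requiring any care. By Corollary~\ref{cor:AG_instantiation_code} the original code $\CSS(X,\cG_0;Z,\cG^\perp)$ has dimension and distance $\Theta(n)$, so in particular $k = \Theta(n)$ and $\min(d(\cG_0^\perp), d(\cG)) = \Theta(n)$; requiring $k$ to be a multiple of $7$ does not disturb this. Consequently $\qcode_M$ has logical dimension $3k' = 3k/7 = \Theta(n)$, and its distance satisfies $d(\qcode_M) \ge \min(d(\cG_0^\perp), d(\cG)) = \Theta(n)$. Hence $\qcode_M$ is asymptotically good, which completes the proof. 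There is no deep obstacle anywhere; everything follows by assembling prior results, and the one point to confirm is precisely that replacing $G_1$ by the thinner $H_1$ shrinks the logical dimension only by the constant factor $3/7$ while leaving the $X$-side code $\cG_0$ untouched and only refining $\cG$ to $\cH \subseteq \cG$, so that neither rate nor distance degrades beyond a constant factor.
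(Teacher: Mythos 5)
Your proposal is correct and follows exactly the route the paper takes: its proof is the one-line observation that the statement is an immediate application of the algebraic-geometry instantiation (Corollary~\ref{cor:AG_instantiation_code}, invoked with $\ol\ell = 8$ to get addressable 7-orthogonality) to the pre-designed intra-block construction of Section~\ref{sec:designed_intrablock}. You have simply spelled out the same chain — matrix family, construction of $\qcode_M$, depth-one $U_{q,7}$ implementation of $W^\beta$, and the constant-factor bookkeeping for rate and distance — in more detail than the paper does.
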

\begin{proof}
    This is an immediate application of the construction of this section to the construction of Section~\ref{sec:designed_intrablock}.
\end{proof}

\section{Qubit Codes from Qudit Codes}
\label{sec:concatenation}

In this section, we will prove the main results of this work by converting the qudit codes constructed in Sections~\ref{sec:intrablock_GRS} to~\ref{sec:instantiations_orthogonality} to qubit codes, while preserving the addressability of their gates, and incurring only a small loss in their parameters. In large parts we follow the qudit-qubit conversion methods in~\cite{nguyen2024goodbinaryquantumcodes}, with some alterations for our particular situation.\footnote{We note that the indepedent work~\cite{golowich2024asymptoticallygoodquantumcodes} presents qudit-qubit conversion methods that may be viewed as a generalisation of those in~\cite{nguyen2024goodbinaryquantumcodes}. We leave to further work whether those constructions can apply in this situation, which could lead to a improvement in the parameters of the qubit code obtained.} Each logical/physical qudit will ultimately be embedded into a larger number of logical/physical qubits in such a way as to preserve the structure of the code that allows for gates to be implemented addressably and transversally in each case.

In general, there is a large amount of flexibility in these constructions, allowing one to build quantum codes with many different sets of addressable gates at different lengths with different implementation depths at the physical level for the addressable transversal gates. Any of the codes described in Sections~\ref{sec:intrablock_GRS} to~\ref{sec:instantiations_orthogonality} can be converted into qubit codes with similar addressability properties. However, to focus our attention, we will focus on achieving two things:
\begin{enumerate}
    \item \label{point:first_point_concatenation} We will convert the qudit code of Section~\ref{sec:intrablock_GRS} into a qubit code with the same addressability properties. Namely, any three logical qubits in one, two or three code blocks may be acted on with the qubit $\CCZ$ gate via a depth-one circuit of physical $\CCZ$ gates. Because the qudit codes were constructed from Reed-Solomon codes, and are therefore only asymptotically good with a growing field size, the resulting qubit codes with have parameters $\left[\left[n,\Omega\left(\frac{n}{\text{polylog}(n)}\right),\Omega\left(\frac{n}{\text{polylog}(n)}\right)\right]\right]_2$. Note that generalisations may easily be made to the other gates discussed such as $\CLZ$.
    \item We will convert the qudit codes of Theorem~\ref{thm:designed_intrablock_AG_qudit} to qubit codes with similar addressability properties. Namely, the resulting qubit code will have its logical qubits partitioned into sets of three. Each set of three may be addressed with a logical $\CCZ$ gate by acting with a depth-one circuit of physical $\mathsf{Z}$, $\mathsf{CZ}$ and $\mathsf{CCZ}$ gates. Again, generalisations may easily be made to other gates. While this addressability is much weaker than in Point~\ref{point:first_point_concatenation}, these qubit codes will be asymptotically good.
\end{enumerate}
These two conversions will be made in the coming sections. The latter conversion shares many ideas with the former and so will be stated more briefly. Therefore, we recommend that the coming sections are read in order.

\subsection{Near-Asymptotically Good Qubit Codes with Transversal and Addressable CCZ Gates}\label{sec:near_asymp_good_concatenation}

In this section, we will prove the main result of the paper, which is the following.

\begin{theorem}[Restatement of Theorem~\ref{thm:main_result_1}]\label{thm:main_addressable_CCZ}
    There exists an explicit family of quantum CSS codes over qubits with parameters
    \begin{equation}
        \left[\left[n,\Omega\left(\frac{n}{\text{polylog}(n)}\right),\Omega\left(\frac{n}{\text{polylog}(n)}\right)\right]\right]_2
    \end{equation}
    supporting a transversal, addressable $\CCZ$ gate. Specifically, any three logical qubits across one, two, or three blocks of the code may be addressed with the logical $\CCZ$ gate via a depth-one circuit of physical $\CCZ$ gates.
\end{theorem}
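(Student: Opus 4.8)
The plan is to take the qudit code $\qcode$ of Theorem~\ref{thm:RS_addressable_CCZ}, which is a $[[n_0, k_0, d_0]]_q$ CSS code over qudits of dimension $q = n_0^2 = 2^t$ (so $t = 2\log_2 n_0$), and convert it into a qubit code via the concatenation scheme of~\cite{nguyen2024goodbinaryquantumcodes}, arguing along the way that the addressable, transversal $\CCZ$ implementations survive the conversion up to a polylogarithmic loss of parameters. First I would instantiate Theorem~\ref{thm:RS_addressable_CCZ} with $m = \lceil n_0/3\rceil$ and $k_0 = m/2$, so that the qudit code is asymptotically good over the growing field $\mathbb F_q$, with rate and relative distance at least $1/6$, and supports (i) a depth-one inter-block $\ol{\CCZ^\gamma_{123}[A,B,C]}$ for any triple of logical qudits across three blocks, and (ii) a depth-one intra-block $\ol{\CCZ^\gamma[A,B,C]}$ after the $3\times$-duplication trick of Remark~\ref{rmk:flatten-depth} (encoding every physical qudit into a length-$3$ $Z$-repetition code).

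Next I would recall the two-stage structure of the conversion. The inner step is the self-dual-basis embedding of Lemma~\ref{lem:self-dual-embedding}: a $q=2^t$-dimensional qudit is literally a block of $t$ qubits, and a qudit $\CCZ^\gamma$ gate on three qudits becomes, under the isomorphism $\theta^{\otimes 3}$, a gate on three blocks of $t$ qubits. The key calculation is that for a self-dual basis $\{\alpha_i\}$ with $\B:\mathbb F_q\to\mathbb F_2^t$, the phase $\exp(i\pi\tr(\gamma\,\eta_1\eta_2\eta_3))$ becomes a phase that is a sum over triples of qubit indices, i.e. $\theta^{\otimes 3}(\CCZ^\gamma_q)$ is a depth-one circuit of qubit $\mathsf{Z}$, $\mathsf{CZ}$ and $\mathsf{CCZ}$ gates (expanding $\gamma\,(\sum s_a\alpha_a)(\sum s'_b\alpha_b)(\sum s''_c\alpha_c)$ and taking traces gives a cubic $\mathbb F_2$-form in the $3t$ bits, whose monomials of degree $1,2,3$ give $\mathsf{Z}$, $\mathsf{CZ}$, $\mathsf{CCZ}$). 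In the purely inter-block case these act across three distinct codeblocks so distinct qubit triples are disjoint; in the intra-block case we already duplicated to kill overlaps, so again depth one. The outer step is the concatenation of~\cite{nguyen2024goodbinaryquantumcodes} with an asymptotically good binary CSS code that itself has a transversal $\CCZ$ of the right form, applied to encode each block of $t$ qubits; since $t = \Theta(\log n)$, this multiplies length by $\mathrm{poly}(t) = \mathrm{polylog}(n)$ and preserves rate and relative distance of the outer (qudit-derived) code up to constants, and crucially composes the transversal $\CCZ$ gates — a transversal $\CCZ$ on the inner code times the addressable $\CCZ$ on the outer code yields an addressable, depth-one $\CCZ$ on the concatenation. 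I would cite the relevant lemmas of~\cite{nguyen2024goodbinaryquantumcodes} for this composition rather than reprove it, noting only the modifications needed because our outer code carries an \emph{addressable} $\CCZ$ rather than a global one: addressability is a property of which logical qubits a given physical pattern touches, and concatenation acts blockwise, so it passes through verbatim.

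Finally I would assemble the parameters: starting from $[[n_0,\Theta(n_0),\Theta(n_0)]]_q$ with $q=n_0^2$, the qudit-to-qubit embedding gives $[[n_0 t,\ k_0 t,\ \ge d_0]]_2$ by Lemma~\ref{lem:self-dual-embedding}, and then concatenating with an $[[\mathrm{poly}(t),\Theta(t),\Theta(t)]]_2$ inner code gives final length $n = n_0\cdot\mathrm{poly}(t) = n_0\cdot\mathrm{polylog}(n_0)$, dimension $\Theta(n_0 t) = \Omega(n/\mathrm{polylog}(n))$, and distance $\Omega(n_0 t) = \Omega(n/\mathrm{polylog}(n))$, establishing the claimed $\left[\left[n,\Omega(n/\mathrm{polylog}(n)),\Omega(n/\mathrm{polylog}(n))\right]\right]_2$. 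Explicitness follows since Reed–Solomon codes, self-dual bases over $\mathbb F_{2^t}$, and the inner good binary codes are all explicit. The main obstacle I anticipate is the bookkeeping in the outer concatenation step: verifying that the inner code's transversal $\mathsf{Z}$/$\mathsf{CZ}$/$\mathsf{CCZ}$ circuit, when composed across the (possibly three) codeblocks and combined with the $3\times$-duplication in the intra-block case, still collapses to genuine depth one on the physical qubits, and that the distance does not degrade below $\Omega(n/\mathrm{polylog}(n))$ under concatenation — both are handled in~\cite{nguyen2024goodbinaryquantumcodes} but need to be checked to apply to an addressable (rather than global) transversal gate, and to the $\CLZ$ generalizations noted in Remark~\ref{rmk:high_degree_poly_gates}.
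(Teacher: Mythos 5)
Your overall skeleton (the Reed--Solomon qudit code of Theorem~\ref{thm:RS_addressable_CCZ}, the self-dual-basis embedding of Lemma~\ref{lem:self-dual-embedding}, an outer modification in the style of~\cite{nguyen2024goodbinaryquantumcodes}, and the final parameter bookkeeping with $t=\Theta(\log n_0)$) matches the paper, but two steps you assert are exactly the two places where the real work lies, and as stated they fail. First, the claim that $\theta^{\otimes 3}(\CCZ_q^\gamma)$ is a depth-one circuit of $\mathsf{Z}$, $\mathsf{CZ}$, $\CCZ$ gates is false. Expanding $\tr\bigl(\gamma\,\eta_1\eta_2\eta_3\bigr)$ in a self-dual basis gives $\sum_{a,b,c} s_a s'_b s''_c\,\tr(\gamma\,\alpha_a\alpha_b\alpha_c)$, i.e.\ a dense cubic form with up to $t^3$ genuinely trivariate monomials, each involving one qubit from each of the three $t$-qubit blocks; a given qubit appears in up to $t^2$ of these $\CCZ$ gates, so the depth is $\Omega(t^2)=\Omega(\mathrm{polylog}\,n)$, not one. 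This is precisely why the paper does not stop at $V^\beta=\theta^{\otimes 3}(\CCZ_q^\beta)$ but re-encodes every physical $q$-block into $r=t^3$ qubits via a degree-3 multiplication-friendly embedding (Lemma~\ref{lem:3MFE}, Lemma~\ref{lem:concatenation_physical}, and Corollary~\ref{cor:duplicated_qoutcode} for the intra-block case): the MFE's permutations $\pi_1,\pi_2,\pi_3$ turn the field multiplication into a coordinate-wise star product, which is what collapses $V^\beta$ to a depth-one $\CCZ$ circuit. Your proposed outer step --- concatenating with a generic asymptotically good binary CSS code having a transversal $\CCZ$ --- does not do this job: such a code implements $\CCZ$ across matched positions of three inner blocks, whereas the gate you must realize on the inner encoding is the dense cubic-form gate $V^\beta$, and no ``blockwise composition'' argument handles that mismatch.

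Second, your construction never produces a logical $\CCZ$ on an arbitrary triple of logical \emph{qubits}, which is what the theorem asserts. After the embedding, the addressable object is a triple of logical $q$-blocks, and the induced logical gate is $V^\beta$ on $3t$ logical qubits, not $\CCZ\otimes I$; no choice of $\beta$ makes $\tr(\beta\,\alpha_a\alpha_b\alpha_c)$ vanish on all but one triple $(a,b,c)$, since that is $t^3$ linear conditions on the $t$-dimensional space of $\beta$. The paper resolves this with the gauge-fixing step of Lemma~\ref{lem:concatenation_logical}: hard-code $t-1$ of the $t$ logical qubits in every logical $q$-block to $\ket{0}$ and take $\beta=\alpha_1^{-1}$, so that $V^{\alpha_1^{-1}}$ acts as a bona fide qubit $\CCZ$ on the surviving logical qubits. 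Your claimed dimension $\Theta(n_0 t)$ shows you skipped this; the paper's code has dimension $k_0$ exactly because of it (and the paper explicitly notes it is unclear how to retain full addressability without this loss, e.g.\ via RMFEs). So both the physical depth-one claim and the logical-$\CCZ$ claim need the MFE re-encoding and the hard-coding, respectively; without them the argument does not establish the stated theorem.
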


\subsubsection{Expanding Qudits into Qubits}

Let $\qcode_0$ denote the original code constructed in Section~\ref{sec:intrablock_GRS} for qudits of dimension $q=2^t$ with the addressable $\CCZ_q^\beta$ gate and denote its parameters as $[[n_0,k_0,d_0]]_q$, where $q=2^t = n_0^2$, and $k_0,d_0 = \Theta(n_0)$. The procedure begins by using the construction of Lemma~\ref{lem:self-dual-embedding} to convert $\mathcal{Q}_0$ into a qubit code $\mathcal{Q}_1$. We will use notation as in that lemma, as well as notation used throughout Section~\ref{sec:qudit_to_qubit_prelim}. In particular, we fix a self-dual basis $\{\alpha_i\}_{i=0}^{t-1}$ for $\mathbb{F}_{2^t}$ over $\mathbb{F}_2$, and let $\B : \mathbb{F}_{2^t}\to \mathbb{F}_2^t$ be the associated decomposition map. Let us call a block of $t$ qubits on $\qcode_1$ which corresponds to a $q$-dimensional qudit in $\qcode_0$ a \textit{$q$-block}.
Note that we have logical $q$-blocks and physical $q$-blocks.
As qudits are converted into qubits under this map,
the $\CCZ_q^\beta$ gate on qudits becomes the unitary $V^\beta \coloneq \theta^{\otimes 3}\left(\CCZ^\beta_q\right)$, acting on $3t$ qubits as
\begin{align}\label{eq:def_of_V}
     V^\beta \ket{\B(\eta_1)}\ket{\B(\eta_2)}\ket{\B(\eta_3)}
    &= (-1)^{\tr(\beta \eta_1\eta_2\eta_3)}\ket{\B(\eta_1)}\ket{\B(\eta_2)}\ket{\B(\eta_3)},\; \forall \eta_1,\eta_2,\eta_3,\beta\in \FF_q.
\end{align}
Therefore, if $\qcode_0$ is a qudit code on which logical $\CCZ$ is implemented by physical $\CCZ$ gates, then $\qcode_1$ is a qubit code on which logical $V$ gates acting on logical $q$-blocks are implemented by physical $V$ gates acting on physical $q$-blocks.

The key idea of~\cite{nguyen2024goodbinaryquantumcodes} is to modify the physical and logical spaces of the code $\qcode_1$ so that the qubit $V^\beta$ gates become products of qubit $\CCZ$ gates. At the logical level, this is achieved through a hard-coding of $t-1$ of the $t$ logical qubits in a logical $q$-block to $\ket{0}$, and with an appropriate choice of $\beta$, the action of $V^\beta$ will be shown to correspond to that of $\CCZ$, as is desired. At the physical level, the qubits are further encoded into a code that allows us to create the action of the $V^\beta$ gate using a depth-one circuit of $\CCZ$ gates. The code into which they are further encoded is created via multiplication-friendly embeddings, which we now introduce.

\subsubsection{Multiplication-Friendly Embeddings}

\begin{definition}[Definition~4.4 of~\cite{nguyen2024goodbinaryquantumcodes}]\label{def:MFE}
    For a field $\FF_q$ with $q = 2^t$, a degree-$\ell$ multiplication friendly embedding (MFE) of $\FF_q$ 
    into $\FF_2^r$ consists of a pair of functions $(\ME, \MEinv)$ and a collection of $r$-element permutations $\pi_1, \cdots, \pi_\ell$ such that
    \begin{enumerate}[itemsep = 0pt]
        \item $\ME: \FF_q\rightarrow \FF_2^r$ is an injective $\FF_2$-linear map, and $\MEinv: \FF_2^r\rightarrow \FF_q$ is a surjective $\FF_2$-linear map;
        \item For all $\al_1, \cdots, \al_\ell\in \FF_q$, we have     \begin{align}\label{eq:multiplication_friendly_encoding}
            \prod_{i=1}^\ell \al_i = \MEinv\left[ \pi_1(\ME(\al_1)) \star \pi_2(\ME(\al_2)) \star \cdots
            \star \pi_\ell(\ME(\al_\ell)) \right].
        \end{align}
    \end{enumerate}
\end{definition}

\begin{lemma}[Lemma~4.5 of~\cite{nguyen2024goodbinaryquantumcodes}]\label{lem:3MFE}
    For all $t$, there exists a degree-3 MFE of $\FF_{2^t}$ into $\FF_2^{r}$ where $r = t^3$.
\end{lemma}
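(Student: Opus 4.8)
The statement to prove is Lemma~4.5 of~\cite{nguyen2024goodbinaryquantumcodes} (restated here as Lemma~\ref{lem:3MFE}): for all $t$, there exists a degree-3 multiplication-friendly embedding of $\mathbb{F}_{2^t}$ into $\mathbb{F}_2^r$ with $r = t^3$.

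\textbf{Plan.} The idea is to realize multiplication in $\mathbb{F}_{2^t}$ through polynomial multiplication over $\mathbb{F}_2$, which is coordinatewise-friendly, and then contract back. Concretely, fix an irreducible polynomial $p(X) \in \mathbb{F}_2[X]$ of degree $t$, so that $\mathbb{F}_{2^t} \cong \mathbb{F}_2[X]/(p(X))$, and identify a field element with its unique representative polynomial of degree $< t$. First I would define $\mathsf{ME}$ on a single element $\alpha$, with representative polynomial $a(X) = \sum_{j=0}^{t-1} a_j X^j$, by recording enough evaluation-type or coefficient-type data to reconstruct the degree-$\le 3(t-1)$ product polynomial of three such inputs. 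The cleanest route is to pad: send $a(X)$ to the coefficient vector of length $r$ of $a(X)$ viewed as a polynomial of degree $< r$ (so the top $r - t$ coordinates are zero). Then $\mathsf{ME}$ is clearly $\mathbb{F}_2$-linear and injective. Since $r = t^3 \ge 3t - 2$ for all $t \ge 1$, the product $a(X)b(X)c(X)$ has degree $\le 3(t-1) < r$, so its full coefficient vector fits inside $\mathbb{F}_2^r$.

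\textbf{Key steps.} (i) Observe that the coefficients of the product $a(X)b(X)c(X)$ are \emph{not} the coordinatewise product of the three coefficient vectors — polynomial multiplication is a convolution, not a Hadamard product — so a naive coefficient embedding does not directly satisfy~\eqref{eq:multiplication_friendly_encoding}. The standard fix is the three permutations $\pi_1, \pi_2, \pi_3$: I would instead use an \emph{evaluation} embedding. Pick $r = t^3$ distinct points (here we need $r$ distinct elements — but $\mathbb{F}_2$ is too small, so we evaluate over an auxiliary extension, or more simply we use the ``Toom--Cook / multipoint'' trick of evaluating at a sufficiently large set of points in $\mathbb{F}_{2^{\lceil \log r\rceil}}$ and then $\mathbb{F}_2$-linearly packing the results). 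Actually, the bound $r = t^3$ is generous precisely so that one can afford a crude construction: evaluate $a(X)$ at all $r$ elements of a fixed set $S$ with $|S| = r$ inside some $\mathbb{F}_{2^{t'}}$ with $2^{t'} \ge r$, i.e. $t' = \lceil \log_2(t^3)\rceil = O(\log t)$; this gives a vector in $(\mathbb{F}_{2^{t'}})^r$, which $\mathbb{F}_2$-linearly embeds into $\mathbb{F}_2^{r t'}$. To land in exactly $\mathbb{F}_2^r$ one must be more careful; I expect the actual construction in~\cite{nguyen2024goodbinaryquantumcodes} packs things so the ambient dimension is $t^3$ on the nose. (ii) With an evaluation embedding, pointwise multiplication of the evaluation vectors of $a, b, c$ gives the evaluation vector of $abc$, and since $\deg(abc) \le 3(t-1) < |S|$, interpolation recovers $abc$ uniquely; reducing mod $p(X)$ gives $\alpha\beta\gamma$. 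This defines $\mathsf{ME}^{-1}$ as an $\mathbb{F}_2$-linear surjection (interpolate, reduce mod $p$, read off). The permutations $\pi_i$ account for any re-indexing needed to align the packed coordinates; in the simplest version they are all the identity. (iii) Verify $\mathbb{F}_2$-linearity and the injectivity/surjectivity claims, which are immediate from the linearity of evaluation, interpolation, and reduction mod $p(X)$ over $\mathbb{F}_2$.

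\textbf{Main obstacle.} The real content is getting the ambient dimension down to exactly $r = t^3$ (rather than something like $t^3 \log t$ from a lazy evaluation-over-an-extension argument) while keeping all maps $\mathbb{F}_2$-linear and keeping the degree equal to $3$ with only three fixed permutations. I would expect to handle this by the ``interleaving'' construction: chop $\mathbb{F}_{2^t}$ into $\sqrt t$ blocks of $\sqrt t$ bits, use a base MFE on each small block where distinct evaluation points are available in a fixed small extension, and then compose/tensor these, with the cube of the blowup factor landing at $t^3$; the permutations $\pi_1,\pi_2,\pi_3$ then permute the tensor factors to realize the product. Since the paper only needs the existence statement with the stated parameters and explicitly cites~\cite{nguyen2024goodbinaryquantumcodes} for the proof, the plan is to reproduce this block-composition argument, checking at each stage that $\mathbb{F}_2$-linearity and the degree-$3$ multiplication identity~\eqref{eq:multiplication_friendly_encoding} are preserved under composition, and that the dimensions multiply to give at most $t^3$.
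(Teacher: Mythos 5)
There is a genuine gap. Your main route --- evaluating $a(X)$ at points of an auxiliary extension $\FF_{2^{t'}}$ and ``packing'' each value into bits --- does not meet Definition~\ref{def:MFE}: the $\star$ in Equation~\eqref{eq:multiplication_friendly_encoding} is the \emph{coordinatewise product over $\FF_2$}, and the bitwise product of two packed $\FF_{2^{t'}}$-elements is not the packing of their product in $\FF_{2^{t'}}$. So pointwise multiplication of your packed evaluation vectors does not give the (packed) evaluation vector of $abc$, and the interpolation step in your point (ii) has nothing valid to act on. Repairing this requires recursively applying an MFE to $\FF_{2^{t'}}$ itself (or some tensor/block composition), which you only gesture at; and the resulting ambient dimension ($\approx t^3\log t$ in your own accounting, or whatever the $\sqrt{t}$-block scheme gives) is never pinned down to $t^3$, nor is it checked that the composed maps remain $\FF_2$-linear with only three fixed coordinate permutations. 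As written, the proposal does not prove the lemma.

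The point you missed is that $r=t^3$ is generous enough that no evaluation/interpolation machinery is needed, and that your objection to the coefficient embedding (``convolution, not Hadamard'') is exactly what the surjection $\MEinv$ is allowed to absorb, since $\MEinv$ is not required to invert $\ME$ --- it only needs to be $\FF_2$-linear and surjective. Concretely: write $\FF_{2^t}=\FF_2[X]/(p(X))$ and $\alpha\leftrightarrow a(X)=\sum_{j=0}^{t-1}a_jX^j$; index the $r=t^3$ coordinates by triples $(j,k,l)\in\{0,\dots,t-1\}^3$; set $\ME(\alpha)_{(j,k,l)}=a_j$ (injective, $\FF_2$-linear); take $\pi_1=\mathrm{id}$ and let $\pi_2,\pi_3$ be the coordinate permutations induced by cyclically shifting the index triple, so that $\bigl(\pi_2\ME(\beta)\bigr)_{(j,k,l)}=b_k$ and $\bigl(\pi_3\ME(\gamma)\bigr)_{(j,k,l)}=c_l$; and define $\MEinv(v)=\sum_{(j,k,l)}v_{(j,k,l)}\,\bigl(X^{j+k+l}\bmod p(X)\bigr)$, which is $\FF_2$-linear and surjective. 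The star product then has $(j,k,l)$-entry $a_jb_kc_l$, and $\MEinv$ of it equals $a(X)b(X)c(X)\bmod p(X)=\alpha\beta\gamma$, verifying Equation~\eqref{eq:multiplication_friendly_encoding} with degree $3$ and $r=t^3$ exactly. Your evaluation-based ideas are the right tool for the much harder task of getting $r=O(t)$, but they are neither needed for, nor (in the form you sketch) sufficient for, the stated bound.
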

The use of the following corollary will be evident by the end of the next section.
\begin{corollary}\label{cor:duplicated_MFE}
    Given a degree-$\ell$ MFE with maps $\ME, \MEinv$ and permutations $\pi_1,\cdots, \pi_\ell$, we can construct another degree-$\ell$ MFE by setting $\ME'(\gamma) = (\ME(\gamma),\ME(\gamma),\ME(\gamma),\ME(\gamma))$ and changing the other functions accordingly.
\end{corollary}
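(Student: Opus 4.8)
The statement to prove is Corollary~\ref{cor:duplicated_MFE}: given a degree-$\ell$ MFE of $\FF_q$ into $\FF_2^r$ with maps $(\ME, \MEinv)$ and permutations $\pi_1, \ldots, \pi_\ell$, one can build another degree-$\ell$ MFE by ``quadruplicating'' the embedding map, i.e., setting $\ME'(\gamma) = (\ME(\gamma), \ME(\gamma), \ME(\gamma), \ME(\gamma)) \in \FF_2^{4r}$.

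My plan is as follows. First I would write down explicitly what the new data of the proposed MFE must be: the target space is $\FF_2^{4r}$, the new encoding map is $\ME' : \FF_q \to \FF_2^{4r}$, $\ME'(\gamma) = (\ME(\gamma), \ME(\gamma), \ME(\gamma), \ME(\gamma))$, and I need to exhibit a surjective $\FF_2$-linear decoding map $\MEinv' : \FF_2^{4r} \to \FF_q$ and permutations $\pi_1', \ldots, \pi_\ell'$ of $4r$ elements satisfying Definition~\ref{def:MFE}. The natural choice for $\MEinv'$ is to read off one of the four copies and apply the old decoder: $\MEinv'(x_1, x_2, x_3, x_4) = \MEinv(x_1)$ where each $x_j \in \FF_2^r$; this is clearly $\FF_2$-linear and surjective since $\MEinv$ is. For the permutations, the natural choice is $\pi_j'$ acting block-diagonally, applying $\pi_j$ (or some fixed rearrangement among the four blocks composed with $\pi_j$) within an appropriate copy; the simplest is $\pi_j' = \pi_j \sqcup \pi_j \sqcup \pi_j \sqcup \pi_j$, i.e., apply $\pi_j$ independently to each of the four length-$r$ blocks.

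Next I would verify property~2 of Definition~\ref{def:MFE} for this data. Since $\pi_j'(\ME'(\alpha_j)) = (\pi_j(\ME(\alpha_j)), \pi_j(\ME(\alpha_j)), \pi_j(\ME(\alpha_j)), \pi_j(\ME(\alpha_j)))$, the componentwise product over $j \in [\ell]$ is $(z, z, z, z)$ where $z = \pi_1(\ME(\alpha_1)) \star \cdots \star \pi_\ell(\ME(\alpha_\ell)) \in \FF_2^r$, because the star product respects the block decomposition. Then $\MEinv'(z,z,z,z) = \MEinv(z) = \prod_{i=1}^\ell \alpha_i$ by the MFE property of the original embedding. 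The injectivity and $\FF_2$-linearity of $\ME'$ are immediate: $\ME'$ is a composition of the injective linear map $\ME$ with the (injective, linear) diagonal embedding $\FF_2^r \hookrightarrow \FF_2^{4r}$.

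I do not anticipate any real obstacle here; the corollary is essentially a bookkeeping observation, and the only subtlety is making sure the bracketed phrase ``changing the other functions accordingly'' is spelled out unambiguously — namely specifying $\MEinv'$ and the $\pi_j'$ as above and noting that the ``$4$'' is immaterial (any number of copies would work, but $4$ is what is needed downstream when each physical qudit, under the later concatenation, is used a bounded number of times and the MFE must be applied across duplicated registers so that the depth-one $\CCZ$ structure is preserved). So the proof will be a short, direct verification: define the new maps, check linearity/injectivity/surjectivity, and check the multiplication identity blockwise.
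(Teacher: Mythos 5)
Your construction is exactly the paper's: read off one copy for $\MEinv'$, apply $\pi_j$ block-diagonally to the four length-$r$ blocks, and verify Equation~\eqref{eq:multiplication_friendly_encoding} blockwise (the paper states this verification "follows straightforwardly," which you have simply spelled out). The proposal is correct and takes essentially the same approach as the paper.
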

\begin{proof}
    We set $\MEinv':\FF_2^{4r}\rightarrow \FF_q$ as $\MEinv'(u) = \MEinv(u_1,\cdots, u_r)$, and we let $\pi_i'$ be $\pi_i$ acting on 4 copies of the embedding independently. Equation~\eqref{eq:multiplication_friendly_encoding} follows straightforwardly.
\end{proof}

\subsubsection{Logical and Physical Modifications to the Code  \texorpdfstring{$\mathcal{Q}_1$}{}}
Let us now describe how exactly we modify the code $\qcode_1$, which we note has parameters $[[n_1 = n_0t, k_1 = k_0t, d_1]]_2$. As mentioned, there will be a step that modifies its ``logical space'', and a step that modifies its ``physical space''. It will become more clear what this means.

Starting with the logical modification, we may enumerate the logical $Z$-operators of $\qcode_1$ as $\{\ol Z_{b,j}\}$ for $b\in [k_0], j\in [t]$, so that, for each $b$, the operators $\ol Z_{b,1}, \cdots, \ol Z_{b,t}$ are the logical $Z$ operators for the $b$-th block of $t$ logical qubits.
\begin{lemma}\label{lem:concatenation_logical}
    For every $b\in [k_0]$, let us add the $Z$-operators $\ol Z_{b, 2}, \cdots, \ol Z_{b,t}$ into the $Z$-stabilizers of $\qcode_1$ to create the code $\qcode_2$. 
    In the terminology of quantum codes, we ``gauge-fix'' the last $t-1$ logical qubits in every block of $t$ logical qubits. Then
    \begin{enumerate}[itemsep = 0pt]
        \item $\qcode_2$ has parameters  $[[n_2 = n_1, k_2 = k_0, d_2\ge d_1]]_2$;
        \item For every $\beta\in\FF_q$, suppose a physical circuit $\phyC{\beta}$ acting on the physical qubits of a code state of $\qcode_1$ implements a logical intra-block $V^\beta$ gate acting on logical $q$-blocks $A, B, C\in [k_0]$. In other words, suppose
        \begin{align}
            \ol{V^\beta[A, B, C]} = \phyC{\beta}.
        \end{align}
        Then the physical circuit $\phyC{\al_1^{-1}}$ acting on a code state of $\qcode_2$ implements a logical intra-block $\CCZ$ gate on logical qubits $A, B, C$ of $\qcode_2$. 
        Similar conclusions hold for logical inter-block $V$ gates.
    \end{enumerate}
\end{lemma}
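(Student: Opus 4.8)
The plan is to prove the two parts of Lemma~\ref{lem:concatenation_logical} separately, exploiting the standard theory of gauge-fixing (adding logical operators to the stabilizer group) together with the explicit form of the code states from Section~\ref{sec:qcode_from_puncture}.

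For the parameters in part~(1): adding the $k_0(t-1)$ commuting $Z$-operators $\ol Z_{b,2},\dots,\ol Z_{b,t}$ to the $Z$-stabilizer group of $\qcode_1$ is a legitimate operation because these operators commute with all existing stabilizers and with each other; each one removes one logical qubit (it "freezes" the corresponding logical qubit to the $+1$ eigenstate of $\ol Z$, i.e.\ to $\ket 0$). So $k_2 = k_1 - k_0(t-1) = k_0 t - k_0(t-1) = k_0$, and $n_2 = n_1$ is unchanged. For the distance, the key point is that the $Z$-distance can only improve (we have added $Z$-stabilizers, so more $Z$-logicals get "killed"/become stabilizers, and the minimum weight of a nontrivial $Z$-logical cannot decrease), while the $X$-distance: the surviving $X$-logicals of $\qcode_2$ are a subset of those of $\qcode_1$ (namely those that still commute with the enlarged stabilizer group — in fact only $\ol X_{b,1}$ for each $b$ survives, since $\ol X_{b,j}$ anticommutes with the newly added $\ol Z_{b,j}$ for $j\ge 2$), so again the minimum weight of a nontrivial $X$-logical cannot decrease. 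Hence $d_2 \ge d_1$. I would phrase this cleanly using the coset description: the new $X$-logical space is a subquotient of the old one and the new $Z$-stabilizer space contains the old one, so both relevant minimum distances are non-decreasing.

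For part~(2), the heart of the matter. After gauge-fixing, a logical computational basis state $\ol{\ket u}$ of $\qcode_2$ (for $u \in \mathbb F_2^{k_0}$, indexing the one surviving logical qubit per $q$-block) corresponds to the logical state $\ol{\ket{\hat u}}$ of $\qcode_1$ where $\hat u \in \mathbb F_2^{k_1}$ has $\hat u_{b,1} = u_b$ and $\hat u_{b,j}=0$ for $j\ge 2$; equivalently, via Lemma~\ref{lem:self-dual-embedding}, it corresponds to the qudit state $\ol{\ket{\B^{-1}(\hat u)}}$ of $\qcode_0$, and since the self-dual basis expansion of $\hat u$ restricted to block $b$ is $(u_b,0,\dots,0) \mapsto u_b\alpha_1 \in \mathbb F_q$, the $b$-th logical qudit of $\qcode_0$ carries the value $u_b \alpha_1$. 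Now apply the hypothesis: $\phyC\beta$ on $\qcode_1$ implements $\ol{V^\beta[A,B,C]}$, which by~\eqref{eq:def_of_V} multiplies the state by $(-1)^{\tr(\beta \eta_A\eta_B\eta_C)}$ where $\eta_A,\eta_B,\eta_C \in \mathbb F_q$ are the qudit values in blocks $A,B,C$. Specializing to our gauge-fixed states, $\eta_A = u_A\alpha_1$, $\eta_B=u_B\alpha_1$, $\eta_C=u_C\alpha_1$, so the phase is $(-1)^{\tr(\beta\alpha_1^3 u_A u_B u_C)}$. Choosing $\beta = \alpha_1^{-3}$ — \emph{not} $\alpha_1^{-1}$; I would flag the statement's $\al_1^{-1}$ as a typo and use $\alpha_1^{-3}$ — gives phase $(-1)^{\tr(u_A u_B u_C)} = (-1)^{u_A u_B u_C}$ since $u_A u_B u_C \in \mathbb F_2$ and $\tr|_{\mathbb F_2}=\mathrm{id}$ (using $\tr(\alpha_1\cdot\alpha_1)=1$ for a self-dual basis, so $\tr(\alpha_1^{-3}\cdot\alpha_1^3\cdot x) = \tr(x)$, and more carefully $\tr(1\cdot x)=\tr(x)$, with $\tr(x)=x$ for $x\in\mathbb F_2$). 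That is exactly the action of the logical $\CCZ$ on qubits $A,B,C$. Since the $\phyC{\alpha_1^{-3}}$ circuit acts purely on physical qubits and $\qcode_2 \subseteq \qcode_1$ as a subspace, the same physical circuit restricted to $\qcode_2$ gives the claimed logical $\CCZ$; linearity over logical basis states finishes it. The inter-block case is identical with three separate code blocks and qudit values $u_A\alpha_1$, $v_B\alpha_1$, $w_C\alpha_1$.

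The main obstacle — and the step I would be most careful about — is bookkeeping the gauge-fixing correspondence: verifying that $\qcode_2$ is genuinely a subspace of $\qcode_1$ whose logical basis states are precisely those $\qcode_1$-codewords with the last $t-1$ logical qubits in each block set to $\ket 0$, and that a physical circuit implementing a $\qcode_1$-logical gate descends to the correct $\qcode_2$-logical gate (which is automatic once one checks that $\phyC\beta$ preserves the gauge-fixed subspace — here true because $V^\beta$ is diagonal, hence preserves the $\ol Z$-eigenspaces). I would also double-check the exponent arithmetic $\beta\alpha_1^3 = 1$ against the paper's stated $\alpha_1^{-1}$; if the paper really intends $\alpha_1^{-1}$ there must be an additional normalization (e.g.\ the logical $X$-operators of $\qcode_1$ being $X^{\alpha_1 g^a}$ rather than $X^{g^a}$, shifting where the $\alpha_1$ factors land), and I would reconcile the two conventions before committing to the constant.
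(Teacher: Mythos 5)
Your route is the same as the paper's: gauge-fix by promoting the logical $Z$'s to stabilizers, note the parameters, and then evaluate the diagonal gate $V^\beta$ on the hardcoded states $\ket{b,0,\dots,0}$, whose underlying field values are $b\al_1$, so that the acquired phase is $(-1)^{b_1b_2b_3\tr(\beta\al_1^3)}$. Part (1) and your bookkeeping (that $\qcode_2$ sits inside $\qcode_1$, that only $\ol X_{b,1}$ survives, and that $C(\beta)$ preserves the gauge-fixed subspace because $V^\beta$ is diagonal) are fine, and are in fact spelled out in more detail than the paper's one-line treatment.

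The genuine error is your ``correction'' of the constant: the paper's $\beta=\al_1^{-1}$ is not a typo, and your $\beta=\al_1^{-3}$ fails in exactly the regime where the lemma is used. The requirement is $\tr(\beta\al_1^3)=1$. With the paper's choice, $\tr(\al_1^{-1}\al_1^3)=\tr(\al_1\cdot\al_1)=1$ directly from self-duality of the basis ($\tr(\al_i\al_j)=\delta_{ij}$), which is precisely the paper's last step. With your choice one gets $\tr(1)$, and your justification ``$\tr\vert_{\FF_2}=\mathrm{id}$'' is false: for $x\in\FF_2$ one has $\tr(x)=\sum_{i=0}^{t-1}x^{2^i}=(t\bmod 2)\,x$, so $\tr(1)=t\bmod 2$. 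In the application to Theorem~\ref{thm:main_addressable_CCZ}, $q=2^t=n_0^2$ with $n_0$ a power of two, so $t$ is even, $\tr(1)=0$, and $C(\al_1^{-3})$ would implement the identity rather than $\CCZ$ on the gauge-fixed logical qubits (your choice only works when $t$ is odd). So keep $\beta=\al_1^{-1}$ and close the computation with self-duality; no reconciliation of conventions or extra normalization is needed, and with that replacement the rest of your argument goes through.
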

\begin{proof}
    As mentioned, the described procedure simply hardcodes the last $t-1$ qubits of every logical $q$-block of $\qcode_1$ to $\ket{0}$. 
    The code distance does not decrease from such a procedure.
    Therefore, a logical $V$ gate defined as in equation~\eqref{eq:def_of_V} acting on three logical $q$-blocks evaluates as, for every $b_1,b_2,b_3\in \FF_2$,
    \begin{align}
        V^{\beta}
        \ket{b_1, 0, \cdots, 0}
        \ket{b_2, 0, \cdots, 0}
        \ket{b_3, 0, \cdots, 0}
        &= V^{\beta}\ket{\B(b_1\al_1)}\ket{\B(b_2\al_1)}\ket{\B(b_3\al_1)} \\
        &= (-1)^{\tr(\beta b_1\al_1b_2\al_1b_3\al_1)}\ket{\B(b_1\al_1)}\ket{\B(b_2\al_1)}\ket{\B(b_3\al_1)} \\
        &= (-1)^{b_1b_2b_3\tr(\beta \al_1\al_1\al_1)}\ket{\B(b_1\al_1)}\ket{\B(b_2\al_1)}\ket{\B(b_3\al_1)}.
    \end{align}
    By setting $\beta = \al_1^{-1}$, we see that 
    \begin{align}
        V^{\al_1^{-1}}
        \ket{b_1, 0, \cdots, 0}
        \ket{b_2, 0, \cdots, 0}
        \ket{b_3, 0, \cdots, 0}
        &= (-1)^{b_1b_2b_3\tr( \al_1\al_1)}\ket{\B(b_1\al_1)}\ket{\B(b_2\al_1)}\ket{\B(b_3\al_1)} \\
        &= (-1)^{b_1b_2b_3}\ket{\B(b_1\al_1)}\ket{\B(b_2\al_1)}\ket{\B(b_3\al_1)},
    \end{align}
    where the final equality follows from the self-duality of our chosen basis.
\end{proof}
\begin{remark}
    Clearly, hardcoding this many logical qubits to $0$ is quite wasteful. It would be interesting to see if it is possible to recover more addressable logical qubits per logical qudit, for example as was done for transversality by using reverse multiplication friendly encodings in~\cite{nguyen2024goodbinaryquantumcodes}, or via quantum multiplication friendly codes~\cite{golowich2024asymptoticallygoodquantumcodes}. Indeed, we note it is certainly possible to do this and retain a certain amount of addressability. However, whether it is possible to do this and retain full addressability of the $\CCZ$ gate, which is the primary aim of this work, is not clear.
\end{remark}

We now modify the physical space of $\qcode_2$. 
The high level idea is simple: we will construct, using a multiplication-friendly embedding, a quantum code $\qoutcode$ which encodes every $q$-block of $t$ qubits into $r$ qubits such that a $V$ gate acting on three $q$-blocks may be implemented by $\CCZ$ gates acting on $3r$ qubits. The use of the multiplication-friendly embedding allows a larger number of bits to emulate the multiplication in $\mathbb{F}_q$ via the star product.

Consider a degree-3 MFE as in Definition~\ref{def:MFE} with $\FF_2$-linear maps $(\ME, \MEinv)$ and a collection of $r$-element permutations $\pi_1, \pi_2, \pi_3$.
Recall that $\ME: \FF_q\rightarrow \FF_2^r$ is injective and $\MEinv: \FF_2^r\rightarrow \FF_q$ is surjective.
Let
\begin{align}\label{eq:def_of_enc}
    \enc = \ME\circ \B^{-1}: \FF_2^t\rightarrow \FF_2^r,
\end{align}
which is an injective $\FF_2$-linear map, and let $\coutcode = \im(\enc)$ be a classical code.
$\coutcode$ may also be considered a quantum code $\qoutcode$ with parameters $[[r,t,1]]_2$ via $\overline{\ket{u}}\coloneq\ket{\phi(u)}$. From here, we have the following lemma.

\begin{lemma}\label{lem:concatenation_physical}
    Consider three standard basis states $\ol{\ket{u}} = \ket{\phi(u)}$, $\ol{\ket{v}} = \ket{\phi(v)}$, $\ol{\ket{w}} = \ket{\phi(w)}$ of $\qoutcode$ for $u,v,w\in\FF_2^t$. 
    For all $\beta\in \FF_q$, the logical gate $V^\beta$ can be implemented by a depth-1 $\CCZ$ circuit.
\end{lemma}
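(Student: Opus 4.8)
The plan is to unwind the definitions and reduce everything to the defining property of the multiplication-friendly embedding, Equation~\eqref{eq:multiplication_friendly_encoding}. Recall that the code $\qoutcode$ encodes a $t$-qubit standard basis state $\ket{u}$, $u \in \FF_2^t$, as $\ket{\phi(u)} = \ket{\ME(\B^{-1}(u))}$, so it encodes the field element $\eta = \B^{-1}(u) \in \FF_q$ as $\ket{\ME(\eta)} \in \FF_2^r$. Under this identification, the logical $V^\beta$ gate should act as $V^\beta \ol{\ket{u}}\,\ol{\ket{v}}\,\ol{\ket{w}} = (-1)^{\tr(\beta\eta_1\eta_2\eta_3)}\ol{\ket{u}}\,\ol{\ket{v}}\,\ol{\ket{w}}$, where $\eta_1 = \B^{-1}(u)$, $\eta_2 = \B^{-1}(v)$, $\eta_3 = \B^{-1}(w)$, in keeping with Equation~\eqref{eq:def_of_V}.

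First I would write down the candidate physical circuit explicitly. Given the permutations $\pi_1,\pi_2,\pi_3$ coming with the MFE, and given a fixed $\beta \in \FF_q$, I would define a depth-one $\CCZ$ circuit on the $3r$ physical qubits (the three $r$-qubit out-blocks) by placing, for each coordinate index $s \in [r]$, a single $\CCZ$ gate — or rather, an appropriate $\pm 1$-phase gate built from $\CCZ$, $\CZ$, $\Z$ — touching the $\pi_1^{-1}(s)$-th qubit of block one, the $\pi_2^{-1}(s)$-th qubit of block two, and the $\pi_3^{-1}(s)$-th qubit of block three, with a coefficient determined by $\beta$ and the linear form $\MEinv$. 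Concretely, writing $\MEinv(x) = \sum_{s=1}^r \lambda_s x_s$ for field elements $\lambda_s \in \FF_q$ (using that $\MEinv$ is $\FF_2$-linear, hence has this form after identifying $\FF_q$-linear functionals with the trace as in Section~\ref{sec:prelim_non_clifford_addressability}), I would choose the gate at coordinate $s$ to apply the phase $(-1)^{\tr(\beta\lambda_s\,y_s)}$ where $y_s$ is the $\AND$ of the three incident physical bits; expanding this $\AND$ of bits as a product gives exactly a diagonal $\pm 1$ gate on three qubits, which is $\CCZ^{\text{(some coefficient)}}$ up to lower-order $\CZ$ and $\Z$ corrections. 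Since the three index sets $\{\pi_1^{-1}(s)\}_s$, $\{\pi_2^{-1}(s)\}_s$, $\{\pi_3^{-1}(s)\}_s$ are each a permutation of $[r]$, every physical qubit is touched exactly once in each of the three slots, so this is genuinely a depth-one circuit. The verification is then a direct computation: applying this circuit to $\ket{\ME(\eta_1)}\ket{\ME(\eta_2)}\ket{\ME(\eta_3)}$ produces the phase $(-1)^{\sum_s \tr(\beta\lambda_s \cdot [\pi_1(\ME(\eta_1)) \star \pi_2(\ME(\eta_2)) \star \pi_3(\ME(\eta_3))]_s)}$, and pulling the sum inside the $\FF_2$-linear trace and then recognizing $\sum_s \lambda_s[\cdots]_s = \MEinv(\pi_1(\ME(\eta_1))\star\pi_2(\ME(\eta_2))\star\pi_3(\ME(\eta_3))) = \eta_1\eta_2\eta_3$ by Equation~\eqref{eq:multiplication_friendly_encoding} gives exactly $(-1)^{\tr(\beta\eta_1\eta_2\eta_3)}$, as required.

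The main subtlety — and the step I expect to need the most care — is bookkeeping the decomposition of a diagonal $\pm 1$ three-qubit phase into $\CCZ$, $\CZ$, and $\Z$ gates and checking that the resulting overall circuit is still depth one (i.e., that the $\CZ$ and $\Z$ corrections at coordinate $s$ can be absorbed without creating collisions between different values of $s$, and without affecting code states beyond the intended logical phase). Because the statement of Lemma~\ref{lem:concatenation_physical} as phrased only claims implementation "by a depth-1 $\CCZ$ circuit", I would either argue that in this particular construction the correction terms vanish or can be commuted to act as identity on the out-code (which is plausible since $\coutcode = \im(\phi)$ is a fixed linear code and these correction gates are diagonal), or, following Remark~\ref{rmk:flatten-depth}, note that one can duplicate physical qubits a constant number of times to flatten any constant-depth circuit of these gates into depth one. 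The remaining ingredients — linearity of the trace, the fact that it suffices to check the action on computational basis states, and that $\ME$, $\MEinv$ being $\FF_2$-linear lets us move sums freely — are all routine and mirror the calculations already carried out in the proofs of Theorems~\ref{thm:intra-block} and~\ref{thm:inter-block}.
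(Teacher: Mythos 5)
Your proposal is correct and follows essentially the same route as the paper: reduce the logical phase to $\tr\bigl(\beta\,\MEinv[\pi_1(\phi(u))\star\pi_2(\phi(v))\star\pi_3(\phi(w))]\bigr)$ via Equation~\eqref{eq:multiplication_friendly_encoding}, push the $\FF_2$-linear trace through the sum, and realize each coordinate $s$ by a single three-qubit gate on qubits $\pi_1^{-1}(s),\pi_2^{-1}(s),\pi_3^{-1}(s)$, which is depth one because the $\pi_i$ are permutations; your coefficients $\tr(\beta\lambda_s)$ are exactly the paper's $\Lambda(\beta)_s$. The ``main subtlety'' you flag is actually a non-issue: since the three incident bits $x_1,x_2,x_3$ lie in $\FF_2$, we have $\tr(\beta\lambda_s\,x_1x_2x_3)=\tr(\beta\lambda_s)\,x_1x_2x_3$, so the coordinate-$s$ gate is literally $\CCZ^{\tr(\beta\lambda_s)}$ --- a plain $\CCZ$ or the identity --- and no $\mathsf{CZ}$/$\mathsf{Z}$ corrections or qubit duplication are needed.
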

\begin{proof}
    We are aiming to implement the logical gate $V^\beta$, where
    \begin{align}
        \ol{V^\beta\ket{u}\ket{v}\ket{w}}
        &= (-1)^{\tr(\beta\B^{-1}(u)\B^{-1}(v)\B^{-1}(w))} \ol{\ket{u}}\ol{\ket{v}}\ol{\ket{w}}.
    \end{align}
    We emphasise that the overlines here indicate logical gates and states of the code $\qoutcode$.
    From equation~\eqref{eq:multiplication_friendly_encoding}, we see that 
    \begin{align}
        \B^{-1}(u)\B^{-1}(v)\B^{-1}(w)
        &= \MEinv\left[ 
        \pi_1(\ME(\B^{-1}(u))) \star
        \pi_2(\ME(\B^{-1}(v))) \star
        \pi_3(\ME(\B^{-1}(w)))
        \right] \\
        &= \MEinv\left[ 
        \pi_1(\enc(u)) \star
        \pi_2(\enc(v)) \star
        \pi_3(\enc(w))
        \right].
    \end{align}
    Moreover, suppose $\beta = \sum_{i\in [t]} b_i\al_i$, let $\vec{b} = (b_1,\cdots, b_t)$. 
    Then for all $\gamma\in \FF_q$, since $\{\al_i\}$ is a self-dual basis we have $\tr[\beta\gamma] = \vec{b}\cdot\B(\gamma)$.
    Therefore,
    \begin{align}
        \tr(\beta \B^{-1}(u)\B^{-1}(v)\B^{-1}(w))
        &=  \vec{b}\cdot \left( \B(\MEinv\left[ 
        \pi_1(\enc(u)) \star
        \pi_2(\enc(v)) \star
        \pi_3(\enc(w))
        \right]) \right).
    \end{align}
    Note that $\B\circ \MEinv: \FF_2^r\rightarrow \FF_2^t$ is a surjective $\FF_2$-linear function. 
    Let $T$ be the matrix representation of this function, and let $\Lambda(\beta) = \vec{b}T\in \FF_2^r$ where $\vec{b}$ is considered a row vector and $\vec{b}T$ is simply matrix multiplication.
    Then we have
    \begin{align}
        \tr(\beta \B^{-1}(u)\B^{-1}(v)\B^{-1}(w))
        &=  \sum_{j=1}^r \Lambda(\beta)_j \left[ 
        \pi_1(\enc(u)) \star
        \pi_2(\enc(v)) \star
        \pi_3(\enc(w))
        \right]_j \\
        &=  \sum_{j=1}^r \Lambda(\beta)_j 
        \pi_1(\enc(u))_j 
        \pi_2(\enc(v))_j 
        \pi_3(\enc(w))_j
         \\
         &=  \sum_{j=1}^r \Lambda(\beta)_j 
        \enc(u)_{\pi_1^{-1}(j)}
        \enc(v)_{\pi_2^{-1}(j)}
        \enc(w)_{\pi_3^{-1}(j)}.
    \end{align}
    We can create this phase with a depth-1 $\CCZ$ circuit, namely
    \begin{align}
        \ol{V^\beta\ket{u}\ket{v}\ket{w}}
        &= \bigotimes_{j=1}^r \CCZ^{\Lambda(\beta)_j}[\pi_1^{-1}(j), \pi_2^{-1}(j), \pi_3^{-1}(j)]\ket{\phi(u)}\ket{\phi(v)}\ket{\phi(w)}.
    \end{align}
    Since $\Lambda(\beta)_j \in \FF_2$, $\CCZ^{\Lambda(\beta)_j}$ is simply $\CCZ$ if $\Lambda(\beta)_j = 1$ and identity otherwise. 
    This circuit has depth-1 since $\pi_1,\pi_2,\pi_3$ are permutations.
\end{proof}
\begin{corollary}\label{cor:duplicated_qoutcode}
Using the duplicated MFE as in Corollary~\ref{cor:duplicated_MFE}, a depth-4 logical circuit of $V$ gates acting on code states of $\qoutcode$ can be implemented by a depth-1 $\CCZ$ physical circuit.
\end{corollary}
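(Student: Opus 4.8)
The plan is to use the four identical copies furnished by the duplicated MFE to host the four layers of the given logical circuit, one layer per copy. Concretely, I would build the outer code exactly as in Lemma~\ref{lem:concatenation_physical} but starting from the degree-$3$ MFE of Corollary~\ref{cor:duplicated_MFE}, whose embedding map is $\ME'(\gamma) = (\ME(\gamma),\ME(\gamma),\ME(\gamma),\ME(\gamma))$; call the resulting code $\qoutcode'$. Now each $q$-block of $t$ logical qubits is encoded into $4r$ physical qubits that split into four length-$r$ \emph{registers}, and the logical state $\ol{\ket{u}}$ of a $q$-block has the form $\ket{\phi(u)}^{\otimes 4}$, with $\phi = \ME\circ\B^{-1}$ the encoding of Equation~\eqref{eq:def_of_enc}. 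The first observation is that, for \emph{any} register index $c\in[4]$, the depth-$1$ circuit $\bigotimes_{j=1}^r \CCZ^{\Lambda(\beta)_j}[\pi_1^{-1}(j), \pi_2^{-1}(j), \pi_3^{-1}(j)]$ from the proof of Lemma~\ref{lem:concatenation_physical}, re-indexed so that its three arguments range over the $c$-th registers of three given $q$-blocks, implements the logical gate $V^\beta$ on those $q$-blocks of $\qoutcode'$: the computation in that proof goes through verbatim, since the $c$-th register carries an exact copy of $\phi(u)$ and only the MFE identity~\eqref{eq:multiplication_friendly_encoding} for the \emph{original} maps $\ME,\MEinv,\pi_i$ is used. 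Crucially, this $\CCZ$ subcircuit touches \emph{only} the $c$-th registers of the three $q$-blocks involved.

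With this in hand, I would write the given depth-$4$ logical circuit of $V$ gates as $L_4 L_3 L_2 L_1$, where the $V$ gates appearing within a single layer $L_c$ act on pairwise-disjoint triples of $q$-blocks (this is exactly what depth $4$ means). For each $V$ gate in layer $L_c$, implement it by the depth-$1$ $\CCZ$ subcircuit above acting on the $c$-th registers of that gate's three $q$-blocks. By the previous paragraph the union of all these $\CCZ$ gates realizes the target circuit logically, so it remains only to verify that this union is itself a depth-$1$ circuit, i.e. that the $\CCZ$ gates have pairwise-disjoint supports. Two $\CCZ$ subcircuits coming from the same layer $L_c$ act on the $c$-th registers of disjoint sets of $q$-blocks, hence on disjoint qubits; two subcircuits coming from different layers $L_c$, $L_{c'}$ with $c\neq c'$ act on the $c$-th and $c'$-th registers respectively, which are disjoint irrespective of whether the underlying $q$-blocks coincide. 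Hence every $\CCZ$ gate produced has support disjoint from every other, and the physical circuit has depth one.

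The argument is essentially bookkeeping; the one point I would double-check is the first observation, namely that a logical $V^\beta$ gate can be realized by $\CCZ$ gates confined to a single chosen register. This holds because in the duplicated MFE $\ME'$ writes the same value into every register, so restricting attention to any one register reproduces exactly the situation of Lemma~\ref{lem:concatenation_physical}. (This is the same phenomenon behind Remark~\ref{rmk:flatten-depth}: there, because each physical $q$-block of the Section~\ref{sec:intrablock_GRS} construction is used only three times in the depth-$4$ implementation of the intra-block $\CCZ$, a three-copy duplication already suffices, whereas the statement here is for an arbitrary depth-$4$ logical $V$-circuit, for which all four copies may be needed.)
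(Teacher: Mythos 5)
Your proposal is correct and follows essentially the same route as the paper: encode each $q$-block via the duplicated MFE into four length-$r$ registers, implement the $V$ gates of each of the four layers by the Lemma~\ref{lem:concatenation_physical} $\CCZ$ subcircuit confined to a distinct register, and observe that gates within a layer act on disjoint $q$-blocks while gates in different layers act on disjoint registers, so all $\CCZ$ gates have disjoint support. The paper's proof states exactly this (more tersely), so your write-up is just a more detailed version of the same argument.
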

\begin{proof}
    We construct the encoding function $\phi$ as in equation~\eqref{eq:def_of_enc}, and take the classical code $\coutcode$, which again may be viewed as a quantum code $\qoutcode$. 
    Following the same proof as above, the four $V$ gates acting on any $q$-block can each be implemented with a collection of non-overlapping CCZ circuit acting on $r$ of the $4r$ qubits of the encoded block.
    Therefore, we can arrange them all into a depth-1 $\CCZ$ circuit.
\end{proof}

We can now in a position to prove Theorem~\ref{thm:main_addressable_CCZ}.

\begin{proof}[Proof of Theorem~\ref{thm:main_addressable_CCZ}]
    Let $\qcode_0$ be the qudit CSS code from Theorem~\ref{thm:RS_addressable_CCZ} with parameters $[[n_0,k_0,d_0]]_q$.
    We know that $k_0,d_0 = \Theta(n_0)$, where $q=2^t = n_0^2$.
    Applying Lemma~\ref{lem:self-dual-embedding}, we get a qubit CSS code $\qcode_1$ with parameters $[[n_1 = n_0t, k_1 = k_0t, d_1\ge d_0]]_2$.
    $\qcode_1$ supports a depth-4 implementation of addressable intra-block $V$ gates, and a depth-1 implementation of addressable inter-block $V$ gates. 
        
    Next we apply Lemma~\ref{lem:concatenation_logical} to get a qubit CSS code $\qcode_2$ with parameters $[[n_2 = n_0t, k_2 = k_0, d_2\ge d_0]]_2$.
    On $\qcode_2$, the logical actions of $V$ gates are translated into $\CCZ$ gates, which means addressable intra-block $\CCZ$ gates and inter-block $\CCZ$ gates are implemented by circuits of $V$ gates of depth four and depth one, respectively. 
    
    Finally, if we want addressable inter-block $\CCZ$ gates to be implemented by a depth-1 $\CCZ$ circuit, we can encode every physical $q$-block of $\qcode_2$ into a $[[t^3,t,1]]_2$ quantum code $\qoutcode$ constructed from the degree-3 MFE in Lemma~\ref{lem:3MFE}.
    The resulting qubit CSS code $\qcode_3$ has parameters $[[n_3 = n_0t^3, k_2 = k_0, d_3\ge d_0]]_2$.
    The transversal implementation of addressable inter-block $\CCZ$ gates is guaranteed by Lemma~\ref{lem:concatenation_physical}.
    If we additionally want addressable intra-block $\CCZ$ gates to be implemented by a depth-1 $\CCZ$ circuit, we concatenate using the code $\qoutcode$ in Corollary~\ref{cor:duplicated_qoutcode} instead and obtain the resulting qubit code $\qcode_3'$ with parameters $[[n_3 = 4n_0t^3, k_2 = k_0, d_3\ge d_0]]_2$. Since $t = 2\log n_0$, we obtain the desired result.
\end{proof}

\begin{remark}
    Being able to execute the intra-block $\CCZ$ gates via a depth one circuit comes at the expense of a longer code. As is clear from Corollary~\ref{cor:duplicated_qoutcode}, and as we mention in the proof, one could instead obtain a shorter code at the expense of greater implementation depth. Now, as stated in Remark~\ref{rmk:high_degree_poly_gates}, the construction is generalisable to higher-degree polynomial gates such as $\CLZ$.\footnote{This is possible because, for example, the argument of Lemma 4.5 of~\cite{nguyen2024goodbinaryquantumcodes} can be easily generalised to obtain higher degree MFEs. In particular, there exists an explicit degree-$\ell$ MFE of $\mathbb{F}_{2^t}$ into $\mathbb{F}_2^r$ for which $r = t^\ell$. It is also possible to use the quantum multiplication friendly codes of~\cite{golowich2024asymptoticallygoodquantumcodes}.} In this case, the greater depth needed for the intra-block implementation could again be traded off into a longer code length, if desired.

    One can even take these thoughts a step further and note that if Lemma~\ref{lem:concatenation_physical} (and Corollary~\ref{cor:duplicated_qoutcode}) were not applied, then the logical, addressable $\CCZ$ gate would be implementable by a circuit of physical $V^\beta$ gates, implying a much shorter code, but with a higher (but still low) implementation depth. Indeed, the $V^\beta$ gates would be implementable by a circuit of $\mathsf{Z}, \mathsf{CZ}$ and $\CCZ$ gates.\footnote{This is true because $V^\beta = \theta^{\otimes 3}(\CCZ_q^\beta)$ must be a diagonal gate in exactly the third level of the Clifford hierarchy with $\pm 1$ on the diagonal. A diagonal gate with entries $\pm 1$ on the diagonal must be implementable via a circuit of $\CLZ$ gates~\cite{houshmand2014decomposition}, and so the fact that $V^\beta$ can be implemented by a circuit of (qubit) $\mathsf{Z}$, $\mathsf{CZ}$ and $\CCZ$ gates follows from the fact that $\CLZ$ gates are independent (meaning no non-trivial combination of them acts identically), and $\CLZ$ is in exactly the $\ell$-th level of the Clifford hierarchy.}
\end{remark}

\subsection{Asymptotically Good Qubit Codes with Transversal and Addressable CCZ Gates on \protect\linebreak Pre-Designed Intrablock Triples}

In this section, we will prove the second main result of the paper, which is the following.
\begin{theorem}[Restatement of Theorem~\ref{thm:main_result_2}]\label{thm:main_asymp_good}
    There exists an explicit family of quantum CSS codes over qubits with parameters
    \begin{equation}
        \left[\left[n,\Theta(n),\Theta(n)\right]\right]_2,
    \end{equation}
    meaning that it is asymptotically good, supporting a transversal, addressable $\CCZ$ gate on pre-designed (pairwise disjoint) intra-block triples of logical qubits. Specifically, the logical qubits in one block of the code are partitioned into sets of three, and one may address any such set with the $\CCZ$ gate via a depth-one circuit of physical $\mathsf{Z}$, $\mathsf{CZ}$ and $\mathsf{CCZ}$ gates.
\end{theorem}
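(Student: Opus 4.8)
The plan is to take the asymptotically good \emph{qudit} code supplied by Theorem~\ref{thm:designed_intrablock_AG_qudit} --- which over a fixed field $\FF_q$, $q = 2^t$, supports the logical three-qudit gate $\ol{W^\beta[3A+1,3A+2,3A+3]}$ on pre-designed pairwise-disjoint triples of logical qudits via a depth-one transversal circuit of physical $U_{q,7}^\beta$ gates --- and push it through the qudit-to-qubit pipeline of Section~\ref{sec:concatenation}, following the same three moves as in the proof of Theorem~\ref{thm:main_addressable_CCZ} (a self-dual embedding, a logical gauge-fixing, and a physical re-encoding), but with one genuinely new twist in the gauge-fixing. Call the qudit code $\qcode_0$. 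First I would fix a self-dual basis $\{\al_j\}_{j=1}^t$ of $\FF_q$ over $\FF_2$ and apply Lemma~\ref{lem:self-dual-embedding}, obtaining an asymptotically good qubit code $\qcode_1$ whose physical $U_{q,7}^\beta$ gates become the gates $\theta(U_{q,7}^\beta)$ acting on individual physical $q$-blocks of $t$ qubits, and whose logical $W^\beta$ becomes $\theta^{\otimes 3}(W^\beta)$ acting on three logical $q$-blocks, so the depth-one structure is preserved at the level of $q$-blocks.

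The crux is the logical modification. Here the uniform gauge-fixing of Lemma~\ref{lem:concatenation_logical} --- hard-coding the last $t-1$ qubits of every logical $q$-block to $\ket 0$, so the block carries a value $b\al_1$ with $b \in \FF_2$ --- fails, because $W$ is a \emph{symmetrized} degree-$7$ form: on inputs of the shape $(b_1\gamma, b_2\gamma, b_3\gamma)$ the exponent $\tr(\beta\gamma^7\sum_{\pi \in S_3}b_{\pi(1)}^4 b_{\pi(2)}^2 b_{\pi(3)})$ vanishes, the inner sum being six identical copies of $b_1b_2b_3$ in characteristic two, so the induced logical gate would be the identity. The fix is a \emph{position-dependent} gauge-fixing: for the $q$-blocks occupying positions $3A+1$, $3A+2$, $3A+3$ of the triples I keep a surviving logical qubit whose logical-$X$ direction is $\rho_1$, $\rho_2$, $\rho_3 \in \FF_q$ respectively (with the companion logical-$Z$ direction chosen to pair with it with trace $1$, which is always possible). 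On a gauge-fixed code state the surviving $q$-block in position $e$ then carries the value $b\rho_e$, $b \in \FF_2$, so $\theta^{\otimes 3}(W^\beta)$ contributes the phase $(-1)^{b_1 b_2 b_3 \tr(\beta\Sigma)}$ with $\Sigma = \sum_{\pi \in S_3}\rho_{\pi(1)}^4 \rho_{\pi(2)}^2 \rho_{\pi(3)}$, and a short finite-field check (for instance $\rho_1, \rho_2, \rho_3 = 1, \al, \al^2$ inside a copy of $\FF_8 \subseteq \FF_q$ gives $\Sigma = 1$) shows $\Sigma$ can be made nonzero; picking any $\beta$ with $\tr(\beta\Sigma) = 1$ then makes $\theta^{\otimes 3}(W^\beta)$ act as exactly $\ol{\CCZ}$ on the three surviving logical qubits. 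Gauge-fixing only adds $Z$-stabilizers, so the distance does not drop, and the resulting code $\qcode_2$ has $3k'$ logical qubits partitioned into the desired pairwise-disjoint triples, with $\ol{\CCZ}$ on each triple implemented by the transversal physical $\theta(U_{q,7}^\beta)$ gates.

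Finally, to flatten the physical side to depth one, I would re-encode each physical $q$-block as in Lemma~\ref{lem:concatenation_physical}: using the Frobenius identity $\eta^7 = \eta^4\cdot\eta^2\cdot\eta$ (where $\eta \mapsto \eta^2$ and $\eta \mapsto \eta^4$ are $\FF_2$-linear) together with a degree-$3$ MFE $(\ME, \MEinv, \pi_1, \pi_2, \pi_3)$ of $\FF_q$ into $\FF_2^r$ (Lemma~\ref{lem:3MFE}), encode the $q$-block holding $\eta$ into the $3r$ qubits $(\ME(\eta^4), \ME(\eta^2), \ME(\eta))$, which is an injective $\FF_2$-linear map and hence a legitimate code; then $\tr(\gamma\eta^7)$ becomes $\sum_j \lambda_j\, \ME(\eta^4)_{\pi_1^{-1}(j)}\ME(\eta^2)_{\pi_2^{-1}(j)}\ME(\eta)_{\pi_3^{-1}(j)}$ for a suitable $\lambda \in \FF_2^r$, which reads one qubit from each of the three disjoint copies, so $\theta(U_{q,7}^\gamma)$ is realized by a depth-one circuit of physical $\mathsf{Z}$, $\mathsf{CZ}$ and $\CCZ$ gates. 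Since $t$ is a constant this inflates the length only by an $O(1)$ factor, so the final qubit code still has parameters $[[n, \Theta(n), \Theta(n)]]_2$, which establishes Theorem~\ref{thm:main_result_2}.

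I expect the main obstacle to be the position-dependent gauge-fixing: one must recognize that the uniform hard-coding used for $\CCZ_q$ in Lemma~\ref{lem:concatenation_logical} kills the symmetrized gate $W$, choose the directions $\rho_1, \rho_2, \rho_3$ so that $\Sigma \ne 0$, and verify that this modified gauge-fixing is a legitimate stabilizer operation that leaves the remaining triples and the rest of the logical space undisturbed. The embedding, the distance bookkeeping, and the physical flattening are all routine adaptations of the proof of Theorem~\ref{thm:main_addressable_CCZ}.
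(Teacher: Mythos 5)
Your proposal is correct and, for the crux of the argument, is exactly the paper's route: start from the AG-instantiated qudit code of Theorem~\ref{thm:designed_intrablock_AG_qudit}, embed via the self-dual basis (Lemma~\ref{lem:self-dual-embedding}), and then perform a \emph{position-dependent} gauge-fixing so that the three surviving logical qubits of a triple sit along three $\mathbb{F}_2$-linearly independent directions, picking $\beta$ so that $\tr\bigl(\beta\sum_{\{a,b,c\}=\{1,2,3\}}\rho_a^4\rho_b^2\rho_c\bigr)=1$; the paper does precisely this with $\rho_e=\alpha_e$ from the self-dual basis and notes in a footnote the same nonvanishing requirement and the same freedom to use any independent triple of directions (your explicit $\mathbb{F}_8$ check of $\Sigma=1$ is a nice confirmation, though it only applies when $3\mid t$). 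Your observation that the uniform hard-coding of Lemma~\ref{lem:concatenation_logical} would kill the symmetrized gate $W$ is correct and is the implicit reason for the paper's asymmetric hard-coding. The only place you genuinely diverge is the final depth-one flattening of the physical $\Sigma^\gamma=\theta(U_{q,7}^\gamma)$ gates: the paper expands $\tr\bigl(\gamma(\mathcal{B}^{-1}(u))^7\bigr)$ into its degree-$1$, $2$, $3$ bit-monomials and duplicates each physical qubit $\Omega=1+(t-1)+\binom{t-1}{2}$ times so the resulting $\mathsf{Z}$/$\mathsf{CZ}$/$\CCZ$ circuit parallelizes to depth one, whereas you re-encode each $q$-block as $\bigl(\ME(\eta^4),\ME(\eta^2),\ME(\eta)\bigr)$ via a degree-$3$ MFE and the Frobenius factorization $\eta^7=\eta^4\cdot\eta^2\cdot\eta$, obtaining a depth-one circuit of $\CCZ$ gates alone at the cost of $3t^3$ rather than roughly $t^3/2$ qubits per qudit. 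Both are valid constant-factor inflations (the paper itself remarks that its duplication is ``morally the same'' as the MFE route of Section~\ref{sec:near_asymp_good_concatenation}); your variant buys a cleaner gate set and direct reuse of Lemma~\ref{lem:concatenation_physical}-style machinery, the paper's buys a slightly shorter code and avoids invoking the MFE.
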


The steps taken in this section are similar at a high level to those of Section~\ref{sec:near_asymp_good_concatenation}, with certain differences at a lower level. We denote by $\qcode_0$ the qudit codes of Theorem~\ref{thm:designed_intrablock_AG_qudit}, writing their parameters as $[[n_0,3k_0,d_0]]_q$, where $q=2^t$ is fixed, and $k_0, d_0 = \Theta(n_0)$. We note that the code $\qcode_0$ has $k_0$ sets of 3 logical qudits. Applying once more the construction of Lemma~\ref{lem:self-dual-embedding}, we obtain a code $\qcode_1$ with parameters $[[n_0t,3k_0t,d_1 \geq d_0]]_2$. This is done, as usual, via a self-dual basis $\{\alpha_i\}_{i=0}^{t-1}$ for $\mathbb{F}_{2^t}$ over $\mathbb{F}_2$, and with the decomposition map $\mathcal{B}:\mathbb{F}_{2^t}\to\mathbb{F}_2^t$. Recall that in the code $\qcode_0$, it was possible to address any of the $k_0$ sets of 3 logical qudits with the gate $W^\beta$ via a depth-one circuit of physical $U_7^\gamma$ gates, where
\begin{equation}
    W^\beta\ket{\eta_1}\ket{\eta_2}\ket{\eta_3} = \exp\left(i\pi\tr\left(\beta\sum_{\{a,b,c\} \in \{1,2,3\}}\eta_a^4\eta_b^2\eta_c\right)\right)\ket{\eta_1}\ket{\eta_2}\ket{\eta_3}.
\end{equation}
Under the embedding into qubits, these gates become a $3t$-qubit, and a $t$-qubit gate, respectively, which we denote
\begin{align}
    \Pi^\beta &\coloneq \theta^{\otimes 3}(W^\beta)\\
    \Sigma^\gamma &\coloneq \theta(U_7^\gamma).
\end{align}
The $k_0$ sets of 3 logical qudits in $\qcode_0$ become $k_0$ sets of 3 logical $q$-blocks in $\qcode_1$. Then, in the code $\qcode_1$, a physical circuit of $\Sigma$ gates acting transversally on the physical $q$-blocks executes a logical $\Pi^\beta$ gate between one of these sets of three logical $q$-blocks; one may address any set of three logical $q$-blocks with the logical $\Pi^\beta$ gate. Enumerating these sets of three $q$-blocks in $\qcode_1$ by $A \in [k_0]$, let $\overline{\Pi^\beta[A]}$ denote the logical $\Pi^\beta$ gate acting on the $A$-th set of 3 logical $q$-blocks. In the same way as last time, it is convenient to write
\begin{equation}
    \overline{\Pi^\beta[A]} = C(\beta),
\end{equation}
where $C(\beta)$ is the physical circuit of $\Sigma$ gates executing $\Pi^\beta$ gate on the $A$-th set of three logical $q$-blocks. Again in a similar way to last time, we will now hardcode $t-1$ of the $t$ qubits in each logical $q$-block to $\ket{0}$ to create a code $\qcode_2$ with parameters $[[n_0t,3k_0,d_2 \geq d_0]]_2$. However, this time, the hardcoding will go slightly differently. Given any set of three logical $q$-blocks labelled by $A \in [k_0]$, in the first of these $q$-blocks we hardcode all but the first logical qubit to $\ket{0}$, whereas in the second logical $q$-block we hardcode all but the second logical qubit to $\ket{0}$, and finally in the third logical $q$-block we hardcode all but the third logical qubit to $\ket{0}$. The gate $\Pi^\beta$ acts on such a set of three $q$-blocks as
\begin{align}
    \Pi^\beta&\ket{b_1, 0,0, \ldots, 0}\ket{0,b_2,0, \ldots, 0}\ket{0,0,b_3, \ldots, 0} = \Pi^\beta\ket{\mathcal{B}(b_1\alpha_1)}\ket{\mathcal{B}(b_2\alpha_2)}\ket{\mathcal{B}(b_3\alpha_3)}\\
    &=\exp\left(i\pi\tr\left(\beta\sum_{\{a,b,c\} \in \{1,2,3\}}(b_a\alpha_a)^4(b_b\alpha_b)^2(b_c\alpha_c)\right)\right)\ket{\mathcal{B}(b_1\alpha_1)}\ket{\mathcal{B}(b_2\alpha_2)}\ket{\mathcal{B}(b_3\alpha_3)}\\
    &=\exp\left(i\pi b_1b_2b_3\tr\left(\beta\sum_{\{a,b,c\} \in \{1,2,3\}}\alpha_a^4\alpha_b^2\alpha_c\right)\right)\ket{\mathcal{B}(b_1\alpha_1)}\ket{\mathcal{B}(b_2\alpha_2)}\ket{\mathcal{B}(b_3\alpha_3)},
\end{align}
where going into the third line we have used the fact that $x^2 = x$ for any $x \in \mathbb{F}_2$ and that $\tr$ is an $\mathbb{F}_2$-linear function. Given that we ultimately wish to execute the logical $\CCZ$ gate, for which we want the phase $(-1)^{b_1b_2b_3}$, we define $\hat{\beta}$ to be any element of $\mathbb{F}_q$ satisfying
\begin{equation}
    \tr\left(\hat{\beta}\sum_{\{a,b,c\} \in \{1,2,3\}}\alpha_a^4\alpha_b^2\alpha_c\right) = 1.\footnote{Note that this is always possible as long as $\sum_{\{a,b,c\} \in \{1,2,3\}}\alpha_a^4\alpha_b^2\alpha_c \neq 0$, because the kernel of $\tr$ is exactly half of $\mathbb{F}_q$. Strictly speaking, we need to demonstrate the existence of a self-dual basis $\{\alpha_i\}_{i=0}^{t-1}$ such that $\sum_{\{a,b,c\} \in \{1,2,3\}}\alpha_a^4\alpha_b^2\alpha_c \neq 0$, although in fact this is not necessary. The reason is that, while the self-dual basis is most convenient to expand in, one can in fact expand in terms of any basis with a little added complication --- see Section 8.1.2 of~\cite{gottesman2016surviving} --- and so all we actually need is three elements $\alpha_1, \alpha_2, \alpha_3 \in \mathbb{F}_q$ that are linearly independent over $\mathbb{F}_2$ such that $\sum_{\{a,b,c\} \in \{1,2,3\}}\alpha_a^4\alpha_b^2\alpha_c \neq 0$, and this is easily seen to be obtainable.}
\end{equation}
Note that a hard-coded logical $q$-block encodes exactly one logical qubit. Then, on a set of three hardcoded logical $q$-blocks, the logical gate $\Pi^{\hat{\beta}}$ acts as a (qubit) $\CCZ$ gate on the encoded logical qubits.

Having modified the logical space of $\qcode_1$ to obtain the code $\qcode_2$, it remains to modify the physical space. We want to be able to enact the gate $\Sigma^\gamma$ on physical $q$-blocks of $\qcode_2$ via a depth-one circuit of $\CLZ$ gates, for $\ell \leq 3$. The action of the gate $\Sigma^\gamma$ is as follows. With $u \in \mathbb{F}_2^t$, we note that $\mathcal{B}^{-1}(u) = \sum_{i=0}^{t-1} = u_i\alpha_i$, and
\begin{align}
    \Sigma^\gamma\ket{u} &= \exp\left[i\pi\tr(\gamma(\mathcal{B}^{-1}(u))^7)\right]\ket{u}
\end{align}
where
\begin{align}
    \left(\mathcal{B}^{-1}(u)\right)^7 = \sum_{i=0}^{t-1}u_i\alpha_i^7 &+ \sum_{i < j}u_iu_j\left(\alpha_i^6\alpha_j + \alpha_i^5\alpha_j^2+\alpha_i^4\alpha_j^3+\alpha_i^3\alpha_j^4+\alpha_i^2\alpha_j^5+\alpha_i\alpha_j^6\right)\\
    &+\sum_{i<j<k}u_iu_ju_k\left(\alpha_i^4\alpha_j^2\alpha_k+\alpha_i^4\alpha_j\alpha_k^2+\alpha_i^2\alpha_j^4\alpha_k+\alpha_i^2\alpha_j\alpha_k^4+\alpha_i\alpha_j^4\alpha_k^2+\alpha_i\alpha_j^2\alpha_k^4\right)
\end{align}
where we have used the same formula as in Equations~\eqref{eq:seven_multinomial_one} and~\eqref{eq:seven_multinomial_two}, albeit slightly re-written. Therefore, the desired phase is
\begin{equation}
    \exp\left[i\pi\tr(\gamma(\mathcal{B}^{-1}(u))^7)\right] = \exp\left[i\pi\left(\sum_{i=0}^{t-1}u_i\kappa_i+\sum_{i<j}u_iu_j\kappa_{ij}+\sum_{i<j<k}u_iu_ju_k\kappa_{ijk}\right)\right],
\end{equation}
where
\begin{align}
    \kappa_i &= \tr\left(\gamma\alpha_i^7\right)\\
    \kappa_{ij} &= \tr\left[\gamma\left(\alpha_i^6\alpha_j+\alpha_i^5\alpha_j^2+\alpha_i^4\alpha_j^3+\alpha_i^3\alpha_j^4+\alpha_i^2\alpha_j^5+\alpha_i\alpha_j^6\right)\right]\\
    \kappa_{ijk} &= \tr\left[\gamma\left(\alpha_i^4\alpha_j^2\alpha_k+\alpha_i^4\alpha_j\alpha_k^2+\alpha_i^2\alpha_j^4\alpha_k+\alpha_i^2\alpha_j\alpha_k^4+\alpha_i\alpha_j^4\alpha_k^2+\alpha_i\alpha_j^2\alpha_k^4\right)\right].
\end{align}
One could, therefore, enact the gate $\Sigma^\gamma$ on a physical $q$-block of qubits of $\qcode_2$ via the circuit
\begin{equation}\label{eq:designed_implementation_circuit}
    \prod_{i=0}^{t-1}\mathsf{Z}^{\kappa_i}[i]\;\prod_{i<j}\mathsf{CZ}^{\kappa_{ij}}[i,j]\;\prod_{i<j<k}\CCZ^{\kappa_{ijk}}[i,j,k]
\end{equation}
which is a non-transversal circuit of qubits $\CLZ$ gates with $\ell \leq 3$. However, we wish to obtain a depth-one circuit of $\CLZ$ gates. To obtain this, it is sufficient to simply modify the physical space of $\qcode_2$ by duplicating each physical qubit $\Omega$ times,\footnote{This simply means further encoding each physical qubit of $\qcode_2$ into the code given by $\ket{0} \mapsto \ket{0}^{\otimes \Omega}$; $\ket{1} \mapsto \ket{1}^{\otimes \Omega}$.} where $\Omega$ is the maximum number of times a qubit is involved in the circuit in Equation~\eqref{eq:designed_implementation_circuit}; indeed one can always choose $\Omega = 1 + (t-1) + \begin{pmatrix}
    t-1 \\2
\end{pmatrix} = 1+\frac{t(t-1)}{2}$. Then, these qubits may be acted on with a depth-one circuit of physical $\mathsf{Z}, \mathsf{CZ}$ and $\CCZ$ gates to create the desired phase $\exp\left[i\pi\tr\left(\gamma(\mathcal{B}^{-1}(u))^7\right)\right]$. From here, we may prove Theorem~\ref{thm:main_asymp_good}.
\begin{proof}[Proof of Theorem~\ref{thm:main_asymp_good}]
    The proof follows in essentially the same way as the proof of Theorem~\ref{thm:main_addressable_CCZ}. In this case, the code $\qcode_1$ has parameters $[[n_0t,3k_0t,d_1 \geq d_0]]_2$, and so the code $\qcode_2$ has parameters $[[n_0t,3k_0,d_2 \geq d_0]]_2$. After the duplication of each physical qubit $\Omega$ times, one obtains the final code $\qcode_3$ with parameters $[[n_0t\Omega,3k_0,d_3 \geq d_0]]_2$. Since $t$, and therefore $\Omega$, are constants, this code is asymptotically good, and the $k_0$ sets of 3 logical qubits may be addressed with logical $\CCZ$ gates in the way described by construction.
\end{proof}
\begin{remark}
    Note that the modification of the physical space here by duplicating qubits is morally the same as the use of the MFEs in Section~\ref{sec:near_asymp_good_concatenation}, although it is phrased differently here for convenience.
\end{remark}

\section*{Acknowledgements}

All authors are grateful to the participants of the MIT Quantum Fault Tolerance Reading Group, which led to this work, and in particular to Gefen Baranes for conversations regarding the implementability in neutral atom arrays.
We thank Chris Pattison, Mike Vasmer and Anqi Gong for valuable feedback on earlier versions of this paper.

\bibliographystyle{alpha}
\bibliography{main}

\appendix

\section{Addressable \texorpdfstring{$T$}{} Gates via Addressable Triorthogonality}\label{sec:transversal_T}
In this section, we prove that a binary matrix with the addressable 3-orthogonality property, as defined in Section~\ref{sec:addressable_orthogonality}, leads to a quantum code with an addressable $T$ gate, up to the action of a corrective Clifford. While we do not have an instantiation in mind, this directly extends the original notion of triorthogonal matrices/triorthogonal quantum codes defined by Bravyi and Haah~\cite{bravyi2012magic} (which coincides exactly with our definition of 3-orthogonality for a binary matrix) for which the associated quantum code can support a transversal $T$-gate, up to a corrective Clifford. Accordingly, we name the matrices, and the corresponding quantum codes ``addressable triorthogonal''. Although this is just our definition of addressable 3-orthogonality for a binary matrix, we enunciate this here.

\begin{definition}\label{def:addressable_triorthogonality}
    A matrix $G \in (\mathbb{F}_2)^{m \times n}$ with rows $(g^a)_{a=1}^m$ is called \textbf{addressable triorthogonal} if, for some $k \in [m]$, for every $A \in [k]$, there exists a vector $\Gamma^A \in \mathbb{F}_2^n$ such that, for all $a_1, a_2, a_3 \in [m]$,
    \begin{equation}
        \sum_{i=1}^n\Gamma_i^Ag_i^{a_1}g_i^{a_2}g_i^{a_3} = \begin{cases}
            1 &\text{ if } a_1 = a_2 = a_3 = A\\
            0 &\text{ otherwise}
        \end{cases}.
    \end{equation}
\end{definition}
One can check that this definition exactly coincides with the notion of addressable 3-orthogonality for a binary matrix given in Section~\ref{sec:addressable_orthogonality}. We also note that, given Lemma~\ref{lem:ortho_to_address}, a natural way to obtain addressable triorthogonal matrices is via the construction of binary matrices with 4-orthogonality, also defined in Section~\ref{sec:addressable_orthogonality}. We now show the claim that the quantum code $\qcode$ associated to a addressable triorthogonal matrix can support an addressable $T$ gate, up to corrective Clifford operations.
\begin{theorem}
    Suppose $G$ is an addressable triorthogonal matrix. Then, the following is true when acting on a code state of the associated quantum code $\qcode$, defined in Equation~\eqref{eq:repeated_qcode_defn}:
    \begin{equation}
        \overline{T[A]} = U_AT^{\Gamma^A},
    \end{equation}
    for some Clifford operation $U_A$.
\end{theorem}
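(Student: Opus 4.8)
The plan is to mimic the classical triorthogonality argument of Bravyi and Haah, but with the address-specific vector $\Gamma^A$ in place of the global weight conditions. First I would recall that, by Lemma~\ref{lem:ortho_to_assump}, an addressable $3$-orthogonal matrix satisfies Assumption~\ref{assump:independence}, so the code $\qcode = \CSS(X,\mathcal{G}_0;Z,\mathcal{G}^\perp)$ of Equation~\eqref{eq:repeated_qcode_defn} is well defined and its logical computational basis states are given by Equation~\eqref{eq:qcode_comp_basis_logical}, namely $\ol{\ket{u}} \propto \sum_{g \in \mathcal{G}_0}\ket{\sum_{a=1}^k u_a g^a + g}$. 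The physical gate I want to apply is $T^{\Gamma^A} = \bigotimes_{i=1}^n T^{\Gamma^A_i}[i]$ where, since $\Gamma^A \in \mathbb{F}_2^n$, this is just a $T$ gate on the qubits $i$ with $\Gamma^A_i = 1$.

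Next I would compute the action of $T^{\Gamma^A}$ on a basis vector $\ket{f}$ with $f = \sum_{a=1}^m u_a g^a$, $u_a \in \mathbb{F}_2$. Writing $T = \mathrm{diag}(1, e^{i\pi/4})$, the phase accrued is $\exp\!\left(\tfrac{i\pi}{4}\sum_{i=1}^n \Gamma^A_i f_i\right)$, where the sum $\sum_i \Gamma^A_i f_i$ is taken over the integers (not mod $2$) — this is the standard subtlety in the triorthogonality argument. The key identity is the binary expansion $\sum_i \Gamma^A_i f_i \equiv \sum_i \Gamma^A_i f_i - 2\sum_{i}\Gamma^A_i (\text{quadratic in the } u\text{'s}) + 4\sum_i \Gamma^A_i(\text{cubic in the }u\text{'s}) \pmod 8$, obtained by expanding $f_i = \sum_a u_a g^a_i$ and using $f_i^{\,} $ versus $f_i^2$, $f_i^3$ over the integers via inclusion–exclusion on the $\{0,1\}$-valued products $g^{a_1}_i g^{a_2}_i \cdots$. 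Then I apply the addressable $3$-orthogonality relation~\eqref{eq:address_ortho_prop} at degrees $1$, $2$, $3$ with $\tau = \mathbf{1}^k$: the degree-$3$ term $\sum_i \Gamma^A_i g^{a_1}_i g^{a_2}_i g^{a_3}_i$ is $1$ exactly when $a_1=a_2=a_3=A$ and $0$ otherwise, which isolates the coefficient $u_A^3 = u_A$ (as $u_A \in \mathbb{F}_2$); the degree-$1$ and degree-$2$ contributions are linear and quadratic forms $L^A(u)$ and $Q^A(u)$ in the $u_a$, $a \in [m]$, with integer coefficients determined by $\sum_i \Gamma^A_i g^a_i$ and $\sum_i \Gamma^A_i g^a_i g^b_i$. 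Collecting, $T^{\Gamma^A}\ket{f} = \exp\!\left(\tfrac{i\pi}{4}\big(4 u_A - 2 Q^A(u) + L^A(u)\big)\right)\ket{f} = (-1)^{u_A/1}\cdot(\text{phase quadratic/linear in }u)\,\ket{f}$; more precisely the $4u_A$ term gives the factor $e^{i\pi u_A} = (-1)^{u_A}$, wait — I must track the factor of $4$ carefully: $\tfrac{i\pi}{4}\cdot 4 u_A = i\pi u_A$, so this contributes $(-1)^{u_A}$, which is the logical $Z$ on qubit $A$, not the logical $T$. This tells me the correct normalization is $\overline{T[A]} = U_A T^{\Gamma^A}$ where the cubic coefficient must instead land with weight $1$: one re-examines the expansion so that the odd, non-Clifford part is precisely $e^{i\pi u_A/4}$ from a $u_A$ term with coefficient $1$, while the $L^A$, $Q^A$ pieces — being linear and quadratic in $u$ with the phase $e^{i\pi(\cdot)/4}$ and $e^{i\pi(\cdot)/2}$ respectively — assemble into a diagonal Clifford, namely a product of logical $Z$, $S$ (i.e. $CZ$-type and $S$-type) operators; this is exactly the corrective Clifford $U_A$, which acts only on the logical qubits and is computed from the quantities $\sum_i \Gamma^A_i g^a_i$ and $\sum_i \Gamma^A_i g^a_i g^b_i$ for $a,b \in [m]$ (for $a,b > k$, or mixed, the $\mathcal{G}_0$-sum still leaves a genuine diagonal Clifford acting on code states because those indices correspond to stabilizer directions). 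Finally I would push the computation through the superposition $\sum_{g \in \mathcal{G}_0}$ defining $\ol{\ket{u}}$: the cubic term only ever produces $u_A$ (the $g \in \mathcal{G}_0$ part contributes nothing to the degree-$3$ address sum since $\Gamma^A$ kills products of $G_0$-rows), so the non-Clifford phase is constant across the superposition, giving $T^{\Gamma^A}\ol{\ket{u}} = U_A \,\overline{T[A]}\,\ol{\ket{u}}$ as claimed, and equality of operators follows by linearity.

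The main obstacle I anticipate is bookkeeping the $\bmod\ 8$ arithmetic cleanly — specifically, verifying that after the binary/inclusion–exclusion expansion the odd part is exactly $e^{i\pi u_A/4}$ and that \emph{all} remaining terms (including the many involving rows $g^a$ with $a > k$, which correspond to $X$-stabilizers, and the extra sum over $g \in \mathcal{G}_0$) genuinely combine into a \emph{Clifford} operator that is well-defined on the code space, rather than a non-Clifford leftover. This is precisely the step that is delicate in~\cite{bravyi2012magic}; here it is essentially the same computation with $\Gamma^A$ localizing the cubic term to the single index $A$, so I expect it to go through, but it needs care. Once that is in hand, the statement follows, and I would note the direct analogy: addressable triorthogonality is to the addressable $T$ gate what triorthogonality (our plain $3$-orthogonality over $\mathbb{F}_2$) is to the transversal $T$ gate, consistent with Lemma~\ref{lem:ortho_to_address} suggesting $4$-orthogonal binary matrices as a natural source.
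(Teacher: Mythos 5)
Your overall strategy matches the paper's (the mod-8 expansion of the accumulated $T$-phase, addressable triorthogonality to isolate a phase on logical qubit $A$, and a diagonal Clifford to absorb the rest), but the step where you actually isolate the non-Clifford phase is wrong as written, and your attempted fix is asserted rather than derived. In the inclusion--exclusion identity $\bigoplus_a y_a \equiv \sum_a y_a - 2\sum_{a<b}y_ay_b + 4\sum_{a<b<c}y_ay_by_c \pmod 8$, the cubic sum runs over \emph{distinct} triples $a<b<c$, so the case $a_1=a_2=a_3=A$ never appears there; by addressable triorthogonality these distinct-triple sums $\sum_i\Gamma^A_i g^{a}_ig^{b}_ig^{c}_i$ are even, and the prefactor $4$ makes them vanish mod $8$ entirely. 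The term that produces the $T$-phase is the \emph{linear} one: taking $a_1=a_2=a_3=a$ in the defining relation (over $\mathbb{F}_2$, $g_i^ag_i^ag_i^a=g_i^a$) forces $\sum_i\Gamma^A_ig^a_i$ to be odd exactly when $a=A$, so the linear part contributes $\exp(\tfrac{i\pi}{4}u_A)$ plus $S$-type phases $\exp(\tfrac{i\pi}{2}\Lambda^A_au_a)$, while the pairwise sums $\sum_i\Gamma^A_ig^a_ig^b_i$ (the case $a_1=a\neq b=a_2=a_3$) are even and give $\mathsf{CZ}$-type phases $\exp(i\pi\Lambda^A_{ab}u_au_b)$. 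Your version, which puts the isolated $u_A$ in the cubic slot with coefficient $4$, correctly alarms you by producing $(-1)^{u_A}$ (a logical $Z$), but the remedy "the cubic coefficient must instead land with weight $1$" is not the resolution and is never justified; the resolution is the diagonal-case use of the triorthogonality condition just described.

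There is a second gap in how you dispose of the residual phase. The unwanted linear and quadratic phases involve $u_a$ for \emph{all} $a\in[m]$, including $a>k$, and those coordinates label the element of $\mathcal{G}_0$ and therefore vary term-by-term inside a single code state $\overline{\ket{u}}$. Consequently the correction cannot be ``a product of logical $Z$, $S$ operators acting only on the logical qubits''; it has to be a physical diagonal Clifford, $U_A=\prod_i(S^\dagger[i])^{\kappa_i}\prod_{i<j}(\mathsf{CZ}[i,j])^{\kappa_{ij}}$, whose exponents are obtained by re-expressing $u_a$ in terms of the physical string via a matrix $B$ with $u_a=\sum_iB_{ai}f_i$. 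This inversion requires that the rows of $G$ be linearly independent, which is where Assumption~\ref{assump:independence} (via Lemmas~\ref{lem:address_to_ortho} and~\ref{lem:ortho_to_assump}) together with discarding redundant rows of $G_0$ enters; it is precisely the ``bookkeeping'' you defer, and without it the claim that the leftover assembles into a Clifford acting consistently on every basis vector of the superposition is not established.
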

\begin{proof}
    Our proof closely follows that of Lemma 2 in~\cite{bravyi2012magic}. As throughout the paper, arithmetic takes place in the finite field (the binary field in this case), unless we specify otherwise.

    Given $u \in \mathbb{F}_2^k$, the logical computational basis state $\overline{\ket{u}}$ in the quantum code $\qcode$ is
    \begin{equation}
        \overline{\ket{u}} \propto \sum_{g \in \mathcal{G}_0}\Ket{\sum_{a=1}^ku_ag^a+g},
    \end{equation}
    and we aim to show that there is a Clifford operation $U_A$ such that
    \begin{equation}
        U_AT^{\Gamma^A}\overline{\ket{u}} = \exp\left(\frac{i\pi}{4}u_A\right)\overline{\ket{u}}.
    \end{equation}
    We first consider some state $\ket{f}$, where $f = \sum_{a=1}^mu_ag^a$, for some $u_a \in \mathbb{F}_2$. Then,
    \begin{equation}\label{eq:T_action_f}
        T^{\Gamma^A}\ket{f} = \prod_{i=1}^n\exp\left(\frac{i\pi}{4}\Gamma_i^Af_i\right)\ket{f}.
    \end{equation}
    We then use the fact from the proof of Lemma 2 of~\cite{bravyi2012magic} that, for a binary vector $y \in \mathbb{F}_2^m$,
    \begin{equation}
        \exp\left(\frac{i\pi}{4}\sum_{a=1}^my_a\right) = \exp\underbrace{\left(\frac{i\pi}{4}\sum_{a=1}^my_a - \frac{i\pi}{2}\sum_{a<b}y_ay_b+i\pi\sum_{a<b<c}y_ay_by_c\right)}_{\text{Literal sums of integers}},
    \end{equation}
    where, here, sums on the right-hand side are the usual sums of integers, but we emphasise that the sum on the left-hand side takes place in the field $\mathbb{F}_2$ (it is taken modulo 2), as usual.

    Now, referring to the definition of addressable triorthogonality, we see that there are integers $\Lambda_a^A$, for $a \in [m]$, and $\Lambda_{ab}^A$ for $a,b \in [m]$, $a<b$, depending only on the matrix and the choice of $A \in [k]$, for which
    \begin{align}\label{eq:add_triorth_first}
        \sum_{i=1}^n\Gamma_i^Ag_i^a &= \begin{cases}
            2\Lambda_A^A+1 &\text{ if } a=A,\\
            2\Lambda_a^A &\text{ if } a \neq A,
        \end{cases}\\\label{eq:add_triorth_second}
        \sum_{i=1}^n\Gamma_i^Ag_i^ag_i^b &= 2\Lambda_{ab}^A,
    \end{align}
    where these sums are literal sums of integers. Note that Equation~\eqref{eq:add_triorth_first} is seen to be true by setting $a_1=a_2=a_3$ in Definition~\ref{def:addressable_triorthogonality}, whereas Equation~\eqref{eq:add_triorth_second} is seen to be true by taking $a_1 \neq a_2 = a_3$. Therefore, Equations~\eqref{eq:T_action_f} to~\eqref{eq:add_triorth_second} and Definition~\ref{def:addressable_triorthogonality} give
    \begin{equation}
        T^{\Gamma^A}\ket{f} = \exp\left(\frac{i\pi}{4}u_A\right)\prod_{a=1}^m\exp\left(\frac{i\pi}{2}u_a\Lambda^A_a\right)\prod_{a<b}\exp\left(i\pi\Lambda^A_{ab}u_au_b\right)\ket{f}.
    \end{equation}
    Note that the phase $\exp\left(\frac{i\pi}{4}u_A\right)$ is the phase that we want, whereas the remainder is the phase that we wish to cancel with a Clifford operation $U_A$.

    Now, by assumption our matrix $G$ is addressable 3-orthogonal with respect to the all-1's vector, $\mathbf{1}^k$.\footnote{Strictly speaking, this remark is redundant since the fact that $\tau \in (\mathbb{F}_2^*)^k$ implies that $\tau = \mathbf{1}^k$.} Therefore, $G$ is 3-orthogonal with respect to $(\Gamma,\mathbf{1}^k)$, for some $\Gamma$, by Lemma~\ref{lem:address_to_ortho}.  Then, by Lemma~\ref{lem:ortho_to_assump}, Assumption~\ref{assump:independence} is satisfied, meaning that the only linear dependence amongst the rows of $G$ is amongst the rows of $G_0$ which, as usual, is the latter $m-k$ rows of $G$. Moreover, we may say without loss of generality that the rows of $G_0$ are linearly independent as well, since eliminating rows of $G_0$ does not alter the code $\qcode$. As such, $f_i = \sum_{a=1}^mu_ag^a_i$ implies that $(u_a)_{a=1}^m$ are uniquely determined by $(f_i)_{i=1}^n$, and so there exists a matrix $B \in (\mathbb{F}_2)^{m \times n}$ such that $u_a = \sum_{i=1}^nB_{ai}f_i$. Using the fact that
    \begin{equation}
        \exp\left(\frac{i\pi}{2}\sum_{i=1}^nv_i\right) = \exp\underbrace{\left(\frac{i\pi}{2}\sum_{i=1}^nv_i+i\pi\sum_{i<j}v_iv_j\right)}_{\text{Literal sums of integers}},
    \end{equation}
    for a binary vector $v \in \mathbb{F}_2^n$, one may check that the un-wanted phase may be written
    \begin{equation}
        \prod_{a=1}^m\exp\left(\frac{i\pi}{2}u_a\Lambda^A_a\right)\prod_{a<b}\exp\left(i\pi\Lambda_{ab}^Au_au_b\right) = \prod_{i=1}^n\exp\left(\frac{i\pi}{2}\kappa_if_i\right)\prod_{i<j}\exp\left(i\pi\kappa_{ij}f_if_j\right),
    \end{equation}
    where
    \begin{align}
        \kappa_i &= \sum_{a=1}^mB_{ai}\Lambda_a^A + 2\sum_{a<b} \Lambda^A_{ab}B_{ai}B_{bi}\\
        \kappa_{ij} &= \sum_{a=1}^mB_{ai}B_{bj}\Lambda_a^A+\sum_{a<b}\Lambda_{ab}^A(B_{ai}B_{bj}+B_{aj}B_{bi}).
    \end{align}
    We therefore define the $n$-qubit Clifford operator
    \begin{equation}
        U_A = \prod_{i=1}^n(S^\dagger[i])^{\kappa_i}\prod_{i<j}(\mathsf{CZ}[i,j])^{\kappa_{ij}},
    \end{equation}
    and we find that
    \begin{equation}
        U_AT^{\Gamma^A}\ket{f} = \exp\left(\frac{i\pi}{4}u_A\right)\ket{f}.
    \end{equation}
    Finally, referring back to the form of the logical computational basis state $\overline{\ket{u}}$, we find by linearity that
    \begin{equation}
        U_AT^{\Gamma^A}\overline{\ket{u}} = \exp\left(\frac{i\pi}{4}u_A\right)\overline{\ket{u}},
    \end{equation}
    as required.
\end{proof}

\section{Diagonal Clifford Hierarchy for Qudits of Prime Power Dimension}
\label{sec:diagonal_Clifford}
In this section, we will discuss the importance of the following gates defined in Section~\ref{sec:prelim_non_clifford_addressability} in the context of the Clifford hierarchy, which is a very important mathematical structure in fault-tolerant quantum computing:
\begin{align}
    U_{q,P} = \sum_{\boldsymbol{\eta} \in \mathbb{F}_q^e}\exp(i\pi\tr(P(\boldsymbol{\eta})))\ket{\boldsymbol{\eta}}\bra{\boldsymbol{\eta}}.
\end{align}
These are diagonal gates defined for $e$ qudits of dimension $q=2^t$, where $P$ is an $e$-variate polynomial over $\mathbb{F}_q$.
Our main intention here is to convince the reader that this set of gates not only appears as a straightforward generalisation of the well-known $\CLZ$ gates, but is very natural in its own right. 

While determining the full structure of the Clifford hierarchy is still an open problem,~\cite{cui2017diagonal} gave a full characterisation of the diagonal Clifford hierarchy gates for qudits of prime dimension. In particular, for a set of $n$ qubits, it was found that any such gate may be generated from gates of the form
\begin{equation}
    \sum_{\mathbf{j}\in\mathbb{F}_2^n}\exp\left(\frac{2\pi i}{2^m}P(j_1, \ldots, j_n)\right)\ket{\mathbf{j}}\bra{\mathbf{j}},
\end{equation}
where $P$ is an $n$-variate polynomial over $\mathbb{F}_2$. The level in the Clifford hierarchy of the generated gate depends in a way made explicit in~\cite{cui2017diagonal} on the degree of the various polynomials $P$ and the ``precisions'' $m$ that appear. The same paper also observes, given the natural isomorphism between the states and gates of qudits of prime dimension (see Section~\ref{sec:qudit_to_qubit_prelim}), and a smaller number of qudits of prime-power dimension, and the associated isomorphism of Clifford hierarchies~\cite{gottesman2016surviving}, that a similar statement must apply to the diagonal Clifford hierarchy for qudits of prime-power dimension. Indeed, it is easily shown that all such gates for $e$ qudits of dimension $q=2^t$ may be generated from gates of the form
\begin{equation}
    \sum_{\boldsymbol{\eta} \in \mathbb{F}_q^e}\exp\left(\frac{2\pi i}{2^m}\tr(P(\eta_1, \ldots, \eta_e))\right)\ket{\boldsymbol{\eta}}\bra{\boldsymbol{\eta}}.
\end{equation}
The argument goes as follows. Because of the natural isomorphism,~\cite{cui2017diagonal} establishes for us that all diagonal entries must be some power of a phase $\exp\left(\frac{2\pi i}{2^m}\right)$, and so the gate may be written as
\begin{equation}\label{eq:general_diagonal_cliff_hierarchy}
    \sum_{\boldsymbol{\eta}\in \mathbb{F}_q^e}\exp\left(\frac{2\pi i}{2^m}\Theta(\boldsymbol{\eta})\right)\ket{\boldsymbol{\eta}}\bra{\boldsymbol{\eta}}
\end{equation}
for some map $\Theta:\mathbb{F}_q^e \to \mathbb{Z}_{2^m}$. Such a map may be written as $\Theta = \sum_{i=0}^{m-1}2^i\theta_i$, where $\theta_i:\mathbb{F}_q^e \to \mathbb{Z}_2 = \mathbb{F}_2$ is some other map. Therefore, the gate \eqref{eq:general_diagonal_cliff_hierarchy} may be generated from gates of the form
\begin{equation}
    \sum_{\boldsymbol{\eta}\in \mathbb{F}_q^e}\exp\left(\frac{2\pi i}{2^m}\theta(\boldsymbol{\eta})\right)\ket{\boldsymbol{\eta}}\bra{\boldsymbol{\eta}}
\end{equation}
where $\theta:\mathbb{F}_q^e \to \mathbb{F}_2$ is some map. However, we know that any map $\tilde{\theta}: \mathbb{F}_q^e \to \mathbb{F}_q$ may be written as some polynomial over $\mathbb{F}_q$ via, for example, Lagrange interpolation. Therefore, since $\ker(\tr_{\mathbb{F}_q/\mathbb{F}_2})$ is a proper subset of $\mathbb{F}_q$, $\theta$ may be written as the trace of some polynomial over $\mathbb{F}_q$.

In particular, we find that gates $U_{q,P}$ are actually complete for the set of diagonal Clifford hierarchy gates with entries $\pm 1$, and so they are certainly important to consider.

We would note, however, that whereas~\cite{cui2017diagonal} gives an explicit, and relatively simple description of the \textit{level} of the Clifford hierarchy of any diagonal gate in terms of the degrees of its polynomials, our argument does not preserve this simple description. In fact, it seems that any description must necessarily be not so simple. For example, a natural generating set for the set of diagonal Clifford hierarchy gates with $\pm 1$ on the diagonal for one qudit of dimension $q$ are the gates $U_{q,\ell}^\beta$ defined in Section~\ref{sec:prelim_non_clifford_addressability} as
\begin{equation}
    U_{q,\ell}^\beta = \sum_{\eta \in \mathbb{F}_q}\exp\left(i\pi\tr(\beta\eta^\ell)\right)\ket{\eta}\bra{\eta}.
\end{equation}
Then, for small values of $\ell$, one can check that the Clifford hierarchy level of $U_{q,\ell}^\beta$ behaves erratically with increasing $\ell$, as opposed to the case of prime-dimensional qudits~\cite{cui2017diagonal}, where the level of the Clifford hierarchy increases steadily with the polynomial degree. For example, one can check that (for all $q$ large enough) the smallest $\ell$ for which $U_\ell^\beta$ is in the 2nd level of the Clifford hierarchy is $\ell = 3$, the smallest $\ell$ for which $U_{q,\ell}^\beta$ is in the 3rd level of the Clifford hierarchy is $\ell=7$, the smallest 4th level gate has $\ell=15$, the smallest 5th level gate has $\ell=31$, and in each case the level is independent of $\beta$ as long as $\beta \neq 0$. The level also varies significantly with $\ell$ between these values. In light of this, one would reasonably conjecture that the smallest $\ell$ for which $U_{q,\ell}^\beta$ is in the $t$-th level of the Clifford hierarchy is $\ell = 2^t-1$, for all sufficiently large finite fields of characteristic 2, and in each case the Clifford hierarchy level is independent of $\beta \neq 0$. In general, a description of the diagonal Clifford hierarchy for prime-power dimensional qudits in terms of the degrees of the polynomials in the natural form \eqref{eq:general_diagonal_cliff_hierarchy} awaits further work.

\end{document}